\documentclass[11pt,letterpaper]{article}
\usepackage[T1]{fontenc}
\usepackage[utf8]{inputenc}
\usepackage{lmodern}
\usepackage[margin=1in,headheight=14pt,headsep=16pt,footskip=28pt]{geometry}
\usepackage[authoryear,round]{natbib}
\usepackage{amsmath,amssymb,amsfonts,amsthm,mathtools,bm}
\usepackage{array,graphicx,rotating,booktabs,float,enumitem,longtable}
\usepackage{microtype,caption,titling,needspace}
\usepackage{xcolor,etoolbox}
\usepackage{hyperref}
\hypersetup{
  colorlinks=true,citecolor=blue,linkcolor=black,urlcolor=black,
  bookmarksnumbered=true,bookmarksopen=false,
  pdftitle={Cauchy-Combined Hettmansperger--Randles Location Tests in High Dimensions},
  pdfauthor={Ping Zhao, Long Feng and Xiaoyi Wang},
  pdfsubject={High-dimensional robust location testing},
  pdfkeywords={Cauchy combination, elliptical distribution, Hettmansperger--Randles median, high-dimensional mean test, shrinkage Tyler estimator, spatial sign}
}
\makeatletter
\pretocmd{\NAT@citex}{\color{blue}}{}{%
  \PackageError{citationcolor}{Cannot patch natbib citation color}{}}
\makeatother
\setcitestyle{authoryear,round,semicolon,aysep={,},yysep={,},notesep={, }}
\allowdisplaybreaks[2]
\AtBeginDocument{%
  \setlength{\abovedisplayskip}{8pt plus 2pt minus 4pt}%
  \setlength{\belowdisplayskip}{8pt plus 2pt minus 4pt}%
  \setlength{\abovedisplayshortskip}{0pt plus 2pt}%
  \setlength{\belowdisplayshortskip}{5pt plus 2pt minus 3pt}%
}

\newtheorem{assumption}{Assumption}[section]
\newtheorem*{assumption*}{Assumption}
\newtheorem{theorem}{Theorem}[section]
\newtheorem*{theorem*}{Theorem}
\newtheorem{lemma}{Lemma}[section]
\newtheorem*{lemma*}{Lemma}
\newtheorem{corollary}{Corollary}[section]
\newtheorem*{corollary*}{Corollary}
\newtheorem{proposition}{Proposition}[section]
\newtheorem*{proposition*}{Proposition}
\newtheorem{remark}{Remark}[section]
\newtheorem*{remark*}{Remark}

\DeclareMathOperator{\tr}{tr}
\DeclareMathOperator{\diag}{diag}
\DeclareMathOperator{\Var}{Var}
\DeclareMathOperator{\Cov}{Cov}
\DeclareMathOperator{\E}{\mathbb{E}}

\providecommand{\op}{\mathrm{op}}

\newcommand{\R}{\mathbb R}
\newcommand{\I}{\mathbf I}
\newcommand{\mA}{\mathbf A}
\newcommand{\mB}{\mathbf B}
\newcommand{\mC}{\mathbf C}
\newcommand{\mD}{\mathbf D}

\newcommand{\mR}{\mathbf R}
\newcommand{\mE}{\mathbf E}

\newcommand{\mM}{\mathbf M}
\newcommand{\mQ}{\mathbf Q}
\newcommand{\mP}{\mathbf P}
\newcommand{\mS}{\mathbf S}
\newcommand{\mSigma}{\mathbf \Sigma}

\newcommand{\vX}{\bm X}
\newcommand{\vY}{\bm Y}

\newcommand{\vz}{\bm z}
\newcommand{\vU}{\bm U}

\newcommand{\vv}{\bm v}
\newcommand{\vzero}{\bm 0}
\newcommand{\vmu}{\bm \mu}

\newcommand{\vr}{\bm r}

\newcommand{\ve}{\bm e}

\newcommand{\T}{\mathrm T}
\newcommand{\HR}{\mathrm{HR}}

\newcommand{\RHT}{\mathrm{RHT}}

\newcommand{\CCT}{\mathrm{CCT}}

\numberwithin{equation}{section}
\pretitle{\begin{center}\LARGE\bfseries}
\posttitle{\par\end{center}\vskip 0.4em}
\preauthor{\begin{center}\normalsize}
\postauthor{\par\end{center}}
\predate{}\postdate{}
\title{Cauchy-Combined Hettmansperger--Randles\\
Location Tests in High Dimensions}
\author{Ping Zhao\textsuperscript{1}\quad Long Feng\textsuperscript{2}\quad
Xiaoyi Wang\textsuperscript{3}\thanks{\raggedright Corresponding author: Xiaoyi Wang.
E-mail addresses: \href{mailto:zhaoping@tiangong.edu.cn}{\texttt{zhaoping@tiangong.edu.cn}} (Ping Zhao),
\href{mailto:flnankai@nankai.edu.cn}{\texttt{flnankai@nankai.edu.cn}} (Long Feng), and
\href{mailto:wangxy059@bnu.edu.cn}{\texttt{wangxy059@bnu.edu.cn}} (Xiaoyi Wang).}\\[0.6em]
\small\textsuperscript{1}School of Mathematical Sciences, Tiangong University\\
\small\textsuperscript{2}School of Statistics and Data Science, Nankai University\\
\small\textsuperscript{3}Faculty of Arts and Sciences, Beijing Normal University}
\date{}
\begin{document}
\maketitle
\vspace{-1.1em}
\begin{abstract}
We develop a robust regularized Hotelling framework for high-dimensional one-sample mean inference under elliptical distributions. The proposed HRST statistic combines a Hettmansperger--Randles spatial location estimator with the inverse of a trace-normalized shrinkage Tyler scatter matrix. Ridge-resolvent random-matrix theory calibrates the statistic without sparse covariance or precision-matrix assumptions. When the dimension and sample size grow proportionally, a uniform bulk spectral distribution condition permits a vanishing lower bulk edge and finite-rank diverging spikes. Under this condition, we establish feasible asymptotic normality at each shrinkage level, derive local-alternative distributions with explicit noncentrality parameters, and prove joint Gaussian limits over finite shrinkage grids. These results support an adaptive Cauchy combination test across shrinkage levels. Simulations and a paired tumor--normal gene-expression study illustrate size control, robustness to heavy tails and power under strong correlation.
\end{abstract}
\noindent\textbf{Keywords:} Cauchy combination, elliptical distribution, Hettmansperger--Randles median, high-dimensional mean test, shrinkage Tyler estimator, spatial sign.

\section{Introduction}
High-dimensional mean testing is a central problem in modern multivariate analysis.  It appears in genomics, proteomics, neuroimaging, finance and signal processing, where the number of variables can be comparable to or larger than the sample size.  In the one-sample setting, the goal is to test
\[
    H_0:\vmu=\vmu_0,
    \qquad
    H_1:\vmu\ne\vmu_0,
\]
from independent observations $\vX_1,\ldots,\vX_n\in\R^p$ under proportional asymptotics $p/n\to y\in(0,\infty)$.  The sample mean and sample covariance matrix are
\[
    \bar{\vX}=n^{-1}\sum_{i=1}^n\vX_i,
    \qquad
    \widehat{\mS}=\frac{1}{n-1}\sum_{i=1}^n(\vX_i-\bar{\vX})(\vX_i-\bar{\vX})^\T .
\]
Classical Hotelling's $T^2$ test \citep{Hotelling1931} is no longer directly applicable when $p\ge n$, because the sample covariance matrix is singular.  Even when $p<n$, the usual fixed-$p$ approximation can be inaccurate when $p/n$ is not small.  This difficulty motivated early high-dimensional tests such as \citet{Dempster1958} and has led to a large literature on high-dimensional location inference; see, for example, \citet{HuangLiLiYang2022}.

A major class of high-dimensional mean tests avoids inverting the full covariance matrix.  Quadratic-form procedures based on the Euclidean norm of the sample mean and trace corrections were developed by \citet{BaiSaranadasa1996}, \citet{SrivastavaDu2008}, \citet{Srivastava2009} and \citet{ChenQin2010}.  Related extensions include diagonal-scale and scalar-transform tests \citep{ParkAyyala2013}, random-projection Hotelling tests \citep{LopesJacobWainwright2011}, generalized component tests \citep{GregoryCarrollBaladandayuthapaniLahiri2015}, unequal-covariance multi-sample procedures \citep{HuBaiWangWang2017}, and adaptive combinations for different signal patterns \citep{XuLinWeiPan2016}.  A complementary line of work targets sparse alternatives and incorporates dependence through max-type, thresholding or precision-transformation ideas; see, among others, \citet{CaiLiuXia2014} and \citet{ChenLiZhong2019}.  These methods are powerful in many regimes, but their theoretical calibration often relies on weak dependence, covariance sparsity, precision-matrix sparsity or related structural restrictions.  Such restrictions are appropriate for approximately sparse graphical dependence, but they are less natural for strongly correlated data generated by common factors or covariance matrices with pronounced spectral structure.

Random matrix theory offers a different approach to strong dependence.  Regularized Hotelling tests replace the unstable inverse covariance matrix by a ridge inverse $(\widehat{\mS}+\lambda\I_p)^{-1}$ and calibrate the statistic by deterministic equivalents for resolvent traces.  The connection between high-dimensional mean tests and random matrix theory is already visible in \citet{BaiSaranadasa1996} and in the large-dimensional central limit theorem for Hotelling's statistic of \citet{PanZhou2011}.  The underlying sample-covariance tools are part of the broader random-matrix theory developed in \citet{BaiSilverstein2010}, \citet{YaoZhengBai2015} and \citet{CouilletDebbah2011}.  In statistical applications, \citet{ChenPaulPrenticeWang2011} introduced a regularized Hotelling statistic for pathway analysis, \citet{LiAuePaulPengWang2020} proposed an adaptable regularized Hotelling framework, and \citet{HaZhangBaiWang2022} studied ridgelized Hotelling tests for large-dimensional mean vectors.  Closely related random-matrix corrections for likelihood-ratio and regression-type tests were obtained by \citet{BaiJiangYaoZheng2009} and \citet{BaiJiangYaoZheng2013}.  Further developments by \citet{Zheng2012}, \citet{ZhengBaiYao2015Substitution}, \citet{ZhengBaiYao2015Precision} and \citet{ZhengBaiYao2017} provide substitution principles, Fisher-matrix limits and rescaled precision-matrix central limit theorems that are useful for high-dimensional covariance and linear-hypothesis inference.  Related spectral results for spiked, ultra-high-dimensional and dependent sample-covariance matrices include \citet{LiLiYao2018}, \citet{LiLamYaoYao2019}, \citet{LiHanYao2020} and \citet{QiuLiYao2023}.  These random-matrix methods replace entrywise sparsity by assumptions on the empirical spectral distribution and spike structure, and are therefore well suited to strong global dependence.

Most covariance-based random-matrix mean tests, however, are built from the sample mean and the sample covariance matrix.  They can be sensitive to heavy tails and contamination, because large radial observations affect both the numerator and the covariance estimator.  Robust high-dimensional location tests address this issue by using signs, ranks, medians or $M$-estimators.  Classical foundations include multivariate $M$-estimation \citep{Maronna1976}, the spatial median \citep{Brown1983,Chaudhuri1996}, multivariate sign and rank tests \citep{MottonenOja1995,Oja2010}, and the Hettmansperger--Randles location estimator \citep{HettmanspergerRandles2002}.  Related spatial-sign procedures for high-dimensional sphericity testing were developed by \citet{ZouPengFengWang2014}.  High-dimensional robust mean tests based on spatial signs and related transformations include \citet{FengZouWang2016}, \citet{FengSun2016}, \citet{WangPengLi2015} and \citet{ChakrabortyChaudhuri2017}.  These procedures improve robustness, but many of them are calibrated under weak or diagonal dependence and do not fully exploit the spectral information that is crucial under strong correlation.

Elliptical models provide a natural framework in which to combine robustness and random-matrix calibration.  We write
\[
    \vX_i=\vmu+R_i\mSigma_p^{1/2}\vU_i,
    \qquad
    \vU_i\stackrel{\mathrm{i.i.d.}}{\sim}\operatorname{Unif}(\mathbb S^{p-1}),
\]
where $R_i$ is a radial variable and $\mSigma_p$ is a shape matrix.  Tyler's shape estimator is distribution-free within elliptical families and depends only on angular information \citep{Tyler1987}.  Its shrinkage versions are positive definite in high dimensions and have been studied from robust-statistical and random-matrix perspectives \citep{ChenWieselHero2011,OllilaTyler2014,CouilletMcKay2014,CouilletPascalSilverstein2014}.  Random-matrix theory for sign and rank covariance matrices is also relevant: \citet{LiWangYaoZhou2022} studied limiting spectra and eigenvalue statistics of high-dimensional spatial-sign covariance matrices, while \citet{LiWangLi2021} and \citet{LiWangWang2023} developed random-matrix theory for Kendall rank correlation matrices.  These results indicate that angular robust scatter can be combined with high-dimensional spectral calibration.  The challenge for mean testing is that the same observations are used to estimate scatter, compute location and estimate variance, creating leave-out and self-leverage effects that must be reflected in the null distribution.

We propose a robust high-dimensional mean test that combines shrinkage Tyler shape estimation, Hettmansperger--Randles spatial-score location estimation and Cauchy $p$-value combination.  For each shrinkage level $\rho$, we estimate a trace-normalized shrinkage Tyler scatter matrix from the null-centered observations and compute a fixed-scatter Hettmansperger--Randles location estimator.  The resulting quadratic statistic, denoted HRST, is standardized by a pairwise spatial-sign variance estimator.  A finite set of HRST $p$-values over a shrinkage grid is combined by the Cauchy combination rule of \citet{LiuXie2020}.  Small shrinkage levels retain robust shape information, whereas larger shrinkage levels move the test toward an identity-scatter spatial-median norm statistic.

The contributions of this paper are fourfold.
First, we introduce a robust regularized Hotelling-type statistic based
on a spatially robust location estimator and a trace-normalized
shrinkage Tyler inverse.  The method inherits the spectral stabilization
of ridge-regularized Hotelling tests while retaining robustness to
heavy-tailed elliptical observations.
Second, we establish a high-dimensional null theory without assuming
sparse covariance or sparse precision matrices.  The proof reduces the
finite-leave-out shrinkage Tyler inverse to a sample resolvent and
calibrates the statistic through random-matrix deterministic
equivalents.
Third, we derive the finite-grid joint asymptotic distribution over
multiple shrinkage levels and propose a Cauchy-combined adaptive test.
This aggregation avoids committing to a single ridge parameter and
provides robustness against unknown signal orientation.
Fourth, we obtain local-alternative distributions and explicit
noncentrality parameters, which clarify how the ridge parameter weights
signals along different spectral directions.  Simulations and a
paired gene-expression study demonstrate that the proposed method
maintains size and improves power under heavy-tailed distributions.

The paper is organized as follows.  Section~\ref{sec:model_method} introduces the regularized Hotelling benchmark and the HRST procedure.  Section~\ref{sec:theory} states the assumptions and main asymptotic results.  Section~\ref{sec:simulation} reports Monte Carlo experiments.  Section~\ref{sec:realdata} presents a paired tumor--normal gene-expression analysis.  Section~\ref{sec:discussion} concludes.  Technical lemmas, proofs and detailed simulation results are collected in Appendices~\ref{subsec:cm_constants}--\ref{sec:sim_details}.

\noindent\textbf{Notation.} For vectors, $\|\cdot\|$ is the Euclidean norm;
for matrices it is the operator norm, and $\|\cdot\|_F$ is the Frobenius norm.
The transpose is denoted by ${}^\T$.
For symmetric matrices, $\mB\preceq\mC$ means $\mC-\mB$ is positive semidefinite.
For positive random sequences, $a_n\asymp_p b_n$ means that both
$a_n/b_n$ and $b_n/a_n$ are $O_p(1)$.
We use the same symbol for a probability law and its distribution function.
For distribution functions $F$ and $G$, their uniform distance is
$\|F-G\|_\infty=\sup_{x\in\mathbb R}|F(x)-G(x)|$.
We write $\xrightarrow{d}$ for convergence in distribution
and $\xrightarrow{p}$ for convergence in probability.

\section{Model and methodology}\label{sec:model_method}

First, we consider the classical high-dimensional Gaussian model
\[
    \vX_i\sim N_p(\vmu,\mSigma_p),\qquad i=1,\ldots,n.
\]
When \(p\) is comparable to or larger than \(n\), the sample covariance matrix
is singular or severely ill-conditioned, and the classical Hotelling statistic
based on \(\widehat{\mS}^{-1}\) is no longer well defined.  A natural remedy is
to replace the inverse sample covariance matrix by a ridge inverse
\(
    (\widehat{\mS}+\lambda\I_p)^{-1},
\)
for $\lambda>0$,
which stabilizes small sample eigenvalues while retaining the spectral
information of \(\widehat{\mS}\).  This regularized Hotelling idea was introduced
in high-dimensional testing problems by \citet{ChenPaulPrenticeWang2011} under
Gaussian-type random-matrix calibration.  Later, \citet{LiAuePaulPengWang2020}
developed an adaptive regularized Hotelling procedure under an
independent-component model, including a data-driven choice of the ridge
parameter and the cube-root random-matrix calibration.
For a ridge parameter \(\lambda>0\), the one-sample regularized Hotelling
statistic is
\[
    T_{\RHT,\lambda}
    =n(\bar{\vX}-\vmu_0)^\T(\widehat{\mS}+\lambda\I_p)^{-1}(\bar{\vX}-\vmu_0).
\]
Let
\[
    \widehat m_{\RHT,\lambda}=p^{-1}\tr\{(\widehat{\mS}+\lambda\I_p)^{-1}\},
    \qquad
    \widehat m_{2,\RHT,\lambda}=p^{-1}\tr\{(\widehat{\mS}+\lambda\I_p)^{-2}\},
\]
\[
    \gamma_n=p/n,
    \qquad
    \widehat a_{\RHT,\lambda}=1-\lambda\widehat m_{\RHT,\lambda},
    \qquad
    \widehat d_{\RHT,\lambda}=1-\gamma_n\widehat a_{\RHT,\lambda}.
\]
Following \citet{ChenPaulPrenticeWang2011}, the estimated centering and variance factors (up to multiplicative constants) are
\[
    \widehat\theta_{1,\RHT,\lambda}=\frac{\widehat a_{\RHT,\lambda}}{\widehat d_{\RHT,\lambda}},
\qquad
    \widehat\theta_{2,\RHT,\lambda}
    =\frac{\widehat a_{\RHT,\lambda}}{\widehat d_{\RHT,\lambda}^3}
    -\frac{\lambda(\widehat m_{\RHT,\lambda}-\lambda\widehat m_{2,\RHT,\lambda})}{\widehat d_{\RHT,\lambda}^4}.
\]
Using the cube-root calibration of \citet{LiAuePaulPengWang2020}, we obtain
the standardized RHT statistic
\[
    Z_{\RHT,\lambda}
    =\frac{\sqrt p\{(p^{-1}T_{\RHT,\lambda})^{1/3}-\widehat\theta_{1,\RHT,\lambda}^{1/3}\}}{(\sqrt2/3)\widehat\theta_{2,\RHT,\lambda}^{1/2}\widehat\theta_{1,\RHT,\lambda}^{-2/3}},
    \qquad
    p_{\RHT,\lambda}=1-\Phi(Z_{\RHT,\lambda}),
\]
where $\Phi$ denotes the standard normal distribution function.

The ridge parameter controls the degree of whitening.  Small \(\lambda\)
produces a statistic close to Hotelling's test, whereas large \(\lambda\)
moves the procedure toward an \(L_2\)-type test based on \(n\|\bar{\vX}-\vmu_0\|^2\).
This spectral regularization is effective for strong dependence because its
calibration is driven by resolvent trace functionals rather than sparsity of
\(\mSigma_p\) or \(\mSigma_p^{-1}\).  However, the RHT method still relies on
the sample mean and the sample covariance matrix.  Consequently, under
heavy-tailed elliptical distributions or contaminated observations, its
centering, variance estimation and data-driven ridge selection can become
unstable.  This motivates the robust extension developed below, where the
sample mean is replaced by a Hettmansperger--Randles spatial location estimator
and the sample covariance ridge inverse is replaced by the inverse of a
trace-normalized shrinkage Tyler scatter matrix.

Next, we consider the following elliptical symmetric distribution:
\begin{equation}\label{eq:elliptical_model}
    \vX_i=\vmu+R_i\mSigma_p^{1/2}\vU_i,
    \qquad
    \vU_i\stackrel{\mathrm{i.i.d.}}{\sim}\operatorname{Unif}(\mathbb S^{p-1}),
    \qquad
    i=1,\ldots,n,
\end{equation}
where $R_i>0$ is independent of $\vU_i$.  This notation treats $R_i$ as the genuine radial length.  Hence $R_i$ is allowed to increase with dimension; under the normalization $p^{-1}\tr(\mSigma_p)=1$.

Under the null hypothesis, we define the null-centered observations
\(
    \vY_i=\vX_i-\vmu_0.
\)
For $\rho\in (0,1)$, %
the trace-normalized shrinkage Tyler estimator \(\widehat{\mathbf C}_\rho\) of the shape matrix \(\mathbf\Sigma_p\) (Couillet and McKay, 2014) is defined as the unique positive-definite solution to the following fixed-point equation:
\begin{equation}\label{eq:shrinkage_tyler}
    \begin{gathered}
    \widehat{\mC}_\rho
    =
    \widehat s_\rho
    \left\{
    \rho\I_p+(1-\rho)\frac{p}{n}
    \sum_{i=1}^n
    \frac{\vY_i\vY_i^\T}{\vY_i^\T\widehat{\mC}_\rho^{-1}\vY_i}
    \right\},\\
    \end{gathered}
\end{equation}
where $\widehat s_\rho$ is chosen to satisfy $\tr(\widehat{\mC}_\rho)=p$.  Let
\(
    \widehat{\mA}_\rho=\widehat{\mC}_\rho^{-1}.
\)
Given $\widehat{\mA}_\rho$, the fixed-scatter HR location estimator $\widehat{\vmu}_\rho$ solves
\begin{equation}\label{eq:hr_location_score}
    \sum_{i=1}^n
    \frac{\vX_i-\widehat{\vmu}_\rho}{\{(\vX_i-\widehat{\vmu}_\rho)^\T\widehat{\mA}_\rho(\vX_i-\widehat{\vmu}_\rho)\}^{1/2}}
    =\vzero.
\end{equation}
Let $\widehat{\vr}_{i,\rho}=\vX_i-\widehat{\vmu}_\rho$, define
\[
    \widehat q_{ij,\rho}=\widehat{\vr}_{i,\rho}^\T\widehat{\mA}_\rho\widehat{\vr}_{j,\rho},
    \qquad
    \widehat a_\rho=\frac{1}{n}\sum_{i=1}^n\widehat q_{ii,\rho}^{-1/2}.
\]
For the fixed $\rho\in(0,1)$, we propose the HR-based test statistic
\begin{equation}\label{eq:hr_statistic}
T_{\mathrm{HR},\rho}
=np\left(1-\frac1p\right)^2\widehat a_\rho^2
(\widehat{\boldsymbol\mu}_\rho-\boldsymbol\mu_0)^\top
\widehat{\mathbf A}_\rho
(\widehat{\boldsymbol\mu}_\rho-\boldsymbol\mu_0),
\end{equation}
Under \(H_0\) and Assumptions~\ref{ass:dimshape}--\ref{ass:grid},
Theorem~\ref{thm:hrst_null} in Section~\ref{sec:theory} establishes
asymptotic normality of \(T_{\mathrm{HR},\rho}\) with centering \(p\) and
variance scale \(2p\omega_{\rho,n}\), where \(\omega_{\rho,n}\) is defined
in \eqref{eq:hr_variance_factor}. For implementation, we estimate
\(\omega_{\rho,n}\) by
\begin{equation}\label{eq:hr_variance_estimator}
    \widehat\omega_\rho
    =
    \frac{p}{n(n-1)}
    \sum_{i\ne j}
    \frac{\widehat q_{ij,\rho}^2}{\widehat q_{ii,\rho}\widehat q_{jj,\rho}}.
\end{equation}
The standardized statistic and the corresponding one-sided $p$-value are
\begin{equation}\label{eq:hr_standardization}
    Z_{\HR,\rho}
    =\frac{T_{\HR,\rho}-p}{(2p\widehat\omega_\rho)^{1/2}},
    \qquad
    p_{\HR,\rho}=1-\Phi(Z_{\HR,\rho}).
\end{equation}

\begin{remark}
If a required solution or positive denominator is unavailable, use the extension
\[
 T_{\HR,\rho}=p,\qquad \widehat\omega_\rho=1,\qquad
 Z_{\HR,\rho}=0,\qquad p_{\HR,\rho}=1/2.
\]
Appendices~\ref{app:rmt}--\ref{app:local} prove that these exceptional events have probability tending to zero,
so the extension preserves all stated limits.
\end{remark}

The choice of the shrinkage level is a central practical issue.  The adaptive
regularized Hotelling procedure of \citet{LiAuePaulPengWang2020} selects the
ridge parameter by targeting the most powerful member of a regularized
Hotelling family under a specified asymptotic alternative framework.  This
oracle-motivated strategy is attractive when the working alternative captures
the true signal structure.  In practice, however, the direction, sparsity and
spectral alignment of the mean shift are usually unknown.  If the actual
alternative deviates from the assumed signal model, a data-driven choice aimed
at one particular power envelope may select a shrinkage level that is far from
optimal for the observed problem.  We therefore do not attempt to estimate a
single best value of \(\rho\).  Instead, we aggregate the evidence from several
shrinkage levels by the Cauchy combination method.  The Cauchy transformation
places strong weight on small component \(p\)-values, so the combined test can
inherit the power of a relatively favorable shrinkage level without requiring
the analyst to know in advance which level is best.  This type of aggregation
has been shown to be effective for combining dependent tests
\citep{LiuXie2020}, and is particularly suitable here because the HRST
statistics computed at different shrinkage levels are strongly dependent but
target different spectral weightings of the same mean signal.

For $p$-values $p_1,\ldots,p_K$ and fixed deterministic nonnegative weights $w_l$ satisfying $\sum_l w_l=1$, define function
\begin{equation}\label{eq:cauchy_combination}
    \operatorname{CCT}(p_1,\ldots,p_K)
    =\frac{1}{2}-\frac{1}{\pi}\arctan\left\{\sum_{l=1}^K w_l\tan\{\pi(1/2-p_l)\}\right\}.
\end{equation}
Let $\mathcal A=\{\rho_1,\ldots,\rho_K\}\subset(0,1)$ be a finite
shrinkage grid, with weight $w_k$ assigned to $\rho_k$.
The proposed robust combined $p$-value is
\[
    p_{\HR}^{\CCT}=\operatorname{CCT}\{p_{\HR,\rho}:\rho\in\mathcal{A}\}.
\]
In practice, we choose $\mathcal{A}=\{0.05,0.1,\cdots,0.95\}, w_i=1/K, i=1,\cdots,K$ and $K=19$. For significance level $\alpha$, we reject the null hypothesis if $p_{\HR}^{\CCT} \le \alpha$.

\section{Asymptotic theory}\label{sec:theory}

\subsection{Conditions and notation}

Let $\tau_{1,p}\ge\cdots\ge\tau_{p,p}\ge0$ be the eigenvalues of
$\mSigma_p$ and $H_p(x)=p^{-1}\sum_{j=1}^p\mathbf1(\tau_{j,p}\le x)$ its Empirical Spectral Distribution(ESD). For a spike index set $J_p$, define $r_p=|J_p|$, and let $H_{p,b}(x)=(p-r_p)^{-1}\sum_{j\notin J_p}\mathbf1(\tau_{j,p}\le x)$ denote the ESD of the non-spike eigenvalues of $\mSigma_p$.

\begin{assumption}%
\label{ass:dimshape}
For a probability law $H\ne\delta_0$ on $[0,\infty)$, $H_p\xrightarrow{d} H$.
\begin{itemize}
\item[(i)] $\tr(\mSigma_p)=p$;
\item[(ii)] $p,n\rightarrow\infty$, $y_n=p/n\to y\in(0,\infty)$ and $\sqrt n\,|y_n-y|\to0$;
\item[(iii)]$r_p=O(1)$, $\sup_{j\notin J_p}\tau_{j,p}\le C_H$ with $C_H<\infty$, $ \sqrt n\,\|H_{p,b}-H\|_\infty\to0$.
\end{itemize}
\end{assumption}

Under Assumption~\ref{ass:dimshape}, the non-spike eigenvalues are
uniformly bounded above but need not be bounded away from zero.
The condition $H\ne\delta_0$ rules out a bulk limit concentrated entirely
at zero. The number of spikes remains bounded as $p$ grows, while their
magnitudes may diverge provided that the trace normalization and the
uniform convergence condition on $H_{p,b}$ hold. In particular,
$\tr(\mSigma_p)=p$ and nonnegative eigenvalues imply
$\tau_{j,p}\le p$ for every $j$.
Let $\vv_{1,p},\ldots,\vv_{p,p}$ be an orthonormal eigenbasis of
$\mSigma_p$, with $\vv_{j,p}$ corresponding to $\tau_{j,p}$.
Define the orthogonal projectors onto the spike and bulk subspaces by
\[
 \mP_s=\sum_{j\in J_p}\vv_{j,p}\vv_{j,p}^\T,
 \qquad \mP_b=\I_p-\mP_s.
\]
These yield the decomposition $\mSigma_p=\mSigma_{p,b}+\mSigma_{p,s}$,
where $ \mSigma_{p,b}=\mP_b\mSigma_p\mP_b$ and $ \mSigma_{p,s}=\mP_s\mSigma_p\mP_s$.

The trace normalization $\tr(\mSigma_p)=p$ applies to the full matrix;
the retained bulk eigenvalues are not separately rescaled to have mean one.
The uniform CDF condition in Assumption~\ref{ass:dimshape}-(iii) is stronger than weak convergence. For example, when
\[
 \begin{aligned}
 \mSigma_p%
 &=\frac{p}{2p-1}\operatorname{diag}(p,1,\ldots,1),&
 \end{aligned}
\]
we have $\tr\mSigma_p=p$, $H_{p,b}=\delta_{p/(2p-1)}\xrightarrow{d}\delta_{1/2}$,
$\sqrt n\,\|H_{p,b}-\delta_{1/2}\|_\infty=\sqrt n$. By contrast,
when
\[
\mSigma_p=\operatorname{diag}((p+1)/2,1/2,\ldots,1/2)
\]
with $\tr\mSigma_p=p$. Taking \(H=\delta_{1/2}\), in this case,
\(
H_{p,b}=\delta_{1/2}=H,
\)
\(
\sqrt n\,\|H_{p,b}-H\|_\infty=0.
\)

For the radii in the elliptical model \eqref{eq:elliptical_model}, let
$\bar R_p=\E R_i$ and $\xi_i=R_i/\bar R_p$.
The symbol $\alpha_R$ below denotes the lower-tail exponent of the
normalized radius $\xi_i$.

\begin{assumption}%
\label{ass:radial}
The $R_i$ are i.i.d. and independent of all $\vU_i$.
For constants $\alpha_R>8$ and $0<c_R<C_R<\infty$,
\[
 c_R\le\bar R_p/\sqrt p\le C_R,\qquad
 \Pr\{\xi_i\le t\}\le C_Rt^{\alpha_R}\quad(0<t<1),
\]
uniformly in $p$.
\end{assumption}

Assumption~\ref{ass:radial} allows $\E R_i^2=\infty$.
For example, the radial component of a multivariate $t_\nu$ distribution
with fixed $\nu>1$ satisfies Assumption~\ref{ass:radial} for all
sufficiently large $p$. This includes $1<\nu\le2$, for which
$\E R_i^2=\infty$; see Section~\ref{subsec:student_radial}.
At the same time, integrating the lower-tail bound in
Assumption~\ref{ass:radial} yields
$\E\xi_i^{-k}\le1+C_Rk/(\alpha_R-k)$ for every fixed $0<k<\alpha_R$;
see the proof of Lemma~\ref{lem:orders}.
Since $\bar R_p\asymp\sqrt p$, it follows that
\[
 \E R_i^{-k}=\bar R_p^{-k}\E\xi_i^{-k}=O(p^{-k/2}),
 \qquad 0<k<\alpha_R.
\]
These inverse-moment bounds are used in the asymptotic expansion of
$\widehat{\vmu}_\rho$.

\begin{assumption}%
\label{ass:grid}
The grid $\mathcal A=\{\rho_1,\ldots,\rho_K\}$ is fixed in
$[\rho_{\min},\rho_{\max}]\subset(0,1)$.
\end{assumption}

We next introduce deterministic quantities describing the resolvent
approximation to the shrinkage scatter inverse. These quantities
determine the variance factors and correlations used in the asymptotic
null distributions.
Let $m_n(z)$ and $\underline m_n(z)$ denote the Stieltjes transform and
the companion Stieltjes transform, respectively, of the deterministic
Mar\v{c}enko--Pastur law associated with $(y_n,H_{p,b})$, using the
convention $m_F(z)=\int(x-z)^{-1}\,dF(x)$. They satisfy
\[
 \underline m_n(z)=-\frac{1-y_n}{z}+y_nm_n(z).
\]
The companion transform satisfies the Silverstein equation
\citep{LiuHuBaiSong2023}
\[
 z=-\frac1{\underline m_n(z)}
   +y_n\int\frac{t}{1+t\underline m_n(z)}\,dH_{p,b}(t).
\]
For $\lambda>0$, set
\(
 d_n(\lambda)=\lambda\underline m_n(-\lambda).
\)
Evaluating the equation on the negative real axis, at $z=-\lambda$,
shows that $d_n(\lambda)$ is the unique root in $(0,1]$ of
\begin{equation}\label{eq:bulk_fixed_point}
 \mathfrak g_n(d,\lambda)
 :=d-1+y_n\int\frac{td}{td+\lambda}\,dH_{p,b}(t)=0.
\end{equation}
Existence and uniqueness follow from $ \mathfrak g_n(0,\lambda)=-1$, $ \mathfrak g_n(1,\lambda)\ge0$, and $\partial_d\mathfrak g_n(d,\lambda)
 =1+y_n\int\frac{t\lambda}{(td+\lambda)^2}\,dH_{p,b}(t)\ge1$.
Using $y_n=p/n$ in place of the exact bulk ratio $(p-r_p)/n$
introduces an $O(r_p/p)$ deterministic adjustment on compact positive
ridge intervals; see the proof of Lemma~\ref{lem:original_sample_traces}.
The ordinary Stieltjes transform consequently satisfies
\[
 m_n(-\lambda)=\int\frac{dH_{p,b}(t)}{td_n(\lambda)+\lambda}.
\]
For brevity, write
\begin{equation}\label{eq:deterministic_a}
 a_n(\lambda)=1-\lambda m_n(-\lambda)
             =\frac{1-d_n(\lambda)}{y_n}.
\end{equation}

These functions are defined for an arbitrary ridge level $\lambda>0$.
The Couillet--McKay (CM) map determines the effective ridge and scale
corresponding to each shrinkage level $\rho$.
For $\rho\in(0,1)$, $y>0$, and a probability law $H\ne\delta_0$ on
$[0,\infty)$, let
\[
 J(d,\lambda;H)=\int\frac{td}{td+\lambda}\,dH(t).
\]
In particular, $J(d_n(\lambda),\lambda;H_{p,b})=a_n(\lambda)$.
The map $\mathcal P_\rho(y,H)=(c,\lambda,d)^\T$
is defined by the unique solution with $c,\lambda>0$ and $0<d\le1$ of
\begin{equation}\label{eq:cm_system}
 \bm{\mathcal F}_\rho(c,\lambda,d;y,H)
 =\begin{pmatrix}
 c-1-\lambda\\
 d-1+yJ(d,\lambda;H)\\
 J(d,\lambda;H)-\dfrac{1-\rho}{\rho}\dfrac{\lambda}{1+\lambda}
 \end{pmatrix}
 =\vzero.
\end{equation}
Let
\(
 \bm\vartheta_{\rho,n}
 =(c_{\rho,n},\lambda_{\rho,n},d_{\rho,n})^\T
 =\mathcal P_\rho(y_n,H_{p,b}).
\)
In particular, $d_{\rho,n}=d_n(\lambda_{\rho,n})$.
Lemma~\ref{lem:original_cm_scalar} proves uniqueness and, for all
sufficiently large $n$, the uniform bounds
\[
 c\le c_{\rho,n},\lambda_{\rho,n},d_{\rho,n}\le C,\qquad
 \sigma_{\min}\{\partial_{\bm\vartheta}\bm{\mathcal F}_\rho
       (\bm\vartheta_{\rho,n};y_n,H_{p,b})\}\ge c.
\]
It also proves differentiability in a common neighbourhood.
These properties are consequences of the population spectrum and fixed grid.

\subsection{Asymptotic null distribution}

For the angular observations $\vz_i=\sqrt p\,\mSigma_p^{1/2}\vU_i$,
and $\mS_n=n^{-1}\sum_{i=1}^{n}\vz_i\vz_i^\T$.
Define the angular sample resolvent and use the CM constants to scale it:
\begin{equation}\label{eq:angular_resolvent}
 \begin{gathered}
 \mQ_n(\lambda)=(\mS_n+\lambda \I_p)^{-1},\qquad
 \mA_{\rho,n}^{\mathrm R}=c_{\rho,n}\mQ_n(\lambda_{\rho,n}).
 \end{gathered}
\end{equation}
Under $H_0$, $\mA_{\rho,n}^{\mathrm R}$ approximates
$\widehat{\mA}_\rho$; see Theorem~\ref{thm:original_spike_inverse}.

The null variance depends on weighted traces involving one or two
resolvents, with population weights $\mSigma_p$.
Define their deterministic equivalents by
\begin{equation}\label{eq:theta_diagonal}
 \begin{aligned}
 \Theta_{1,n}(\lambda)&=a_n(\lambda)/d_n(\lambda),\\
 \Theta_{2,n}(\lambda)&=
 \frac{a_n(\lambda)}{d_n(\lambda)^3}
 -\frac{\lambda\{m_n(-\lambda)-\lambda m_n'(-\lambda)\}}
       {d_n(\lambda)^4}.
\end{aligned}
\end{equation}
Here $m_n'$ differentiates the argument of $m_n$, so that
$\frac{d}{d\lambda}m_n(-\lambda)=-m_n'(-\lambda)$.
The correlations across shrinkage levels also involve mixed resolvent
traces. For two ridge levels $\lambda_k$ and $\lambda_l$, write
\begin{equation}\label{eq:theta_mixed}
 \begin{aligned}
 \mathcal J_{12,n}(\lambda_k,\lambda_l)
 &=\int\frac{t^2\,dH_{p,b}(t)}
 {(td_n(\lambda_k)+\lambda_k)(td_n(\lambda_l)+\lambda_l)},\\
 \Theta_{12,n}(\lambda_k,\lambda_l)
 &=\frac{\mathcal J_{12,n}(\lambda_k,\lambda_l)}
 {1-y_nd_n(\lambda_k)d_n(\lambda_l)\mathcal J_{12,n}(\lambda_k,\lambda_l)}.
 \end{aligned}
\end{equation}
We have
$\Theta_{12,n}(\lambda,\lambda)=\Theta_{2,n}(\lambda)$.
The weighted trace approximations are established in
Lemma~\ref{lem:original_weighted_cross}.
Let $\ell_{\rho,n}=d_{\rho,n}$, the deterministic variance factor is
\begin{equation}\label{eq:hr_variance_factor}
 \omega_{\rho,n}=\ell_{\rho,n}^2
        \frac{\Theta_{2,n}(\lambda_{\rho,n})}
             {\{\Theta_{1,n}(\lambda_{\rho,n})\}^2}.
\end{equation}

\begin{theorem}\label{thm:hrst_null}
Under Assumptions~\ref{ass:dimshape}--\ref{ass:grid} and $H_0$, for each
$\rho\in\mathcal A$, we have
\[
 \frac{T_{\HR,\rho}-p}{\sqrt{2p\omega_{\rho,n}}}\xrightarrow{d} N(0,1),
 \qquad
 \widehat\omega_\rho-\omega_{\rho,n}=O_p(r_n),
\]
where
\(
r_n=n^{-1/2}\{\log(n+2)\}^{1/2}
+p^{-1}r_p+p^{-1/2}n^{1/\alpha_R}.
\)
\end{theorem}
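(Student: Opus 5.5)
Under $H_0$ we may take $\vmu_0=\vzero$, so $\vY_i=\vX_i=R_i\mSigma_p^{1/2}\vU_i$. Because $\widehat{\mC}_\rho$, $\widehat{\vmu}_\rho$ and the $\widehat q_{ij,\rho}$ are invariant (or equivariant) under rescaling of each observation's radius in the relevant directions, we work with the angular vectors $\vz_i=\sqrt p\,\mSigma_p^{1/2}\vU_i$. The first step invokes Theorem~\ref{thm:original_spike_inverse}, which replaces $\widehat{\mA}_\rho$ by the resolvent surrogate $\mA^{\mathrm R}_{\rho,n}=c_{\rho,n}\mQ_n(\lambda_{\rho,n})$ with an operator-norm error that is negligible after multiplication by the quadratic forms involved. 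Spike directions have to be handled separately, since $\mQ_n$ damps them heavily. I expect the surrogate error on $\mP_s$ to contribute only $O_p(r_p/p)$ to $T_{\HR,\rho}/p$, and this is the source of the $p^{-1}r_p$ term in $r_n$.

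The second step is a one-step (Bahadur) expansion of the HR score equation \eqref{eq:hr_location_score} around $\vzero$ with $\widehat{\mA}_\rho$ held fixed. Write $\vv_i=\widehat{\mA}_\rho^{1/2}\vX_i/\|\widehat{\mA}_\rho^{1/2}\vX_i\|$. A Taylor expansion of the score then gives
\[
(1-p^{-1})\,\widehat a_\rho\,\widehat{\mA}_\rho^{1/2}\widehat{\vmu}_\rho
= n^{-1}\textstyle\sum_i \vv_i+\text{remainder}.
\]
The factor $(1-p^{-1})$ comes from the Jacobian $q^{-1/2}(\I-\vv\vv^\T)$, whose averaged trace is about $(p-1)/p$ per direction. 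The inverse-moment bounds $\E R_i^{-k}=O(p^{-k/2})$ for $k<\alpha_R$ from Assumption~\ref{ass:radial} control the remainder. A union bound over $\max_i\xi_i^{-1}\lesssim n^{1/\alpha_R}$ produces the $p^{-1/2}n^{1/\alpha_R}$ term. Substituting the expansion gives
\[
T_{\HR,\rho}=n^{-1}p\,\Big\|\textstyle\sum_i\vv_i\Big\|^2\{1+o_p(p^{-1/2})\}
= p+\frac{p}{n}\sum_{i\neq j}\vv_i^\T\vv_j+o_p(\sqrt p).
\]
Since $\|\vv_i\|=1$, the diagonal terms contribute exactly $p$. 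This is why the centering is $p$ and why the construction deliberately includes the factor $\widehat a_\rho$.

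The third step is a martingale CLT for the degenerate U-statistic $p n^{-1}\sum_{i\ne j}\vv_i^\T\vv_j$, applied with $\widehat{\mA}_\rho$ replaced by $\mA^{\mathrm R}_{\rho,n}$. This is the main obstacle, because $\mQ_n$ depends on every $\vz_i$, so $\vv_i$ and $\vv_j$ are not independent. I would first remove $\vz_i$ and $\vz_j$ from the resolvent using Sherman--Morrison leave-two-out identities. The self-leverage factors $1/(1+\lambda^{-1}\cdots)$ concentrate at $d_{\rho,n}$, and they produce the factor $\ell_{\rho,n}=d_{\rho,n}$. The normalization $\widehat q_{ii,\rho}$ concentrates at $c_{\rho,n}\tr(\mSigma_p\mQ_n)\approx p\,c_{\rho,n}\Theta_{1,n}$. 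The conditional variance of the leading martingale differences is then a trace of the form $p^{-1}\tr(\mSigma_p\mQ^{(ij)}\mSigma_p\mQ^{(ij)})$. By Lemma~\ref{lem:original_weighted_cross} this trace is $\Theta_{2,n}(\lambda_{\rho,n})+O_p(n^{-1/2}\sqrt{\log n})$, which gives the variance $2p\,\ell^2\Theta_2/\Theta_1^2=2p\omega_{\rho,n}$. The Lindeberg condition follows from fourth-moment bounds on quadratic forms of uniform vectors.

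Finally, the variance estimator. Each summand $\widehat q_{ij,\rho}^2/(\widehat q_{ii,\rho}\widehat q_{jj,\rho})$ equals $(\vv_i^\T\vv_j)^2$. Applying the same leave-two-out decomposition, its expectation is $p^{-1}\ell^2\Theta_2/\Theta_1^2$ up to corrections. Uniform concentration of quadratic forms over the $n^2$ pairs costs the $\sqrt{\log n}$ factor, so $\widehat\omega_\rho-\omega_{\rho,n}=O_p(r_n)$, and Slutsky's lemma completes the proof.
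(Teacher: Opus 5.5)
Your skeleton matches the paper's bookkeeping: the resolvent surrogate, the Bahadur linearisation to $p+\frac pn\sum_{i\ne j}\vv_i^\T\vv_j$, the leave-two-out self-leverage producing the factor $d_{\rho,n}$, and the variance $2p\omega_{\rho,n}$. Step~3, however, does not go through as described.

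\textbf{The linear replacement is not negligible.} Replacing $\widehat{\mA}_\rho$ by $\mA^{\mathrm R}_{\rho,n}$ inside the \emph{linear} statistic is not justified by Theorem~\ref{thm:original_spike_inverse}. That theorem gives $\|\mathbf W_\rho-\mathbf V_\rho\|_{\rm op}=O_p(\varepsilon_n)$ for the normalized Gram matrices. The resulting bound is $\frac pn|\bm 1^\T(\mathbf W_\rho-\mathbf V_\rho)\bm 1|\le p\|\mathbf W_\rho-\mathbf V_\rho\|_{\rm op}=O_p(p\varepsilon_n)$. After division by $\sqrt{2p\omega_{\rho,n}}$ this is $O_p(\sqrt{\log n})$, and it would still be $O_p(1)$ without the logarithm. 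Negligibility requires cancellation among the $n^2$ error entries, and your plan provides none.

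\textbf{The martingale CLT is not set up.} After leave-two-out, $K_{\rho,ij}$ still depends on all other observations through $\mQ^{(-ij)}$, so $\sum_{i<j}K_{\rho,ij}$ is not a degenerate U-statistic. For the filtration $\sigma(\vz_1,\ldots,\vz_k)$, the $k$th martingale difference also contains the changes in $\E_kK_{\rho,ij}$, $i<j<k$, caused by $\vz_k$ entering $\mQ^{(-ij)}$. These indirect terms are not negligible. Computing $\E\{(\vz_i^\T\mQ^{(-ij)}\vz_j)^2\mid\mQ^{(-ij)}\}$ therefore identifies the total variance, but it proves neither convergence of the conditional variance nor a Lindeberg condition.

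\textbf{The missing idea is sign symmetrization.} This is what the paper uses, and it closes both gaps. Write $\vU_i=\epsilon_i\bm W_i$ with Rademacher signs independent of $\mathcal F_n=\sigma\{R_i,\vU_i\vU_i^\T:i\le n\}$. The shrinkage Tyler equation depends on the data only through $\vU_i\vU_i^\T$. Hence $\widehat{\mA}_\rho$ is $\mathcal F_n$-measurable, and $\vv_i^\T\vv_j=\epsilon_i\epsilon_j\bar H_{ij,\rho}$ with $\bar H_{ij,\rho}$ also $\mathcal F_n$-measurable.

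Conditionally on $\mathcal F_n$, the leading term is the Rademacher chaos $\sum_{i<j}c_{ij,n}\epsilon_i\epsilon_j$ with fixed coefficients. It is asymptotically normal because $\|\mathbf C_n\|_{\rm op}=O_p(p^{-1/2})$, which follows from $\|\mathbf W_\rho\|_{\rm op}=O_p(1)$, while $\tr(\mathbf C_n^2)$ stays bounded (Proposition~\ref{prop:cond_dejong}). The actual $\widehat{\mA}_\rho$ stays in the numerator. Random-matrix theory enters only through the conditional variance $\frac{n-1}{n}\omega^0_{\rho,n}/\omega_{\rho,n}$. This is a quadratic functional of the Gram matrix, so the bound $|\tr(\mathbf W_\rho^2)-\tr(\mathbf V_\rho^2)|\le n\|\mathbf W_\rho-\mathbf V_\rho\|_{\rm op}(\|\mathbf W_\rho\|_{\rm op}+\|\mathbf V_\rho\|_{\rm op})$ suffices. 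The same signs would also rescue your linear replacement: $\frac{\sqrt p}{n}\sum_{i<j}\epsilon_i\epsilon_j\widetilde E_{ij}$ has conditional variance $\frac p{n^2}\sum_{i<j}\widetilde E_{ij}^2=O_p(\varepsilon_n^2)$.

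\textbf{The variance estimator needs a recentering step.} Its summands are not $(\vv_i^\T\vv_j)^2$, because $\widehat q_{ij,\rho}$ uses the residuals $\vX_i-\widehat{\vmu}_\rho$. Controlling this recentering through $\max_i\tau_i^{-1}=O_p(p^{-1/2}n^{1/\alpha_R})$, where $\tau_i$ is the $\widehat{\mA}_\rho$-length of $\vX_i-\vmu_0$, is where the $p^{-1/2}n^{1/\alpha_R}$ term of $r_n$ comes from. The $r_p/p$ term comes from the bulk-versus-full-spectrum trace replacements behind $\omega^0_{\rho,n}-\omega_{\rho,n}$, not from spike contributions to $T_{\HR,\rho}/p$.
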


\begin{theorem}\label{thm:joint_hrst}
Under the conditions of Theorem~\ref{thm:hrst_null}, %
for each
$\rho_1,\dots,\rho_K\in\mathcal A$, we have
\[
 (Z_{\HR,\rho_1},\ldots,Z_{\HR,\rho_K})^\T
       \xrightarrow{d} N_K(\vzero,\mR_{\HR}),
\]
where $\mR_{\HR}=\lim_{n\to\infty}(R_{kl,n})_{k,l=1}^K$ with
$R_{kl,n}=
 \dfrac{\Theta_{12,n}(\lambda_{\rho_k,n},\lambda_{\rho_l,n})}
 {\{\Theta_{2,n}(\lambda_{\rho_k,n})
       \Theta_{2,n}(\lambda_{\rho_l,n})\}^{1/2}}$ and  $R_{kk,n}=1$.
\end{theorem}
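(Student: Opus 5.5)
Since each $Z_{\HR,\rho_k}$ already satisfies the marginal result of Theorem~\ref{thm:hrst_null}, the natural route is Cramér--Wold applied to the common leading quadratic form. Following the reduction used in the proof of Theorem~\ref{thm:hrst_null}, we write, under $H_0$ and uniformly over the fixed grid,
\[
 T_{\HR,\rho}=\frac{p\,\ell_{\rho,n}^{2}}{n\,\Theta_{1,n}(\lambda_{\rho,n})}\sum_{i\ne j}\vU_i^\T\mSigma_p^{1/2}\mQ_{n}^{(ij)}(\lambda_{\rho,n})\mSigma_p^{1/2}\vU_j+p+o_p(\sqrt p).
\]
Here $\mQ_n^{(ij)}$ denotes a leave-two-out angular resolvent, and the scalar prefactor is whatever normalisation makes the diagonal part produce the centering $p$. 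This representation combines three ingredients. First, Theorem~\ref{thm:original_spike_inverse} replaces $\widehat{\mA}_\rho$ by $\mA^{\mathrm R}_{\rho,n}=c_{\rho,n}\mQ_n(\lambda_{\rho,n})$. Second, the Bahadur-type expansion of $\widehat{\vmu}_\rho$ gives $\widehat{\vmu}_\rho-\vmu_0\approx(n\widehat a_\rho)^{-1}(1-1/p)^{-1}\sum_i\widehat{\mA}_\rho^{-1/2}$-normalised spatial signs, and the inverse-moment bounds from Assumption~\ref{ass:radial} control the remainder. Third, Sherman--Morrison removes the self-leverage of $\vz_i,\vz_j$ from $\mQ_n$, which is the source of the factors $d_{\rho,n}=\ell_{\rho,n}$. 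Since $\widehat\omega_{\rho_k}\xrightarrow{p}$ its target by Theorem~\ref{thm:hrst_null}, Slutsky reduces the claim to the joint law of the standardized off-diagonal forms
\[
 W_k=\frac{1}{\sqrt{2p\omega_{\rho_k,n}}}\,(\text{off-diagonal sum at }\rho_k),\qquad k=1,\dots,K .
\]

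For fixed $\bm t\in\R^K$, $\sum_k t_kW_k$ is again a degenerate $U$-statistic in the signs $\vU_1,\dots,\vU_n$. Its kernel is a linear combination of resolvents at the finitely many ridges $\lambda_{\rho_k,n}$. I would apply the martingale CLT along the filtration $\mathcal F_j=\sigma(\vU_1,\dots,\vU_j)$, using the martingale differences $D_j=\vU_j^\T\mSigma_p^{1/2}\mB_j\sum_{i<j}\mSigma_p^{1/2}\vU_i$. The matrix $\mB_j$ is built from the leave-out resolvents; to make $D_j$ a genuine martingale difference, I would replace $\mQ_n$ by resolvents of samples independent of $\vU_j$, at a cost controlled by rank-one perturbation bounds. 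The Lindeberg condition should follow from fourth-moment bounds for quadratic forms of uniform spherical vectors, exactly as in the marginal proof. The only new input is the conditional variance $\sum_j\E(D_j^2\mid\mathcal F_{j-1})$. Its cross terms have the form $p^{-2}n^{-2}\sum_{i\ne j}\tr\{\mSigma_p\mQ(\lambda_k)\mSigma_p\mQ(\lambda_l)\}$-type quantities, suitably leave-out corrected.

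The main obstacle is showing that these mixed two-resolvent traces concentrate, after the leave-out corrections, at $\Theta_{12,n}(\lambda_{\rho_k,n},\lambda_{\rho_l,n})$, normalised by the same factors $\ell^2/\Theta_1$. I would invoke Lemma~\ref{lem:original_weighted_cross}, which gives the deterministic equivalent of $p^{-1}\tr\{\mSigma_p\mQ_n(\lambda_k)\mSigma_p\mQ_n(\lambda_l)\}$. I would then track the Sherman--Morrison factors $(1+n^{-1}\vz_i^\T\mQ^{(i)}\vz_i)^{-1}\to d_n(\lambda)$ at the two ridges separately. These are what produce the denominator $1-y_nd_n(\lambda_k)d_n(\lambda_l)\mathcal J_{12,n}$. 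The spike directions must be shown to contribute $O(r_p/p)$, as in the marginal case. The limiting conditional variance is then $\bm t^\T\mR_n\bm t$ with $R_{kk,n}=1$, where the diagonal case is consistent because $\Theta_{12,n}(\lambda,\lambda)=\Theta_{2,n}(\lambda)$. The limit of $R_{kl,n}$ exists because $\lambda_{\rho,n}$, $d_n$ and $\mathcal J_{12,n}$ converge under Assumption~\ref{ass:dimshape} and the continuity of the CM map from Lemma~\ref{lem:original_cm_scalar}. Cramér--Wold then gives $N_K(\vzero,\mR_{\HR})$.
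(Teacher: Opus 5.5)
There is a genuine gap in your central CLT step: the martingale structure you rely on does not exist.

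\textbf{The martingale step fails.} Your $D_j$ are not martingale differences for $\mathcal F_j=\sigma(\vU_1,\ldots,\vU_j)$.
\begin{itemize}
\item The leave-two-out resolvents depend on all $n$ angular vectors, including $\vU_{j+1},\ldots,\vU_n$. The same holds for the Tyler normalisation behind $\widehat{\mA}_\rho$. So $D_j$ is not even $\mathcal F_j$-measurable.
\item Making $\mB_j$ independent of $\vU_j$ does not repair this. Adaptedness needs $\mB_j$ built from $\vU_1,\ldots,\vU_{j-1}$ only. That is a change of rank of order $n-j$, not a rank-one change, and it moves the normalised traces at order one: the partial sample has a different aspect ratio, hence different $d_n$ and $\Theta$'s.
\item Suppose you instead use the genuine decomposition $\sum_j(\E_j-\E_{j-1})$. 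Then the increments pick up contributions from earlier pairs $a<b<j$, because their kernels depend on $\vU_j$. The conditional variance then involves traces of $\E_j$-projected resolvents, which do not agree with $p^{-1}\tr\{\mSigma_p\mQ(\lambda_k)\mSigma_p\mQ(\lambda_l)\}$ at leading order.
\end{itemize}
So your identification of the limiting conditional variance with $\bm t^\T\mR_n\bm t$ is not supported by the argument as written. Also, the paper's marginal proof contains no martingale Lindeberg step that you could reuse.

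\textbf{The missing idea is sign symmetry.} Under $H_0$ the shrinkage Tyler equation depends on $\vU_i$ only through $\vU_i\vU_i^\T$. Write $\vU_i=\epsilon_i\bm W_i$ and condition on $\mathcal F_n=\sigma\{R_i,\vU_i\vU_i^\T\}$. This freezes every scatter matrix, resolvent and normalisation, while the $\epsilon_i$ stay i.i.d.\ Rademacher. The paper's reduction then gives
\[
Z_{\HR,\rho_k}=\sum_{i<j}c^{(k)}_{ij}\epsilon_i\epsilon_j+o_p(1),
\]
with $\mathcal F_n$-measurable coefficients $c^{(k)}_{ij}$ and the same signs for every $k$. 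Hence $\sum_k t_kZ_{\HR,\rho_k}$ is, up to $o_p(1)$, a Rademacher chaos with fixed coefficient matrix $\mathbf C_n(\bm t)$.
\begin{itemize}
\item Its conditional variance $\frac12\tr\mathbf C_n(\bm t)^2$ is the Gram product average $\frac{p}{n(n-1)}\sum_{i\ne j}(\mathbf W_{\rho_k})_{ij}(\mathbf W_{\rho_l})_{ij}$. Corollary~\ref{cor:original_omega} evaluates this through $\tr(\mathbf W_k\mathbf W_l)-n$ and a divided-difference identity for $p^{-1}\tr\{\mathbf S\mQ(\lambda_k)\mathbf S\mQ(\lambda_l)\}$, with no leave-out bookkeeping.
\item The bound $\tr\mathbf C_n(\bm t)^4\le\|\mathbf C_n(\bm t)\|_{\op}^2\tr\mathbf C_n(\bm t)^2=O_p(p^{-1})$ gives the de Jong condition.
\item Proposition~\ref{prop:cond_dejong} and Lemma~\ref{lem:gaussian_continuity} then finish the proof, including degenerate directions.
\end{itemize}
Your martingale idea does work in this setting, but along $\sigma(\mathcal F_n,\epsilon_1,\ldots,\epsilon_j)$ rather than along the raw $\vU$'s.

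\textbf{Smaller slips.}
\begin{itemize}
\item Your prefactor should be $\ell_{\rho,n}/\Theta_{1,n}$, not $\ell_{\rho,n}^2/\Theta_{1,n}$; otherwise the variance is not $2p\omega_{\rho,n}$.
\item The denominator $1-y_nd_n(\lambda_k)d_n(\lambda_l)\mathcal J_{12,n}$ is already part of $\Theta_{12,n}$ in Lemma~\ref{lem:original_weighted_cross}. The Sherman--Morrison factors contribute only $d_{k,n}d_{l,n}$, which cancel in $R_{kl,n}$.
\end{itemize}
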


For Cauchy combination, any perfectly correlated Gaussian coordinates
may be merged by adding their weights.

\begin{corollary}\label{cor:hrst_cct}
Under Theorem~\ref{thm:joint_hrst},
\[
 \lim_{\alpha\downarrow0}\lim_{n,p\to\infty}
       \frac{\Pr\{p_{\HR}^{\CCT}\le\alpha\}}{\alpha}=1.
\]
\end{corollary}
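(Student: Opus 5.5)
The plan is to transfer the joint Gaussian limit of Theorem~\ref{thm:joint_hrst} to the combined $p$-value via the continuous mapping theorem, and then use the Liu--Xie tail property of Cauchy combinations of Gaussian coordinates.

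\textbf{Step 1 (limit for fixed $\alpha$).} Write $p_{\HR,\rho_l}=1-\Phi(Z_{\HR,\rho_l})$. Then
\[
\tan\{\pi(1/2-p_{\HR,\rho_l})\}=\tan\{\pi(\Phi(Z_{\HR,\rho_l})-1/2)\},
\]
which is a continuous function of $Z_{\HR,\rho_l}$ with values in $\R$. Hence the map
\[
(Z_1,\dots,Z_K)\mapsto \operatorname{CCT}\{1-\Phi(Z_l)\}
\]
is continuous from $\R^K$ to $(0,1)$. By Theorem~\ref{thm:joint_hrst} and the continuous mapping theorem, $p_{\HR}^{\CCT}\xrightarrow{d}\operatorname{CCT}\{1-\Phi(W_l)\}$, where $W\sim N_K(\vzero,\mR_{\HR})$. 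The exceptional events in the Remark have vanishing probability, so they do not affect this limit.

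The limiting statistic is $T=\sum_l w_l\tan\{\pi(\Phi(W_l)-1/2)\}$. Its law is continuous: it is a nonzero weighted sum, and after merging perfectly correlated coordinates (as noted before the corollary) it is a nondegenerate smooth function of a Gaussian vector. Since
\[
\{p^{\CCT}\le\alpha\}=\{T\ge t_\alpha\},\qquad t_\alpha=\tan\{\pi(1/2-\alpha)\},
\]
the Portmanteau theorem gives
\[
\lim_{n}\Pr\{p_{\HR}^{\CCT}\le\alpha\}=\Pr\{T\ge t_\alpha\}.
\]

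\textbf{Step 2 (tail of $T$).} Each marginal $\tan\{\pi(\Phi(W_l)-1/2)\}$ is exactly standard Cauchy, because $\Phi(W_l)$ is uniform; this uses $R_{kk}=1$. For the tail I will invoke Theorem~1 of \citet{LiuXie2020}, which states that for Gaussian $W$ with unit variances and any correlation matrix,
\[
\Pr\{T>t\}/\Pr\{C>t\}\to1 \quad\text{as } t\to\infty,
\]
where $C$ is standard Cauchy. That result requires bivariate normality of each pair, which holds here, and nonnegative weights summing to one, which are given.

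Since $\Pr\{C>t_\alpha\}=\alpha$ exactly, we obtain
\[
\lim_{\alpha\downarrow0}\Pr\{T\ge t_\alpha\}/\alpha=1.
\]

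\textbf{Main obstacle.} The delicate point is justifying Liu--Xie when some pairs are perfectly correlated ($R_{kl}=\pm1$). Their argument assumes $|r_{kl}|<1$ in the bound on joint tail probabilities.

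I would handle this in two steps. First, merge coordinates with $R_{kl}=1$, as the paper suggests: identical coordinates give identical Cauchy terms, so their weights simply add. Second, rule out $R_{kl}=-1$. Since $\Theta_{12,n}>0$ by \eqref{eq:theta_mixed} (both the numerator and the denominator are positive), every entry of $\mR_{\HR}$ is nonnegative.

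After merging, the remaining pairs have $|R_{kl}|<1$. For these, the standard estimate
\[
\Pr\{W_k>s,\,W_l>s\}=o(\Pr\{W_k>s\})
\]
holds, and the Liu--Xie proof applies verbatim.
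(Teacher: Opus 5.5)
Your proposal is correct in outline and has the same skeleton as the paper's proof. You get the fixed-$\alpha$ limit from Theorem~\ref{thm:joint_hrst}, merge coordinates with $R_{\HR,kl}=1$, and exclude $R_{\HR,kl}=-1$ via $R_{\HR,kl}\ge0$. The denominator of $\Theta_{12,n}$ is positive by the Cauchy--Schwarz bound $y_nK_{kl}\le\{(1-d_{k,n})(1-d_{l,n})\}^{1/2}<1$ in Corollary~\ref{cor:original_omega}.

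The substantive difference is that you import the Cauchy tail equivalence from \citet{LiuXie2020}, whereas the paper proves it directly. It sets $r_*=\max_{j<k}R_{\HR,jk}<1$ and $\nu=2/(1+r_*)>1$. Gaussian moment-generating-function bounds on $G_j\pm G_k$ then give joint tails $O\{x^{-\nu}(\log x)^{\nu/2}\}$ on the Cauchy scale. A slack $\delta_x=x^{-\zeta}$ sandwiches $\Pr\{C_G>x\}$ between sums of marginal tails. This yields an explicit relative error $O\{\alpha^{\zeta}(\log(1/\alpha))^{\nu/2}\}$ in the corollary.

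The direct route also avoids a citation mismatch. Theorem~1 of \citet{LiuXie2020} is formulated for two-sided $p$-values $2\{1-\Phi(|X_i|)\}$, where a correlation of $-1$ makes two terms coincide rather than cancel. So it does not literally cover your one-sided transform $\tan\{\pi(\Phi(W_l)-1/2)\}$. Adapting their argument needs more than the same-sign estimate you quote, which serves only the upper bound. The lower bound also needs the mixed-sign estimate $\Pr\{W_k>s,W_l<-s\}=o\{1-\Phi(s)\}$. This holds because $\Var(W_k-W_l)=2(1-R_{\HR,kl})\le2<4$; the paper's second pair bound, with exponent $1/(1-R_{\HR,jk})$, is exactly this.

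Your claim that $T$ has a continuous law is only asserted, and \emph{nondegenerate} is not available here. Even after merging, $\mR_{\HR}$ can be singular, and the paper explicitly allows this. The paper instead writes $\bm G=\bm vZ+\bm G^\perp$ with $\bm v=\mR_{\HR}\bm 1/(\bm 1^\T\mR_{\HR}\bm 1)^{1/2}$ and $Z$ independent of $\bm G^\perp$. Every entry of $\bm v$ is at least $(\bm 1^\T\mR_{\HR}\bm 1)^{-1/2}>0$, which again uses the nonnegativity you established. Hence $C_G$ is strictly increasing in $Z$ given $\bm G^\perp$ and has no atoms. This is what makes the inner limit exist at $t_\alpha=\cot(\pi\alpha)$; with these two repairs your argument is complete.
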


This is an iterated small-tail conclusion. Appendix~\ref{app:clt}
also gives an explicit tail error for the limiting Gaussian transform.

\subsection{Local alternatives and asymptotic power}

Let $\vX_i=\vmu_0+\bm h_n+\ve_i$, where
$\ve_i=R_i\mSigma_p^{1/2}\vU_i$ satisfies
Assumptions~\ref{ass:dimshape}--\ref{ass:grid} and
$\bm h_n\in\mathbb R^p$ is deterministic. For $\rho\in\mathcal A$, put
\[
 \mD_{\rho,n}=(d_{\rho,n}\mSigma_p+\lambda_{\rho,n}\I_p)^{-1}.
\]
Define the standardized population noncentrality parameter by
\begin{equation}\label{eq:population_signal_parameter}
 \widetilde\kappa_{\rho,n}
 =\frac{np(1-p^{-1})^2\{\E(R_i^{-1})\}^2}
        {\sqrt{2p\Theta_{2,n}(\lambda_{\rho,n})}}
       \bm h_n^\T\mD_{\rho,n}\bm h_n .
\end{equation}
Every quantity in \eqref{eq:population_signal_parameter} is deterministic.
The quadratic form retains the full shape matrix, including its spike
directions. These population quantities describe power; they are not
additional inputs to the implemented test.

We consider the local alternatives
\begin{equation}\label{eq:local_alternative}
 H_{1,n}:\quad \vmu=\vmu_0+\bm h_n,\qquad
 \widetilde\kappa_{\rho_k,n}\longrightarrow
 \widetilde\kappa_{\rho_k}\in[0,\infty),\quad k=1,\ldots,K.
\end{equation}
The factor $p\{\E(R_i^{-1})\}^2$ is bounded above and away from zero
under Assumption~\ref{ass:radial}, but need not have a limit by itself.
Condition~\eqref{eq:local_alternative} requires convergence only of the
complete standardized signal in \eqref{eq:population_signal_parameter}.

For a unit eigenvector $\vv_{j,p}$ of $\mSigma_p$, write
$h_{j,n}=\vv_{j,p}^\T\bm h_n$. The signal is weighted as
\[
 \bm h_n^\T\mD_{\rho,n}\bm h_n
   =\sum_{j=1}^p\frac{h_{j,n}^2}{d_{\rho,n}\tau_{j,p}+\lambda_{\rho,n}}.
\]
Thus \eqref{eq:local_alternative} implies
$\bm h_n^\T\mD_{\rho,n}\bm h_n=O(\sqrt p/n)$.
When all eigenvalues are bounded, this entails
$\|\bm h_n\|=O(p^{-1/4})$ because $p\asymp n$.
If $\tau_{1,p}\asymp p$ and $\bm h_n=t_n\vv_{1,p}$, the same
local order allows $t_n=O(p^{1/4})$ while
$|t_n|/\sqrt{\tau_{1,p}}=O(p^{-1/4})$.
These order statements do not replace the convergence requirement in
\eqref{eq:local_alternative}.

Lemma~\ref{lem:population_drift_bridge}
proves equivalence with the auxiliary random-drift condition, for each
deterministic signal sequence. The standardized replacement error is
$O_p(\varepsilon_n^{\mathrm{br}})$, where
$\varepsilon_n^{\mathrm{br}}=\{\log(n+2)/n\}^{1/2}+r_p/p+n^{-1/2}$.
Estimating scatter from shifted observations contributes the
second-order factor $d_{\rho,n}^2$ to the unstandardized drift.
Combining this correction with the null standard deviation yields
\eqref{eq:population_signal_parameter}.

\begin{theorem}\label{thm:local_hrst}
Under Assumptions~\ref{ass:dimshape}--\ref{ass:grid} and
\eqref{eq:local_alternative},
\[
 (Z_{\HR,\rho_1},\ldots,Z_{\HR,\rho_K})^\T
   \xrightarrow{d}N_K(\widetilde{\bm\kappa},\mR_{\HR}),
 \qquad
 \widetilde{\bm\kappa}
   =(\widetilde\kappa_{\rho_1},\ldots,\widetilde\kappa_{\rho_K})^\T,
\]
where $\mR_{\HR}$ is defined in Theorem~\ref{thm:joint_hrst}.
\end{theorem}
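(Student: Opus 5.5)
The plan is to reduce the local-alternative statistic to its null counterpart plus a deterministic drift. The null counterpart is the one whose joint limit $N_K(\vzero,\mR_{\HR})$ is given by Theorem~\ref{thm:joint_hrst}. The reduction is carried out coordinatewise in $\rho_k$ and uniformly over the finite grid.

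\textbf{Step 1: shifted observations.} Write $\vY_i=\bm h_n+\ve_i$. Then the scatter fixed point \eqref{eq:shrinkage_tyler} is computed from shifted angular data $(\bm h_n+\ve_i)/\|\cdot\|$. I would show that, because $\bm h_n^\T\mD_{\rho,n}\bm h_n=O(\sqrt p/n)$, the induced matrix perturbation of $\widehat{\mA}_\rho$ relative to $\mA^{\mathrm R}_{\rho,n}$ has negligible effect on every trace functional entering the variance. Concretely, I would
\begin{itemize}
\item first replay the resolvent reduction of Theorem~\ref{thm:original_spike_inverse};
\item then add a rank-one-type drift term and control it by Sherman--Morrison;
\item and finally invoke the bridge Lemma~\ref{lem:population_drift_bridge}.
\end{itemize}
This yields $\widehat\omega_\rho-\omega_{\rho,n}=o_p(1)$ under $H_{1,n}$ as well, and the null normalization $(2p\widehat\omega_\rho)^{1/2}$ is kept.

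\textbf{Step 2: Bahadur expansion of the location estimator.} Linearize \eqref{eq:hr_location_score} as
\[
\widehat a_\rho(\widehat{\vmu}_\rho-\vmu_0)=\bm h_n\,\{1+o_p(1)\}\cdot(\text{scalar})+\frac1n\sum_i \frac{\ve_i}{(\ve_i^\T\widehat{\mA}_\rho\ve_i)^{1/2}}+\text{remainder}.
\]
The mean of the score shift $\E\{(\ve_i+\bm h_n)/q^{1/2}\}$ contributes $\E(R_i^{-1})$ times a resolvent applied to $\bm h_n$. This uses the inverse-moment bounds from Assumption~\ref{ass:radial} (with $\alpha_R>8$) to control second-order Taylor terms in $\bm h_n$. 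I would then substitute into \eqref{eq:hr_statistic} and expand the quadratic form into three pieces: the null quadratic form, a cross term, and a drift term. The cross term is linear in the null sums. Its variance is of order $n^2p\,\E(R^{-1})^4\,\bm h_n^\T(\cdot)\bm h_n/p$, which is $O(\sqrt p)\cdot o(1)$ relative to $\sqrt p$ scaling, so it vanishes after standardization.

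\textbf{Step 3: identifying the drift with $\mD_{\rho,n}$.} The drift has the form $np(1-p^{-1})^2\{\E R^{-1}\}^2\,\bm h_n^\T\,\mathcal M\,\bm h_n$ for a random matrix $\mathcal M$ built from $\widehat{\mA}_\rho$ and the normalization $\widehat a_\rho$. I would use the isotropic deterministic equivalent $\bm h^\T c\mQ_n(\lambda)\bm h\approx \bm h^\T c(d_n(\lambda)\mSigma_p+\lambda\I)^{-1}\bm h$, with relative error $\varepsilon^{\mathrm{br}}_n$. Combined with $\widehat a_\rho\approx \E(\xi^{-1})\times$ the factor from Theorem~\ref{thm:hrst_null}'s centering, and with the $d_{\rho,n}^2$ correction from shifted scatter, the constants should collapse exactly to $\widetilde\kappa_{\rho,n}\sqrt{2p\omega_{\rho,n}}/\sqrt{2p\Theta_{2,n}}\cdot(\ell\ldots)$. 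This should match \eqref{eq:population_signal_parameter} after dividing by $\sqrt{2p\omega_{\rho,n}}$.

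\textbf{Step 4: joint limit.} Joint convergence follows by Slutsky applied to the vector statistic: the null part converges to $N_K(\vzero,\mR_{\HR})$ by Theorem~\ref{thm:joint_hrst}, and the drift converges to $\widetilde{\bm\kappa}$.

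The main obstacle is Step 3. The algebra must reconcile the HR normalization $\widehat a_\rho^2$, $\ell_{\rho,n}=d_{\rho,n}$ and $c_{\rho,n}$ so that the random drift matches the population $\mD_{\rho,n}$ form with precisely the stated constant. The isotropic law must also hold along spike directions, where $\mSigma_p$ has diverging eigenvalues. For the spikes I would first separate $\bm h_n=\mP_s\bm h_n+\mP_b\bm h_n$. On the spike part I would use the exact identity $\mQ_n\vz=\mQ_{n,(i)}\vz/(1+\ldots)$, where the diverging eigenvalue makes $\bm h_n^\T\mD\bm h_n$ along $\mP_s$ small but nonnegligible. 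On the bulk part I would apply the standard anisotropic local law.
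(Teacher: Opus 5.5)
\textbf{The gap: where the factor $d_{\rho,n}^2$ comes from.} You defer this factor to Step~3, but Steps~1--2, as you state them, actually rule it out.

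Under $H_{1,n}$ the Tyler equation is driven by $\bm h_n+\ve_i$. Work in the null coordinates $\bm r=\mA_0^{1/2}\bm h_n$ and $\bm t_i=\mA_0^{1/2}\ve_i/\tau_i$, where $\mA_0=\widehat{\mA}^0_\rho$ is computed from the centered errors. Freezing the balancing weights shows that the shifted inverse is, up to a scalar and negligible terms, $\mA_0^{1/2}(\I+\mathbf P)^{-1}\mA_0^{1/2}$. Here $\mathbf P=b(\bm r\bm g^\T+\bm g\bm r^\T)+bc_2\bm r\bm r^\T$, $\bm g=\sum_i\bm t_i/\tau_i$, and $1-b=d_{\rho,n}+o_p(1)$.

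\emph{The pure drift term carries no $d_{\rho,n}^2$.} Since $\bm r^\T\bm g$ is negligible, $\bm h_n^\T\widehat{\mA}_\rho\bm h_n$ equals $q=\|\bm r\|^2$ up to the scalar and a factor $1+o_p(1)$. Hence the pure drift term $npd_p^2\widehat a_\rho^2\bm h_n^\T\widehat{\mA}_\rho\bm h_n$ of your decomposition is $\Delta^{\mathrm{ran}}_{\rho,n}\{1+o_p(1)\}$. No deterministic equivalent for $\bm h_n^\T\mathcal M\bm h_n$ can supply $d_{\rho,n}^2$.

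\emph{The factor comes from the two terms you dismiss.} The vector $\bm g$ is built from the same observation signs as the location fluctuation $\bm u=\bm m/(d_pa)\approx\mA_0^{1/2}(\widehat{\vmu}_\rho-\vmu_0-\bm h_n)$, and $\bm g^\T\bm u=d_p^{-1}+o_p(1)$. Consequently:
\begin{itemize}
\item the cross term $2\bm r^\T(\I+\mathbf P)^{-1}\bm u$ contains $-2bq\,\bm g^\T\bm u\approx-2bq$;
\item the fluctuation quadratic $\bm u^\T(\I+\mathbf P)^{-1}\bm u$ contains $b^2q(\bm g^\T\bm u)^2\approx b^2q$.
\end{itemize}
Only the sum $q-2bq+b^2q=(1-b)^2q$ gives the stated drift; this is Lemma~\ref{lem:local_rank_two_drift}. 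Both corrections are of order $\sqrt p$ in $T_{\HR,\rho}$, hence of order one after standardization.

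\textbf{Why your Steps~2 and~4 fail.} Your variance bound for the cross term tacitly treats $\widehat{\mA}_\rho$ as measurable with respect to the unsigned data $\sigma\{R_i,\vU_i\vU_i^\T\}$. That holds under $H_0$. Under $H_{1,n}$, however, $\bm h_n+\ve_i$ and $\bm h_n-\ve_i$ are not antipodal, so the shifted scatter depends on the signs, precisely through $\bm g$. The cross term is therefore not conditionally centered. Carried out as written, your argument gives the noncentrality $\Delta^{\mathrm{ran}}_{\rho,n}/\sqrt{2p\omega_{\rho,n}}\to\widetilde\kappa_\rho/d_\rho^2$, with $d_\rho=\lim_n d_{\rho,n}$, instead of $\widetilde\kappa_\rho$, unless the factor is inserted by hand.

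The same sign dependence breaks Step~4: Theorem~\ref{thm:joint_hrst} cannot be applied directly to the quadratic form built with the shifted scatter. You need to:
\begin{itemize}
\item expand around the null scatter;
\item reduce the noise to the centered-error kernel $H^0_{ij,k}$;
\item control the sign-dependent shifted normalization root $b_h$.
\end{itemize}
The paper does the last step with bounds holding simultaneously on a window $|b-b_0|\le Lp^{-1/2}$, because conditional sign symmetry is unavailable at $b_h$ itself.

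A smaller point: with the scatter fixed, the HR location is exactly translation equivariant. So $\{\E(R_i^{-1})\}^2$ enters through $\widehat a_\rho^2$, not through a mean of a score shift.
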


For fixed $\rho\in\mathcal A$ and $\alpha\in(0,1)$, the limiting power is
\[
 \Pr\nolimits_{H_{1,n}}(Z_{\HR,\rho}>z_{1-\alpha})
 \longrightarrow1-\Phi(z_{1-\alpha}-\widetilde\kappa_\rho).
\]
It equals $\alpha$ when $\widetilde\kappa_\rho=0$, and lies strictly
between $\alpha$ and one when $\widetilde\kappa_\rho>0$.
Appendix~\ref{app:local} proves the
population-centered limit and the stability of the feasible variance.

\section{Simulation study}\label{sec:simulation}

We compare HRST with the adaptive regularized Hotelling test
(ARHT; \citealp{LiAuePaulPengWang2020}), the spatial-sign test
(WPL; \citealp{WangPengLi2015}), and the diagonal-covariance test
(SR; \citealp{Srivastava2009}). ARHT uses the original R-package
implementation with data-driven ridge selection and cube-root
random-matrix calibration. All four tests are conducted at the nominal
significance level $\alpha=0.05$. Unless otherwise stated, rejection rates
are estimated from $B=1000$ Monte Carlo replications. We study size at
\(
(n,p)\in\{(100,150),(100,200),(200,150),(200,200)\}
\)
and power at $(n,p)=(100,200)$.

We generate observations from four distributions with location
$\vmu$ and covariance $\mSigma$:
\[
\begin{array}{@{}l@{\quad}>{\displaystyle}l@{}}
\text{Normal:}
 &\vX_i\sim N_p(\vmu,\mSigma),\\[2pt]
\text{Student's $t$:}
 &\vX_i\sim t_\nu\!\left(\vmu,\frac{\nu-2}{\nu}\mSigma\right),
 \qquad \nu\in\{3,5\},\\[2pt]
\text{Mixture normal:}
 &\vX_i\sim 0.95\,N_p\!\left(\vmu,\frac{\mSigma}{c_{\rm mix}}\right)
 +0.05\,N_p\!\left(\vmu,\frac{25\mSigma}{c_{\rm mix}}\right).
\end{array}
\]
The second argument of $t_\nu$ is its scale matrix, and
$c_{\rm mix}=2.2$ ensures covariance $\mSigma$.

Following \citet{LiAuePaulPengWang2020}, we consider three covariance structures:
\[
\begin{array}{@{}l@{\quad}>{\displaystyle}l@{}}
\text{Identity:}
 &\mSigma=\I_p,\\[2pt]
\text{Diagonal:}
 &\mSigma_s=\diag(\lambda_1,\ldots,\lambda_p),
 \qquad \lambda_j\propto 0.01+(0.1+j/p)^6,\\[2pt]
\text{Dense:}
 &\mSigma_d=\mP^\T\mSigma_s\mP.
\end{array}
\]
We normalize $\tr(\mSigma_s)=p$ and draw $\mP$ from Haar measure on
the orthogonal group, so $\mSigma_s$ and $\mSigma_d$ share a spectrum
but have different eigenvectors.

For $c\in\{0,1,\ldots,8\}$, we choose one of four signal patterns:
\[
\begin{array}{@{}l@{\quad}>{\displaystyle}l@{}}
\text{Isotropic dense:}
 &\bm v\sim N_p(\vzero,\I_p),\\[2pt]
\text{Covariance-aligned dense:}
 &\bm v\sim N_p(\vzero,\mSigma),\\[2pt]
\text{Squared-covariance dense:}
 &\bm v\sim N_p(\vzero,\mSigma^2),\\[2pt]
\text{Coordinate sparse:}
 &v_j\in\{-1,0,1\},\quad 5\%\text{ nonzero coordinates}.
\end{array}
\]
We normalize signal strength by setting
\[
\vmu=\frac{c\,\bm v}{\sqrt{n\E\|\bm v\|^2}},
\qquad n\E\|\vmu\|^2=c^2.
\]
Expectation is over $\bm v$, with $\E\|\bm v\|^2$ equal to
$p$, $\tr(\mSigma)=p$, $\tr(\mSigma^2)$ and $0.05p$ for the four
patterns, respectively; $c=0$ gives the null mean.
The horizontal axis of each power plot reports this normalized strength
$c$. The panel titles use $c$ for a different, unnormalized scale,
denoted here by $s$: their mean laws are $N_p(\vzero,s\I_p)$,
$N_p(\vzero,s\mSigma)$, $N_p(\vzero,s\mSigma^2)$, or sparse entries
$\pm s$.

Table~\ref{tab:size-n100-p150} and Tables~\ref{tab:size-n100-p200}--\ref{tab:size-n200-p200}
in Appendix~\ref{sec:sim_details} report empirical
sizes for each distribution and covariance structure.

\begin{table}[!htbp]

\renewcommand{\arraystretch}{1.06}

\centering
\caption{Empirical sizes of four methods for $n=100$, $p=150$.}
\label{tab:size-n100-p150}
\begin{tabular}{llcccc}
\toprule
Distribution & Covariance & HRST & ARHT & WPL & SR \\
\midrule
Normal & $\I_p$ & 0.028 & 0.049 & 0.044 & 0.011 \\
Normal & $\mSigma_s$ & 0.060 & 0.051 & 0.080 & 0.030 \\
Normal & $\mSigma_d$ & 0.059 & 0.051 & 0.054 & 0.045 \\
\midrule
$t_5$ & $\I_p$ & 0.045 & 0.008 & 0.054 & 0.006 \\
$t_5$ & $\mSigma_s$ & 0.067 & 0.043 & 0.060 & 0.004 \\
$t_5$ & $\mSigma_d$ & 0.049 & 0.044 & 0.052 & 0.030 \\
\midrule
$t_3$ & $\I_p$ & 0.029 & 0.004 & 0.046 & 0.000 \\
$t_3$ & $\mSigma_s$ & 0.054 & 0.020 & 0.060 & 0.000 \\
$t_3$ & $\mSigma_d$ & 0.061 & 0.019 & 0.056 & 0.012 \\
\midrule
Mixture normal & $\I_p$ & 0.041 & 0.020 & 0.058 & 0.000 \\
Mixture normal & $\mSigma_s$ & 0.062 & 0.028 & 0.072 & 0.000 \\
Mixture normal & $\mSigma_d$ & 0.060 & 0.033 & 0.062 & 0.007 \\
\bottomrule
\end{tabular}
\end{table}

HRST rejection rates are generally near the nominal level $0.05$, with
deviations in both directions. ARHT is reasonably calibrated under
normality but tends to be conservative under heavy-tailed and contaminated
errors, particularly $t_3$. WPL occasionally overrejects, whereas SR is
conservative in most non-Gaussian settings.

Figure~\ref{fig:power-normal} and Figures~\ref{fig:power-t5}--\ref{fig:power-contam}
in Appendix~\ref{sec:sim_details} show power in
$4\times3$ panels: rows represent the four signal patterns and columns
represent $\I_p$, $\mSigma_s$ and $\mSigma_d$.
HRST is competitive with ARHT and WPL under normality and retains power
under heavy-tailed and contaminated errors. WPL performs well in several
identity-covariance settings but is less competitive with dense covariance,
while SR generally has lower power.

\begin{figure}[!htbp]

\centering
\includegraphics[width=0.98\textwidth,height=0.90\textheight,keepaspectratio]{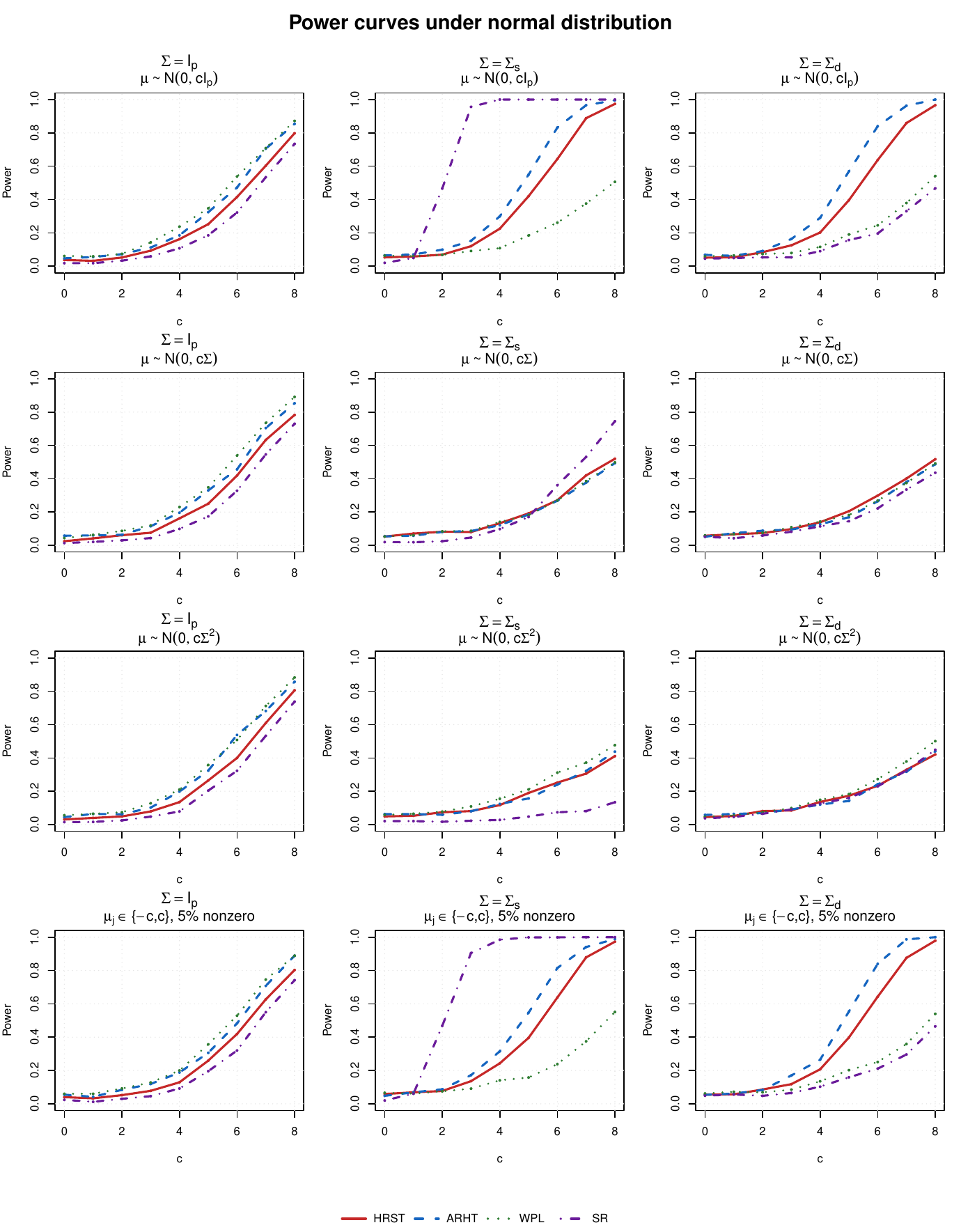}
\caption{Empirical power under the normal distribution for $n=100$, $p=200$ and $B=1000$.}
\label{fig:power-normal}
\end{figure}

\section{Real data analysis}\label{sec:realdata}

\subsection{Data source and testing problem}

We analyze paired tumor--normal gene-expression data from COLONOMICS
(GEO accession GSE44076; \citealp{GSE44076,DiezVillanueva2022}).
The processed Affymetrix Human Genome U219 matrix contains 20,070 genes
and 246 samples: paired tumor and adjacent normal mucosa samples from
98 individuals and 50 healthy colon mucosa samples \citep{ColonomicsData}.
We use the 98 complete pairs, matched by individual identifier in the
clinical annotation.

Let $\vX_i^{T},\vX_i^{N}\in\R^{20070}$ denote the tumor and adjacent
normal expression vectors for subject $i$. We form the paired differences
\[
    \bm D_i=\vX_i^{T}-\vX_i^{N},\qquad i=1,\ldots,98,
\]
and test
\[
    H_0:\ \E(\bm D_i)=\vzero,
    \qquad
    H_1:\ \E(\bm D_i)\ne\vzero.
\]
All procedures use submatrices of the paired-difference matrix.

Figure~\ref{fig:gse44076-t-hist} shows the gene-wise paired
$t$-statistics $T_j=\sqrt{98}\bar D_j/s_j$, where $\bar D_j$ and $s_j$
are the sample mean and standard deviation across all 98 pairs.
At level $0.05$, 73.42\% of genes are significant in ordinary two-sided
tests and 71.75\% after Benjamini--Hochberg adjustment.
Using all 98 pairs and unrestricted random 200-gene subsets, all four
procedures reject in essentially every replication.

\begin{figure}[!htbp]

    \centering
    \includegraphics[width=0.90\textwidth]{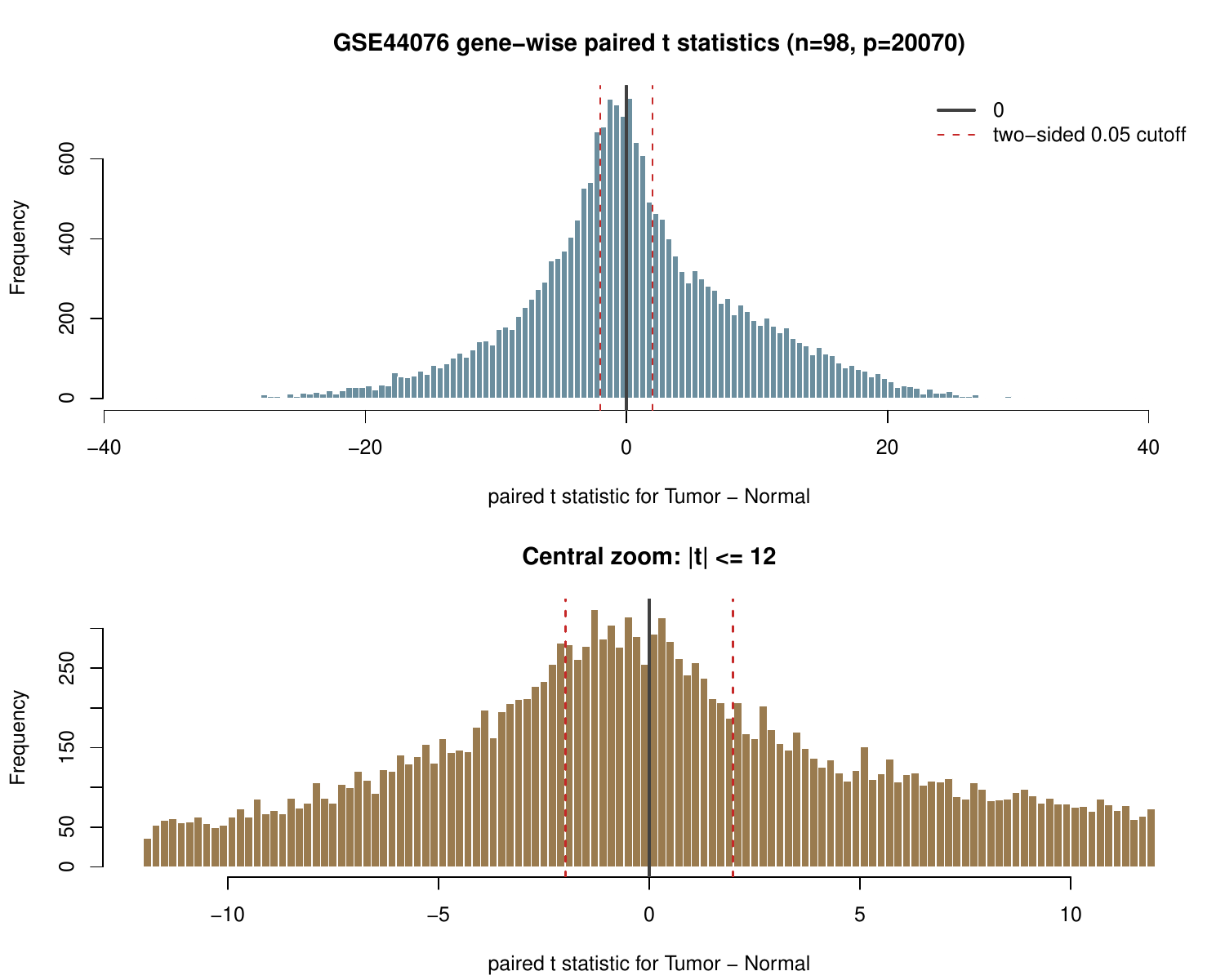}
    \caption{Histogram of gene-wise paired $t$-statistics for tumor--normal differences in GSE44076.  The dashed vertical lines mark the ordinary two-sided 0.05 cutoff for a single paired $t$-test.}
    \label{fig:gse44076-t-hist}
\end{figure}

\subsection{Moderate-signal sampling experiment}

We therefore restrict sampling to the 6,499 genes in
$\mathcal G_{2.5}=\{j:|T_j|<2.5\}$, with $T_j$ computed from all 98 pairs.
For each $n\in\{30,40,50\}$ and $p\in\{30,40,50,60\}$, we draw $n$
subjects and $p$ genes without replacement from the 98 pairs and
$\mathcal G_{2.5}$, respectively. We repeat this experiment 10,000 times,
applying HRST, ARHT, WPL and SR at level $0.05$.
The reported rejection rates summarize this sampling experiment
conditional on the observed, filtered data.

\begin{table}[!htbp]

\renewcommand{\arraystretch}{1.06}

\centering
\caption{Empirical rejection rates for random real-data subproblems sampled from the $|T_j|<2.5$ gene pool.  Each entry is based on 10,000 Monte Carlo repetitions.}
\label{tab:gse44076-power}
\begin{tabular}{ccrrrr}
\toprule
$n$ & $p$ & HRST & ARHT & WPL & SR \\
\midrule
30 & 30 & 0.3659 & 0.2863 & 0.2424 & 0.2199 \\
30 & 40 & 0.3981 & 0.3356 & 0.2611 & 0.2232 \\
30 & 50 & 0.4272 & 0.3845 & 0.2780 & 0.2329 \\
30 & 60 & 0.4423 & 0.4241 & 0.2874 & 0.2322 \\
40 & 30 & 0.5134 & 0.3950 & 0.3253 & 0.2848 \\
40 & 40 & 0.5750 & 0.4785 & 0.3521 & 0.3101 \\
40 & 50 & 0.6230 & 0.5484 & 0.3771 & 0.3304 \\
40 & 60 & 0.6716 & 0.6272 & 0.3902 & 0.3498 \\
50 & 30 & 0.6578 & 0.5044 & 0.4070 & 0.3835 \\
50 & 40 & 0.7266 & 0.6051 & 0.4442 & 0.4171 \\
50 & 50 & 0.7989 & 0.7077 & 0.4867 & 0.4506 \\
50 & 60 & 0.8310 & 0.7656 & 0.5082 & 0.4781 \\
\bottomrule
\end{tabular}
\end{table}

Table~\ref{tab:gse44076-power} shows that HRST has the highest rejection
rate in all 12 settings, followed by ARHT, WPL and SR. For example, at
$(n,p)=(50,30)$, their rejection rates are 65.78\%, 50.44\%, 40.70\%
and 38.35\%, respectively.

\section{Conclusion}\label{sec:discussion}

This paper proposes HRST, a robust regularized Hotelling-type test for high-dimensional one-sample mean inference under elliptical distributions.  By combining a Hettmansperger--Randles spatial location estimator with a trace-normalized shrinkage Tyler scatter matrix, the method retains the spectral stabilization of regularized Hotelling tests while improving robustness to heavy tails and contamination.  The theory is developed through a sample-resolvent random-matrix approach, yielding feasible null calibration, local-alternative distributions and a Cauchy-combined adaptive test over shrinkage levels.  Simulations and the GSE44076 paired tumor--normal analysis show that HRST provides reliable size control and competitive or improved power, especially in heavy-tailed distributions.

Several extensions remain open.  Two-sample and multi-sample robust location testing would require a joint treatment of two shrinkage Tyler matrices and their leave-out interactions \citep{ChenQin2010,FengZouWang2016}.  Data-adaptive shrinkage grids could improve finite-sample power, but their validity would require uniform Gaussian approximation over a growing collection of shrinkage levels \citep{LiAuePaulPengWang2020,ChernozhukovChetverikovKato2013}.  Finally, it would be useful to weaken the lower-tail condition on the normalized radius while retaining finite-mean inference, potentially through truncation or self-normalization \citep{JingShaoWang2003,DeLaPenaLaiShao2009,Catoni2012,Minsker2018}. This concerns small radii: fixed Student distributions with $\nu>1$, including the infinite-variance range $1<\nu\le2$, already satisfy the present radial assumption in sufficiently high dimension.

\begingroup
\small
\setlength{\parskip}{0pt}
\phantomsection

\endgroup

\appendix
\numberwithin{table}{section}
\numberwithin{figure}{section}
\renewcommand{\theHtable}{\theHsection.\arabic{table}}
\renewcommand{\theHfigure}{\theHsection.\arabic{figure}}
\needspace{9\baselineskip}
\section*{Appendices}
\addcontentsline{toc}{section}{Appendices}
Appendices~\ref{subsec:cm_constants}--\ref{app:local} contain the
deterministic-equivalent constants and the proofs of the null, joint,
local-alternative and Cauchy calibration results.
Appendix~\ref{sec:sim_details} contains the additional empirical-size
tables and power curves for Section~\ref{sec:simulation}.

Throughout the appendices, we abbreviate a shrinkage subscript $\rho_k$
by $k$. For example, $a_{k,n}=a_{\rho_k,n}$,
$\widetilde\kappa_{k,n}=\widetilde\kappa_{\rho_k,n}$ and
$\Gamma_{kl,n}=\Gamma_{\rho_k,\rho_l,n}$.
Indices that label observations, matrix entries or arbitrary ridge
arguments retain their stated meanings.

\section{The Couillet--McKay deterministic-equivalent constants}
\label{subsec:cm_constants}
For a probability law $H\ne\delta_0$ on $[0,\infty)$, $y>0$, and $\rho\in(0,1)$, put
\[
 F_y(\gamma;\rho)
 =\tfrac12\{\rho-y(1-\rho)\}
 +\left[\tfrac14\{\rho-y(1-\rho)\}^2
              +(1-\rho)/\gamma\right]^{1/2}.
\]
The positive root $\gamma_\rho$ in the CM map satisfies
\[
 1=\int\frac{t}{\gamma_\rho\rho+
       \dfrac{1-\rho}{(1-\rho)y+F_y(\gamma_\rho;\rho)}t}\,dH(t).
\]
Set
\[
 T_\rho=\rho\gamma_\rho F_y(\gamma_\rho;\rho),\qquad
 \alpha_\rho=\frac{1-\rho}{1-\rho+T_\rho},\quad
 \beta_\rho=\frac{T_\rho}{1-\rho+T_\rho},
\]
\[
 c_\rho=\alpha_\rho^{-1},\quad
 \lambda_\rho=\beta_\rho/\alpha_\rho,\quad
 d_\rho=d_{y,H}(\lambda_\rho),\qquad
 \mathcal P_\rho(y,H)=(c_\rho,\lambda_\rho,d_\rho)^\T.
\]
Here $d_{y,H}(\lambda)$ is the unique root in $(0,1]$ of
\[
 d-1+y\int\frac{td}{td+\lambda}\,dH(t)=0.
\]
The derivative in $d$ is at least one, and the left side is negative at zero
and nonnegative at one. For the finite-$n$ constants take
$y=y_n$ and $H=H_{p,b}$. The bulk trace is not renormalized. The trace-normalized random scatter
continues to retain the full trace $\tr\mSigma_p=p$.
The inverse-level proof in Appendix~\ref{app:rmt} handles finite
diverging spikes directly; it does not apply a population-spectral-norm
bound to the full scatter.

For an explicit system defining the CM map, put
\[
 \begin{gathered}
J(d,\lambda;H)=\int\frac{td}{td+\lambda}\,dH(t),\\
 \bm{\mathcal F}_\rho(c,\lambda,d;y,H)=
 \begin{pmatrix}
 c-1-\lambda\\
 d-1+yJ(d,\lambda;H)\\
 J(d,\lambda;H)-\dfrac{1-\rho}{\rho}\dfrac{\lambda}{1+\lambda}
 \end{pmatrix}.
 \end{gathered}
\]
The CM system \eqref{eq:cm_system} is equivalent
to the construction through $\gamma_\rho$,
$\alpha_\rho$ and $\beta_\rho$ above. The parameter transformation and
its inverse are verified in the proof of
Lemma~\ref{lem:original_cm_scalar}.

\section{Finite-spike resolvents and normalized scatter}\label{app:rmt}

\subsection{Finite-spike bounds and Gaussian representations}

Write $\varepsilon_n=\{\log(n+2)/n\}^{1/2}+r_p/p$ and
$L_n=\log(n+2)$. All constants below are uniform on a fixed compact
ridge interval $\mathcal L=[\lambda_-,\lambda_+]\subset(0,\infty)$.
Within this section, $\bm{z}_i,\mathbf{S},\mathbf{Q}$ denote Gaussian representatives;
the subscript $U$ denotes their spherical versions.
Let $\mathbf{G}\in\mathbb R^{p\times n}$ have independent $N(0,1)$ entries.
Write $\mathbf{G}=(\bm{G}_1,\ldots,\bm{G}_n)$ with $\bm{G}_i\in\mathbb R^p$ for column $i$,
$\bm{g}_j=(G_{j1},\ldots,G_{jn})^\T\in\mathbb R^n$ for row $j$ transposed,
and $\tau_j=\tau_{j,p}$. Use the population eigenbasis and put
\[
 \begin{gathered}
 \mathbf{Z}=\mSigma_p^{1/2}\mathbf{G}=(\bm{z}_1,\ldots,\bm{z}_n),\quad
 \mathbf{S}=\mathbf{Z}\mathbf{Z}^\T/n,\quad \mathbf{Q}(\lambda)=(\mathbf{S}+\lambda \mathbf{I}_p)^{-1},\\
 \mathbf{H}(\lambda)=\mathbf{Z}^\T \mathbf{Q}(\lambda)\mathbf{Z}/n,\qquad
 a_S(\lambda)=\tr \mathbf{H}(\lambda)/p .
 \end{gathered}
\]
Order the bulk coordinates before the spikes, and let $\mathbf G_b$
and $\mathbf G_s$ denote the corresponding row blocks of $\mathbf G$.
Put
\[
 \begin{gathered}
 \mathbf{\Lambda}_b=\operatorname{diag}(\tau_j:j\notin J_p),\qquad
 \mathbf{T}_s=\operatorname{diag}(\tau_j:j\in J_p),\\
 \mathbf{Z}_{[b]}=\mathbf{\Lambda}_b^{1/2}\mathbf{G}_b,\quad \mathbf{Z}_{[s]}=\mathbf{T}_s^{1/2}\mathbf{G}_s,\quad
 \mathbf{Z}_b=\begin{pmatrix}\mathbf{Z}_{[b]}\\\mathbf{0}\end{pmatrix}.
 \end{gathered}
\]
The zero-spike sample quantities are
\[
 \begin{gathered}
 \mathbf{S}_b=\mathbf{Z}_b\mathbf{Z}_b^\T/n,\quad \mathbf{Q}_b(\lambda)=(\mathbf{S}_b+\lambda \mathbf{I}_p)^{-1},\quad
 \mathbf{H}_b(\lambda)=\mathbf{Z}_b^\T \mathbf{Q}_b(\lambda)\mathbf{Z}_b/n,\\
 a_b(\lambda)=\tr \mathbf{H}_b(\lambda)/p.
 \end{gathered}
\]
Their population law is $H_{p,0}=(1-r_p/p)H_{p,b}+(r_p/p)\delta_0$.
Gaussian columns satisfy $\bm{G}_i/\|\bm{G}_i\|\sim\operatorname{Unif}(\mathbb S^{p-1})$.
Positive column rescaling leaves Tyler scatter and its normalized Gram
matrix unchanged. In this coupling, $\bm{z}_{U,i}=\sqrt p\,\mSigma_p^{1/2}\bm{U}_i$,
$\mathbf{S}_U=\mathbf{S}_n$ and $\mathbf{Q}_U=\mathbf{Q}_n$ are the spherical objects
used in \eqref{eq:angular_resolvent}.

\subsection{Gaussian bounds and scalar traces}

For symmetric $\mathbf{B}$ with eigenvalues $\beta_j$ and $\bm{g}\sim N(\bm{0},\mathbf{I})$,
independence gives, for $2|s|\|\mathbf{B}\|_{\rm op}<1$,
\[
 \log\E e^{s(\bm{g}^\T \mathbf{B}\bm{g}-\tr \mathbf{B})}
 =\frac12\sum_j\sum_{k=2}^{\infty}\frac{(2s\beta_j)^k}{k}
 \le\frac{s^2\|\mathbf{B}\|_{\rm F}^2}
              {1-2|s|\|\mathbf{B}\|_{\rm op}} .
\]
Let $v=\|\mathbf{B}\|_{\rm F}^2$, $c=2\|\mathbf{B}\|_{\rm op}$ and
$s=\sqrt t/(\sqrt v+c\sqrt t)$. Markov's inequality yields
\begin{equation}\label{eq:gaussian_quadratic_tail}
 \begin{gathered}
\begin{aligned}
 &\Pr\{\bm{g}^\T \mathbf{B}\bm{g}-\tr \mathbf{B}>2\sqrt{vt}+ct\}\\
 &\quad\le \exp\left\{-s(2\sqrt{vt}+ct)+
                      \frac{vs^2}{1-cs}\right\}=e^{-t},\\
 &\Pr\{|\bm{g}^\T \mathbf{B}\bm{g}-\tr \mathbf{B}|>2\|\mathbf{B}\|_{\rm F}\sqrt t+
                      2\|\mathbf{B}\|_{\rm op}t\}\\
 &\quad\le2e^{-t}.
 \end{aligned}
 \end{gathered}
\end{equation}
If $\mathbf{B}$ is nonsymmetric, $\bm{g}^\T \mathbf{B}\bm{g}=\bm{g}^\T(\mathbf{B}+\mathbf{B}^\T)\bm{g}/2$.
For independent standard Gaussian $\bm{g},\bm{h}$, use
\[
 \begin{gathered}
\bm{g}^\T \mathbf{B}\bm{h}=
 \begin{pmatrix}\bm{g}\\\bm{h}\end{pmatrix}^{\!\T}
 \begin{pmatrix}\mathbf{0}&\mathbf{B}/2\\\mathbf{B}^\T/2&\mathbf{0}\end{pmatrix}
 \begin{pmatrix}\bm{g}\\\bm{h}\end{pmatrix},\\
 \left\|\begin{pmatrix}\mathbf{0}&\mathbf{B}/2\\\mathbf{B}^\T/2&\mathbf{0}\end{pmatrix}\right\|_{\rm F}
 =\|\mathbf{B}\|_{\rm F}/\sqrt2 .
 \end{gathered}
\]
A maximal $1/4$-separated subset $\mathcal N_d$ of the unit sphere
has at most $9^d$ points: disjoint balls of radius $1/8$ centered
there lie in the ball of radius $9/8$.
For an $m\times n$ matrix $\mathbf{M}$,
\[
 \|\mathbf{M}\|_{\rm op}
 \le \max_{\bm{u}\in\mathcal N_m,\bm{v}\in\mathcal N_n}|\bm{u}^\T \mathbf{M}\bm{v}|
       +\tfrac12\|\mathbf{M}\|_{\rm op}.
\]
For standard Gaussian $\mathbf{G}_b$, every $\bm{u}^\T \mathbf{G}_b\bm{v}$ is $N(0,1)$, whence
\[
 \begin{gathered}
\Pr\{\|\mathbf{G}_b\|_{\rm op}>2x\}
 \le2\exp\{(p+n)\log9-x^2/2\},\\
 \|\mathbf{S}_b\|_{\rm op}\le C_H\|\mathbf{G}_b\|_{\rm op}^2/n=O_p(1).
 \end{gathered}
\]
Applying the quadratic bound with $\mathbf{B}=\mathbf{I}_p$ and taking a union over
$n$ columns gives
\[
 \begin{gathered}
\max_i\left|\frac{\|\bm{G}_i\|^2}{p}-1\right|
 =O_p\!\left(\sqrt{\frac{L_n}{p}}+\frac{L_n}{p}\right),\\
 \max_i\left|\frac{\sqrt p}{\|\bm{G}_i\|}-1\right|
 =O_p\!\left(\sqrt{\frac{L_n}{n}}\right).
 \end{gathered}
\]
The same inequalities apply conditionally to a coefficient independent of
the held-out Gaussian vector.  We record the localization required when
coefficient bounds are only in probability.  Let $\nu$ range over a finite
index set $\mathcal I_n$ with $|\mathcal I_n|\le C_0n^m$, where $m$ is fixed.
For each $\nu$, let $\mathcal G_\nu$ be a conditioning field, let
$\mathbf B_\nu$ be $\mathcal G_\nu$-measurable, and let $\bm g_\nu$ be standard
Gaussian and independent of $\mathcal G_\nu$. No independence between
different indices is required. For positive deterministic $u_n,v_n$, define
\[
 E_{\nu,M}=\{\|\mathbf B_\nu\|_{\rm op}\le Mu_n,
                    \|\mathbf B_\nu\|_{\rm F}\le Mv_n\},\qquad
 E_M=\bigcap_{\nu\in\mathcal I_n}E_{\nu,M},
\]
and put $X_\nu=\bm g_\nu^\T\mathbf B_\nu\bm g_\nu-\tr\mathbf B_\nu$.
For $t>0$, conditioning separately at each index gives
\[
 \begin{aligned}
 &\Pr\!\left\{\max_{\nu\in\mathcal I_n}|X_\nu|
                      >2M(v_n\sqrt t+u_nt)\right\}\\
 &\quad\le\Pr\{E_M^c\}+
       \sum_{\nu\in\mathcal I_n}
       \E\left[\mathbf1_{E_{\nu,M}}
       \Pr\!\left\{|X_\nu|>2M(v_n\sqrt t+u_nt)
                         \mid\mathcal G_\nu\right\}\right]\\
 &\quad\le\Pr\{E_M^c\}+2|\mathcal I_n|e^{-t}.
 \end{aligned}
\]
In the first inequality $E_M\subset E_{\nu,M}$ is used; it is not necessary
that $E_M$ be measurable with respect to every $\mathcal G_\nu$.
If the two normalized coefficient maxima are tight, first increase $M$
and then take $t=(m+3)L_n$ to obtain
\begin{equation}\label{eq:conditional_gaussian_max}
 \max_{\nu\in\mathcal I_n}|X_\nu|
       =O_p(v_n\sqrt{L_n}+u_nL_n).
\end{equation}
Only the conditional Gaussian-tail contribution is $O(n^{-3})$;
the localization probability is kept separate. The bilinear bound follows
from the displayed block representation, with the held-out pair jointly
independent of its coefficient. The index set may include deletion sets
and a polynomial-size ridge mesh. This is the conditional-union argument
used below for random resolvent coefficients.

For the CM calculation, put
\[
 \begin{gathered}
 k_\rho=\frac{1-\rho}{\rho},\qquad
 J_n(d,\lambda)=J(d,\lambda;H_{p,b})=\int\frac{td}{td+\lambda}\,dH_{p,b}(t),\\
 \mathfrak g_n(d,\lambda)=d-1+y_nJ_n(d,\lambda).
 \end{gathered}
\]
The three-dimensional defining system, with coordinates $(c,\lambda,d)$, is
\[
 \bm{\mathcal F}_\rho(c,\lambda,d;y_n,H_{p,b})
 =\begin{pmatrix}
 c-1-\lambda\\
 \mathfrak g_n(d,\lambda)\\
 J_n(d,\lambda)-k_\rho\lambda/(1+\lambda)
 \end{pmatrix}.
\]

\begin{lemma}
\label{lem:original_cm_scalar}
Under Assumptions~\ref{ass:dimshape} and~\ref{ass:grid}, for all sufficiently
large $n$, the CM map has a unique positive solution satisfying
\begin{equation*}
 c_{\rho,n}=1+\lambda_{\rho,n},\qquad
 a_n(\lambda_{\rho,n})
 =\frac{1-\rho}{\rho}
   \frac{\lambda_{\rho,n}}{1+\lambda_{\rho,n}}.
\end{equation*}
There are constants $\underline\lambda,\overline\lambda,\underline d,j_*>0$,
uniform over the fixed grid, such that
\[
 \underline\lambda\le\lambda_{\rho,n}\le\overline\lambda,\qquad
 \underline d\le d_{\rho,n}\le1,\qquad
 \sigma_{\min}\{\partial_{\bm{\vartheta}}\bm{\mathcal F}_\rho
      (\bm{\vartheta}_{\rho,n};y_n,H_{p,b})\}\ge j_* .
\]
The defining systems are continuously differentiable on neighbourhoods
of a common positive radius around these solutions.
\end{lemma}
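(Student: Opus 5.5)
The plan is to eliminate $c$ and $d$ and reduce the CM system \eqref{eq:cm_system} to a single scalar equation in $\lambda$. The first coordinate forces $c=1+\lambda$. For each $\lambda>0$, the second coordinate has the unique root $d=d_n(\lambda)\in(0,1]$ by the argument after \eqref{eq:bulk_fixed_point}. By \eqref{eq:deterministic_a}, $J_n(d_n(\lambda),\lambda)=a_n(\lambda)=(1-d_n(\lambda))/y_n$. A positive solution of the system is therefore exactly a root of
\[
 \varphi_n(\lambda)=a_n(\lambda)-k_\rho\frac{\lambda}{1+\lambda},\qquad k_\rho=\frac{1-\rho}{\rho},
\]
with $c_{\rho,n}=1+\lambda_{\rho,n}$, $d_{\rho,n}=d_n(\lambda_{\rho,n})$, and the displayed identity for $a_n(\lambda_{\rho,n})$. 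The equivalence with the $\gamma_\rho,\alpha_\rho,\beta_\rho$ parametrization of Appendix~\ref{subsec:cm_constants} is a separate algebraic check that I would do afterwards by substitution.

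\emph{Existence and uniqueness.} Implicit differentiation of $\mathfrak g_n(d_n(\lambda),\lambda)=0$ gives
\[
 d_n'(\lambda)=y_n\int\frac{td}{(td+\lambda)^2}\,dH_{p,b}\Big/\partial_d\mathfrak g_n\ \ge 0 .
\]
Hence $a_n=(1-d_n)/y_n$ is nonincreasing, while $k_\rho\lambda/(1+\lambda)$ is strictly increasing. So $\varphi_n$ is strictly decreasing and has at most one root. As $\lambda\to\infty$, $a_n(\lambda)\le\int t\,dH_{p,b}/\lambda\le C_H/\lambda\to0$, so $\varphi_n\to-k_\rho<0$. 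As $\lambda\downarrow0$, $\varphi_n(\lambda)$ tends to a limit bounded below by a constant $a_*>0$, shown in the next step. The intermediate value theorem then gives the root.

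\emph{Uniform bounds.} This is the step I expect to be the main obstacle, because the bulk eigenvalues need not be bounded away from zero.
\begin{itemize}
\item[(a)] Lower bound on $a_n$ near zero. If $d_n(\lambda)\le1/2$, then $a_n\ge 1/(2y_n)$ directly. Otherwise $d_n>1/2$ and
\[
 a_n(\lambda)=J_n\ge\int_{[\eta,C_H]}\frac{t/2}{t/2+\lambda}\,dH_{p,b}(t)\ge\frac{\eta}{\eta+2\lambda}\,H_{p,b}([\eta,C_H]).
\]
Since $H\ne\delta_0$, I would choose $\eta>0$ with $H([\eta,\infty))>0$, taking $\eta$ a continuity point of $H$. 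The uniform convergence $\|H_{p,b}-H\|_\infty\to0$ in Assumption~\ref{ass:dimshape}(iii) then makes $H_{p,b}([\eta,C_H])\ge H([\eta,\infty))/2$ for large $n$. This yields $a_n(\lambda)\ge a_*>0$ for all $\lambda\le\eta$. Consequently $\varphi_n(\lambda)>0$ for $\lambda\le\underline\lambda:=\min\{\eta,\,a_*/(2k_{\rho_{\max}})\}$, which forces $\lambda_{\rho,n}\ge\underline\lambda$.
\item[(b)] Upper bound on $\lambda$. For $\lambda\ge1$ we have $k_\rho\lambda/(1+\lambda)\ge k_\rho/2$ and $a_n\le C_H/\lambda$. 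Hence $\lambda_{\rho,n}\le\overline\lambda:=\max\{1,2C_H/k_{\rho_{\max}}\}$, where $k_{\rho_{\max}}=\min_{\rho\in\mathcal A}k_\rho$.
\item[(c)] Lower bound on $d$. From $\mathfrak g_n=0$,
\[
 d=1-y_n\int\frac{td}{td+\lambda}\,dH_{p,b}\ge 1-\frac{y_nC_Hd}{\lambda},
\]
so $d_{\rho,n}\ge\underline d:=\bigl(1+2yC_H/\underline\lambda\bigr)^{-1}$ for large $n$.
\end{itemize}
All constants depend only on $y,H,C_H$ and the finite grid, so they are uniform over $\mathcal A$.

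\emph{Jacobian and smoothness.} Write $h=J_n-k_\rho\lambda/(1+\lambda)$. In the coordinates $(c,\lambda,d)$ the Jacobian has first row $(1,-1,0)$, and its determinant equals $\partial_\lambda\mathfrak g_n\,\partial_d h-\partial_d\mathfrak g_n\,\partial_\lambda h$. Using $\partial_\lambda\mathfrak g_n=y_n\partial_\lambda J_n$ and $\partial_d\mathfrak g_n=1+y_n\partial_dJ_n$, the cross terms cancel. The determinant reduces to
\[
 -\partial_\lambda J_n+\partial_d\mathfrak g_n\,\frac{k_\rho}{(1+\lambda)^2},
\]
whose two terms are both nonnegative, so its absolute value is at least $k_\rho/(1+\overline\lambda)^2$. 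On the set $\{\lambda\in[\underline\lambda/2,2\overline\lambda],\ d\in[\underline d/2,2]\}$ the integrands $td/(td+\lambda)$ and their derivatives up to second order are bounded, uniformly in $t\in[0,C_H]$. Hence all Jacobian entries are $O(1)$, and the system is $C^1$ (indeed $C^\infty$) on a neighbourhood of common radius around each solution. A bounded $3\times3$ matrix with determinant bounded away from zero has $\sigma_{\min}\ge|\det|/\|\cdot\|_{\rm op}^2$, which gives $j_*$.
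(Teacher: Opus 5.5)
Your proof is essentially the paper's. The steps match one for one:
\begin{itemize}
\item the reduction to the scalar equation $a_n(\lambda)=k_\rho\lambda/(1+\lambda)$, with strict monotonicity coming from $d_n'\ge0$;
\item the case split $d_n(\lambda)\le1/2$ versus $d_n(\lambda)>1/2$, which bounds $a_n$ below near zero using bulk mass away from the origin;
\item the bound $a_n\le C_H/\lambda$ for $\overline\lambda$;
\item $\underline d$ obtained from the fixed-point equation;
\item the determinant identity $-\partial_\lambda J_n+(1+y_n\partial_dJ_n)k_\rho/(1+\lambda)^2$, combined with $\sigma_{\min}\ge|\det|/\|\cdot\|_{\rm op}^2$.
\end{itemize}

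There is one slip to fix. In (a) the threshold must use the \emph{largest} coefficient $\max_{\rho\in\mathcal A}k_\rho=k_{\rho_{\min}}$. You use $k_{\rho_{\max}}$, which you define in (b) as the smallest one. With that choice, $k_\rho\underline\lambda<a_*$ fails for small $\rho$ in the grid, for example when the grid contains both $0.05$ and $0.95$.

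The paper also carries out the equivalence with the $\gamma_\rho$ construction inside this proof, which you defer. It sets $h=\rho(1+\lambda^{-1})d$ and $\gamma=(1-\rho)\lambda/(\rho h)$. It then uses that $h^2-\{\rho-y_n(1-\rho)\}h-(1-\rho)/\gamma=0$ has exactly one positive root. This makes the correspondence bijective, so uniqueness transfers to the Appendix definition. Deferring it is harmless only if ``CM map'' is read as the solution of \eqref{eq:cm_system}, as in Section~\ref{sec:theory}.
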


\begin{proof}
Fix constants $0<y_-\le y_+<\infty$ with $y_-\le y_n\le y_+$ for all
sufficiently large $n$. The fixed shrinkage interval gives
\[
 0<k_-:=\frac{1-\rho_{\max}}{\rho_{\max}}
 \le k_\rho\le
 k_+:=\frac{1-\rho_{\min}}{\rho_{\min}}<\infty .
\]
The laws $H_{p,b}$ and their limit are supported on $[0,C_H]$.
Since $H\ne\delta_0$, choose $t_*>0$ and $\delta>0$ with
$H((t_*,C_H])\ge2\delta$. Then
\[
 H_{p,b}([t_*,C_H])
 \ge1-H_{p,b}(t_*)
 \ge2\delta-\|H_{p,b}-H\|_\infty
 \ge\delta
\]
for all sufficiently large $n$.

For every $\lambda>0$,
\[
 \begin{gathered}
 \mathfrak g_n(0,\lambda)=-1,\qquad
 \mathfrak g_n(1,\lambda)=y_n\int\frac{t}{t+\lambda}\,dH_{p,b}(t)>0,\\
 \partial_d\mathfrak g_n(d,\lambda)
 =1+y_n\int\frac{t\lambda}{(td+\lambda)^2}\,dH_{p,b}(t)\ge1
 \qquad(0\le d\le1).
 \end{gathered}
\]
Thus \eqref{eq:bulk_fixed_point} has exactly one root $d_n(\lambda)\in(0,1)$.
Writing $J_d=\partial_dJ_n$ and $J_\lambda=\partial_\lambda J_n$,
\[
 J_d=\int\frac{t\lambda}{(td+\lambda)^2}\,dH_{p,b}(t)\ge0,\qquad
 J_\lambda=-\int\frac{td}{(td+\lambda)^2}\,dH_{p,b}(t)<0 .
\]
For $\lambda,\lambda'>0$, the mean-value identity gives
\[
 |d_n(\lambda')-d_n(\lambda)|
 \le\sup_{0\le d\le1}|\mathfrak g_n(d,\lambda')-\mathfrak g_n(d,\lambda)|
 \le\frac{y_+C_H|\lambda'-\lambda|}
                 {\min(\lambda,\lambda')^2}.
\]
Taking difference quotients in the exact equation yields
\[
 d_n'(\lambda)=-\frac{y_nJ_\lambda}{1+y_nJ_d},\qquad
 a_n(\lambda)=J_n(d_n(\lambda),\lambda)
             =\frac{1-d_n(\lambda)}{y_n},
\]
\[
 a_n'(\lambda)=J_d\,d_n'(\lambda)+J_\lambda
             =\frac{J_\lambda}{1+y_nJ_d}<0 .
\]

Put $f_{\rho,n}(\lambda)=a_n(\lambda)-k_\rho\lambda/(1+\lambda)$.
For $0<\lambda\le t_*/2$, either $d_n(\lambda)<1/2$, or
$d_n(\lambda)\ge1/2$. Respectively,
\[
 a_n(\lambda)=\frac{1-d_n(\lambda)}{y_n}
                 \ge\frac1{2y_+},
 \qquad
 a_n(\lambda)\ge\int_{[t_*,C_H]}
       \frac{t_*/2}{t_*/2+\lambda}\,dH_{p,b}(t)
                 \ge\frac{\delta}{2}.
\]
Consequently, with
\[
 a_*=\min\left\{\frac1{2y_+},\frac{\delta}{2}\right\},\qquad
 \underline\lambda=
 \min\left\{\frac12,\frac{t_*}{2},\frac{a_*}{2k_+}\right\},\qquad
 \overline\lambda=\max\left\{1,\frac{4C_H}{k_-}\right\},
\]
we obtain
\[
 f_{\rho,n}(\underline\lambda)
 \ge a_*-k_+\underline\lambda\ge a_*/2>0,
\]
\[
 a_n(\lambda)\le\frac{C_H}{\lambda},\qquad
 f_{\rho,n}(\overline\lambda)
 \le\frac{C_H}{\overline\lambda}-\frac{k_-}{2}
 \le-\frac{k_-}{4}<0 .
\]
Moreover,
\[
 f_{\rho,n}'(\lambda)
 =\frac{J_\lambda}{1+y_nJ_d}
             -\frac{k_\rho}{(1+\lambda)^2}<0 .
\]
The intermediate value theorem and strict monotonicity give a unique
root $\lambda_{\rho,n}\in
(\underline\lambda,\overline\lambda)$.

At this root, $1=d+y_nJ_n(d,\lambda)$ implies
\[
 1\le d\left(1+\frac{y_+C_H}{\underline\lambda}\right),\qquad
 d_{\rho,n}\ge
 \underline d:=\left(1+\frac{y_+C_H}{\underline\lambda}\right)^{-1}.
\]
The scalar $c_{\rho,n}=1+\lambda_{\rho,n}$ is uniformly bounded
above and away from zero. We also have
\[
 \frac{k_-\underline\lambda}{1+\overline\lambda}
 \le a_n(\lambda_{\rho,n})
 \le\frac{C_H}{C_H+\underline\lambda}<1 .
\]

We verify that this construction is the original CM map.
For a positive CM solution, put $h=F_{y_n}(\gamma_\rho;\rho)$ and
$\lambda=\rho\gamma_\rho h/(1-\rho)$. Its quadratic equation gives
\[
 h^2-\{\rho-y_n(1-\rho)\}h-(1-\rho)/\gamma_\rho=0,
 \qquad h+y_n(1-\rho)=\rho(1+\lambda^{-1}).
\]
The equation for $\gamma_\rho$ becomes
\[
 \int\frac{t}{t+\lambda\{h+y_n(1-\rho)\}/h}\,dH_{p,b}(t)
       =\frac{1-\rho}{h+y_n(1-\rho)}.
\]
Thus
\[
 d=\frac{h}{h+y_n(1-\rho)},\qquad
 J_n(d,\lambda)=\frac{k_\rho\lambda}{1+\lambda},\qquad
 d=1-y_nJ_n(d,\lambda).
\]
Every positive CM solution therefore has the unique scalar root
already constructed.

Conversely, let $\lambda=\lambda_{\rho,n}$ and $d=d_n(\lambda)$, and set
\[
 h=\rho(1+\lambda^{-1})d>0,\qquad
 \gamma=\frac{(1-\rho)\lambda}{\rho h}>0 .
\]
Since $1-d=y_nk_\rho\lambda/(1+\lambda)$,
\[
 h+y_n(1-\rho)=\rho(1+\lambda^{-1}),\qquad
 h^2-\{\rho-y_n(1-\rho)\}h-\frac{1-\rho}{\gamma}=0 .
\]
This quadratic has exactly one positive root, so
$h=F_{y_n}(\gamma;\rho)$. The remaining CM equation follows directly:
\[
 \begin{aligned}
 &\int\frac{t}{\gamma\rho+
           \{(1-\rho)/(h+y_n(1-\rho))\}t}\,dH_{p,b}(t)\\
 &\quad=\frac{h+y_n(1-\rho)}{1-\rho}
            J_n(d,\lambda)=1 .
 \end{aligned}
\]
Finally $\rho\gamma h=(1-\rho)\lambda$, and hence the map's
coefficient satisfies
\[
 c=\frac{1-\rho+\rho\gamma h}{1-\rho}=1+\lambda .
\]
This proves both existence and uniqueness of the CM map.

The Jacobian in the stated coordinate order is
\[
 \partial_{\bm{\vartheta}}\bm{\mathcal F}_\rho
 =\begin{pmatrix}
 1&-1&0\\
 0&y_nJ_\lambda&1+y_nJ_d\\
 0&J_\lambda-k_\rho/(1+\lambda)^2&J_d
 \end{pmatrix}.
\]
Its determinant and entry bounds at the solution are
\[
 \begin{aligned}
 \det\partial_{\bm{\vartheta}}\bm{\mathcal F}_\rho
 &=-J_\lambda+
        (1+y_nJ_d)\frac{k_\rho}{(1+\lambda)^2}
 \ge\frac{k_-}{(1+\overline\lambda)^2}=:D_*>0,\\
 0\le J_d&\le C_H/\underline\lambda,\qquad
 |J_\lambda|\le C_H/\underline\lambda^2 .
 \end{aligned}
\]
In particular,
\[
 \begin{aligned}
 \|\partial_{\bm{\vartheta}}\bm{\mathcal F}_\rho\|_{\rm op}^2
 &\le\|\partial_{\bm{\vartheta}}\bm{\mathcal F}_\rho\|_{\rm F}^2\\
 &\le2+(y_+C_H/\underline\lambda^2)^2
       +(1+y_+C_H/\underline\lambda)^2\\
 &\quad+(C_H/\underline\lambda^2+k_+)^2
       +(C_H/\underline\lambda)^2=:M_*^2 .
 \end{aligned}
\]
For its three singular values $s_1\ge s_2\ge s_3$,
\[
 s_3=\frac{|\det\partial_{\bm{\vartheta}}\bm{\mathcal F}_\rho|}{s_1s_2}
 \ge\frac{D_*}{M_*^2}=:j_*>0.
\]

On $\lambda\ge\underline\lambda/2$ and
$\underline d/2\le d\le2$, differentiating the integrands gives
\[
 |J_{dd}|\le\frac{2C_H^2}{\lambda^2},\qquad
 |J_{d\lambda}|\le\frac{C_H}{\lambda^2},\qquad
 |J_{\lambda\lambda}|\le\frac{4C_H}{\lambda^3}.
\]
The defining systems therefore have uniformly continuous first
derivatives and uniformly bounded second derivatives on this set.
For a constant $C_*$ independent of $n,\rho$,
\[
 \|\partial_{\bm{\vartheta}}\bm{\mathcal F}_\rho(\bm{\vartheta})
      -\partial_{\bm{\vartheta}}\bm{\mathcal F}_\rho(\bm{\vartheta}_{\rho,n})\|_{\rm op}
 \le C_*\|\bm{\vartheta}-\bm{\vartheta}_{\rho,n}\|.
\]
Choose
$\delta_{\rm CM}=\min\{\underline\lambda/2,\underline d/2,j_*/(2C_*)\}>0$.
For $\|\bm{\vartheta}-\bm{\vartheta}_{\rho,n}\|\le\delta_{\rm CM}$,
\[
 \begin{aligned}
 \inf_{\|\bm{v}\|=1}\|\partial_{\bm{\vartheta}}\bm{\mathcal F}_\rho(\bm{\vartheta})\bm{v}\|
 &\ge j_*-
 \|\partial_{\bm{\vartheta}}\bm{\mathcal F}_\rho(\bm{\vartheta})
      -\partial_{\bm{\vartheta}}\bm{\mathcal F}_\rho(\bm{\vartheta}_{\rho,n})\|_{\rm op}\\
 &\ge j_*/2 .
 \end{aligned}
\]
All constants were obtained from the spectral assumption and the
fixed shrinkage interval.
\end{proof}

For the subsequent random-matrix estimates, take
$\mathcal L=[\underline\lambda/2,2\overline\lambda]$.
It contains every CM ridge and a common neighbourhood of each ridge.
For every $\lambda\in\mathcal L$, the same positive-root equation gives
\[
 d_n(\lambda)\ge
 d_L:=\left(1+\frac{2y_+C_H}{\underline\lambda}\right)^{-1},\qquad
 a_n(\lambda)\ge
 \frac{\delta\,t_*d_L}{t_*d_L+2\overline\lambda}>0 .
\]
Thus the positive angular margin used below holds on the whole
chosen ridge interval after applying the sample-trace bounds.

\begin{lemma*}

Uniformly on $\mathcal L$, the full Gaussian hat matrix satisfies
\begin{equation}
 \sup_{\lambda\in\mathcal L}\max_i
 |\mathbf{H}(\lambda)_{ii}-y_na_S(\lambda)|=O_p(\varepsilon_n),
 \qquad
 \sup_{\lambda\in\mathcal L}|a_S(\lambda)-a_b(\lambda)|\le r_p/p.
 \label{eq:original_full_hat}
\end{equation}
If $\inf_{\lambda\in\mathcal L}y_na_b(\lambda)$ is bounded away from zero,
then, with probability tending to one,
\begin{equation}
 \inf_{\lambda\in\mathcal L}\min_i
 \{\mathbf{H}(\lambda)-\mathbf{H}(\lambda)^2\}_{ii}\ge\delta_H>0
 \label{eq:original_hat_margin}
\end{equation}
\end{lemma*}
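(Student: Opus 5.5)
The plan is to work on the $n\times n$ Gram side, where the spikes enter as a finite-rank perturbation. Put $\mathbf K=\mathbf Z^\T\mathbf Z/n$ and $\mathbf K_b=\mathbf Z_b^\T\mathbf Z_b/n$. The push-through identity gives
\[
 \mathbf H(\lambda)=\mathbf I_n-\lambda(\mathbf K+\lambda\mathbf I_n)^{-1},\qquad
 \mathbf H_b(\lambda)=\mathbf I_n-\lambda(\mathbf K_b+\lambda\mathbf I_n)^{-1}.
\]

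\textbf{Trace comparison.} Since $\mathbf K-\mathbf K_b=\mathbf Z_{[s]}^\T\mathbf Z_{[s]}/n$ is positive semidefinite of rank at most $r_p$, eigenvalue interlacing implies that
$\tr(\mathbf K_b+\lambda)^{-1}-\tr(\mathbf K+\lambda)^{-1}$ lies in $[0,r_p/\lambda]$. Multiplying by $\lambda$ and dividing by $p$ gives $|a_S(\lambda)-a_b(\lambda)|\le r_p/p$ deterministically and for every $\lambda$. This is the second half of \eqref{eq:original_full_hat}.

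\textbf{Diagonal entries: spike correction.} By Woodbury, $(\mathbf K_b+\lambda)^{-1}-(\mathbf K+\lambda)^{-1}=\mathbf D$, where
\[
\mathbf D_{ii}=\bm w_i^\T\{\mathbf T_s^{-1}+\mathbf G_s(\mathbf K_b+\lambda)^{-1}\mathbf G_s^\T/n\}^{-1}\bm w_i,\qquad
\bm w_i=\mathbf G_s(\mathbf K_b+\lambda)^{-1}\ve_i/\sqrt n .
\]
Dropping $\mathbf T_s^{-1}\succeq0$ removes the diverging spike sizes completely. Because $\mathbf G_s$ is independent of $\mathbf K_b$, the $r_p\times r_p$ matrix $\mathbf G_s(\mathbf K_b+\lambda)^{-1}\mathbf G_s^\T/n$ concentrates at $n^{-1}\tr(\mathbf K_b+\lambda)^{-1}\I_{r_p}$. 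This is $\succeq c\,\I$, since $\|\mathbf S_b\|_{\rm op}=O_p(1)$.

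The entries of $\bm w_i$ are conditionally $N(0,\|(\mathbf K_b+\lambda)^{-1}\ve_i\|^2/n)$ with variance at most $(\lambda^2n)^{-1}$. The conditional Gaussian maximum \eqref{eq:conditional_gaussian_max} over $i$, the spike rows and a ridge mesh therefore gives $\max_i\mathbf D_{ii}=O_p(r_pL_n/n)$. The same bound holds for the $\lambda$-derivative of $\mathbf D_{ii}$.

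\textbf{Diagonal entries: bulk.} The Schur complement gives $\mathbf H_b(\lambda)_{ii}=q_i/(1+q_i)$, where $q_i=\bm g_{b,i}^\T\mathbf\Lambda_b^{1/2}\mathbf Q_{b,-i}\mathbf\Lambda_b^{1/2}\bm g_{b,i}/n$ and $\mathbf Q_{b,-i}$ is the leave-one-out resolvent. The coefficient matrix has operator norm $\le C_H/(\lambda n)$ and Frobenius norm $\le C/\sqrt n$. The conditional maximum \eqref{eq:conditional_gaussian_max} then yields
\[
\max_i|q_i-n^{-1}\tr(\mathbf\Lambda_b\mathbf Q_{b,-i})|=O_p(\sqrt{L_n/n}).
\]
A rank-one resolvent bound replaces $\mathbf Q_{b,-i}$ by $\mathbf Q_b$ at cost $O(1/n)$.

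All $q_i$ are therefore within $O_p(\varepsilon_n)$ of a common value $\bar q$. Averaging the identity $\tr\mathbf H_b=\sum_i\mathbf H_{b,ii}$ shows $y_na_b=\bar q/(1+\bar q)+O_p(\varepsilon_n)$, because $x\mapsto x/(1+x)$ is $1$-Lipschitz. Combined with the spike step and the trace comparison, this gives the first half of \eqref{eq:original_full_hat} on a polynomial ridge mesh. Uniformity over $\mathcal L$ follows because $|\partial_\lambda\mathbf H_{ii}|\le\lambda^{-1}$.

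\textbf{Margin \eqref{eq:original_hat_margin}.} I would not bound the off-diagonal entries of $\mathbf H$ directly. Instead, use the spectral identity
\[
\{\mathbf H-\mathbf H^2\}_{ii}=\lambda\,\partial_\lambda\{1-\mathbf H(\lambda)_{ii}\}.
\]
The function $\lambda\mapsto1-\mathbf H_{ii}=\sum_k w_{ik}\lambda/(\mu_k+\lambda)$ is increasing and concave. Its derivative at $\lambda$ is therefore at least the forward difference quotient over a fixed step $\eta>0$.

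The uniform approximation just proved turns this difference quotient into
\[
y_n\{a_S(\lambda)-a_S(\lambda+\eta)\}/\eta-O_p(\varepsilon_n/\eta).
\]
The sample difference $a_S(\lambda)-a_S(\lambda+\eta)$ is in turn close to the deterministic one, using the trace comparison and standard resolvent-trace concentration. The deterministic difference is bounded below, because $a_n'(\lambda)=J_\lambda/(1+y_nJ_d)\le -c<0$ on $\mathcal L$. This uses the mass bound $H_{p,b}([t_*,C_H])\ge\delta$ from the proof of Lemma~\ref{lem:original_cm_scalar}.

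This step needs the ridge interval slightly enlarged by $\eta$, which the choice $\mathcal L=[\underline\lambda/2,2\overline\lambda]$ allows after shrinking the working interval. The assumed lower bound on $y_na_b$ enters to keep $\mathbf H_{ii}$ away from $0$, and hence $\bar q$ away from $0$. Taking $\eta$ fixed and letting $\varepsilon_n\to0$ gives $\delta_H>0$.

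\textbf{Main obstacle.} The step I expect to be hardest is the spike correction to the individual diagonal entries. The trace bound is free, but entrywise control needs the Woodbury reduction that discards $\mathbf T_s^{-1}$. It also needs a uniform lower bound on the small matrix $\mathbf G_s(\mathbf K_b+\lambda)^{-1}\mathbf G_s^\T/n$, uniformly in $\lambda$ on the mesh, so that diverging spikes cannot spoil delocalization.
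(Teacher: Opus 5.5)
Your proposal is correct. The diagonal and trace parts follow the paper; the margin is proved by a genuinely different route.

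\textbf{Diagonal and trace parts.} These match the paper. The bulk entries use the leave-one-out identity $(\mathbf H_b)_{ii}=v_i/(1+v_i)$ with a conditional Gaussian maximum and a rank-one trace replacement. The spike correction uses Woodbury on the $n\times n$ side with $\mathbf T_s^{-1}$ discarded, which is exactly the paper's bound $\mathbf P\preceq\mathbf M\mathbf G_s^\T(\mathbf G_s\mathbf M\mathbf G_s^\T)^{-1}\mathbf G_s\mathbf M$ with $\mathbf M=\lambda(\mathbf K_b+\lambda\I_n)^{-1}$. The obstacle you anticipate is mild. On $\{\|\mathbf S_b\|_{\rm op}\le C\}$ one has $\mathbf G_s(\mathbf K_b+\lambda\I_n)^{-1}\mathbf G_s^\T/n\succeq(C+\lambda_+)^{-1}\mathbf G_s\mathbf G_s^\T/n$ simultaneously for all $\lambda\in\mathcal L$, and $\mathbf G_s\mathbf G_s^\T/n\succeq\I/2$ with probability tending to one. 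This is how the paper handles it.

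\textbf{Margin.} The paper argues directly. The contraction $\|\mathbf H_b\|_{\rm op}\le1-\delta$ gives $(\mathbf H_b-\mathbf H_b^2)_{ii}\ge\delta(\mathbf H_b)_{ii}$. The hypothesis on $y_na_b$ keeps $(\mathbf H_b)_{ii}$ bounded below. The spikes are absorbed by $|(\mathbf H^2-\mathbf H_b^2)_{ii}|\le2\sqrt{P_{ii}}+P_{ii}$.

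You instead use the identity $(\mathbf H-\mathbf H^2)_{ii}=\lambda\partial_\lambda\{1-\mathbf H(\lambda)_{ii}\}$ together with concavity in $\lambda$. This turns the margin into a finite difference of the diagonal you have already approximated, so the spikes need no second-order control. The costs are twofold:
\begin{itemize}
\item You need the first half of \eqref{eq:original_full_hat} on $[\lambda_-,\lambda_++\eta]$. This is obtained by enlarging the interval; nothing needs to be shrunk.
\item You import the deterministic equivalent of Lemma~\ref{lem:original_sample_traces}, together with $H\ne\delta_0$. That lemma is stated later, but its proof does not use the present one, so there is no circularity.
\end{itemize}

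\textbf{Role of the hypothesis.} Contrary to your closing remark, your route never uses the hypothesis that $\inf_{\lambda}y_na_b(\lambda)$ is bounded away from zero. You in fact prove the margin unconditionally under the standing assumptions. To stay self-contained and use the hypothesis as the lemma intends, replace the deterministic step by the elementary bound $a_b(\lambda)-a_b(\lambda+\eta)\ge\eta\,a_b(\lambda)/(\|\mathbf S_b\|_{\rm op}+\lambda+\eta)$, combined with $|a_S-a_b|\le r_p/p$.
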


\begin{proof}
The bounded bulk and the Gaussian matrix norm bound imply
$\|\mathbf{S}_b\|_{\rm op}\le C$ with probability tending to one. Put
$\delta=\lambda_-/(C+\lambda_-)$; then
$\|\mathbf{H}_b(\lambda)\|\le C/(C+\lambda_-)=1-\delta$. We also prove
\begin{equation}
 \begin{gathered}
\sup_{\lambda\in\mathcal L}\|\mathbf{H}_b(\lambda)\|_{\rm op}\le1-\delta,\\
 \sup_{\lambda\in\mathcal L}\max_i
 |\{\mathbf{H}_b(\lambda)\}_{ii}-y_na_b(\lambda)|
 =O_p\!\left(\sqrt{\frac{\log(n+2)}n}\right),
 \end{gathered}
\label{eq:original_bulk_hat}
\end{equation}  For a bulk column $\bm{z}_{b,i}$ put
$\mathbf{Q}_b^{(-i)}=(n^{-1}\sum_{j\ne i}\bm{z}_{b,j}\bm{z}_{b,j}^\T+\lambda \mathbf{I})^{-1}$.
The Sherman--Morrison identity gives
\[
 (\mathbf{H}_b)_{ii}=\frac{v_i}{1+v_i},\qquad
 v_i=\bm{z}_{b,i}^\T \mathbf{Q}_b^{(-i)}\bm{z}_{b,i}/n.
\]
The conditional coefficient is
$\mathbf{B}_i=\mathbf{\Sigma}_{p,b}^{1/2}\mathbf{Q}_b^{(-i)}\mathbf{\Sigma}_{p,b}^{1/2}/n$, so
\[
 \begin{gathered}
\|\mathbf{B}_i\|_{\rm op}\le C_H/(n\lambda_-),\\
 \|\mathbf{B}_i\|_{\rm F}^2\le pC_H^2/(n^2\lambda_-^2),\\
 \max_i|v_i-\tr \mathbf{B}_i|=O_p(\sqrt{L_n/n}).
 \end{gathered}
\]
Moreover
\[
 \begin{gathered}
\mathbf{Q}_b^{(-i)}-\mathbf{Q}_b=
 \frac{\mathbf{Q}_b^{(-i)}\bm{z}_{b,i}\bm{z}_{b,i}^\T \mathbf{Q}_b^{(-i)}}
        {n(1+v_i)},\\
 0\le \mathbf{Q}_b^{(-i)}-\mathbf{Q}_b\le \mathbf{Q}_b^{(-i)},\\
 \operatorname{rank}(\mathbf{Q}_b^{(-i)}-\mathbf{Q}_b)\le1 .
 \end{gathered}
\]
Hence $|\tr \mathbf{B}_i-n^{-1}\tr(\mathbf{\Sigma}_{p,b}\mathbf{Q}_b)|\le C_H/(n\lambda_-)$.
Since $|u/(1+u)-v/(1+v)|\le|u-v|$ for $u,v\ge0$,
\[
 \begin{gathered}
\max_{i,j}|(\mathbf{H}_b)_{ii}-(\mathbf{H}_b)_{jj}|=O_p(\sqrt{L_n/n}),\\
 \max_i|(\mathbf{H}_b)_{ii}-\tr \mathbf{H}_b/n|=O_p(\sqrt{L_n/n}).
 \end{gathered}
\]
A mesh of width $n^{-2}$ suffices because
\[
 \mathbf{Q}_b'=-\mathbf{Q}_b^2,\quad
 \|\mathbf{Q}_b'\|\le\lambda_-^{-2},\quad
 \|\mathbf{H}_b'\|=\max_{x\ge0}\frac{x}{(x+\lambda)^2}
 \le(4\lambda_-)^{-1}.
\]
The interpolation error is $O(n^{-2})$; the Gaussian union bound
at the $O(n^2)$ mesh points keeps the stated rate.

For $\mathbf{P}=\mathbf{H}-\mathbf{H}_b$, the required diagonal estimate is
\begin{equation}
 \sup_{\lambda\in\mathcal L}\max_i \mathbf{P}(\lambda)_{ii}
 =O_p\!\left(\frac{\log(n+2)}n\right).
 \label{eq:original_spike_hat}
\end{equation}
In the population eigenbasis, the spike block $\mathbf G_s$ has
independent standard Gaussian entries. The corresponding data columns
in $\mathbf Z_{[s]}=\mathbf T_s^{1/2}\mathbf G_s$ have covariance
$\mathbf T_s$.
Rows with zero eigenvalue may be omitted.  Let
\[
 \mathbf{M}=(\mathbf{I}_n+\mathbf{Z}_b^\T \mathbf{Z}_b/(n\lambda))^{-1},
 \qquad \mathbf{B}=\mathbf{T}_s^{1/2}\mathbf{G}_s/\sqrt{n\lambda}.
\]
Woodbury's identity yields
\begin{equation}
 \mathbf{P}=\mathbf{M} \mathbf{B}^\T(\mathbf{I}+\mathbf{B}\mathbf{M}\mathbf{B}^\T)^{-1}\mathbf{B}\mathbf{M}
 \le \mathbf{M} \mathbf{G}_s^\T(\mathbf{G}_s\mathbf{M}\mathbf{G}_s^\T)^{-1}\mathbf{G}_s\mathbf{M}.
 \label{eq:original_spike_projection}
\end{equation}
Indeed, with $\mathbf{A}_s=\mathbf{G}_s\mathbf{M}\mathbf{G}_s^\T$,
\[
 \mathbf{B}^\T(\mathbf{I}+\mathbf{B}\mathbf{M}\mathbf{B}^\T)^{-1}\mathbf{B}
 =\mathbf{G}_s^\T(n\lambda \mathbf{T}_s^{-1}+\mathbf{A}_s)^{-1}\mathbf{G}_s
 \le \mathbf{G}_s^\T \mathbf{A}_s^{-1}\mathbf{G}_s .
\]
Write $\bm{g}_{s,j}^\T$ for row $j$ of $\mathbf{G}_s$ and $\bm{G}_{s,i}$ for its column $i$.
For the bounded number of spike rows, Gaussian quadratic and
bilinear bounds imply
\[
 \left\|\frac{\mathbf{G}_s\mathbf{G}_s^\T}{n}-\mathbf{I}_{r_p}\right\|_{\rm op}
 \le r_p\max_{j,k}\left|
   \frac{\bm{g}_{s,j}^\T \bm{g}_{s,k}}{n}-\mathbf1(j=k)\right|
 =O_p(n^{-1/2}) .
\]
Let $\bm{e}_i^{(n)}$ be the $i$th standard basis vector of $\mathbb R^n$.
On $\{\mathbf{M}\ge\delta \mathbf{I},\mathbf{G}_s\mathbf{G}_s^\T/n\ge \mathbf{I}/2\}$,
\[
 \begin{gathered}
 \|\mathbf{A}_s^{-1}\|\le2/(\delta n),\quad
 P_{ii}\le\frac{2\|\mathbf{G}_s\mathbf{M}\bm{e}_i^{(n)}\|^2}{\delta n},\\
 \operatorname{Cov}(\mathbf{G}_s\mathbf{M}\bm{e}_i^{(n)}\mid \mathbf{Z}_b)=\|\mathbf{M}\bm{e}_i^{(n)}\|^2 \mathbf{I}_{r_p}\le \mathbf{I}_{r_p}.
 \end{gathered}
\]
The Gaussian bound and a union over $i$ give
$\max_i\|\mathbf{G}_s\mathbf{M}\bm{e}_i^{(n)}\|^2=O_p(L_n)$, proving
\eqref{eq:original_spike_hat}.
The companion derivative satisfies $\|\mathbf{M}'\|\le1/(4\lambda_-)$.
For $\mathbf{\Pi}=\mathbf{M} \mathbf{G}_s^\T(\mathbf{G}_s\mathbf{M}\mathbf{G}_s^\T)^{-1}\mathbf{G}_s\mathbf{M}$,
\[
 \begin{gathered}
\begin{aligned}
 \|\mathbf{\Pi}'\|&\le
 2\|\mathbf{M}'\|\|\mathbf{G}_s\|^2\|(\mathbf{G}_s\mathbf{M}\mathbf{G}_s^\T)^{-1}\|\\
 &\quad+\|\mathbf{G}_s\|^4\|\mathbf{M}'\|\|(\mathbf{G}_s\mathbf{M}\mathbf{G}_s^\T)^{-1}\|^2\\
 &=O_p(1).
 \end{aligned}
 \end{gathered}
\]
The $n^{-2}$ mesh thus proves uniformity. Woodbury's formula gives
$\mathbf{P}\ge0$, $\operatorname{rank}\mathbf{P}\le r_p$, and
$\mathbf{P}\le\mathbf{H}\le \mathbf{I}$. In particular
\[
 \begin{gathered}
|a_S-a_b|\le r_p/p,\\
 \max_i|H_{ii}-y_na_S|
 \le\max_i|(\mathbf{H}_b)_{ii}-y_na_b|
       +\max_iP_{ii}+y_n|a_S-a_b|
 =O_p(\varepsilon_n).
 \end{gathered}
\]

Finally $(\mathbf{H}_b-\mathbf{H}_b^2)_{ii}\ge\delta(\mathbf{H}_b)_{ii}$.
If $\mathbf{H}=\mathbf{H}_b+\mathbf{P}$, then, using $\|\mathbf{H}_b\|\le1$, $0\le \mathbf{P}\le \mathbf{I}$, and
$(\mathbf{P}^2)_{ii}\le P_{ii}$,
\[
 |(\mathbf{H}^2-\mathbf{H}_b^2)_{ii}|
 \le2\sqrt{P_{ii}}+P_{ii}.
\]
Equations \eqref{eq:original_bulk_hat} and
\eqref{eq:original_spike_hat} now prove the positive margin.
\end{proof}

\begin{lemma}
\label{lem:original_sample_traces}
Uniformly on a compact positive ridge interval,
\begin{equation*}
 \begin{gathered}
a_S(\lambda)-a_n(\lambda)=O_p(\varepsilon_n),\\
 a_S'(\lambda)-a_n'(\lambda)=O_p(\varepsilon_n),\\
 \frac1p\tr \mathbf{S}=1+O_p(n^{-1/2}).
 \end{gathered}
\end{equation*}
The derivatives are with respect to $\lambda$.
The spherical sample matrix also satisfies
$p^{-1}\tr \mathbf{S}_U=1+O_p(n^{-1/2})$.
\end{lemma}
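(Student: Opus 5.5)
The plan is to reduce everything to the zero-spike bulk quantity $a_b(\lambda)$, using the lemma above. That lemma already gives $\sup_{\mathcal L}|a_S-a_b|\le r_p/p$. Hence it suffices to compare $a_b$ with $a_n$.

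Note that $a_b(\lambda)=p^{-1}\tr\mathbf H_b=1-\lambda p^{-1}\tr\mathbf Q_b(\lambda)$ up to the zero rows. The law of $\mathbf S_b$ is a Gaussian sample covariance with population ESD $H_{p,0}=(1-r_p/p)H_{p,b}+(r_p/p)\delta_0$ and ratio $y_n$. Equivalently, the nonzero $(p-r_p)$-block has population $H_{p,b}$ and ratio $(p-r_p)/n$.

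\textbf{Step 1 (concentration).} Write $p^{-1}\tr\mathbf Q_b-\E p^{-1}\tr\mathbf Q_b$ as a martingale over columns. Each rank-one deletion changes $\tr\mathbf Q_b$ by at most $\lambda_-^{-1}$, so Burkholder/Azuma gives fluctuations $O_p(p^{-1}n^{1/2})=O_p(n^{-1/2})$ pointwise. Uniformity on $\mathcal L$ then follows from an $n^{-2}$ mesh together with the Lipschitz bound $\|\mathbf Q_b'\|\le\lambda_-^{-2}$. A union over the $O(n^2)$ mesh points with Azuma tails gives $O_p(\sqrt{L_n/n})$.

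\textbf{Step 2 (deterministic equivalent).} Show $|\E p^{-1}\tr\mathbf Q_b(\lambda)-m_{n,0}(-\lambda)|=O(n^{-1})$. Here $m_{n,0}$ is the MP Stieltjes transform at $(y_n,H_{p,0})$. The route is the standard Gaussian/Silverstein leave-one-out argument on the negative real axis. Sherman--Morrison gives $\mathbf Q_b=\mathbf Q_b^{(-i)}-\ldots$, and the quadratic forms concentrate at rate $n^{-1/2}$ with mean-correction $O(n^{-1})$. The resulting approximate fixed-point equation $\mathfrak g_n(\hat d,\lambda)=O(n^{-1})$ is converted to a bound on $\hat d-d_n$ by the stability $\partial_d\mathfrak g_n\ge1$ from Lemma~\ref{lem:original_cm_scalar}.

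Next, compare $H_{p,0}$ with $H_{p,b}$. The atom at zero contributes $r_p/p$. The fixed-point equation with ratio $y_n$ versus $(p-r_p)/n$ differs by $O(r_p/p)$, again absorbed through $\partial_d\mathfrak g_n\ge1$. This gives $O(r_p/p)$ uniformly on $\mathcal L$. Combining, $a_S-a_n=O_p(\varepsilon_n)$.

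\textbf{Step 3 (derivative).} Rather than differentiate the random quantity directly, I will use a convexity/analyticity argument. Both $a_S$ and $a_n$ extend analytically to a complex neighbourhood of $\mathcal L$ (distance $\ge\lambda_-/2$ from the spectrum $\subset(-\infty,0]$ after the sign flip). Apply Steps 1--2 on a slightly larger interval, or on a small complex disc, where the same bounds hold with $|\Im|$ control. Cauchy's integral formula on circles of fixed radius then transfers the sup-bound to derivatives with the same rate.

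A simpler alternative is available. $a_S$ is monotone and $a_S''$ is bounded by $C\lambda_-^{-3}$ deterministically. Combining the sup bound with the bound on second derivatives gives $|a_S'-a_n'|\le C(\sqrt{\varepsilon_n}\cdot\ldots)$, which is only square-root. So the complex-analytic route is what I will commit to. The main obstacle is making the concentration uniform on a complex neighbourhood; the mesh argument transfers verbatim since $\|(\mathbf S_b-z)^{-1}\|\le 2/\lambda_-$ there.

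\textbf{Step 4 (traces).} Write $p^{-1}\tr\mathbf S=(np)^{-1}\sum_{j}\tau_j\|\bm g_j\|^2$. This has mean $p^{-1}\tr\mSigma_p=1$ and variance $2(np^2)^{-1}\sum\tau_j^2$. Since $\sum\tau_j^2\le C_H^2p+(\sum_{J_p}\tau_j)^2$ and the spikes satisfy $\sum_{J_p}\tau_j\le p$, the variance is $O(n^{-1})$ by the trace normalization. Chebyshev then gives $1+O_p(n^{-1/2})$.

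For the spherical version, $p^{-1}\tr\mathbf S_U=n^{-1}\sum_i\bm U_i^\T\mSigma_p\bm U_i$. These are i.i.d. terms with mean $p^{-1}\tr\mSigma_p=1$ and variance $O(p^{-2}\tr\mSigma_p^2)=O(1)$, using $\tr\mSigma_p^2\le C_H^2p+p^2$. The average is therefore $1+O_p(n^{-1/2})$.
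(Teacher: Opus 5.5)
Your outline is sound, but it takes a genuinely different route from the paper. For the level estimate the paper never passes through expectations. It deletes a population row from the companion resolvent $\mathbf M_b=\lambda(\mathbf Z_b^\T\mathbf Z_b/n+\lambda\mathbf I_n)^{-1}$ and uses the exact Schur identity $(\mathbf Q_b)_{jj}=(\lambda+\tau_j\kappa_j)^{-1}$. It then concentrates every $\kappa_j$ around the \emph{random} scalar $d_S=n^{-1}\tr\mathbf M_b$, so that $\mathfrak g_0(d_S,\lambda)=O_p(\sqrt{L_n/n})$ holds directly for the sample quantity.

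Your column martingale plus a bias bound for $\E p^{-1}\tr\mathbf Q_b$ reaches the same conclusion. Two side remarks: an $O(n^{-1/2})$ bias already suffices for the rate $\varepsilon_n$, and the equation you invert at that stage is $\mathfrak g_0$, built from $H_{p,0}$, not $\mathfrak g_n$.

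For the derivative the paper stays on the real axis. It differentiates the exact identities and concentrates $\kappa_j'$ around $d_S'$. It then compares the differentiated approximate equation with the deterministic one using $\mathfrak g_{0,d}\ge1$, and gets uniformity from explicit second-derivative bounds on an $n^{-2}$ mesh. Your Cauchy-estimate route replaces all of that with one sup bound on complex discs, which is more economical once available. But it moves the work into the complex plane, where two steps you have not supplied become necessary.

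First, the reduction ``it suffices to compare $a_b$ with $a_n$'' does not cover the derivative assertion. The lemma above bounds $a_S-a_b$ only for real $\lambda\in\mathcal L$, which gives no control of $a_S'-a_b'$. You need one of two fixes:
\begin{itemize}
\item The bound $|a_S(z)-a_b(z)|\le Cr_p/p$ on your discs. This holds: with $\mathbf K=\mathbf Z^\T\mathbf Z/n$ and $\mathbf K_b=\mathbf Z_b^\T\mathbf Z_b/n$, one has $\mathbf H(z)-\mathbf H_b(z)=z(\mathbf K_b+z\mathbf I_n)^{-1}-z(\mathbf K+z\mathbf I_n)^{-1}$. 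This matrix has rank at most $\operatorname{rank}(\mathbf K-\mathbf K_b)\le r_p$ and norm at most $2|z|/\operatorname{Re}z$.
\item The paper's counting-function bound with $g(x)=x/(x+\lambda)^2$, which gives $|a_S'-a_b'|\le r_p/(2p\lambda_-)$ regardless of spike size.
\end{itemize}

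Second, the step ``approximate root, hence close to $d_n$ because $\partial_d\mathfrak g_n\ge1$'' is a real-variable mean-value argument. It does not transfer verbatim to complex $z$. The concentration part is the easy one, since rank-one trace changes remain at most $1/\operatorname{Re}z$.

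What you need is a local-inversion argument. On discs of a sufficiently small fixed radius, $|\partial_d\mathfrak g_0(d,z)|$ stays bounded below near the solution branch. A priori closeness of your approximate root to $d_n(z)$ then follows from Lipschitz continuity in $z$ together with the real-axis estimate. Alternatively, invoke the standard uniform bias bound for Stieltjes transforms on contours away from the support.

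With these two additions the argument is complete. Your Step~4 coincides with the paper's mean--variance computation.
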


\begin{proof}
For each bulk row $j\notin J_p$, delete it from the companion resolvent.
The row maxima below range over these bulk indices. Put
\[
 \mathbf{M}_b=\lambda(\mathbf{Z}_b^\T \mathbf{Z}_b/n+\lambda \mathbf{I}_n)^{-1},\quad
 \mathbf{M}_b^{[-j]}=\lambda\left(\frac1n\sum_{k\notin J_p,\,k\ne j}\tau_k\bm{g}_k\bm{g}_k^\T
                               +\lambda \mathbf{I}_n\right)^{-1},
\]
\[
 \kappa_j=\bm{g}_j^\T \mathbf{M}_b^{[-j]}\bm{g}_j/n,\qquad
 d_S=\tr \mathbf{M}_b/n=1-y_na_b.
\]  Row independence, Gaussian concentration,
and the rank-one trace bound give
\[
 \max_j|\kappa_j-d_S|
 =O_p\!\left(\sqrt{\frac{\log(n+2)}n}\right).
\]
Let $\mathbf{Z}_b^{[-j]}$ be $\mathbf{Z}_b$ with row $j$ deleted.
Partition $\mathbf{S}_b+\lambda \mathbf{I}$ at row $j$ and eliminate its other rows. Since
$\mathbf{I}-(\mathbf{Z}_b^{[-j]})^\T\{\mathbf{Z}_b^{[-j]}(\mathbf{Z}_b^{[-j]})^\T/n+\lambda \mathbf{I}\}^{-1}\mathbf{Z}_b^{[-j]}/n
=\mathbf{M}_b^{[-j]}$, the remaining scalar is
\[
 (\mathbf{Q}_b)_{jj}
 =\left\{\lambda+\frac{\tau_j}{n}
       \bm{g}_j^\T \mathbf{M}_b^{[-j]}\bm{g}_j\right\}^{-1}
 =(\lambda+\tau_j\kappa_j)^{-1}.
\]
Consequently
\[
 \begin{gathered}
a_b=\frac1p\tr(\mathbf{I}-\lambda \mathbf{Q}_b)
 =\frac1p\sum_{j\notin J_p}\frac{\tau_j\kappa_j}{\lambda+\tau_j\kappa_j},\\
 \left|\frac{t\kappa}{\lambda+t\kappa}
       -\frac{td}{\lambda+td}\right|
 \le(C_H/\lambda_-)|\kappa-d|.
 \end{gathered}
\]
With $\mathfrak g_0(d,\lambda)=d-1+
y_n\int td/(\lambda+td)\,dH_{p,0}(t)$, this gives
\[
 \mathfrak g_0(d_S,\lambda)=O_p(\sqrt{L_n/n}),\qquad
 \partial_d\mathfrak g_0(d,\lambda)
 =1+y_n\int\frac{t\lambda}{(\lambda+td)^2}\,dH_{p,0}(t)\ge1.
\]
Let $d_0$ solve $\mathfrak g_0(d_0,\lambda)=0$. The mean value theorem gives
$|d_S-d_0|\le|\mathfrak g_0(d_S,\lambda)|$.
For the root $d_n$ defined using $H_{p,b}$,
\[
 |\mathfrak g_0(d_n,\lambda)|
 \le y_n(r_p/p)\sup_{t\ge0}\frac{td_n}{\lambda+td_n}
 \le y_nr_p/p,\quad
 |d_0-d_n|\le y_nr_p/p .
\]
Since $a_b=(1-d_S)/y_n$,
$a_b-a_n=O_p(\varepsilon_n)$. All scalar roots satisfy
$d\ge(1+y_nC_H/\lambda_-)^{-1}$ because
$1=d+y_n\int td/(\lambda+td)\,dH\le d(1+y_nC_H/\lambda_-)$.

For completeness the same argument gives the derivative without a
finite-difference loss.  Both $\|\mathbf{M}_b^{[-j]}\|\le1$ and
$\|(\mathbf{M}_b^{[-j]})'\|\le1/(4\lambda_-)$ are bounded.
For $\bm{u}=\mathbf{M}_b^{[-j]}\bm{g}_j$ and $\kappa_j=\bm{g}_j^\T \bm{u}/n$,
\[
 \mathbf{M}_b^{[-j]}-\mathbf{M}_b
 =\frac{\tau_j \bm{u}\bm{u}^\T}{n(\lambda+\tau_j\kappa_j)},\qquad
 \tr(\mathbf{M}_b^{[-j]}-\mathbf{M}_b)
 =\frac{\tau_j\,\bm{g}_j^\T(\mathbf{M}_b^{[-j]})^2\bm{g}_j/n}
          {\lambda+\tau_j\kappa_j}.
\]
Here $\max_j\|\bm{g}_j\|^2/n=O_p(1)$,
$\|(\mathbf{M}_b^{[-j]})'\|\le(4\lambda_-)^{-1}$, and every denominator
is at least $\lambda_-$. Differentiating this exact quotient bounds
its derivative by $O_p(1)$ uniformly in $j$. Thus
\[
 \max_j\left|n^{-1}\tr(\mathbf{M}_b^{[-j]})'-d_S'\right|=O_p(n^{-1}),
 \quad
 \max_j|\kappa_j'-d_S'|=O_p(\sqrt{L_n/n}),
\]
where the second bound applies Gaussian concentration to the
independent-row coefficient $(\mathbf{M}_b^{[-j]})'$.
Only exact identities are differentiated.  In particular,
\[
 a_b'=\frac1p\sum_{j\notin J_p}
       \frac{\tau_j(\lambda\kappa_j'-\kappa_j)}
            {(\lambda+\tau_j\kappa_j)^2},\qquad
 d_S'=-y_na_b'.
\]
Set $f_t(u,v)=t(\lambda v-u)/(\lambda+tu)^2$.
Bounded $t,v$ and $\lambda\ge\lambda_-$ imply
\[
 |f_t(\kappa_j,\kappa_j')-f_t(d_S,d_S')|
 \le C\{|\kappa_j-d_S|+|\kappa_j'-d_S'|\}
 =O_p(\sqrt{L_n/n})
\]
uniformly. Averaging and using the same scalar function
$\mathfrak g_0$ yields
\[
 \begin{gathered}
 \mathfrak g_{0,d}(d_S,\lambda)d_S'+\mathfrak g_{0,\lambda}(d_S,\lambda)
       =O_p\!\left(\sqrt{\frac{\log(n+2)}n}\right),\\
 \mathfrak g_{0,d}=1+y_n\int\frac{t\lambda}{(\lambda+td)^2}\,dH_{p,0},
 \qquad \mathfrak g_{0,\lambda}=-y_n\int\frac{td}{(\lambda+td)^2}\,dH_{p,0}.
 \end{gathered}
\]
The deterministic root satisfies
$\mathfrak g_{0,d}(d_0,\lambda)d_0'+\mathfrak g_{0,\lambda}(d_0,\lambda)=0$.
Both coefficients have bounded $d$ derivatives on the ridge interval,
so subtraction gives
\[
 |d_S'-d_0'|
 \le C|d_S-d_0|+O_p(\sqrt{L_n/n})
 =O_p(\sqrt{L_n/n}),\qquad |d_0'-d_n'|\le Cr_p/p.
\]
Since $a_b'=-d_S'/y_n$ and $a_n'=-d_n'/y_n$,
the derivative error is $O_p(\varepsilon_n)$.

The uniform derivative estimate requires interpolation of the
derivatives themselves. For any positive semidefinite matrix $\mathbf{K}$ and
$\mathbf{M}(\lambda)=\lambda(\mathbf{K}+\lambda \mathbf{I})^{-1}$,
\[
 \mathbf{M}''(\lambda)=-2\mathbf{K}(\mathbf{K}+\lambda \mathbf{I})^{-3},\qquad
 \sup_{\lambda\in\mathcal L}\|\mathbf{M}''(\lambda)\|
 \le\frac{8}{27\lambda_-^2}.
\]
Consequently,
\[
 \max_j\sup_{\mathcal L}|\kappa_j''|
 \le\frac{8}{27\lambda_-^2}\max_j\frac{\|\bm{g}_j\|^2}{n}
 =O_p(1),\qquad \sup_{\mathcal L}|d_S''|\le\frac{8}{27\lambda_-^2}.
\]
The deterministic derivatives obey
\[
 d_0''=-\frac{\mathfrak g_{0,\lambda\lambda}
       +2\mathfrak g_{0,d\lambda}d_0'
       +\mathfrak g_{0,dd}(d_0')^2}{\mathfrak g_{0,d}},
 \qquad
 \sup_{\mathcal L}(|d_0'|+|d_0''|+|d_n'|+|d_n''|)\le C,
\]
because $t\le C_H$, $\lambda\ge\lambda_-$, and
$\mathfrak g_{0,d}\ge1$.
A mesh of width $n^{-2}$ has $O(n^2)$ points.
The $O(n^3)$ row quadratic forms use $t=6L_n$ in the Gaussian bound;
the displayed second derivatives make the interpolation error
$O_p(n^{-2})$.

Put
\[
 \mathbf{K}_b=\mathbf{Z}_b^\T \mathbf{Z}_b/n,\qquad
 \mathbf{K}=\mathbf{K}_b+\mathbf{Z}_{[s]}^\T \mathbf{Z}_{[s]}/n
 \quad\text{in }\mathbb R^{n\times n}.
\]
For any threshold $t$, a subspace on which $\mathbf{K}-t\mathbf{I}$ is positive
definite intersects $\ker(\mathbf{K}-\mathbf{K}_b)$ in codimension at most $r_p$.
The two matrices agree there. Thus
\[
 N_{\mathbf{K}}(t)=\#\{\lambda_j(\mathbf{K})>t\},\qquad
 0\le N_{\mathbf{K}}(t)-N_{\mathbf{K}_b}(t)\le r_p,
\]
and
\[
 |\tr f(\mathbf{K})-\tr f(\mathbf{K}_b)|
 \le r_p\int_0^\infty|f'(x)|\,dx
\]
for each continuously differentiable $f$ vanishing at zero with
finite total variation. To see the bound, write
$f(x)=\int_0^\infty f'(t)\mathbf1(x>t)\,dt$ and use that the
two eigenvalue counting functions differ by at most $r_p$.
For $f(x)=x/(x+\lambda)$ and $g(x)=x/(x+\lambda)^2$,
\[
 \int_0^\infty|f'(x)|\,dx=1,\qquad
 \int_0^\infty|g'(x)|\,dx=2g(\lambda)=1/(2\lambda).
\]
Thus $|a_S-a_b|\le r_p/p$ and
$|a_S'-a_b'|\le r_p/(2p\lambda_-)$ independently of the spike sizes.  Finally
\[
 \E\frac{\tr \mathbf{S}}{p}=1,\qquad
 \operatorname{Var}\!\left(\frac{\tr \mathbf{S}}{p}\right)
 =\frac{2\tr\mSigma_p^2}{np^2}\le\frac2n,
\]
because $\tr\mSigma_p=p$, the bulk is bounded, and the spike rank is
bounded.  This proves the last assertion.

The sharper trace rate also holds directly for spherical columns.
For $\bm{U}$ uniform on $\mathbb S^{p-1}$, sign and rotational symmetry give
\[
 \E U_j^2=\frac1p,\qquad
 \E U_1^4=\E\{(U_1+U_2)/\sqrt2\}^4
       =\frac12\E U_1^4+\frac32\E(U_1^2U_2^2).
\]
Together with $\sum_j U_j^2=1$, this implies
\[
 \begin{gathered}
1=p\E U_1^4+p(p-1)\E(U_1^2U_2^2),\\
 \E U_j^4=\frac3{p(p+2)},\\
 \E(U_j^2U_k^2)=\frac1{p(p+2)}\quad(j\ne k).
 \end{gathered}
\]
For $\bm{z}_U=\sqrt p\,\mSigma_p^{1/2}\bm{U}$,
\[
 \begin{aligned}
 \E\|\bm{z}_U\|^2&=\tr\mSigma_p=p,\\
 \E\|\bm{z}_U\|^4
 &=\frac{p}{p+2}
       \{(\tr\mSigma_p)^2+2\tr\mSigma_p^2\}\le3p^2,\\
 \operatorname{Var}(\|\bm{z}_U\|^2)
 &=\frac{2p}{p+2}\{\tr\mSigma_p^2-p\}\le2p^2 .
 \end{aligned}
\]
Independent columns therefore yield
\[
 \E\frac{\tr \mathbf{S}_U}{p}=1,\qquad
 \operatorname{Var}\left(\frac{\tr \mathbf{S}_U}{p}\right)\le\frac2n,\qquad
 \frac{\tr \mathbf{S}_U}{p}=1+O_p(n^{-1/2}).
\]
\end{proof}

\subsection{Weighted inverses and mixed traces}

The following bounds are needed before evaluating mixed traces.
For a deleted set $I$, write
$\mathbf{S}^{(-I)}=n^{-1}\sum_{j\notin I}\bm{z}_j\bm{z}_j^\T$ and
$\mathbf{Q}^{(-I)}=(\mathbf{S}^{(-I)}+\lambda \mathbf{I}_p)^{-1}$. These retain the denominator $n$.
The native versions are
\[
 \mathbf{S}_{\rm nat}^{(-I)}=\frac{n}{n-|I|}\mathbf{S}^{(-I)},\qquad
 \mathbf{Q}_{\rm nat}^{(-I)}(\lambda)=(\mathbf{S}_{\rm nat}^{(-I)}+\lambda \mathbf{I}_p)^{-1}.
\]
The same superscripts and native subscript apply to the spherical versions.

\begin{lemma}
\label{lem:original_weighted_bound}
Uniformly for $\lambda\in\mathcal L$ and $|I|\le2$,
\begin{equation}
 \|\mSigma_p^{1/2}\mathbf{Q}^{(-I)}(\lambda)\mSigma_p^{1/2}\|_{\rm op}
 =O_p(1).
 \label{eq:original_weighted_resolvent}
\end{equation}
If $\mathbf{Z}_U=\mathbf{Z}\operatorname{diag}(\sqrt p/\|\bm{G}_i\|)$, then
\begin{equation}
 \max_{|I|\le2}\left[
 \|\mathbf{Q}_U^{(-I)}-\mathbf{Q}^{(-I)}\|_{\rm op}
 +\|\mSigma_p^{1/2}(\mathbf{Q}_U^{(-I)}-\mathbf{Q}^{(-I)})\mSigma_p^{1/2}\|_{\rm op}
 \right]=O_p(\sqrt{L_n/n}).
 \label{eq:original_spherical_resolvent}
\end{equation}
\end{lemma}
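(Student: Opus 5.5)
The approach is to prove \eqref{eq:original_weighted_resolvent} by a Schur-complement argument in the population eigenbasis, so that the spike sizes cancel out. Then \eqref{eq:original_spherical_resolvent} follows from the resolvent identity, with the first bound controlling the weighted factors. I will work in the population eigenbasis, with bulk coordinates ordered first, and assume for the moment that $\mSigma_p$ is invertible. Zero eigenvalues are bulk rows that can be dropped, or handled by a limiting argument, since the weighted matrix is continuous in $\mSigma_p$. With $\mathbf{Z}=\mSigma_p^{1/2}\mathbf{G}$,
\[
 \mSigma_p^{1/2}\mathbf{Q}^{(-I)}(\lambda)\mSigma_p^{1/2}
 =\bigl(\mathbf{G}^{(-I)}\mathbf{G}^{(-I)\T}/n+\lambda\mSigma_p^{-1}\bigr)^{-1},
\]
where $\mathbf{G}^{(-I)}$ deletes the columns in $I$. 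This matrix is positive semidefinite, so its operator norm is at most the sum of the norms of its two diagonal blocks. I will bound each block separately.

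\textbf{Bulk block.} By the Schur complement formula, the bulk--bulk block of the inverse is dominated by the inverse of the bulk--bulk block of the matrix. That block is at least $\lambda\mathbf{\Lambda}_b^{-1}$, so the bulk--bulk block of the inverse is at most $\mathbf{\Lambda}_b/\lambda\preceq (C_H/\lambda_-)\I$, deterministically.

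\textbf{Spike block.} Eliminating the bulk rows shows that the spike--spike block of the inverse equals
\[
 \bigl(\mathbf{G}_s\mathbf{M}^{(-I)}\mathbf{G}_s^{\T}/n+\lambda\mathbf{T}_s^{-1}\bigr)^{-1},
\]
where the columns of $\mathbf G_s$ are restricted to $I^c$. Here $\mathbf{M}^{(-I)}=\I-\mathbf{G}_b^\T(\mathbf{G}_b\mathbf{G}_b^\T/n+\lambda\mathbf{\Lambda}_b^{-1})^{-1}\mathbf{G}_b/n$, which is exactly the companion matrix $\lambda(\mathbf{Z}_b^\T\mathbf{Z}_b/n+\lambda\I)^{-1}$ used in the proof of \eqref{eq:original_spike_projection}. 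On the event $\|\mathbf S_b\|_{\rm op}\le C$ we have $\mathbf M\succeq\delta\I$, with $\delta=\lambda_-/(C+\lambda_-)$. Dropping the term $\lambda\mathbf T_s^{-1}\succeq0$ then bounds the spike block by $\{\delta\,\mathbf G_s\mathbf G_s^{\T}/n\}^{-1}$.

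\textbf{Uniformity.} The earlier bound $\|\mathbf G_s\mathbf G_s^\T/n-\I_{r_p}\|_{\rm op}=O_p(n^{-1/2})$ gives $\mathbf G_s\mathbf G_s^\T/n\succeq\I/2$. Deleting at most two columns changes this matrix by at most $2\max_i\|\bm G_{s,i}\|^2/n=O_p(L_n/n)$, uniformly over all $I$, so no union over the $O(n^2)$ deletion sets is needed. The bound $\mathbf{M}^{(-I)}\succeq\delta\I$ also holds for every $I$ on the single event $\|\mathbf S_b\|_{\rm op}\le C$, because deletion only decreases $\mathbf S_b$. For uniformity in $\lambda$, the inverse is decreasing in $\lambda$ in the Loewner order, so it suffices to evaluate at $\lambda=\lambda_-$. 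Altogether the weighted norm is at most $C_H/\lambda_-+4/\delta$ with probability tending to one, independently of the spike magnitudes.

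\textbf{Spherical comparison.} Write $\mathbf Z_U=\mathbf Z\mathbf D$ with $\mathbf D=\diag(\sqrt p/\|\bm G_i\|)$, so that $\|\mathbf D-\I\|_{\rm op}=O_p(\sqrt{L_n/n})$ by the column-norm bound already recorded. The resolvent identity gives
\[
 \mathbf Q_U^{(-I)}-\mathbf Q^{(-I)}
 =\mathbf Q_U^{(-I)}\frac{\mathbf Z_U(\mathbf D^{-1}-\mathbf D)\mathbf Z^{\T}}{n}\mathbf Q^{(-I)},
\]
with the columns in $I$ removed. After sandwiching between $\mSigma_p^{1/2}$, the norm is at most
\[
 \|\mathbf D^{-1}-\mathbf D\|_{\rm op}\;
 \|\mSigma_p^{1/2}\mathbf Q_U\mathbf Z_U/\sqrt n\|_{\rm op}\;
 \|\mathbf Z^\T\mathbf Q\mSigma_p^{1/2}/\sqrt n\|_{\rm op}.
\]
Since $\mathbf Q\mathbf S\mathbf Q\preceq\mathbf Q$, we have $\|\mSigma_p^{1/2}\mathbf Q\mathbf Z/\sqrt n\|_{\rm op}^2\le\|\mSigma_p^{1/2}\mathbf Q\mSigma_p^{1/2}\|_{\rm op}$, which is $O_p(1)$ by the first part. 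The spherical factor is handled by rerunning the Schur argument with $\mathbf G$ replaced by $\mathbf G\mathbf D$. The only changes are that $\mathbf G_s\mathbf D^2\mathbf G_s^\T/n\succeq(\min_i D_{ii}^2)\,\I/2$ and that $\|\mathbf S_{U,b}\|_{\rm op}\le\max_i D_{ii}^2\|\mathbf S_b\|_{\rm op}$. The unweighted difference is bounded the same way, using $\|\mathbf Q\mathbf Z/\sqrt n\|_{\rm op}^2\le1/\lambda_-$.

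\textbf{Main obstacle.} The step I expect to need the most care is the spike block. There the naive bound $\|\mSigma_p\|_{\rm op}/\lambda$ diverges, and the proof relies on the exact cancellation through $\lambda\mathbf T_s^{-1}$ together with the lower bound $\mathbf M\succeq\delta\I$ holding simultaneously for all deletion sets.
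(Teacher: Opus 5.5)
Your route is the paper's: the same eigenbasis Schur complement for the spike block, with $\mathbf M\succeq\delta\I$ and the term $\lambda\mathbf T_s^{-1}$ dropped. You also use the same splitting $\|\mathbf B\|\le\|\mathbf B_{bb}\|+\|\mathbf B_{ss}\|$ for positive semidefinite $\mathbf B$, and the same perturbation bound on $\lambda_{\min}(\mathbf G_s\mathbf G_s^\T/n)$ under deletion of two columns.

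One step, however, is justified backwards. For positive definite $\mathbf N$, the Schur complement formula gives
$(\mathbf N^{-1})_{bb}=(\mathbf N_{bb}-\mathbf N_{bs}\mathbf N_{ss}^{-1}\mathbf N_{sb})^{-1}\succeq\mathbf N_{bb}^{-1}$.
This is the reverse of the domination you invoke. So the lower bound $\mathbf N_{bb}\succeq\lambda\mathbf\Lambda_b^{-1}$ does not, by itself, bound the bulk block of the inverse from above. The conclusion is still true by a one-line argument: $\mathbf N=\mathbf G\mathbf G^\T/n+\lambda\mSigma_p^{-1}\succeq\lambda\mSigma_p^{-1}$ implies $\mathbf N^{-1}\preceq\lambda^{-1}\mSigma_p$, whose bulk block is $\mathbf\Lambda_b/\lambda$. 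This is the paper's observation $\mathbf Q\preceq\lambda_-^{-1}\I$, and with that repair your first part is correct.

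For the spherical comparison you use the resolvent identity together with $\mathbf Q\mathbf S\mathbf Q\preceq\mathbf Q$. That forces you to re-establish the weighted bound for $\mathbf Q_U$ by rerunning the Schur argument with $\mathbf G\mathbf D$. The paper avoids this. It uses the sandwich $(1-\delta_n)\mathbf S^{(-I)}\preceq\mathbf S_U^{(-I)}\preceq(1+\delta_n)\mathbf S^{(-I)}$, with $\delta_n=\max_i|p\|\bm G_i\|^{-2}-1|$, and conjugates by $(\mathbf S^{(-I)}+\lambda\I)^{-1/2}$. This gives in one step
$\|\mSigma_p^{1/2}(\mathbf Q_U^{(-I)}-\mathbf Q^{(-I)})\mSigma_p^{1/2}\|\le\delta_n(1-\delta_n)^{-1}\|\mSigma_p^{1/2}\mathbf Q^{(-I)}\mSigma_p^{1/2}\|$,
so the spherical weighted bound comes for free. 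Your version is also valid. Your remark that $\mSigma_p^{1/2}\mathbf Q(\lambda)\mSigma_p^{1/2}$ is Loewner-decreasing in $\lambda$ is a clean way to get uniformity over $\mathcal L$.
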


\begin{proof}
Suppress $I$ initially. Partition into the bounded bulk and spike
coordinates and put
\[
 \begin{gathered}
\mathbf{A}_b=\mathbf{Z}_{[b]}\mathbf{Z}_{[b]}^\T/n+\lambda \mathbf{I},\\
 \mathbf{E}=\mathbf{Z}_{[b]}\mathbf{Z}_{[s]}^\T/n,\\
 \mathbf{D}_s=\mathbf{Z}_{[s]}\mathbf{Z}_{[s]}^\T/n+\lambda \mathbf{I},\\
 \mathbf{F}=\mathbf{D}_s-\mathbf{E}^\T \mathbf{A}_b^{-1}\mathbf{E} .
 \end{gathered}
\]
The block inverse, verified by matrix multiplication, is
\[
 \mathbf{Q}=
 \begin{pmatrix}
 \mathbf{A}_b^{-1}+\mathbf{A}_b^{-1}\mathbf{E}\mathbf{F}^{-1}\mathbf{E}^\T \mathbf{A}_b^{-1}&-\mathbf{A}_b^{-1}\mathbf{E}\mathbf{F}^{-1}\\
 -\mathbf{F}^{-1}\mathbf{E}^\T \mathbf{A}_b^{-1}&\mathbf{F}^{-1}
 \end{pmatrix}.
\]
Writing $\mathbf{Z}_{[s]}=\mathbf{T}_s^{1/2}\mathbf{G}_s$ and
$\mathbf{M}=\mathbf{I}-\mathbf{Z}_{[b]}^\T \mathbf{A}_b^{-1}\mathbf{Z}_{[b]}/n=(\mathbf{I}+\mathbf{Z}_{[b]}^\T \mathbf{Z}_{[b]}/(n\lambda))^{-1}$,
\[
 \begin{gathered}
\mathbf{F}=\lambda \mathbf{I}+\mathbf{T}_s^{1/2}(\mathbf{G}_s\mathbf{M}\mathbf{G}_s^\T/n)\mathbf{T}_s^{1/2},\\
 \mathbf{T}_s^{1/2}\mathbf{F}^{-1}\mathbf{T}_s^{1/2}
 =(\lambda \mathbf{T}_s^{-1}+\mathbf{G}_s\mathbf{M}\mathbf{G}_s^\T/n)^{-1}
 \le2\delta^{-1}\mathbf{I}
 \end{gathered}
\]
on the event used in \eqref{eq:original_spike_projection}. Zero
spike eigenvalues can be omitted. The bulk principal block of
$\mathbf{B}=\mSigma_p^{1/2}\mathbf{Q}\mSigma_p^{1/2}$ satisfies
$\mathbf{B}_{bb}\le\lambda_-^{-1}\mathbf{\Lambda}_b$ because $\mathbf{Q}\le\lambda_-^{-1}\mathbf{I}$.
Since $\mathbf{B}\ge0$,
\[
 \begin{gathered}
|\bm{u}^\T \mathbf{B}_{bs}\bm{v}|^2\le(\bm{u}^\T \mathbf{B}_{bb}\bm{u})(\bm{v}^\T \mathbf{B}_{ss}\bm{v}),\\
 \begin{pmatrix}\bm{u}\\\bm{v}\end{pmatrix}^{\!\T}\mathbf{B}\begin{pmatrix}\bm{u}\\\bm{v}\end{pmatrix}
 \le\bigl(\sqrt{\|\mathbf{B}_{bb}\|}\|\bm{u}\|
             +\sqrt{\|\mathbf{B}_{ss}\|}\|\bm{v}\|\bigr)^2 .
 \end{gathered}
\]
Therefore $\|\mathbf{B}\|\le\|\mathbf{B}_{bb}\|+\|\mathbf{B}_{ss}\|=O_p(1)$.
This proves \eqref{eq:original_weighted_resolvent} for the full sample.

Deleting two columns preserves the bulk norm bound, while
\[
 \begin{gathered}
\begin{aligned}
 &\min_{|I|\le2}\lambda_{\min}\left(
       \frac1n\sum_{i\notin I}\bm{G}_{s,i}\bm{G}_{s,i}^\T\right)\\
 &\quad\ge \lambda_{\min}(\mathbf{G}_s\mathbf{G}_s^\T/n)
             -\frac{2\max_i\|\bm{G}_{s,i}\|^2}{n}\\
 &\quad\ge\tfrac12-O_p(L_n/n).
 \end{aligned}
 \end{gathered}
\]
The preceding block argument is consequently uniform in $I$.
Writing $r_i=\sqrt p/\|\bm{G}_i\|$ and
$\delta_n=\max_i|r_i^2-1|=O_p(\sqrt{L_n/n})$, we have
\[
 (1-\delta_n)\mathbf{S}^{(-I)}\le \mathbf{S}_U^{(-I)}
             \le(1+\delta_n)\mathbf{S}^{(-I)} .
\]
For $\mathbf{A}=\mathbf{S}^{(-I)}+\lambda \mathbf{I}$, put
$\mathbf{E}_I=\mathbf{A}^{-1/2}(\mathbf{S}_U^{(-I)}-\mathbf{S}^{(-I)})\mathbf{A}^{-1/2}$. Then
\[
 \begin{gathered}
\|\mathbf{E}_I\|\le\delta_n,\\
 \mathbf{Q}_U^{(-I)}-\mathbf{Q}^{(-I)}
 =\mathbf{A}^{-1/2}\{(\mathbf{I}+\mathbf{E}_I)^{-1}-\mathbf{I}\}\mathbf{A}^{-1/2},\\
 \|(\mathbf{I}+\mathbf{E}_I)^{-1}-\mathbf{I}\|\le\frac{\delta_n}{1-\delta_n}.
 \end{gathered}
\]
Consequently
\[
 \begin{gathered}
\|\mathbf{Q}_U^{(-I)}-\mathbf{Q}^{(-I)}\|\le\frac{\delta_n}{(1-\delta_n)\lambda_-},\\
 \|\mSigma_p^{1/2}(\mathbf{Q}_U^{(-I)}-\mathbf{Q}^{(-I)})\mSigma_p^{1/2}\|
 \le\frac{\delta_n}{1-\delta_n}
        \|\mSigma_p^{1/2}\mathbf{Q}^{(-I)}\mSigma_p^{1/2}\|.
 \end{gathered}
\]
The same calculation gives, uniformly in $|I|\le2$ and $\lambda\in\mathcal L$,
\[
 \begin{aligned}
 &\|\mathbf{Q}_{\rm nat}^{(-I)}-\mathbf{Q}^{(-I)}\|_{\rm op}\\
 &\quad+\|\mSigma_p^{1/2}(\mathbf{Q}_{\rm nat}^{(-I)}-\mathbf{Q}^{(-I)})\mSigma_p^{1/2}\|_{\rm op}
 =O_p(n^{-1}).
 \end{aligned}
\]

The block formula also gives an identity used below:
\[
 \mathbf{Q}-\begin{pmatrix}\mathbf{A}_b^{-1}&\mathbf{0}\\\mathbf{0}&\mathbf{0}\end{pmatrix}
 =\begin{pmatrix}\mathbf{A}_b^{-1}\mathbf{E}\\-\mathbf{I}\end{pmatrix}
    \mathbf{F}^{-1}
   \begin{pmatrix}\mathbf{E}^\T \mathbf{A}_b^{-1}&-\mathbf{I}\end{pmatrix}\ge0 .
\]
Its rank is at most $r_p$. After multiplication by $\mSigma_p^{1/2}$
on both sides, its norm is $O_p(1)$ by the bounds just proved.
For deletion of $I$, the corresponding resolvent difference has rank
at most $|I|$, is positive semidefinite, and has weighted norm $O_p(1)$.
\end{proof}

\begin{lemma}
\label{lem:original_weighted_cross}
For any fixed positive ridge pair,
\begin{equation}
 \begin{gathered}
\frac1p\tr\{\mSigma_p\mathbf{Q}(\lambda_k)\mSigma_p\mathbf{Q}(\lambda_l)\}
 =\Theta_{12,n}(\lambda_k,\lambda_l)+O_p(\varepsilon_n),\\
 \frac1p\tr\{\mSigma_p\mathbf{Q}(\lambda_k)\}
 =\Theta_{1,n}(\lambda_k)+O_p(\varepsilon_n).
 \end{gathered}
\label{eq:original_weighted_cross}
\end{equation}
Here $\Theta_{1,n}$ and $\Theta_{12,n}$ are defined in
\eqref{eq:theta_diagonal} and \eqref{eq:theta_mixed}, respectively.
The conclusions hold for Gaussian and spherical representatives,
and uniformly after deleting at most two columns.
\end{lemma}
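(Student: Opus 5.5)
The plan is to reduce both trace statements to the zero-spike bulk resolvent $\mathbf Q_b$, and then evaluate the bulk traces by the row-deletion argument from the proof of Lemma~\ref{lem:original_sample_traces}.

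\textbf{Step 1: remove the spikes.} Write $\mSigma_p=\mSigma_{p,b}+\mSigma_{p,s}$ and $\mathbf Q=\mathbf Q_b'+\mathbf R$. Here $\mathbf Q_b'=\operatorname{diag}(\mathbf A_b^{-1},\mathbf 0)$ is the bulk block, and $\mathbf R\ge0$ has rank at most $r_p$, from the block identity at the end of Lemma~\ref{lem:original_weighted_bound}. That lemma also gives $\|\mSigma_p^{1/2}\mathbf Q\mSigma_p^{1/2}\|_{\rm op}=O_p(1)$, including the spike directions. Hence every term of $\tr\{\mSigma_p\mathbf Q(\lambda_k)\mSigma_p\mathbf Q(\lambda_l)\}$ that involves $\mathbf R$ or a spike projector is a trace of a product of rank at most $r_p$. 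Each such product is a product of weighted resolvents with operator norm $O_p(1)$, so after division by $p$ these terms are $O_p(r_p/p)$. The same holds for $p^{-1}\tr(\mSigma_p\mathbf Q)$. One remaining difference is that $\mathbf A_b^{-1}$ is the bulk-row resolvent, while $\mathbf Q_b$ adds $\lambda^{-1}$ on the spike coordinates. Those coordinates are annihilated by $\mSigma_{p,b}$, so this difference is harmless. It therefore suffices to treat $p^{-1}\tr(\mSigma_{p,b}\mathbf Q_b(\lambda_k)\mSigma_{p,b}\mathbf Q_b(\lambda_l))$ and $p^{-1}\tr(\mSigma_{p,b}\mathbf Q_b)$.

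\textbf{Step 2: the single trace.} By the row formula $(\mathbf Q_b)_{jj}=(\lambda+\tau_j\kappa_j)^{-1}$ in the population eigenbasis,
\[
 p^{-1}\tr(\mSigma_{p,b}\mathbf Q_b)=p^{-1}\sum_{j\notin J_p}\frac{\tau_j}{\lambda+\tau_j\kappa_j}.
\]
Lemma~\ref{lem:original_sample_traces} gives $\max_j|\kappa_j-d_n|=O_p(\varepsilon_n)$, so this trace equals $\int t/(td_n+\lambda)\,dH_{p,b}+O_p(\varepsilon_n)$. It remains to verify the deterministic identity
\[
 \int\frac{t\,dH_{p,b}(t)}{td_n+\lambda}=\frac{1}{d_n}\int\frac{td_n}{td_n+\lambda}\,dH_{p,b}=\frac{a_n}{d_n}=\Theta_{1,n},
\]
which follows from $J_n(d_n,\lambda)=a_n$.

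\textbf{Step 3: the mixed trace.} This is the main obstacle, because off-diagonal entries of $\mathbf Q_b$ now matter. I would expand in the eigenbasis as
\[
 p^{-1}\sum_{j,k}\tau_j\tau_k(\mathbf Q_b(\lambda_k))_{jk}(\mathbf Q_b(\lambda_l))_{kj}.
\]
The plan is to derive a self-consistent equation by the standard two-resolvent argument. Apply the resolvent identity $\mathbf Q(\lambda_k)=\lambda_k^{-1}\{\mathbf I-\mathbf S\mathbf Q(\lambda_k)\}$, write $\mathbf S=n^{-1}\sum_i\mathbf z_i\mathbf z_i^\T$, and use Sherman--Morrison in columns:
\[
 \mathbf Q\mathbf z_i=\frac{\mathbf Q^{(-i)}\mathbf z_i}{1+\mathbf z_i^\T\mathbf Q^{(-i)}\mathbf z_i/n}.
\]
The denominators concentrate at $1/d_n(\lambda)$, using Step 2 and the relation $1+n^{-1}\tr(\mSigma\mathbf Q)\approx 1/d_n$. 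The quadratic forms $\mathbf z_i^\T\mathbf Q^{(-i)}(\lambda_l)\mSigma\mathbf Q^{(-i)}(\lambda_k)\mathbf z_i/n$ are handled by the conditional Gaussian maximum bound \eqref{eq:conditional_gaussian_max}, and rank-one deletion errors are $O(1/n)$. This produces a linear equation of the form
\[
 \Psi_{kl}=\mathcal J_{12,n}(\lambda_k,\lambda_l)+y_n d_n(\lambda_k)d_n(\lambda_l)\,\mathcal J_{12,n}(\lambda_k,\lambda_l)\,\Psi_{kl}+O_p(\varepsilon_n),
\]
where $\Psi_{kl}$ denotes the bulk mixed trace. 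The equation is obtained from deterministic-equivalent replacements $\mathbf Q_b(\lambda)\approx(d_n\mSigma_{p,b}+\lambda)^{-1}$ inside traces against bounded matrices, plus the fluctuation correction. Solving gives $\Theta_{12,n}$. This requires that the denominator $1-y_nd_kd_l\mathcal J_{12,n}$ is bounded away from zero. I would prove that by Cauchy--Schwarz: it is at least $\min_{m\in\{k,l\}}\bigl(1-y_nd_m^2\mathcal J_{12,n}(\lambda_m,\lambda_m)\bigr)$, and $1-y_nd^2\int t^2/(td+\lambda)^2\,dH_{p,b}\ge d_n>0$ follows from the fixed-point equation~\eqref{eq:bulk_fixed_point}. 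The error rate $\varepsilon_n$ comes from the $\sqrt{L_n/n}$ concentration and the $r_p/p$ bulk-ratio adjustment.

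\textbf{Step 4: spherical and deleted versions.} The spherical statements follow from \eqref{eq:original_spherical_resolvent}, since the weighted operator-norm perturbation is $O_p(\sqrt{L_n/n})$ and normalized traces inherit it. Deleting at most two columns changes the resolvent by a positive semidefinite term of rank at most two with $O_p(1)$ weighted norm (Lemma~\ref{lem:original_weighted_bound}), which costs $O_p(1/p)$ in each normalized trace.
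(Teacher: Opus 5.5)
Your proposal is correct. Steps 1, 2 and 4 are the paper's own argument:
\begin{itemize}
\item the split $\mSigma_p^{1/2}\mathbf Q\mSigma_p^{1/2}=\mathbf B_0+\mathbf E$, with $\mathbf E\succeq0$ of rank at most $r_p$ and norm $O_p(1)$;
\item the row formula $(\mathbf Q_b)_{jj}=(\lambda+\tau_j\kappa_j)^{-1}$, which gives $\Theta_{1,n}$;
\item essentially the same Cauchy--Schwarz lower bound on $1-y_nd_n(\lambda_k)d_n(\lambda_l)\mathcal J_{12,n}$;
\item the rank-two deletion and sphere-comparison bounds.
\end{itemize}

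The genuine difference is Step 3. The paper never expands a product of two resolvents. It linearizes the product through the upper-triangular block resolvent $(\mathbf R_{\mathrm{blk}}+\mathbf Z_{\mathrm{blk}}\mathbf Z_{\mathrm{blk}}^\T/n)^{-1}$, with $\mathbf R_{\mathrm{blk}}=\bigl(\begin{smallmatrix}\lambda_k\mathbf I&\mSigma_{p,b}\\ \mathbf 0&\lambda_l\mathbf I\end{smallmatrix}\bigr)$; its off-diagonal block is $-\mathbf Q_{b,k}\mSigma_{p,b}\mathbf Q_{b,l}$. It then reruns the same row and column Schur-complement arguments with $2\times2$ matrix coefficients. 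This yields $\mathbf V\approx\int t\{\mathbf R(t)+t\mathbf D\}^{-1}\,dH_{p,0}(t)$ and $\mathbf D\approx(\mathbf I_2+y_n\mathbf V)^{-1}$, and the $(1,2)$ entries of this system are exactly your linear equation. That route reuses the scalar machinery verbatim and never handles third-order resolvent products or products of two quadratic forms. It also gets uniformity in the ridges from simple derivative bounds on the block matrices.

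Your column Sherman--Morrison route is the more familiar one and gives the same rate, but it must be arranged so that it closes. Applying $\mathbf Q_k=\lambda_k^{-1}(\mathbf I-\mathbf S\mathbf Q_k)$ directly to $p^{-1}\tr(\mSigma_{p,b}\mathbf Q_k\mSigma_{p,b}\mathbf Q_l)$ produces $p^{-1}\tr(\mSigma_{p,b}^2\mathbf Q_k\mSigma_{p,b}\mathbf Q_l)$, which starts an unending hierarchy. Start instead from $\mathbf Q_k-\bar{\mathbf Q}_k=\bar{\mathbf Q}_k\{d_n(\lambda_k)\mSigma_{p,b}-\mathbf S_b\}\mathbf Q_k$, where $\bar{\mathbf Q}_k=\{d_n(\lambda_k)\mSigma_{p,b}+\lambda_k\mathbf I\}^{-1}$. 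Then:
\begin{itemize}
\item after Sherman--Morrison in both resolvents, the two cubic traces cancel by cyclicity;
\item the rank-one correction supplies the term $y_nd_n(\lambda_k)d_n(\lambda_l)\Psi_{kl}\,p^{-1}\tr(\mSigma_{p,b}\mathbf Q_l\mSigma_{p,b}\bar{\mathbf Q}_k)$;
\item the only deterministic-equivalent inputs are traces against matrices diagonal in the population eigenbasis, which your Step 2 row formula already provides.
\end{itemize}
In particular, you never need a deterministic equivalent against a random or general bounded matrix.
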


\begin{proof}
For the bounded bulk, put $\mathbf{T}=\mSigma_{p,b}$ and, for $j\in\{k,l\}$,
\[
 \mathbf{Q}_{b,j}=\mathbf{Q}_b(\lambda_j),\quad
 \mathbf{M}_{b,j}=\mathbf{I}_n-\mathbf{Z}_b^\T \mathbf{Q}_{b,j}\mathbf{Z}_b/n,\quad d_{j,n}=d_n(\lambda_j).
\]
Introduce upper triangular two-by-two block matrices
\[
 \begin{gathered}
 \mathbf{R}_{\mathrm{blk}}=\begin{pmatrix}\lambda_k\mathbf{I}_p&\mathbf{T}\\\mathbf{0}&\lambda_l\mathbf{I}_p\end{pmatrix},
 \quad \mathbf{Z}_{\mathrm{blk}}=\operatorname{diag}(\mathbf{Z}_b,\mathbf{Z}_b),\\
 \mathbf{Q}_{\mathrm{blk}}=(\mathbf{R}_{\mathrm{blk}}+\mathbf{Z}_{\mathrm{blk}}\mathbf{Z}_{\mathrm{blk}}^\T/n)^{-1}
 =\begin{pmatrix}\mathbf{Q}_{b,k}&-\mathbf{Q}_{b,k}\mathbf{T}\mathbf{Q}_{b,l}\\\mathbf{0}&\mathbf{Q}_{b,l}\end{pmatrix},
 \end{gathered}
\]
\[
 \begin{gathered}
\mathbf{M}_{\mathrm{blk}}=(\mathbf{I}_{2n}+\mathbf{Z}_{\mathrm{blk}}^\T\mathbf{R}_{\mathrm{blk}}^{-1}\mathbf{Z}_{\mathrm{blk}}/n)^{-1},\\
 \mathbf{V}=\frac1p\operatorname{ptr}_p\{\operatorname{diag}(\mathbf{T},\mathbf{T})\mathbf{Q}_{\mathrm{blk}}\},\\
 \mathbf{D}=\frac1n\operatorname{ptr}_n\mathbf{M}_{\mathrm{blk}}.
 \end{gathered}
\]
The operation $\operatorname{ptr}_q$ traces each $q\times q$ block.
In particular
\[
 \begin{gathered}
V_{12}=-p^{-1}\tr(\mathbf{T}\mathbf{Q}_{b,k}\mathbf{T}\mathbf{Q}_{b,l}),\\
 \|\mathbf{Q}_{\mathrm{blk},11}\|,\|\mathbf{Q}_{\mathrm{blk},22}\|\le\lambda_-^{-1},\\
 \|\mathbf{Q}_{\mathrm{blk},12}\|\le C_H\lambda_-^{-2}.
 \end{gathered}
\]
The diagonal blocks of $\mathbf{M}_{\mathrm{blk}}$ are the companion matrices;
its off-diagonal block is
\[
 \mathbf{M}_{b,k}\{\mathbf{Z}_b^\T \mathbf{T}\mathbf{Z}_b/(n\lambda_k\lambda_l)\}\mathbf{M}_{b,l},
 \qquad \|\mathbf{M}_{b,k}\|,\|\mathbf{M}_{b,l}\|\le1,
\]
which is bounded by the bounded-bulk sample norm.  These bounds hold
for the row and column deleted matrices as well.

For a block system
$\mathbf{A}\mathbf{X}+\mathbf{B}\mathbf{Y}=\mathbf{I}$, $\mathbf{C}\mathbf{X}+\mathbf{D}\mathbf{Y}=\mathbf{0}$, elimination gives
$\mathbf{Y}=-\mathbf{D}^{-1}\mathbf{C}\mathbf{X}$ and $\mathbf{X}=(\mathbf{A}-\mathbf{B}\mathbf{D}^{-1}\mathbf{C})^{-1}$.
Reorder coordinates into $2\times2$ blocks and eliminate all but
population row $j$. Write $\tau_{b,j}=\tau_j\mathbf1(j\notin J_p)$. The result is
\[
 (\mathbf{Q}_{\mathrm{blk}})_{jj}
  =\{\mathbf{R}(\tau_{b,j})+\tau_{b,j}\mathbf{K}_j\}^{-1},
 \qquad \mathbf{R}(t)=\begin{pmatrix}\lambda_k&t\\0&\lambda_l\end{pmatrix}.
\]
Explicitly, in the original two-copy block ordering, write the
Gaussian row transpose as $\bm{g}_j\in\mathbb R^n$ and set
\[
 \mathbf{U}_{\mathrm{blk},j}=\mathbf{I}_2\otimes \bm{g}_j^\T\in\mathbb R^{2\times2n},\qquad
 \mathbf{K}_j=\frac1n \mathbf{U}_{\mathrm{blk},j}\mathbf{M}_{\mathrm{blk}}^{[-j]}\mathbf{U}_{\mathrm{blk},j}^\T.
\]
The same Gaussian row occurs in both copies. For $a,b\in\{1,2\}$,
\[
 (\mathbf{K}_j)_{ab}
 =\bm{g}_j^\T(\mathbf{M}_{\mathrm{blk}}^{[-j]})_{ab}\bm{g}_j/n,\quad
 \E\{(\mathbf{K}_j)_{ab}\mid\mathbf{M}_{\mathrm{blk}}^{[-j]}\}
 =\tr(\mathbf{M}_{\mathrm{blk}}^{[-j]})_{ab}/n .
\]
Every coefficient block has norm $O_p(1)$ and Frobenius norm
$O_p(\sqrt n)$. Row deletion has rank at most two and bounded norm,
so
\[
 \begin{gathered}
\left|\frac1n\tr(\mathbf{M}_{\mathrm{blk}}^{[-j]}-\mathbf{M}_{\mathrm{blk}})_{ab}\right|
 \le\frac2n\|(\mathbf{M}_{\mathrm{blk}}^{[-j]}-\mathbf{M}_{\mathrm{blk}})_{ab}\|=O_p(n^{-1}),\\
 \max_j\|\mathbf{K}_j-\mathbf{D}\|=O_p(\sqrt{L_n/n}).
 \end{gathered}
\]
For upper triangular $\mathbf{A}=\left(\begin{smallmatrix}a&b\\0&d\end{smallmatrix}\right)$,
\[
 \mathbf{A}^{-1}=\begin{pmatrix}a^{-1}&-b/(ad)\\0&d^{-1}\end{pmatrix},
 \quad \mathbf{A}^{-1}-\widetilde {\mathbf{A}}^{-1}
       =\mathbf{A}^{-1}(\widetilde {\mathbf{A}}-\mathbf{A})\widetilde {\mathbf{A}}^{-1}.
\]
The diagonal denominators here are at least $\lambda_-$ and the
off-diagonal entries are bounded. Hence replacing $\mathbf{K}_j$ by $\mathbf{D}$
changes each inverse by $O_p(\varepsilon_n)$, uniformly in $j$.
Taking weighted partial traces gives
\begin{equation*}
 \mathbf{V}=\int t\{\mathbf{R}(t)+t\mathbf{D}\}^{-1}\,dH_{p,0}(t)+O_p(\varepsilon_n),
 \quad H_{p,0}=(1-r_p/p)H_{p,b}+(r_p/p)\delta_0.
\end{equation*}

Removing sample column $i$, the exact column Schur identity for
$\mathbf{M}_{\mathrm{blk},ii}$ is $(\mathbf{I}_2+\mathbf{\Lambda}_i)^{-1}$.  With the same Gaussian
column in both copies, its exact coefficient is
\[
 \mathbf{Y}_{\mathrm{blk},i}=\mathbf{I}_2\otimes \bm{z}_{b,i}\in\mathbb R^{2p\times2},\qquad
 \mathbf{\Lambda}_i=\frac1n \mathbf{Y}_{\mathrm{blk},i}^\T\mathbf{Q}_{\mathrm{blk}}^{(-i)}\mathbf{Y}_{\mathrm{blk},i}.
\]
For $a,b\in\{1,2\}$, conditional independence gives
\[
 \E\{(\mathbf{\Lambda}_i)_{ab}\mid\mathbf{Q}_{\mathrm{blk}}^{(-i)}\}
 =n^{-1}\tr(\mathbf{T}\mathbf{Q}_{\mathrm{blk},ab}^{(-i)}),\quad
 \max_i\|\mathbf{\Lambda}_i-y_n\mathbf{V}\|=O_p(\sqrt{L_n/n}).
\]
The trace replacement costs $O(n^{-1})$ because column deletion
has rank at most two and bounded norm. Averaging the $n$ diagonal
blocks of $\mathbf{M}_{\mathrm{blk}}$ yields
\begin{equation*}
 \mathbf{D}=(\mathbf{I}_2+y_n\mathbf{V})^{-1}+O_p(\varepsilon_n).
\end{equation*}
There is no loss from inversion: the diagonal entries before
inversion are at least one and all off-diagonal entries are bounded.
For a nonsymmetric coefficient block, use its symmetric part in a
Gaussian quadratic form; its norm obeys the same bound.

For the diagonal entries,
Lemma~\ref{lem:original_sample_traces} gives
\[
 \begin{gathered}
D_{11}=d_{k,n}+O_p(\varepsilon_n),\\
 V_{11}=\int\frac{t}{\lambda_k+td_{k,n}}\,dH_{p,b}(t)
             +O_p(\varepsilon_n)
       =\Theta_{1,n}(\lambda_k)+O_p(\varepsilon_n).
 \end{gathered}
\]
The same holds for the second diagonal. The inverse formula gives
\[
 \begin{gathered}
\{\mathbf{R}(t)+t\mathbf{D}\}^{-1}_{12}
 =-\frac{t(1+D_{12})}
      {(\lambda_k+tD_{11})(\lambda_l+tD_{22})},\\
 \{\mathbf{I}_2+y_n\mathbf{V}\}^{-1}_{12}
 =-\frac{y_nV_{12}}{(1+y_nV_{11})(1+y_nV_{22})}.
 \end{gathered}
\]
Since $1+y_n\Theta_{1,n}(\lambda_j)=d_{j,n}^{-1}$, the two equations become
\[
 V_{12}=-(1+D_{12})\mathcal J_{12,n}+O_p(\varepsilon_n),
 \qquad
 D_{12}=-y_nd_{k,n}d_{l,n}V_{12}+O_p(\varepsilon_n).
\]
The replacement of $H_{p,0}$ by $H_{p,b}$ costs $O(r_p/p)$.
Let $v_j(t)=td_{j,n}/(td_{j,n}+\lambda_j)\in[0,1]$ and
$K_{kl}=\int v_kv_l\,dH_{p,b}=d_{k,n}d_{l,n}\mathcal J_{12,n}$.
Cauchy--Schwarz gives
\[
 y_nK_{kl}
 \le\sqrt{y_n\int v_k^2\,dH_{p,b}\;
          y_n\int v_l^2\,dH_{p,b}}
 \le\sqrt{(1-d_{k,n})(1-d_{l,n})}\le1-c .
\]
Substitution therefore yields
\[
 (1-y_nK_{kl})V_{12}=-\mathcal J_{12,n}
                       +O_p(\varepsilon_n),\qquad
 -V_{12}=\frac{\mathcal J_{12,n}}{1-y_nK_{kl}}
                       +O_p(\varepsilon_n).
\]
This also holds when $\lambda_k=\lambda_l$.

For the full covariance, put $\mathbf{Q}_j=\mathbf{Q}(\lambda_j)$ for $j\in\{k,l\}$,
$\mathbf{B}_k=\mSigma_p^{1/2}\mathbf{Q}_k\mSigma_p^{1/2}$ and
$\mathbf{B}_{0,k}=\mSigma_{p,b}^{1/2}\mathbf{Q}_{b,k}\mSigma_{p,b}^{1/2}$.
Both are $p\times p$ matrices. In the population eigenbasis,
\[
 \mathbf{B}_{0,k}=
 \begin{pmatrix}
  \mathbf{\Lambda}_b^{1/2}
  (\mathbf{Z}_{[b]}\mathbf{Z}_{[b]}^\T/n+\lambda_k \mathbf{I}_{p-r_p})^{-1}
  \mathbf{\Lambda}_b^{1/2}&\mathbf{0}\\
  \mathbf{0}&\mathbf{0}_{r_p}
 \end{pmatrix}.
\]
Lemma~\ref{lem:original_weighted_bound} gives
\[
 \mathbf{B}_k=\mathbf{B}_{0,k}+\mathbf{E}_k,\qquad \mathbf{E}_k\ge0,\quad
 \operatorname{rank}\mathbf{E}_k\le r_p,\quad
 \|\mathbf{E}_k\|+\|\mathbf{B}_{0,k}\|=O_p(1).
\]
For matrices $\mathbf{U},\mathbf{V}$ with $\operatorname{rank}\mathbf{U}\le r$,
$|\tr(\mathbf{U}\mathbf{V})|\le r\|\mathbf{U}\|\|\mathbf{V}\|$. Therefore
\[
 \begin{gathered}
p^{-1}|\tr \mathbf{B}_k-\tr \mathbf{B}_{0,k}|\le r_p\|\mathbf{E}_k\|/p,\\
 \begin{aligned}
 &p^{-1}|\tr(\mathbf{B}_k\mathbf{B}_l)-\tr(\mathbf{B}_{0,k}\mathbf{B}_{0,l})|\\
 &\quad\le\frac{r_p}{p}
   \{\|\mathbf{E}_k\|\|\mathbf{B}_{0,l}\|+\|\mathbf{E}_l\|\|\mathbf{B}_{0,k}\|
                          +\|\mathbf{E}_k\|\|\mathbf{E}_l\|\}\\
 &\quad=O_p(r_p/p).
 \end{aligned}
 \end{gathered}
\]
Put $\mathbf{B}_{U,j}=\mSigma_p^{1/2}\mathbf{Q}_U(\lambda_j)\mSigma_p^{1/2}$ for $j\in\{k,l\}$.
By \eqref{eq:original_spherical_resolvent},
$\|\mathbf{B}_{U,k}-\mathbf{B}_k\|=O_p(\sqrt{L_n/n})$, so
\[
 \begin{gathered}
\begin{aligned}
 &p^{-1}|\tr(\mathbf{B}_{U,k}\mathbf{B}_{U,l})-\tr(\mathbf{B}_k\mathbf{B}_l)|\\
 &\quad\le\|\mathbf{B}_{U,k}-\mathbf{B}_k\|\|\mathbf{B}_{U,l}\|
       +\|\mathbf{B}_k\|\|\mathbf{B}_{U,l}-\mathbf{B}_l\|\\
 &\quad=O_p(\sqrt{L_n/n}).
 \end{aligned}
 \end{gathered}
\]
Deleting at most two columns uses the same inequality with rank
at most two and bounded weighted norm, costing $O_p(p^{-1})$.
Native sample normalization contributes $O_p(n^{-1})$ by the
same lemma. This proves every version of
\eqref{eq:original_weighted_cross}.

The constants are uniform when either ridge varies in $\mathcal L$.
For the bounded-bulk blocks,
\[
 \begin{gathered}
\mathbf{Q}_{b,j}'=-\mathbf{Q}_{b,j}^2,\\
 \partial_{\lambda_k}(-\mathbf{Q}_{b,k}\mathbf{T}\mathbf{Q}_{b,l})=\mathbf{Q}_{b,k}^2\mathbf{T}\mathbf{Q}_{b,l},\\
 \partial_{\lambda_l}(-\mathbf{Q}_{b,k}\mathbf{T}\mathbf{Q}_{b,l})=\mathbf{Q}_{b,k}\mathbf{T}\mathbf{Q}_{b,l}^2 .
 \end{gathered}
\]
Together with the displayed product formula for $\mathbf{M}_{\mathrm{blk},12}$,
these identities give
\[
 \sup_{\mathcal L^2}\bigl(
 \|\partial_{\lambda_j}\mathbf{Q}_{\mathrm{blk}}\|
       +\|\partial_{\lambda_j}\mathbf{M}_{\mathrm{blk}}\|\bigr)=O_p(1),
 \qquad j\in\{k,l\},
\]
also for the deleted matrices. Since
$\max_j\|\bm{g}_j\|^2/n+\max_i\|\bm{z}_{b,i}\|^2/p=O_p(1)$ in this bulk calculation,
the row and column quadratic forms have $O_p(1)$ Lipschitz constants.
A two-dimensional mesh of width $n^{-2}$ has $O(n^4)$ points;
$t=8L_n$ controls its $O(n^5)$ forms.

The interpolation error is $O_p(n^{-2})$.
For the finite-spike extension, the already established weighted bound
also controls the derivatives:
\[
 \begin{aligned}
 0\preceq\mSigma_p^{1/2}(\mathbf{Q}^{(-I)})^2\mSigma_p^{1/2}
 &\preceq\lambda_-^{-1}
       \mSigma_p^{1/2}\mathbf{Q}^{(-I)}\mSigma_p^{1/2},\\
 \max_{|I|\le2}\sup_{\mathcal L}
 \left\|\frac{d}{d\lambda}
       \{\mSigma_p^{1/2}\mathbf{Q}^{(-I)}\mSigma_p^{1/2}\}\right\|
 &=O_p(1).
 \end{aligned}
\]
Thus the rank bounds and the spherical comparison above preserve the
same uniform rate.
\end{proof}

For a fixed data matrix with no zero column, define
\[
 \begin{gathered}
\mathbf{D}(\bm{x})=\mathbf{I}_p+\mathbf{Z}\operatorname{diag}(e^{x_i})\mathbf{Z}^\T,\\
 \mathbf{J}(\bm{x})=\operatorname{diag}(e^{x_i/2})\mathbf{Z}^\T \mathbf{D}(\bm{x})^{-1}
                \mathbf{Z}\operatorname{diag}(e^{x_i/2}),\\
 h_i(\bm{x})=\mathbf{J}(\bm{x})_{ii}.
 \end{gathered}
\]
Consider the function
\begin{equation*}
 f_b(\bm{x})=\log\det \mathbf{D}(\bm{x})-b\sum_i x_i.
\end{equation*}

\begin{lemma}
\label{lem:original_scaling}
Suppose a candidate $\bm{x}_0$ satisfies
$\max_i|h_i(\bm{x}_0)-b|\le\eta$ and
\begin{equation*}
 \min_i\{\mathbf{J}(\bm{x}_0+t\bm{1})-\mathbf{J}(\bm{x}_0+t\bm{1})^2\}_{ii}\ge m>0,
 \qquad |t|\le t_*.
\end{equation*}
If $\eta/m<t_*$, there is a unique solution $\bm{x}(b)$ to
$h_i(\bm{x})=b$, $i=1,\ldots,n$, and it satisfies
$\|\bm{x}(b)-\bm{x}_0\|_\infty\le\eta/m$.
In a neighbourhood where the corresponding margin is at least $m/2$,
$\bm{x}(b)$ is continuously differentiable and
\begin{equation}
 0<\frac{dx_i}{db}\le2/m.
 \label{eq:original_scaling_monotonicity}
\end{equation}
\end{lemma}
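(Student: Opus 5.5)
The proof rests on convex-analytic structure: $f_b$ is a convex potential whose gradient is $h-b$, and its Hessian is a diagonally dominant $Z$-matrix. I start from two exact derivative identities. Since $\partial_{x_i}\mathbf D(\bm x)=e^{x_i}\bm z_i\bm z_i^\T$, Jacobi's formula gives $\partial_{x_i}\log\det\mathbf D=e^{x_i}\bm z_i^\T\mathbf D^{-1}\bm z_i=h_i(\bm x)$, so $\nabla f_b=h-b\bm 1$. Differentiating once more,
\[
 \partial_{x_j}h_i=\delta_{ij}J_{ii}-J_{ij}^2,\qquad
 \nabla^2 f_b=\operatorname{diag}(J_{ii})-\mathbf J\circ\mathbf J .
\]
Because $0\preceq\mathbf J\preceq\mathbf I$ (it is a hat matrix of ridge type), the Schur product theorem gives $\mathbf J\circ(\mathbf I-\mathbf J)\succeq0$, that is, $\mathbf J\circ\mathbf J\preceq\operatorname{diag}(J_{ii})$. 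Hence $f_b$ is convex everywhere. The off-diagonal entries of the Hessian are $-J_{ij}^2\le0$, and its $i$th row sum is $J_{ii}-\sum_jJ_{ij}^2=(\mathbf J-\mathbf J^2)_{ii}$. Consequently the margin hypothesis is exactly strict diagonal dominance with surplus at least $m$.

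\emph{Existence and localization.} Along the diagonal line, $\frac{d}{dt}h_i(\bm x_0+t\bm 1)=(\mathbf J-\mathbf J^2)_{ii}\ge m$ for $|t|\le t_*$. Put $s=\eta/m<t_*$; then
\[
 h_i(\bm x_0+s\bm 1)\ge h_i(\bm x_0)+\eta\ge b,\qquad h_i(\bm x_0-s\bm 1)\le b
 \qquad\text{for all } i.
\]
Thus $\bm x_0-s\bm 1$ is a subsolution and $\bm x_0+s\bm 1$ is a supersolution. Since $h$ is off-diagonally antitone ($\partial_{x_j}h_i\le0$ for $j\ne i$) and nondecreasing in its own coordinate ($\partial_{x_i}h_i=J_{ii}-J_{ii}^2\ge0$), I would run a monotone Gauss--Seidel iteration inside the order box $[\bm x_0-s\bm 1,\bm x_0+s\bm 1]$. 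Each step solves $h_i=b$ in the $i$th coordinate with the others fixed. Starting from the subsolution, the iterates increase and stay below the supersolution, so they converge to a solution inside the box. This yields the bound $\|\bm x(b)-\bm x_0\|_\infty\le\eta/m$.

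The delicate point is strict monotonicity of each coordinate map inside the box, because the margin is assumed only on the diagonal line. I would handle this by a Brouwer fixed-point argument on the box instead of strict coordinate monotonicity. The box is mapped into itself thanks to the antitone structure: $h_i$ is largest at $\bm x_0+s\bm 1$ when $x_i$ is fixed at its lower value, and smallest at $\bm x_0-s\bm 1$ when $x_i$ is fixed at its upper value.

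\emph{Uniqueness.} Suppose $\bm x\ne\bm x'$ both solve $h=b\bm 1$. Convexity of $f_b$ with vanishing gradient at both points makes $f_b$ affine along the segment, so $\bm v^\T\nabla^2f_b\,\bm v=0$ there, with $\bm v=\bm x-\bm x'$. A diagonally dominant Hessian has quadratic form
\[
 \bm v^\T\nabla^2f_b\,\bm v=\tfrac12\sum_{i\ne j}J_{ij}^2(v_i-v_j)^2+\sum_i(\mathbf J-\mathbf J^2)_{ii}v_i^2 .
\]
This vanishes only if $\bm v=0$ wherever the margin is positive. This is the main obstacle, since the margin is not known at arbitrary points of the segment. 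I would first try a maximum-principle argument. Let $k$ maximize $v_i$, with maximum $v_k=s'>0$ (swapping $\bm x$ and $\bm x'$ if needed). Compare $\bm x$ with $\bm x'+s'\bm 1$: they agree in coordinate $k$ and $\bm x\le\bm x'+s'\bm 1$ elsewhere, so antitonicity gives $h_k(\bm x)\ge h_k(\bm x'+s'\bm 1)$. The scale derivative $(\mathbf J-\mathbf J^2)_{kk}$ is strictly positive whenever the data have no zero column and $\mathbf J$ has no unit eigenvalue, which holds because $\mathbf D\succeq\mathbf I$. Hence $h_k(\bm x'+s'\bm 1)>h_k(\bm x')=b$, a contradiction.

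\emph{Differentiability.} In the neighbourhood where the margin is at least $m/2$, $\mathbf K=\nabla^2f_b$ is a strictly diagonally dominant $Z$-matrix, hence a nonsingular M-matrix with $\mathbf K^{-1}\ge0$ entrywise and positive diagonal. The implicit function theorem applied to $h(\bm x)=b\bm 1$ gives a $C^1$ solution with $d\bm x/db=\mathbf K^{-1}\bm 1$, whose entries are strictly positive. Writing $\bm u=\mathbf K^{-1}\bm 1$ and evaluating row $i$ at the index where $u_i=\|\bm u\|_\infty$, dominance gives $1=(\mathbf K\bm u)_i\ge(m/2)u_i$. This proves \eqref{eq:original_scaling_monotonicity}.
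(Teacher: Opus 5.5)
Your proof is correct, but for existence and uniqueness you take an order-theoretic route where the paper takes a variational one.

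\textbf{Paper's route.} The paper gets existence by minimizing the continuous function $f_b$ over the compact box $[\bm x_0-(\eta/m)\bm 1,\bm x_0+(\eta/m)\bm 1]$. It uses the same face inequalities you derive: $h_i\le b$ on the lower $i$-face and $h_i\ge b$ on the upper one. These show that the one-sided optimality conditions force $\nabla f_b=\bm 0$ at the constrained minimizer. Uniqueness then comes from strict convexity, via $\bm v^\T\nabla^2 f_b\,\bm v=\sum_i m_iv_i^2+\sum_{i<j}J_{ij}^2(v_i-v_j)^2$ with $m_i=(\mathbf J-\mathbf J^2)_{ii}>0$ at every point.

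\textbf{Your route.} You use the sign structure $\partial_jh_i\le0$ directly. Existence comes from a monotone coordinatewise iteration between sub- and supersolutions. Uniqueness comes from a maximum-principle comparison with $\bm x'+s'\bm 1$; this is exactly the argument the paper uses at the end of its proof for global monotonicity of $b\mapsto\bm x(b)$. Your route never needs the potential $f_b$, so your Schur-product convexity observation ends up unused, and it would work for any system with this sign structure. The paper's route is shorter given the potential. Your differentiability step is the same as the paper's: $\mathbf K^{-1}\ge0$, $\bm x'=\mathbf K^{-1}\bm 1$, and the bound read off at the largest coordinate.

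\textbf{Two corrections to the write-up.}

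First, the ``delicate point'' and ``main obstacle'' you flag are not real. Since no column of $\mathbf Z$ is zero and $\mathbf J\prec\mathbf I$, both $\partial_{x_i}h_i=J_{ii}(1-J_{ii})$ and $(\mathbf J-\mathbf J^2)_{ii}\ge(1-\|\mathbf J\|)J_{ii}$ are strictly positive at every point; you invoke this fact yourself in the uniqueness step. Consequently:
\begin{itemize}
\item the coordinate-solve map is well defined, continuous and isotone, and maps the box into itself, so your Gauss--Seidel (or Jacobi) iteration already proves existence;
\item the Hessian is positive definite everywhere, so your convexity line would also give uniqueness directly, as in the paper.
\end{itemize}

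Second, drop the Brouwer detour. No map is specified, and the sentence justifying invariance of the box has the extreme points swapped. On the face $x_i=x_i^-$, $h_i$ is largest at $\bm x_0-s\bm 1$; on the face $x_i=x_i^+$, it is smallest at $\bm x_0+s\bm 1$.
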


\begin{proof}
Since $\partial_j\mathbf{D}=e^{x_j}\bm{z}_j\bm{z}_j^\T$ and
$\partial_j\mathbf{D}^{-1}=-e^{x_j}\mathbf{D}^{-1}\bm{z}_j\bm{z}_j^\T \mathbf{D}^{-1}$,
\[
 \begin{gathered}
\partial_if_b=e^{x_i}\bm{z}_i^\T \mathbf{D}^{-1}\bm{z}_i-b=h_i-b,\\
 \frac{\partial h_i}{\partial x_j}
 =\mathbf1(i=j)h_i-J_{ij}^2,\\
 \sum_j\frac{\partial h_i}{\partial x_j}
 =(\mathbf{J}-\mathbf{J}^2)_{ii}>0.
 \end{gathered}
\]
Here $0\le \mathbf{J}<\mathbf{I}$ and $h_i>0$ because the ridge is strictly positive
and no column is zero.  Put $L_{ij}=\partial h_i/\partial x_j$ and
$m_i=(\mathbf{J}-\mathbf{J}^2)_{ii}>0$. For every real vector $\bm{v}$,
\[
 \bm{v}^\T\mathbf{L} \bm{v}
 =\sum_i m_iv_i^2+\sum_{i<j}J_{ij}^2(v_i-v_j)^2
 \ge(\min_i m_i)\|\bm{v}\|^2>0\quad(\bm{v}\ne\bm{0}).
\]
Thus $f_b$ is strictly convex and has at most one stationary point.

Put $a=\eta/m$, $\bm{x}^\pm=\bm{x}_0\pm a\bm{1}$ and
$\mathcal B=\{\bm{x}:\bm{x}^-\le \bm{x}\le \bm{x}^+\}$. The uniform-shift derivative gives
\[
 h_i(\bm{x}^+)-b\ge-\eta+\int_0^a m\,dt=0,\qquad
 h_i(\bm{x}^-)-b\le\eta-\int_{-a}^0m\,dt=0.
\]
For $j\ne i$, $\partial_jh_i=-J_{ij}^2\le0$. Hence
\[
 \begin{gathered}
\bm{x}\in\mathcal B,\ x_i=x_i^+
 \ \Longrightarrow\ h_i(\bm{x})\ge h_i(\bm{x}^+)\ge b,\\
 x_i=x_i^-\ \Longrightarrow\ h_i(\bm{x})\le h_i(\bm{x}^-)\le b .
 \end{gathered}
\]
Let $\bm{x}_*$ minimize the continuous function $f_b$ on $\mathcal B$.
At an interior coordinate $\partial_if_b(\bm{x}_*)=0$. At an upper face,
the one-sided minimizing condition and the displayed face inequality give
\[
 \partial_if_b(\bm{x}_*)\le0,\quad
 \partial_if_b(\bm{x}_*)=h_i(\bm{x}_*)-b\ge0.
\]
At a lower face both inequalities reverse. Thus
$\nabla f_b(\bm{x}_*)=\bm{0}$ and $\|\bm{x}_*-\bm{x}_0\|_\infty\le\eta/m$.

If $\mathbf{L} \bm{v}=\bm{w}\ge\bm{0}$ and $v_j=\min_i v_i<0$, then
\[
 (\mathbf{L} \bm{v})_j=m_jv_j+\sum_{i\ne j}L_{ji}(v_i-v_j)<0,
\]
a contradiction. Hence $\mathbf{L}^{-1}$ is entrywise nonnegative.
The implicit derivative satisfies $\mathbf{L} \bm{x}'(b)=\bm{1}$.
Its smallest coordinate is strictly positive; at its largest
coordinate $j$,
\[
 1=m_jx_j'(b)+\sum_{i\ne j}L_{ji}(x_i'(b)-x_j'(b))
 \ge(m/2)x_j'(b),
 \qquad 0<x_i'(b)\le2/m.
\]
For fixed $n$, the Taylor identity
$\nabla f_{b+\Delta b}(\bm{x}+\Delta \bm{x})
 =\mathbf{L}\Delta \bm{x}-\Delta b\,\bm{1}+O(\|\Delta \bm{x}\|^2)$
and invertibility of $\mathbf{L}$ yield the same derivative on each local
solution branch.
For $\mathbf{D}_v=\operatorname{diag}(v_i)$ and $\mathbf{W}=\operatorname{diag}(e^{x_i/2})$, we have
$\mathbf{W}'=\mathbf{D}_v\mathbf{W}/2$ and $\mathbf{D}'=\mathbf{Z}\mathbf{W} \mathbf{D}_v\mathbf{W}\mathbf{Z}^\T$ along $\bm{x}+t\bm{v}$. Thus
\begin{equation}\label{eq:leverage_matrix_derivative}
 \begin{gathered}
\begin{aligned}
 \frac{d\mathbf{J}(\bm{x}+t\bm{v})}{dt}
 &=\tfrac12\mathbf{D}_v\mathbf{J}+\tfrac12\mathbf{J} \mathbf{D}_v-\mathbf{J} \mathbf{D}_v\mathbf{J}\\
 &=\tfrac12\{(\mathbf{I}-\mathbf{J})\mathbf{D}_v\mathbf{J}+\mathbf{J} \mathbf{D}_v(\mathbf{I}-\mathbf{J})\},
 \end{aligned}\\
 \left\|\frac{d\mathbf{J}}{dt}\right\|_{\rm op}\le\|\bm{v}\|_\infty.
 \end{gathered}
\end{equation}
For any $\Delta \bm{x}$,
\[
 \begin{gathered}
\|\mathbf{J}(\bm{x}+\Delta \bm{x})-\mathbf{J}(\bm{x})\|\le\|\Delta \bm{x}\|_\infty,\\
 \|(\mathbf{J}-\mathbf{J}^2)(\bm{x}+\Delta \bm{x})-(\mathbf{J}-\mathbf{J}^2)(\bm{x})\|
 \le3\|\Delta \bm{x}\|_\infty.
 \end{gathered}
\]
If the margin at $\bm{x}(b)$ is at least $m$, the box
$\|\bm{x}-\bm{x}(b)\|_\infty\le m/6$ has margin at least $m/2$.
For $|b'-b|\le m^2/24$, its smaller box of radius
$2|b'-b|/m\le m/12$ contains a solution by the preceding minimization
argument, and
\[
 \|\bm{x}(b')-\bm{x}(b)\|_\infty\le 2|b'-b|/m,\qquad
 \bm{x}'(b)=\mathbf{L}^{-1}\bm{1} .
\]
The difference quotient follows by the Taylor identity above,
using the displayed Lipschitz bound. Thus the branch is
continuously differentiable on an interval of fixed width.

For global monotonicity, suppose $b_1<b_2$ and
$t=\max_i\{x_i(b_1)-x_i(b_2)\}>0$, attained at $j$.
Monotonicity in coordinates $i\ne j$ and the positive uniform-shift
derivative give
\[
 b_1=h_j(\bm{x}(b_1))
 \ge h_j(\bm{x}(b_2)+t\bm{1})>h_j(\bm{x}(b_2))=b_2,
\]
a contradiction. Hence $\bm{x}(b_1)\le \bm{x}(b_2)$.
\end{proof}

\subsection{Normalization and full-sample Gram matrices}

Throughout this section $\widehat {\mathbf{A}}_\rho=\widehat {\mathbf{A}}_\rho^0$ is the
null scatter inverse. Write $\widehat {\mathbf{A}}_\rho=\widehat {\mathbf{C}}_\rho^{-1}$ and define
\[
 (\mathbf{W}_\rho)_{ij}
 =\frac{\bm{z}_i^\T\widehat {\mathbf{A}}_\rho \bm{z}_j}
        {\sqrt{\bm{z}_i^\T\widehat {\mathbf{A}}_\rho \bm{z}_i}
         \sqrt{\bm{z}_j^\T\widehat {\mathbf{A}}_\rho \bm{z}_j}},\qquad
 \mathbf{V}_\rho=\frac{\mathbf{Z}^\T \mathbf{Q}(\lambda_{\rho,n})\mathbf{Z}}
                 {p\,a_n(\lambda_{\rho,n})}.
\]
\begin{theorem}
\label{thm:original_spike_inverse}
Let $\widehat {\mathbf{C}}_\rho$ be the trace-normalized Chen--Tyler scatter
computed at the null center, and let $\widehat {\mathbf{A}}_\rho=\widehat
{\mathbf{C}}_\rho^{-1}$.  Under Assumptions~\ref{ass:dimshape} and
\ref{ass:grid}, the solution exists uniquely with probability tending
to one, and
\begin{equation*}
 \|\widehat {\mathbf{A}}_\rho-c_{\rho,n}\mathbf{Q}(\lambda_{\rho,n})\|_{\rm op}
 +\|\mSigma_p^{1/2}\{
       \widehat {\mathbf{A}}_\rho-c_{\rho,n}\mathbf{Q}(\lambda_{\rho,n})
       \}\mSigma_p^{1/2}\|_{\rm op}=O_p(\varepsilon_n).
\end{equation*}

\begin{equation}
 \begin{gathered}
 \|\mathbf{W}_\rho-\mathbf{V}_\rho\|_{\rm op}=O_p(\varepsilon_n),
 \qquad \|\mathbf{W}_\rho\|_{\rm op}+\|\mathbf{V}_\rho\|_{\rm op}=O_p(1),\\
 \frac{p}{n(n-1)}\sum_{i\ne j}
       \{(\mathbf{W}_\rho-\mathbf{V}_\rho)_{ij}\}^{2}=O_p(\varepsilon_n^2).
 \end{gathered}
 \label{eq:original_gram_error}
\end{equation}
The bounds hold also for $\bm{z}_i=\sqrt p\,\mSigma_p^{1/2}\bm{U}_i$.
\end{theorem}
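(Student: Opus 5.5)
The plan is to turn the shrinkage Tyler fixed point \eqref{eq:shrinkage_tyler} into the leverage equation of Lemma~\ref{lem:original_scaling}, solve it near a uniform-weight candidate, and then fix the scale through the trace normalization. First, Tyler scatter is unchanged under positive rescaling of columns, and Lemma~\ref{lem:original_weighted_bound} compares Gaussian and spherical resolvents. So I will work with the Gaussian representatives $\bm z_i=\mSigma_p^{1/2}\bm G_i$ and transfer to $\bm z_i=\sqrt p\,\mSigma_p^{1/2}\bm U_i$ at the end, at cost $O_p(\sqrt{L_n/n})$ by \eqref{eq:original_spherical_resolvent}.

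\textbf{Reduction to a leverage equation.} Write $\widehat{\mathbf C}_\rho=\widehat s_\rho\rho\,\mathbf D$ with
\[
\mathbf D=\mathbf I_p+\frac{(1-\rho)p}{\rho n}\sum_i\frac{\bm z_i\bm z_i^\T}{q_i},\qquad
q_i=\bm z_i^\T\widehat{\mathbf A}_\rho\bm z_i .
\]
Set $e^{x_i}=(1-\rho)p/(\rho n q_i)$. Since $q_i=\bm z_i^\T\mathbf D^{-1}\bm z_i/(\widehat s_\rho\rho)$, the fixed point is equivalent to
\[
h_i(\bm x)=e^{x_i}\bm z_i^\T\mathbf D(\bm x)^{-1}\bm z_i=b,\qquad b=(1-\rho)p\widehat s_\rho/n,
\]
for all $i$. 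Here $\widehat s_\rho$ is then fixed by $\tr\widehat{\mathbf C}_\rho=p$.

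\textbf{Solving for each $b$.} For fixed $b$ I take the uniform candidate $e^{x_{0,i}}=1/(n\lambda)$. This gives $\mathbf D=\lambda^{-1}\mathbf Q(\lambda)^{-1}$ and $\mathbf J(\bm x_0)=\mathbf H(\lambda)$. By \eqref{eq:original_full_hat} and Lemma~\ref{lem:original_sample_traces}, $h_i(\bm x_0)=y_na_n(\lambda)+O_p(\varepsilon_n)$ uniformly in $i$ and $\lambda\in\mathcal L$. A uniform shift of $\bm x_0$ only moves $\lambda$ inside $\mathcal L$, so \eqref{eq:original_hat_margin} supplies the margin $m=\delta_H$. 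The hypothesis of that margin, $\inf_{\mathcal L}y_na_b>0$, holds by the lower bound on $a_n$ over $\mathcal L$ recorded after Lemma~\ref{lem:original_cm_scalar}.

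Choosing $\lambda=\lambda(b)$ with $y_na_n(\lambda)=b$ (possible since $a_n'<0$), Lemma~\ref{lem:original_scaling} gives a unique solution with $\|\bm x(b)-\bm x_0\|_\infty=O_p(\varepsilon_n)$. It also gives $0<dx_i/db\le2/m$, which yields existence and uniqueness of the full fixed point once $b$ is pinned down. Perturbing the weights by a factor $1+O_p(\varepsilon_n)$ changes $\mathbf D^{-1}$ by $O_p(\varepsilon_n)$ in operator norm. The same holds after $\mSigma_p^{1/2}$-weighting, by the sandwich argument in the proof of Lemma~\ref{lem:original_weighted_bound}: $(1\pm\delta)\mathbf S\le\mathbf S_w\le$ and a relative resolvent comparison, with the weighted $O_p(1)$ norm bound \eqref{eq:original_weighted_resolvent} controlling the spikes.

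\textbf{Fixing the scale.} Next I impose $\tr\widehat{\mathbf C}_\rho=p$, i.e.\ $\widehat s_\rho\rho\,\tr\mathbf D=p$. With $\mathbf D\approx\mathbf I+\mathbf S/\lambda$, this becomes
\[
\frac{nb}{(1-\rho)p}\,\rho\Bigl(1+\frac{\tr\mathbf S}{p\lambda}\Bigr)=1+O_p(\varepsilon_n),
\]
and $p^{-1}\tr\mathbf S=1+O_p(n^{-1/2})$ by Lemma~\ref{lem:original_sample_traces}. Eliminating $b=y_na_n(\lambda)$ gives exactly the third CM equation $a_n(\lambda)=k_\rho\lambda/(1+\lambda)$ up to $O_p(\varepsilon_n)$. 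The scale factor $\widehat s_\rho\rho\lambda^{-1}$ multiplying $\mathbf Q^{-1}$ is $c_{\rho,n}^{-1}$ up to error, since $c=1+\lambda$.

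By the Jacobian lower bound $j_*$ and the common differentiability neighbourhood in Lemma~\ref{lem:original_cm_scalar}, together with the uniform-in-$\lambda$ derivative control of $a_S$ in Lemma~\ref{lem:original_sample_traces}, the random root satisfies $\widehat\lambda-\lambda_{\rho,n}=O_p(\varepsilon_n)$. Combined with $\|\mathbf Q'\|\le\lambda_-^{-2}$ and the weighted derivative bound closing the proof of Lemma~\ref{lem:original_weighted_cross}, this gives the inverse bound $\|\widehat{\mathbf A}_\rho-c_{\rho,n}\mathbf Q(\lambda_{\rho,n})\|_{\rm op}=O_p(\varepsilon_n)$, and likewise with $\mSigma_p^{1/2}$ weighting.

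\textbf{Gram matrices.} For \eqref{eq:original_gram_error}, write
\[
\mathbf W_\rho=\mathbf N\,\mathbf Z^\T\widehat{\mathbf A}_\rho\mathbf Z\,\mathbf N,\qquad \mathbf N=\operatorname{diag}(q_i^{-1/2}).
\]
The step above shows $q_i=e^{-x_i}(1-\rho)p/(\rho n)$, so $q_i=c_{\rho,n}\,p\,a_n(\lambda_{\rho,n})(1+O_p(\varepsilon_n))$ uniformly in $i$, using \eqref{eq:original_full_hat} for the diagonal of $\mathbf H$. Moreover $\|\mathbf Z^\T\mathbf Q\mathbf Z\|/p\le\|\mathbf H\|\,n/p=O_p(1)$, and $\|\mathbf Z\|^2/p=O_p(1)$ on the bulk. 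The spike directions are controlled because $\mathbf Z^\T\mathbf Q\mathbf Z/n=\mathbf H\le\mathbf I$ for the exact resolvent, and because the perturbation of $\widehat{\mathbf A}_\rho$ arises from weights multiplying $\mathbf D$, so it factors through $\mathbf H$-type matrices rather than through $\|\mathbf Z\|^2$. Hence $\|\mathbf W_\rho-\mathbf V_\rho\|_{\rm op}=O_p(\varepsilon_n)$, and the Frobenius statement follows from $\|\mathbf E\|_F^2\le n\|\mathbf E\|^2$ together with $p/(n(n-1))\cdot n=O(1)$.

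\textbf{Expected main obstacle.} The hardest step is making the weighted resolvent perturbation uniform in the presence of diverging spikes, both for $\widehat{\mathbf A}_\rho$ and for $\mathbf Z^\T\widehat{\mathbf A}_\rho\mathbf Z$. I would handle it with the multiplicative form $\mathbf D(\bm x)=\mathbf I+\mathbf Z\operatorname{diag}(e^{x_i})\mathbf Z^\T$, where $e^{x_i}=e^{x_{0,i}}(1+\epsilon_i)$ and $\max|\epsilon_i|=O_p(\varepsilon_n)$. Then $\mathbf D(\bm x_0)^{-1/2}\mathbf D(\bm x)\mathbf D(\bm x_0)^{-1/2}$ lies between $(1\pm\max|\epsilon_i|)\mathbf I$, and every bound is relative and independent of the spike sizes.
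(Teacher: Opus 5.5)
Your proposal is correct and follows essentially the paper's route. You recast the fixed point as the leverage equation of Lemma~\ref{lem:original_scaling} around uniform weights $1/(n\lambda)$, fix $b$ through the monotone trace-normalization map, and use the relative Loewner sandwich $e^{\pm O_p(\varepsilon_n)}(\mathbf I+\mathbf S/\lambda_{\rho,n})$ to keep both the inverse and the Gram bounds independent of the spike sizes. The differences are cosmetic: you anchor the candidate at the deterministic $\lambda(b)$ with $y_na_n(\lambda)=b$ rather than at the sample root $\widehat\lambda$ of $a_S(\lambda)(1+t_S/\lambda)=k$, and you handle the Gram matrix through the diagonal normalization $\mathbf N$ rather than the paper's identity $\mathbf W_\rho=\mathbf J(\bm x)/\widehat b$.
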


\begin{proof}
Fix $\rho$ and write $k=(1-\rho)/\rho$ and
$\lambda_0=\lambda_{\rho,n}$.  Set $t_S=\tr \mathbf{S}/p$.
The full trace, including every spike, satisfies
\[
 \E t_S=1,\qquad
 \operatorname{Var}(t_S)=
       \frac{2\tr\mSigma_p^2}{np^2}\le\frac2n,\qquad
 t_S=1+O_p(n^{-1/2}),
\]
as calculated in Lemma~\ref{lem:original_sample_traces}.
The strictly decreasing scalar function
$a_S(\lambda)(1+t_S/\lambda)$ has a unique root $\widehat\lambda$
of
\begin{equation*}
 a_S(\widehat\lambda)(1+t_S/\widehat\lambda)=k.
\end{equation*}
For $f_S(\lambda)=a_S(\lambda)(1+t_S/\lambda)$,
\[
 f_S(0+)=\infty,\quad f_S(\infty)=0,\quad
 -f_S'=-a_S'(1+t_S/\lambda)+a_St_S/\lambda^2
       \ge a_St_S/\lambda^2 .
\]
On a fixed neighborhood of $\lambda_0$, $a_S$ is bounded away from
zero and $t_S=1+O_p(n^{-1/2})$.
Lemma~\ref{lem:original_cm_scalar} and the sample-trace bounds give
\[
 f_S(\lambda_0)-k
 =(a_S-a_n)(1+t_S/\lambda_0)
       +a_n(t_S-1)/\lambda_0=O_p(\varepsilon_n).
\]
The mean value theorem yields
\[
 \widehat\lambda-\lambda_0=O_p(\varepsilon_n),\qquad
 b_0=y_na_S(\widehat\lambda)
       =y_na_n(\lambda_0)+O_p(\varepsilon_n).
\]
The bounds in Lemma~\ref{lem:original_cm_scalar} imply that $b_0$
stays in a compact subset of
$(0,\min(1,y_n))$ with probability tending to one.

Apply Lemma~\ref{lem:original_scaling} with
\[
 \bm{x}_0=-\log(n\widehat\lambda)\bm{1},\qquad
 \mathbf{D}(\bm{x}_0)=\mathbf{I}+\mathbf{S}/\widehat\lambda,\qquad \mathbf{J}(\bm{x}_0)=\mathbf{H}(\widehat\lambda).
\]
The diagonal error in \eqref{eq:original_full_hat} is $O_p(\varepsilon_n)$.
Under a fixed small constant shift, \eqref{eq:original_hat_margin} applies because
\[
 \mathbf{J}(\bm{x}_0+t\bm{1})=\mathbf{H}(\widehat\lambda e^{-t}),\qquad
 \widehat\lambda e^{-t}\in\mathcal L.
\]
The solution is therefore unique and satisfies
\[
 \|\bm{x}(b_0)-\bm{x}_0\|_\infty=O_p(\varepsilon_n).
\]

For nearby $b$, let $\mathbf{D}_b=\mathbf{D}\{\bm{x}(b)\}$ and $T_b=\tr \mathbf{D}_b/p$.
The coordinate monotonicity \eqref{eq:original_scaling_monotonicity}
shows that $\mathbf{D}_b$ is increasing in the Loewner order.  Hence
\[
 \Psi_{\rm tr}(b)=bT_b/y_n\quad\text{is strictly increasing},
 \qquad \Psi_{\rm tr}'(b)\ge1/y_n.
\]
If $\eta_n=\|\bm{x}(b_0)-\bm{x}_0\|_\infty$, then
\[
 e^{-\eta_n}(\mathbf{I}+\mathbf{S}/\widehat\lambda)
 \le \mathbf{D}_{b_0}\le e^{\eta_n}(\mathbf{I}+\mathbf{S}/\widehat\lambda),\quad
 T_{b_0}=(1+t_S/\widehat\lambda)\{1+O_p(\varepsilon_n)\},
\]
\[
 \Psi_{\rm tr}(b_0)=
 a_S(\widehat\lambda)(1+t_S/\widehat\lambda)
        \{1+O_p(\varepsilon_n)\}=k+O_p(\varepsilon_n).
\]
The scaling lemma provides solutions on a fixed interval around
$b_0$. Since $\Psi_{\rm tr}'\ge1/y_n$, points at distance
$y_n|\Psi_{\rm tr}(b_0)-k|$ on the appropriate side bracket $k$.
The intermediate value theorem therefore gives
\begin{equation*}
 \begin{gathered}
\widehat b-b_0=O_p(\varepsilon_n),\\
 \max_i\left|x_i(\widehat b)+\log(n\lambda_0)\right|
 =O_p(\varepsilon_n),\\
 T_{\widehat b}=1+\lambda_0^{-1}+O_p(\varepsilon_n).
 \end{gathered}
\end{equation*}
The matrices $\mathbf{C}=\mathbf{D}_{\widehat b}/T_{\widehat b}$ and $\mathbf{A}=\mathbf{C}^{-1}$
have trace-normalized scatter and solve the Chen equation.
Indeed $h_i(\bm{x})=\widehat b$ gives
$e^{x_i}=\widehat b/(\bm{z}_i^\T \mathbf{D}_{\widehat b}^{-1}\bm{z}_i)$ and
\[
 \mathbf{C}=\frac{\mathbf{I}_p}{T_{\widehat b}}
       +\widehat b\sum_i\frac{\bm{z}_i\bm{z}_i^\T}{\bm{z}_i^\T \mathbf{A} \bm{z}_i},
 \qquad
 \frac{\widehat bT_{\widehat b}}{y_n}=\frac{1-\rho}{\rho}.
\]
Since $\widehat b=y_n(1-\rho)/(\rho T_{\widehat b})$, writing
$s=1/(\rho T_{\widehat b})$ gives
\[
 \mathbf{C}=s\left\{\rho \mathbf{I}+(1-\rho)\frac pn
           \sum_i\frac{\bm{z}_i\bm{z}_i^\T}{\bm{z}_i^\T \mathbf{C}^{-1}\bm{z}_i}\right\},
 \qquad \tr \mathbf{C}=p.
\]
Conversely, this equation implies
$\mathbf{D}=\mathbf{C}/(s\rho)$, $T=1/(s\rho)$ and
$e^{x_i}=s(1-\rho)y_n/(\bm{z}_i^\T \mathbf{D}^{-1}\bm{z}_i)$.
Thus every solution corresponds to $h_i=s(1-\rho)y_n=b$ and
$\Psi_{\rm tr}(b)=(1-\rho)/\rho$. The established uniqueness applies.

With $\zeta_n^{\mathrm{scal}}=\max_i|x_i(\widehat b)+\log(n\lambda_0)|
=O_p(\varepsilon_n)$, coefficientwise comparison gives
\begin{equation}\label{eq:relative_scatter_comparison}
 e^{-\zeta_n^{\mathrm{scal}}}\mathbf{D}_0\le \mathbf{D}_{\widehat b}
                 \le e^{\zeta_n^{\mathrm{scal}}}\mathbf{D}_0,
 \qquad \mathbf{D}_0=\mathbf{I}+\mathbf{S}/\lambda_0.
\end{equation}
It is this relative comparison, rather than an absolute operator-norm
approximation to the scatter, that is needed for diverging spikes.
Inverting it and using $T_{\widehat b}\lambda_0=c_{\rho,n}
+O_p(\varepsilon_n)$ proves the unweighted inverse assertion.

By \eqref{eq:original_weighted_resolvent}, the same inversion argument
gives the weighted bound. More explicitly, with
$\mathbf{F}=\mathbf{D}_0^{-1/2}(\mathbf{D}_{\widehat b}-\mathbf{D}_0)\mathbf{D}_0^{-1/2}$,
\[
 \|\mathbf{F}\|=O_p(\varepsilon_n),\quad
 \mathbf{D}_{\widehat b}^{-1}-\mathbf{D}_0^{-1}
 =\mathbf{D}_0^{-1/2}\{(\mathbf{I}+\mathbf{F})^{-1}-\mathbf{I}\}\mathbf{D}_0^{-1/2},
\]
\[
 \|\mSigma_p^{1/2}(\mathbf{D}_{\widehat b}^{-1}-\mathbf{D}_0^{-1})\mSigma_p^{1/2}\|
 \le\frac{\|\mathbf{F}\|}{1-\|\mathbf{F}\|}
       \|\mSigma_p^{1/2}\mathbf{D}_0^{-1}\mSigma_p^{1/2}\|
 =O_p(\varepsilon_n).
\]
At the solution,
\[
 \frac{\bm{z}_i^\T \mathbf{A} \bm{z}_i}{p}
 =\frac{T_{\widehat b}\widehat b e^{-x_i}}p
 =T_{\widehat b}\lambda_0\,\frac{\widehat b}{y_n}
       \frac{e^{-x_i}}{n\lambda_0}
 =c_{\rho,n}a_n(\lambda_0)+O_p(\varepsilon_n)
\]
uniformly in $i$. Combining this with the inverse error gives
\begin{equation}
 \begin{gathered}
\|\widehat {\mathbf{A}}_\rho\|_{\rm op}
 +\|\mSigma_p^{1/2}\widehat {\mathbf{A}}_\rho\mSigma_p^{1/2}\|_{\rm op}
 =O_p(1),\\
 \max_i\left|\frac{\bm{z}_i^\T\widehat {\mathbf{A}}_\rho \bm{z}_i}{p}
                     -c_{\rho,n}a_n(\lambda_{\rho,n})\right|
 =O_p(\varepsilon_n).
 \end{gathered}
\label{eq:original_inverse_bounds}
\end{equation}  Also $\mathbf{W}_\rho=\mathbf{J}(\bm{x})/\widehat b$.
Relative column-weight perturbations of size $\varepsilon_n$ change
the hat matrix by $O_p(\varepsilon_n)$ in operator norm, independently
of $\|\mathbf{S}\|$.  Comparing with $\mathbf{H}(\lambda_0)$ and noting that
$\widehat b=y_na_n(\lambda_0)+O_p(\varepsilon_n)$ proves
the operator assertions in \eqref{eq:original_gram_error}.
For $\mathbf{E}=\mathbf{W}_\rho-\mathbf{V}_\rho$,
\[
 \frac{p}{n(n-1)}\sum_{i\ne j}E_{ij}^2
 \le\frac{p}{n(n-1)}\|\mathbf{E}\|_{\rm F}^2
 \le\frac{p}{n-1}\|\mathbf{E}\|_{\rm op}^2=O_p(\varepsilon_n^2).
\]

For spherical representatives the scatter and $\mathbf{W}_\rho$ are exactly
unchanged. Equation~\eqref{eq:original_spherical_resolvent} transfers
the inverse bounds. For the sample Gram, put
$\mathbf{X}=\mathbf{Z}/\sqrt{n\lambda_0}$ and $\mathbf{J}_{\mathrm{mat}}(\mathbf{X})=\mathbf{X}^\T(\mathbf{I}+\mathbf{X}\mathbf{X}^\T)^{-1}\mathbf{X}$.
The column multipliers have logarithms $O_p(\sqrt{L_n/n})$, and
\eqref{eq:leverage_matrix_derivative} gives
\[
 \|\mathbf{J}_{\mathrm{mat}}(\mathbf{X}\operatorname{diag}(r_i))-\mathbf{J}_{\mathrm{mat}}(\mathbf{X})\|
 \le2\max_i|\log r_i|=O_p(\sqrt{L_n/n}).
\]
Since $\mathbf{V}_\rho=\mathbf{H}(\lambda_0)/(y_na_n(\lambda_0))$, this proves the
spherical Gram bounds.
\end{proof}

Define the leave-two-out kernel using the fixed-$n$ resolvent
$\mathbf{Q}^{(-ij)}=\mathbf{Q}^{(-\{i,j\})}$
\[
 K_{\rho,ij}=
 \frac{\bm{z}_i^\T \mathbf{Q}^{(-ij)}(\lambda_{\rho,n})\bm{z}_j}
      {\sqrt{\bm{z}_i^\T \mathbf{Q}^{(-ij)}(\lambda_{\rho,n})\bm{z}_i}
       \sqrt{\bm{z}_j^\T \mathbf{Q}^{(-ij)}(\lambda_{\rho,n})\bm{z}_j}}.
\]
\begin{corollary*}

The weighted resolvent bounds and the inverse conclusions above hold
uniformly after deleting at most two observations and using their
native sample normalizations. Here $\widehat {\mathbf{A}}_\rho^{(-I)}$ denotes
the scatter inverse computed from $n-|I|$ retained observations.
The self-leverage factor satisfies
\begin{equation}
 \ell_{\rho,n}=d_n(\lambda_{\rho,n}),\qquad
 \max_i\left|
 \frac{\bm{z}_i^\T\widehat {\mathbf{A}}_\rho \bm{z}_i}
      {\bm{z}_i^\T\widehat {\mathbf{A}}_\rho^{(-i)}\bm{z}_i}
       -d_n(\lambda_{\rho,n})\right|=O_p(\varepsilon_n).
 \label{eq:original_self_leverage}
\end{equation}

Then the needed pair-kernel reduction, with the self-leverage correction
factor, is
\begin{equation}
 \frac{p}{n(n-1)}\sum_{i\ne j}
       \{(\mathbf{W}_\rho)_{ij}-d_{\rho,n}K_{\rho,ij}\}^{2}
       =O_p(\varepsilon_n^2).
 \label{eq:original_pair_kernel}
\end{equation}

\end{corollary*}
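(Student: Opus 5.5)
The plan has three parts: the deletion statements, the self-leverage identity \eqref{eq:original_self_leverage}, and the pair-kernel bound \eqref{eq:original_pair_kernel}.

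\emph{Deletion statements.} Rerun the proofs of Lemmas~\ref{lem:original_weighted_bound}, \ref{lem:original_weighted_cross} and Theorem~\ref{thm:original_spike_inverse} on the retained sample of size $n-|I|$. Its aspect ratio is $p/(n-|I|)=y_n+O(n^{-1})$, so the CM constants move by $O(n^{-1})$ by the uniform Jacobian bound of Lemma~\ref{lem:original_cm_scalar}. The native normalization costs $O_p(n^{-1})$ by Lemma~\ref{lem:original_weighted_bound}. Since $|I|\le2$ and the index set has polynomial size, the conditional-union bound \eqref{eq:conditional_gaussian_max} makes every Gaussian concentration step uniform in $I$. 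The relative comparison \eqref{eq:relative_scatter_comparison} for the deleted sample then gives
\[
\widehat{\mA}_\rho^{(-I)}=c_{\rho,n}\mQ^{(-I)}(\lambda_{\rho,n})+O_p(\varepsilon_n),
\]
in both the plain and the $\mSigma_p^{1/2}$-weighted operator norms, uniformly in $I$.

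\emph{Self-leverage.} By \eqref{eq:original_inverse_bounds} the numerator satisfies $\vz_i^\T\widehat{\mA}_\rho\vz_i/p=c_{\rho,n}a_n(\lambda_{\rho,n})+O_p(\varepsilon_n)$ uniformly in $i$. For the denominator, replace $\widehat{\mA}^{(-i)}_\rho$ by $c_{\rho,n}\mQ^{(-i)}(\lambda_{\rho,n})$. The error $\vz_i^\T E\vz_i/p$ is $O_p(\varepsilon_n)$ because $\|\mSigma_p^{1/2}E\mSigma_p^{1/2}\|=O_p(\varepsilon_n)$ and $\|\mSigma_p^{-1/2}\vz_i\|^2/p=\|\bm G_i\|^2/p=O_p(1)$. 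Here $E$ is independent of $\vz_i$, so this needs only the operator bound. Next compute $\vz_i^\T\mQ^{(-i)}\vz_i/p$. By Sherman--Morrison,
\[
\mH_{ii}=\frac{v_i}{1+v_i},\qquad v_i=\frac{\vz_i^\T\mQ^{(-i)}\vz_i}{n},
\]
so $v_i=\mH_{ii}/(1-\mH_{ii})$. Equation \eqref{eq:original_full_hat} gives $\mH_{ii}=y_na_n+O_p(\varepsilon_n)=1-d_n+O_p(\varepsilon_n)$, hence $v_i=(1-d_n)/d_n+O_p(\varepsilon_n)$ and $\vz_i^\T\mQ^{(-i)}\vz_i/p=a_n/d_n+O_p(\varepsilon_n)$. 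Dividing the two gives the ratio $d_n(\lambda_{\rho,n})+O_p(\varepsilon_n)$, with denominators kept away from zero by the margin $a_n\ge c$ on $\mathcal L$.

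\emph{Pair kernel.} By \eqref{eq:original_gram_error} it suffices to bound the sum with $(\mV_\rho)_{ij}$ in place of $(\mW_\rho)_{ij}$. A two-step Sherman--Morrison (Woodbury with two columns) gives
\[
\vz_i^\T\mQ\vz_j=\frac{\vz_i^\T\mQ^{(-ij)}\vz_j}{(1+v_i^{(j)})(1+v_j^{(i)})-w_{ij}^2},\qquad w_{ij}=\vz_i^\T\mQ^{(-ij)}\vz_j/n.
\]
By the deletion form of the diagonal concentration, $1+v_i^{(j)}=d_n^{-1}+O_p(\varepsilon_n)$ uniformly in $i,j$. The diagonals of the leave-two-out kernel satisfy $\vz_i^\T\mQ^{(-ij)}\vz_i/p=a_n/d_n+O_p(\varepsilon_n)$. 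Combining, $(\mV_\rho)_{ij}=d_nK_{\rho,ij}\{1+O_p(\varepsilon_n)\}+O(w_{ij}^2\,|K_{\rho,ij}|)$. The main obstacle is summing the squared errors without losing a factor: $\frac{p}{n(n-1)}\sum_{i\ne j}K_{\rho,ij}^2$ must be $O_p(1)$, and the $w_{ij}^2$ correction must be negligible. For the first, condition on $\mQ^{(-ij)}$ and use the bilinear Gaussian bound, which gives $\E K_{ij}^2\asymp p^{-1}\tr(\mSigma_p\mQ\mSigma_p\mQ)/p=O(1/p)$, with finite spikes handled by the rank bound of Lemma~\ref{lem:original_weighted_cross}. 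For the second, \eqref{eq:conditional_gaussian_max} gives $\max_{i\ne j}|w_{ij}|=O_p(\sqrt{L_n/n})$, so $w_{ij}^2$ is of order $L_n/n$, and its contribution to the summed squares is at most of order $(L_n/n)^2=O(\varepsilon_n^4)$, which is negligible. Together these yield \eqref{eq:original_pair_kernel}.
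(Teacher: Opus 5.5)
Your proposal is correct and follows essentially the paper's route. You use deleted-sample CM constants and native normalizations perturbed by $O(n^{-1})$; you write the self-leverage ratio as a quotient of full-sample and held-out quadratic forms with limits $c_{\rho,n}a_n(\lambda_{\rho,n})$ and $c_{\rho,n}\Theta_{1,n}(\lambda_{\rho,n})$; and you combine the exact rank-two identity for $\bm z_i^\T\mathbf Q\bm z_j$ with $\frac{p}{n(n-1)}\sum_{i\ne j}K_{\rho,ij}^2=O_p(1)$.

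Your local variants are valid. You obtain $\bm z_i^\T\mathbf Q^{(-i)}\bm z_i/p$ from $\mathbf H_{ii}=v_i/(1+v_i)$ and \eqref{eq:original_full_hat}, whereas the paper uses conditional concentration around $p^{-1}\tr(\mSigma_p\mathbf Q^{(-i)})$. You bound $\sum_{i\ne j}K_{\rho,ij}^2$ by conditional second moments, whereas the paper simply uses $\sum_{i\ne j}K_{\rho,ij}^2\le C\|\mathbf G_{\mathrm{norm}}\|_{\mathrm F}^2\le Cn\|\mathbf G_{\mathrm{norm}}\|_{\mathrm{op}}^2$. The only step you leave implicit is the transfer to spherical representatives. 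It needs the leave-two-out resolvent comparison because $K_{\rho,ij}$, unlike $\mathbf W_\rho$, is not invariant under column rescaling.
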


\begin{proof}
In the preliminary held-out estimates below, resolvents are evaluated
at $\lambda\in\mathcal L$, and the bounds are uniform over this interval:
\begin{equation}
 \begin{gathered}
\max_{i\ne j}\left|
      \frac{\bm{z}_i^\T \mathbf{Q}^{(-ij)}\bm{z}_i}{p}-\Theta_{1,n}(\lambda)
                       \right|=O_p(\varepsilon_n),\\
 \max_{i\ne j}\frac{|\bm{z}_i^\T \mathbf{Q}^{(-ij)}\bm{z}_j|}{p}
      =O_p\!\left(\sqrt{\frac{\log(n+2)}p}\right).
 \end{gathered}
\label{eq:original_deleted_quadratics}
\end{equation}
Lemma~\ref{lem:original_weighted_bound} already proves the weighted
norm and sphere comparison uniformly over these deletions.
Applying \eqref{eq:conditional_gaussian_max} to the $O(n^3)$ held-out
quadratic forms retains the rate $\sqrt{L_n/n}$. Native normalization changes
every ridge matrix relatively by $O(n^{-1})$.

The full-trace normalization is uniform over deletions as well.
The Gaussian bound \eqref{eq:gaussian_quadratic_tail}, with coefficient $\mSigma_p$ and
$\|\mSigma_p\|\le p$, $\tr\mSigma_p^2\le p^2$ gives
\[
 \max_i\frac{\|\bm{z}_i\|^2}{p}
 =O_p\left(1+\frac{\sqrt{\tr\mSigma_p^2\,L_n}}p
                       +\frac{\|\mSigma_p\|L_n}p\right)=O_p(L_n).
\]
Writing
$t_{S,\rm nat}^{(-I)}=\{p(n-|I|)\}^{-1}\sum_{i\notin I}\|\bm{z}_i\|^2$,
we have
\[
 \begin{aligned}
 t_{S,\rm nat}^{(-I)}-t_S
 &=\frac{|I|}{n-|I|}t_S
   -\frac1{p(n-|I|)}\sum_{i\in I}\|\bm{z}_i\|^2,\\
 \max_{|I|\le2}|t_{S,\rm nat}^{(-I)}-t_S|
 &=O_p(L_n/n),\\
 \max_{|I|\le2}|t_{S,\rm nat}^{(-I)}-1|
 &=O_p(n^{-1/2}+L_n/n)=O_p(n^{-1/2}).
 \end{aligned}
\]
For spherical columns, the preceding direct moment calculation and
the Gaussian norm event give
\[
 t_{S_U}=1+O_p(n^{-1/2}),\qquad
 \min_i\|\bm{G}_i\|^2/p\ge1/2
\]
with the latter event having probability tending to one. On this event,
\[
 \max_i\frac{\|\bm{z}_{U,i}\|^2}{p}
 =\max_i\frac{p}{\|\bm{G}_i\|^2}\frac{\|\bm{z}_i\|^2}{p}
 \le2\max_i\frac{\|\bm{z}_i\|^2}{p}=O_p(L_n).
\]
Applying the same deterministic deletion identity gives
\[
 \max_{|I|\le2}|t_{S_U,\rm nat}^{(-I)}-t_{S_U}|=O_p(L_n/n),\qquad
 \max_{|I|\le2}|t_{S_U,\rm nat}^{(-I)}-1|=O_p(n^{-1/2}).
\]
The bulk norm bound and the spike-row lower bound are already uniform
over these deletions. The leverage concentrations therefore use the
same fixed margin, and the deterministic balancing construction
retains its constants and its $O_p(\varepsilon_n)$ error.

Conditionally on the retained columns, put
$\mathbf{B}_I=\mSigma_p^{1/2}\mathbf{Q}^{(-I)}\mSigma_p^{1/2}$. Then
\[
 \|\mathbf{B}_I\|=O_p(1),\quad \|\mathbf{B}_I\|_{\rm F}^2\le p\|\mathbf{B}_I\|^2=O_p(p),
 \quad
 p^{-1}\tr \mathbf{B}_I=\Theta_{1,n}(\lambda)+O_p(\varepsilon_n)
\]
by Lemma~\ref{lem:original_weighted_cross}. The Gaussian bounds give
\[
 \begin{gathered}
\max_{i\in I}\left|\frac{\bm{z}_i^\T \mathbf{Q}^{(-I)}\bm{z}_i-\tr \mathbf{B}_I}{p}\right|
 =O_p\left(\sqrt{\frac{L_n}{p}}+\frac{L_n}{p}\right),\\
 \max_{I=\{i,j\}}\frac{|\bm{z}_i^\T \mathbf{Q}^{(-I)}\bm{z}_j|}{p}
 =O_p\left(\sqrt{\frac{L_n}{p}}\right).
 \end{gathered}
\]
The native aspect ratio and its CM constants satisfy
\[
 \begin{gathered}
 y_n^{(-I)}=\frac p{n-|I|},\qquad
 \bm{\vartheta}_{\rho,n}^{(-I)}=\mathcal P_\rho(y_n^{(-I)},H_{p,b}),\\
 \max_{|I|\le2}|y_n^{(-I)}-y_n|
 \le\frac{2p}{n(n-2)}=O(n^{-1}).
 \end{gathered}
\]
The derivative bounds in Lemma~\ref{lem:original_cm_scalar} and
the mean-value identity give
\[
 \max_{|I|\le2}\|\bm{\vartheta}_{\rho,n}^{(-I)}-\bm{\vartheta}_{\rho,n}\|
 \le C\max_{|I|\le2}|y_n^{(-I)}-y_n|=O(n^{-1}).
\]
The corresponding CM value of the trace function changes by $O(n^{-1})$
as well, since $a_n(\lambda_{\rho,n})=(1-d_{\rho,n})/y_n$.
Thus the inverse theorem applies uniformly to native deletions.

For the remainder of this proof, set $\lambda=\lambda_{\rho,n}$ and
suppress this argument in the resolvents and scalar resolvent functions.
For held-out $\bm{z}_i$, the relative inverse estimate and the preceding
quadratic bound give
\[
 \frac{\bm{z}_i^\T\widehat {\mathbf{A}}_\rho^{(-i)}\bm{z}_i}{p}
 =c_{\rho,n}\Theta_{1,n}+O_p(\varepsilon_n),\quad
 \frac{\bm{z}_i^\T\widehat {\mathbf{A}}_\rho \bm{z}_i}{p}
 =c_{\rho,n}a_n+O_p(\varepsilon_n).
\]
Both leading terms are bounded away from zero, and
$a_n/\Theta_{1,n}=d_n$, proving
\eqref{eq:original_self_leverage}.

For clarity, the whole-sample/leave-two-out factor can also be obtained
from an exact rank-two identity for the fixed-$n$ deletion convention.
All resolvents in this identity are evaluated at $\lambda_{\rho,n}$. Put
$t_i=\bm{z}_i^\T \mathbf{Q}^{(-ij)}\bm{z}_i/p$, $t_j=\bm{z}_j^\T \mathbf{Q}^{(-ij)}\bm{z}_j/p$,
and $h=\bm{z}_i^\T \mathbf{Q}^{(-ij)}\bm{z}_j/p$. Set
$\mathbf{T}_{ij}=\left(\begin{smallmatrix}t_i&h\\h&t_j\end{smallmatrix}\right)$.
For $\mathbf{Z}_{ij}=(\bm{z}_i,\bm{z}_j)$, multiplication verifies
\[
 \mathbf{Q}\mathbf{Z}_{ij}=\mathbf{Q}^{(-ij)}\mathbf{Z}_{ij}(\mathbf{I}_2+y_n\mathbf{T}_{ij})^{-1},\quad
 p^{-1}\mathbf{Z}_{ij}^\T \mathbf{Q}\mathbf{Z}_{ij}=\mathbf{T}_{ij}(\mathbf{I}_2+y_n\mathbf{T}_{ij})^{-1}.
\]
With $D=(1+y_nt_i)(1+y_nt_j)-y_n^2h^2$,
\[
 \mathbf{T}_{ij}(\mathbf{I}_2+y_n\mathbf{T}_{ij})^{-1}
 =\frac1D
 \begin{pmatrix}
 t_i(1+y_nt_j)-y_nh^2 & h\\
 h&t_j(1+y_nt_i)-y_nh^2
 \end{pmatrix}.
\]
Dividing its off-diagonal entry by the square root of its diagonal
product gives
\begin{equation*}
 \frac{h}
 {\sqrt{\{t_i(1+y_nt_j)-y_nh^2\}
              \{t_j(1+y_nt_i)-y_nh^2\}}}.
\end{equation*}
The held-out estimates give, uniformly in $(i,j)$,
\[
 1+y_nt_i=d_{\rho,n}^{-1}+O_p(\varepsilon_n),\quad
 t_i\ge c,\quad
 y_nh^2/t_i=O_p(L_n/p)=O_p(\varepsilon_n).
\]
Let $G_{\mathrm{norm},ij}=\bm{z}_i^\T \mathbf{Q}\bm{z}_j/
\sqrt{(\bm{z}_i^\T \mathbf{Q}\bm{z}_i)(\bm{z}_j^\T \mathbf{Q}\bm{z}_j)}$.
The rank-two formula gives
$G_{\mathrm{norm},ij}=d_{\rho,n}K_{\rho,ij}(1+\delta_{ij})$ with
$\max_{i\ne j}|\delta_{ij}|=O_p(\varepsilon_n)$.
The diagonal normalization and \eqref{eq:original_full_hat} give
\[
 \begin{gathered}
\|\mathbf{G}_{\mathrm{norm}}\|=O_p(1),\\
 \|\mathbf{G}_{\mathrm{norm}}-\mathbf{V}_\rho\|=O_p(\varepsilon_n),\\
 \sum_{i\ne j}K_{\rho,ij}^2\le C\sum_{i,j}G_{\mathrm{norm},ij}^2=O_p(n).
 \end{gathered}
\]
Consequently
\[
 \begin{split}
 \frac p{n(n-1)}\sum_{i\ne j}
       (G_{\mathrm{norm},ij}-d_{\rho,n}K_{\rho,ij})^2
 &\le \frac{Cp}{n(n-1)}
       \max_{i\ne j}\delta_{ij}^2\sum_{i\ne j}K_{\rho,ij}^2
 =O_p(\varepsilon_n^2),\\
 \frac p{n(n-1)}\sum_{i\ne j}
       ((\mathbf{W}_\rho)_{ij}-d_{\rho,n}K_{\rho,ij})^2
 &\le\frac{2p}{n-1}\|\mathbf{W}_\rho-\mathbf{G}_{\mathrm{norm}}\|^2\\
 &\quad+\frac{2p}{n(n-1)}\sum_{i\ne j}
         (G_{\mathrm{norm},ij}-d_{\rho,n}K_{\rho,ij})^2
 \\
 &=O_p(\varepsilon_n^2).
 \end{split}
\]
For the spherical representatives one must also change the
leave-two-out sample matrix.  Uniformly over all deleted sets,
$\mathbf{S}_U^{(-I)}$ lies between $(1-O_p(\varepsilon_n))\mathbf{S}^{(-I)}$ and
$(1+O_p(\varepsilon_n))\mathbf{S}^{(-I)}$.  The weighted resolvent identity
therefore gives
\[
 \max_{|I|\le2}\|
  \mSigma_p^{1/2}(\mathbf{Q}_U^{(-I)}-\mathbf{Q}^{(-I)})\mSigma_p^{1/2}
                      \|_{\rm op}=O_p(\varepsilon_n).
\]
For held-out indices, write $\bm{z}_{U,i}=r_i\bm{z}_i$. The normalized
bilinear perturbation is bounded by
\[
 \begin{gathered}
\begin{aligned}
 &\frac{|\bm{z}_{U,i}^\T \mathbf{Q}_U^{(-I)}\bm{z}_{U,j}
                   -\bm{z}_i^\T \mathbf{Q}^{(-I)}\bm{z}_j|}{p}\\
 &\quad\le |r_ir_j-1|\frac{|\bm{z}_i^\T \mathbf{Q}^{(-I)}\bm{z}_j|}{p}\\
 &\qquad+|r_ir_j|\frac{\|\bm{G}_i\|\|\bm{G}_j\|}{p}
   \|\mSigma_p^{1/2}(\mathbf{Q}_U^{(-I)}-\mathbf{Q}^{(-I)})\mSigma_p^{1/2}\|\\
 &\quad=O_p(\sqrt{L_n/n})
 \end{aligned}
 \end{gathered}
\]
uniformly, also when $i=j$. Because $p/n$ is bounded above and below,
this is the rate in \eqref{eq:original_deleted_quadratics}.
The rank-two identity and the same squared-sum bounds prove the
spherical pair-kernel assertion.
\end{proof}

For $k,l$ in the fixed grid, write
\[
 \begin{gathered}
 \lambda_j=\lambda_{\rho_j,n},\quad
 \mathfrak a_{j,n}=a_n(\lambda_j),\quad d_{j,n}=d_n(\lambda_j),\\
 \mathbf{Q}_j=\mathbf{Q}(\lambda_j),\quad
 \mathbf{W}_j=\mathbf{W}_{\rho_j},\quad \mathbf{V}_j=\mathbf{V}_{\rho_j}\qquad(j\in\{k,l\}).
 \end{gathered}
\]
Define
\[
 B_{kl,n}=
 \begin{cases}
 \displaystyle
 \frac{\lambda_l a_n(\lambda_l)-\lambda_k a_n(\lambda_k)}
      {\lambda_l-\lambda_k},&\lambda_k\ne\lambda_l,\\[6pt]
 a_n(\lambda_k)+\lambda_k a_n'(\lambda_k),
       &\lambda_k=\lambda_l .
 \end{cases}
\]

\begin{corollary}
\label{cor:original_omega}

\begin{equation*}
 \begin{aligned}
 \frac{p}{n(n-1)}\sum_{i\ne j}
             (\mathbf{W}_{\rho_k})_{ij}(\mathbf{W}_{\rho_l})_{ij}
 &=\Omega_{kl,n}+O_p(\varepsilon_n),\\
 \Omega_{kl,n}
 &=\frac{B_{kl,n}}{\mathfrak a_{k,n}\mathfrak a_{l,n}}-y_n\\
 &=\frac{d_{k,n}d_{l,n}\Theta_{12,n}(\lambda_k,\lambda_l)}
        {\Theta_{1,n}(\lambda_k)\Theta_{1,n}(\lambda_l)}.
 \end{aligned}
\end{equation*}
In particular
\begin{equation}
 \begin{gathered}
 \Omega_{kk,n}=d_{k,n}^2\frac{\Theta_{2,n}(\lambda_k)}
                              {\Theta_{1,n}(\lambda_k)^2}
 =\omega_{\rho_k,n},\\
 \frac{\Omega_{kl,n}}{\sqrt{\Omega_{kk,n}\Omega_{ll,n}}}
 =\frac{\Theta_{12,n}(\lambda_k,\lambda_l)}
        {\sqrt{\Theta_{2,n}(\lambda_k)\Theta_{2,n}(\lambda_l)}}.
 \end{gathered}
 \label{eq:original_R}
\end{equation}

\end{corollary}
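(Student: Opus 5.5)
The plan is to reduce everything to Gram matrices built from the sample resolvent, and then evaluate the resulting traces with the deterministic equivalents already established.

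\textbf{Step 1: replace $\mathbf W_j$ by $\mathbf V_j$.} By Theorem~\ref{thm:original_spike_inverse}, $\|\mathbf W_j-\mathbf V_j\|_{\rm op}=O_p(\varepsilon_n)$ and both operator norms are $O_p(1)$. Write $\mathbf W_k\circ\mathbf W_l-\mathbf V_k\circ\mathbf V_l$ as a sum of two Hadamard products, each containing one difference factor. The off-diagonal sum $\frac{p}{n(n-1)}\sum_{i\ne j}E_{ij}F_{ij}$ is controlled by Cauchy--Schwarz, using the squared-sum bound in \eqref{eq:original_gram_error} together with $\frac{p}{n(n-1)}\sum_{i\ne j}(\mathbf V_j)_{ij}^2\le \frac{p}{n(n-1)}\|\mathbf V_j\|_F^2=O_p(1)$. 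The change is therefore $O_p(\varepsilon_n)$. It then suffices to compute
\[
\frac{p}{n(n-1)}\sum_{i\ne j}(\mathbf V_k)_{ij}(\mathbf V_l)_{ij}
=\frac{p}{n(n-1)}\Big\{\tr(\mathbf V_k\mathbf V_l)-\sum_i(\mathbf V_k)_{ii}(\mathbf V_l)_{ii}\Big\}.
\]

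\textbf{Step 2: the two trace pieces.} For the diagonal, \eqref{eq:original_full_hat} gives $(\mathbf V_j)_{ii}=\mathbf H(\lambda_j)_{ii}/(y_n\mathfrak a_{j,n})=1+O_p(\varepsilon_n)$ uniformly in $i$. Hence the diagonal contribution is $\frac{p}{n-1}\{1+O_p(\varepsilon_n)\}=y_n+O_p(\varepsilon_n)$, which produces the $-y_n$ term.

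For the full trace, $\tr(\mathbf V_k\mathbf V_l)=\tr\{\mathbf H(\lambda_k)\mathbf H(\lambda_l)\}/(p^2\mathfrak a_{k,n}\mathfrak a_{l,n})$. Using $\mathbf Z\mathbf Z^\T/n=\mathbf S$, we get $\tr(\mathbf H_k\mathbf H_l)=\tr(\mathbf S\mathbf Q_k\mathbf S\mathbf Q_l)$. The resolvent identity $\mathbf S\mathbf Q_j=\mathbf I-\lambda_j\mathbf Q_j$, combined with $\mathbf Q_k\mathbf Q_l=(\mathbf Q_k-\mathbf Q_l)/(\lambda_l-\lambda_k)$, reduces this exactly to a combination of $\tr\mathbf Q_k$ and $\tr\mathbf Q_l$. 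When $\lambda_k\ne\lambda_l$,
\[
\frac1p\tr(\mathbf S\mathbf Q_k\mathbf S\mathbf Q_l)=\frac{\lambda_l a_S(\lambda_l)-\lambda_k a_S(\lambda_k)}{\lambda_l-\lambda_k}.
\]
When $\lambda_k=\lambda_l$ the right side is replaced by its limit $a_S+\lambda a_S'$. Multiplying by the prefactor $\frac{p}{n(n-1)}\cdot\frac{1}{p^2\mathfrak a_k\mathfrak a_l}\cdot p=\frac{1}{\mathfrak a_k\mathfrak a_l}\cdot\frac{p}{n-1}\cdot\frac1{n}$ needs care, because the normalization $\mathbf V=\mathbf Z^\T\mathbf Q\mathbf Z/(p\,a_n)$ fixes the scale. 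The factors of $p$ and $n$ should combine so that the result is $B_{kl,n}$-type over $\mathfrak a_k\mathfrak a_l$, up to the fixed $p/n$ ratio. I expect the divided difference in $a_S$ to need care near the diagonal. Lemma~\ref{lem:original_sample_traces} controls both $a_S-a_n$ and $a_S'-a_n'$ uniformly, so the mean value theorem bounds the divided difference of the error by $\sup|a_S'-a_n'|+\sup|a_S-a_n|$ without loss as $\lambda_k\to\lambda_l$. This yields $B_{kl,n}+O_p(\varepsilon_n)$ uniformly.

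\textbf{Step 3: the deterministic identity (main obstacle).} It remains to show $B_{kl,n}/(\mathfrak a_k\mathfrak a_l)-y_n=d_kd_l\Theta_{12,n}/(\Theta_{1,k}\Theta_{1,l})$. Substitute $\Theta_{1,n}(\lambda)=a_n/d_n$, so the right side becomes $d_k^2d_l^2\Theta_{12}/(\mathfrak a_k\mathfrak a_l)$. The identity then reads $B_{kl}-y_n\mathfrak a_k\mathfrak a_l=d_k^2d_l^2\mathcal J_{12}/(1-y_nd_kd_l\mathcal J_{12})$.

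I would prove this with the fixed-point equations. Write $\lambda a_n(\lambda)=\int t\lambda d/(td+\lambda)\,dH_{p,b}$, and subtract the two equations $1-d_j=y_n\int td_j/(td_j+\lambda_j)\,dH_{p,b}$. Expanding $\frac{\lambda_l d_l}{td_l+\lambda_l}-\frac{\lambda_k d_k}{td_k+\lambda_k}$ over the common denominator leaves a numerator $t d_kd_l(\lambda_l-\lambda_k)$, plus terms in $(d_l-d_k)$. The $(d_l-d_k)$ terms are in turn expressed through $\mathcal J_{12}$ and $(\lambda_l-\lambda_k)$ by subtracting the fixed-point equations. Solving the resulting linear relation gives the factor $(1-y_nd_kd_l\mathcal J_{12})^{-1}$. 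The diagonal case follows by continuity, or directly from $d_n'$ in Lemma~\ref{lem:original_cm_scalar}.

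Finally, \eqref{eq:original_R} is immediate. Set $k=l$ and use $\Theta_{12,n}(\lambda,\lambda)=\Theta_{2,n}(\lambda)$ together with the definition \eqref{eq:hr_variance_factor}; the $d$ and $\Theta_1$ factors then cancel in the correlation ratio.
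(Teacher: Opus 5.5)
Your Steps 1--2 follow the paper's route: replace $\mathbf W$ by $\mathbf V$, reduce $p^{-1}\tr(\mathbf S\mathbf Q_k\mathbf S\mathbf Q_l)$ to a divided difference of $\lambda a_S(\lambda)$, and control the error uniformly through $a_S-a_n$ and $a_S'-a_n'$. Your off-diagonal Hadamard/Cauchy--Schwarz replacement, together with the separate estimate $(\mathbf V_j)_{ii}=1+O_p(\varepsilon_n)$, is a valid alternative to the paper's use of $\operatorname{diag}\mathbf W_\rho=\bm 1$. The bookkeeping, however, is wrong as written and must be fixed. Since $\mathbf H=\mathbf Z^\T\mathbf Q\mathbf Z/n$, one has $\mathbf V_j=\mathbf H(\lambda_j)/(y_n\mathfrak a_{j,n})$. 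Hence $\tr(\mathbf V_k\mathbf V_l)=n^2\tr(\mathbf S\mathbf Q_k\mathbf S\mathbf Q_l)/(p^2\mathfrak a_{k,n}\mathfrak a_{l,n})$, not a division by $p^2$ alone. It follows that $\frac{p}{n(n-1)}\tr(\mathbf V_k\mathbf V_l)=\frac{n}{n-1}\,p^{-1}\tr(\mathbf S\mathbf Q_k\mathbf S\mathbf Q_l)/(\mathfrak a_{k,n}\mathfrak a_{l,n})$ exactly. There is no leftover $p/n$ factor, and any such factor would contradict the claimed limit, so your hedge must be removed rather than left open.

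The genuine gap is Step 3, which does not go through as you describe it. Your expansion gives $\lambda_l\mathfrak a_{l,n}-\lambda_k\mathfrak a_{k,n}=d_{k,n}d_{l,n}(\lambda_l-\lambda_k)\mathcal J_{12,n}+\lambda_k\lambda_l(d_{l,n}-d_{k,n})L_{kl}$, where $L_{kl}=\int t\,dH_{p,b}(t)/\{(td_{k,n}+\lambda_k)(td_{l,n}+\lambda_l)\}$. Subtracting the fixed-point equations gives $d_{l,n}-d_{k,n}=y_n(d_{k,n}\lambda_l-d_{l,n}\lambda_k)L_{kl}$. This again involves $L_{kl}$, not $\mathcal J_{12,n}$; it is merely $d_l-d_k=y_n(\mathfrak a_k-\mathfrak a_l)$ rewritten. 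So you generate $L_{kl}^2$ terms, and your sketch never explains how the product of two integrals $y_n\mathfrak a_{k,n}\mathfrak a_{l,n}$ is absorbed.

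Two ideas are missing, and they are what the paper uses.
\begin{itemize}
\item[(i)] Use $y_n\mathfrak a_{j,n}=1-d_{j,n}$ to rewrite the whole numerator as $d_{k,n}\lambda_l\mathfrak a_{l,n}-d_{l,n}\lambda_k\mathfrak a_{k,n}$. For this combination the $L$-terms cancel, and it equals $d_{k,n}d_{l,n}(d_{k,n}\lambda_l-d_{l,n}\lambda_k)\mathcal J_{12,n}$.
\item[(ii)] Reparametrize by $u_j=\lambda_j/d_{j,n}$. Then $\mathfrak a_{j,n}=\int t/(t+u_j)\,dH_{p,b}$ and $\lambda_j=u_j-y_nu_j\mathfrak a_{j,n}$. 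With $K_{kl}=d_{k,n}d_{l,n}\mathcal J_{12,n}$ this yields
\[
 d_{k,n}\lambda_l-d_{l,n}\lambda_k=d_{k,n}d_{l,n}(u_l-u_k),\qquad
 \lambda_l-\lambda_k=(u_l-u_k)(1-y_nK_{kl}).
\]
\end{itemize}
Dividing gives $B_{kl,n}-y_n\mathfrak a_{k,n}\mathfrak a_{l,n}=d_{k,n}^2d_{l,n}^2\mathcal J_{12,n}/(1-y_nd_{k,n}d_{l,n}\mathcal J_{12,n})$. The denominator is positive because Cauchy--Schwarz gives $1-y_nK_{kl}\ge1-\{(1-d_{k,n})(1-d_{l,n})\}^{1/2}>0$.

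For \eqref{eq:original_R} you invoke $\Theta_{12,n}(\lambda,\lambda)=\Theta_{2,n}(\lambda)$. That identity is asserted in Section~\ref{sec:theory} but is actually proved inside this corollary, so you must either prove it or bypass it. Bypassing is easy. From the definition of $\Theta_{2,n}$ and $a_n'(\lambda)=-m_n(-\lambda)+\lambda m_n'(-\lambda)$, one gets directly $B_{kk,n}-y_n\mathfrak a_{k,n}^2=\mathfrak a_{k,n}d_{k,n}+\lambda_ka_n'(\lambda_k)=d_{k,n}^4\Theta_{2,n}(\lambda_k)$. Hence $\Omega_{kk,n}=\omega_{\rho_k,n}$, and the correlation formula then follows from the second expression for $\Omega_{kl,n}$.
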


\begin{proof}
Since $\operatorname{diag}\mathbf{W}_\rho=\bm{1}$,
$\sum_{i\ne j}(\mathbf{W}_k)_{ij}(\mathbf{W}_l)_{ij}=\tr(\mathbf{W}_k\mathbf{W}_l)-n$.
Equation \eqref{eq:original_gram_error} and the bounded operator norms
replace this trace by $\tr(\mathbf{V}_k\mathbf{V}_l)$ with error $O_p(n\varepsilon_n)$.
The latter trace is
$n^2\tr\{\mathbf{S}\mathbf{Q}(\lambda_k)\mathbf{S}\mathbf{Q}(\lambda_l)\}/(p^2\mathfrak a_{k,n}\mathfrak a_{l,n})$.
For each eigenvalue $x$,
\[
 \frac{x^2}{(x+\lambda_k)(x+\lambda_l)}
 =\frac{\lambda_l x/(x+\lambda_l)-\lambda_k x/(x+\lambda_k)}
         {\lambda_l-\lambda_k}.
\]
The diagonal value is the derivative of $\lambda x/(x+\lambda)$.
Put $t_s=(1-s)\lambda_k+s\lambda_l$ and
$b_S=p^{-1}\tr\{\mathbf{S}\mathbf{Q}_k\mathbf{S}\mathbf{Q}_l\}$. Then
\[
 \begin{split}
 b_S-B_{kl,n}
 &=\int_0^1
   \{a_S(t_s)-a_n(t_s)+t_s[a_S'(t_s)-a_n'(t_s)]\}\,ds
   =O_p(\varepsilon_n).
 \end{split}
\]
This bound uses the uniform derivative estimate and remains valid
for arbitrarily close or identical ridges.
Since $n/(n-1)=1+O(n^{-1})$,
\[
 \begin{aligned}
 \frac{p}{n(n-1)}\{\tr(\mathbf{V}_k\mathbf{V}_l)-n\}
 &=\frac{n}{n-1}\frac{b_S}{\mathfrak a_{k,n}\mathfrak a_{l,n}}-\frac{p}{n-1}\\
 &=\frac{B_{kl,n}}{\mathfrak a_{k,n}\mathfrak a_{l,n}}-y_n+O_p(\varepsilon_n).
 \end{aligned}
\]

For an explicit check of the remaining algebra, put $u_k=\lambda_k/d_{k,n}$
and
$K_{kl}=\int t^2/\{(t+u_k)(t+u_l)\}\,dH_{p,b}$.
Then $\mathfrak a_{k,n}=\int t/(t+u_k)\,dH_{p,b}$,
$K_{kl}=d_{k,n}d_{l,n}\mathcal J_{12,n}$, and
\[
 u_l\mathfrak a_{l,n}-u_k\mathfrak a_{k,n}
 =\int\left\{\frac{u_lt}{t+u_l}-\frac{u_kt}{t+u_k}\right\}dH_{p,b}
 =(u_l-u_k)K_{kl},
\]
\[
 \lambda_l-\lambda_k
 =u_l(1-y_n\mathfrak a_{l,n})-u_k(1-y_n\mathfrak a_{k,n})
 =(u_l-u_k)(1-y_nK_{kl}).
\]
The numerator of the centered divided difference is
\[
 \begin{split}
 &\lambda_l\mathfrak a_{l,n}-\lambda_k\mathfrak a_{k,n}
       -y_n\mathfrak a_{k,n}\mathfrak a_{l,n}(\lambda_l-\lambda_k)\\
 &\qquad=\lambda_l\mathfrak a_{l,n}d_{k,n}-\lambda_k\mathfrak a_{k,n}d_{l,n}\\
 &\qquad=d_{k,n}d_{l,n}(u_l\mathfrak a_{l,n}-u_k\mathfrak a_{k,n})
       \\
 &\qquad=d_{k,n}d_{l,n}(u_l-u_k)K_{kl}.
 \end{split}
\]
Thus
\[
 B_{kl,n}-y_n\mathfrak a_{k,n}\mathfrak a_{l,n}
 =\frac{d_{k,n}d_{l,n}K_{kl}}{1-y_nK_{kl}}
 =d_{k,n}^2d_{l,n}^2\Theta_{12,n}.
\]
For coincident ridges, put $L_*=\int t/(t+u_k)^2\,dH_{p,b}$.
The change of variables $u=\lambda/d_n(\lambda)$ gives
\[
 \begin{gathered}
 \left.\frac{d}{du}\int\frac{t}{t+u}\,dH_{p,b}(t)\right|_{u=u_k}=-L_*,\\
 \mathfrak a_{k,n}-u_kL_*=K_{kk},\quad
 \left.\frac{d\lambda}{du}\right|_{u=u_k}=1-y_nK_{kk},\\
 \mathfrak a_{k,n}d_{k,n}+\lambda_k a_n'(\lambda_k)
 =\frac{d_{k,n}\{\mathfrak a_{k,n}(1-y_nK_{kk})-u_kL_*\}}{1-y_nK_{kk}}\\
 =\frac{d_{k,n}^2K_{kk}}{1-y_nK_{kk}}
 =d_{k,n}^4\Theta_{2,n}(\lambda_k).
 \end{gathered}
\]
Using \eqref{eq:theta_diagonal} and
$a_n'(\lambda)=-m_n(-\lambda)+\lambda m_n'(-\lambda)$, we obtain
\[
 \begin{aligned}
 \Theta_{2,n}(\lambda_k)
 &=\frac{\mathfrak a_{k,n}d_{k,n}+\lambda_k a_n'(\lambda_k)}{d_{k,n}^4}
 =\frac{K_{kk}}{d_{k,n}^2(1-y_nK_{kk})}\\
 &=\frac{\mathcal J_{12,n}(\lambda_k,\lambda_k)}
 {1-y_nd_{k,n}^2\mathcal J_{12,n}(\lambda_k,\lambda_k)}
 =\Theta_{12,n}(\lambda_k,\lambda_k).
 \end{aligned}
\]
This establishes \eqref{eq:original_R}, including the diagonal identity
with the derivative convention used in Section~\ref{sec:theory}.
The denominators are separated from zero: with
$v_k(t)=td_{k,n}/(td_{k,n}+\lambda_k)$,
\[
 y_nK_{kl}=y_n\int v_kv_l\,dH_{p,b}
 \le\{y_n \mathfrak a_{k,n}\,y_n \mathfrak a_{l,n}\}^{1/2}
 =\{(1-d_{k,n})(1-d_{l,n})\}^{1/2}<1.
\]
The lower bounds on $d_{k,n}$ in Lemma~\ref{lem:original_cm_scalar}, together with $H\ne\delta_0$,
also keep the diagonal variances positive.

The entire finite-grid matrix is positive semidefinite, not merely
positive on the diagonal. To see this directly, set
\[
 \mathbf K_v=(K_{kl})_{k,l=1}^K,\qquad
 \mathbf D_\Omega=\operatorname{diag}(d_{k,n}/\mathfrak a_{k,n}:1\le k\le K).
\]
For each integer $j\ge1$, let $\mathbf K_v^{\circ j}$ denote the matrix of
entrywise $j$th powers. For every $\bm c=(c_1,\ldots,c_K)^\T\in\mathbb R^K$,
\[
 \bm c^\T\mathbf K_v^{\circ j}\bm c
 =\int\!\cdots\!\int
   \left\{\sum_{k=1}^K c_k\prod_{s=1}^j v_k(t_s)\right\}^2
   dH_{p,b}(t_1)\cdots dH_{p,b}(t_j)\ge0.
\]
The bound $y_nK_{kl}\le1-c$ makes the following entrywise series convergent;
since $K$ is fixed it is also convergent in matrix norm:
\[
 (\Omega_{kl,n})_{k,l=1}^K
 =\mathbf D_\Omega
   \left\{\sum_{j=1}^{\infty}y_n^{j-1}\mathbf K_v^{\circ j}\right\}
   \mathbf D_\Omega\succeq0.
\]
Division of entry $(k,l)$ by the positive number
$\sqrt{\Omega_{kk,n}\Omega_{ll,n}}$ is a diagonal congruence.
Consequently $(R_{kl,n})$ is a correlation matrix for every sufficiently
large $n$, and every limit is positive semidefinite as well. This argument
allows coincident ridges and singular finite-grid Gaussian limits.
\end{proof}

Write $\bm{z}_i=\epsilon_i\widetilde {\bm{z}}_i$ with independent Rademacher
signs, independent of the unsigned observations. Put
$\widetilde {\mathbf{Z}}=(\widetilde {\bm{z}}_1,\ldots,\widetilde {\bm{z}}_n)$. All matrices used
above depend on outer products, so an off-diagonal Gram error has
the form $E_{ij}=\epsilon_i\epsilon_j\widetilde E_{ij}$.
For unordered pairs,
\[
 \E(\epsilon_i\epsilon_j)=0,\qquad
 \E(\epsilon_i\epsilon_j\epsilon_k\epsilon_l)
 =\mathbf1(\{i,j\}=\{k,l\})\quad(i<j,\ k<l).
\]
For either approximation \eqref{eq:original_gram_error} or
\eqref{eq:original_pair_kernel}, let
$Y_n=(\sqrt p/n)\sum_{i<j}E_{ij}$. Then
\[
 \E(Y_n\mid\widetilde {\mathbf{Z}})=0,\qquad
 V_n:=\E(Y_n^2\mid\widetilde {\mathbf{Z}})
 =\frac p{n^2}\sum_{i<j}\widetilde E_{ij}^2
 =O_p(\varepsilon_n^2).
\]
For $M,B>0$, conditional Chebyshev gives
\[
 \Pr\{|Y_n|>M\varepsilon_n\}
 \le\Pr\{V_n>B\varepsilon_n^2\}+\frac{B}{M^2},
 \qquad Y_n=O_p(\varepsilon_n).
\]
Thus signed-pair replacement retains the explicit rate
$\varepsilon_n$. Under the radial assumption's $\alpha_R>8$,
\[
 \varepsilon_n=O\!\left(p^{-1/2}\sqrt{\log(p+2)}+p^{-1}\right)
              =O(p^{-1/2+1/\alpha_R}).
\]

Put $\delta_{\mathrm{det},n}=|y_n-y|+\|H_{p,b}-H\|_\infty$.
The bounded support of the bulk also bounds the support of $H$.
Set $\bm{\vartheta}_\rho=\mathcal P_\rho(y,H)$. Define
$d(\lambda),a(\lambda),\Theta_j(\lambda)$ and
$\Theta_{12}(\lambda,\mu)$ by the preceding formulas with $(y,H)$ in place of
$(y_n,H_{p,b})$, and write
$\Theta_{j,\rho,n}=\Theta_{j,n}(\lambda_{\rho,n})$ and
$\Theta_{j,\rho}=\Theta_j(\lambda_\rho)$ for $j=1,2$.
Write $R_{\HR,kl}=(\mathbf{R}_{\HR})_{kl}$.

\begin{corollary}
\label{cor:deterministic_continuity}
The CM constants and finite-grid correlations have limits, and
\[
 \max_k\|\bm{\vartheta}_{\rho_k,n}-\bm{\vartheta}_{\rho_k}\|
 +\max_{k,l}|R_{kl,n}-R_{\HR,kl}|
 \le C\delta_{\mathrm{det},n}.
\]
\end{corollary}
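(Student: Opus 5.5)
The plan is a deterministic perturbation argument built on the implicit function theorem for the CM system \eqref{eq:cm_system}. The map $(y,H)\mapsto(y,H)$-dependent functionals enters $\bm{\mathcal F}_\rho$ only through $y$ and the integrals $J(d,\lambda;H)$, $\partial_dJ$ and $\partial_\lambda J$. The first step is therefore a Lipschitz bound for these integrals in $H$ with respect to $\|\cdot\|_\infty$. The integrands $t\mapsto td/(td+\lambda)$ and their $(d,\lambda)$-derivatives are smooth and monotone or of bounded variation on $[0,C_H]$ whenever $\lambda\ge\underline\lambda/2$ and $d\ge\underline d/2$. The supports of $H_{p,b}$ and $H$ lie in $[0,C_H]$. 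Integration by parts therefore gives
\[
 \Bigl|\int f\,dH_{p,b}-\int f\,dH\Bigr|\le\|H_{p,b}-H\|_\infty\int_0^{C_H}|f'(t)|\,dt .
\]
For $f(t)=td/(td+\lambda)$ the total variation is at most one. For the derivative integrands it is bounded by constants depending on $\underline\lambda,\underline d,C_H$. Together with $|y_n-y|$, this gives $\sup\|\bm{\mathcal F}_\rho(\bm\vartheta;y_n,H_{p,b})-\bm{\mathcal F}_\rho(\bm\vartheta;y,H)\|\le C\delta_{\mathrm{det},n}$ uniformly over the neighbourhood of radius $\delta_{\rm CM}$ from Lemma~\ref{lem:original_cm_scalar}. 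The same bound holds for the Jacobians.

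Next I show that $\bm\vartheta_\rho=\mathcal P_\rho(y,H)$ exists and is close to $\bm\vartheta_{\rho,n}$. The cleanest route is to rerun the proof of Lemma~\ref{lem:original_cm_scalar} verbatim with $(y,H)$ in place of $(y_n,H_{p,b})$. That proof only used $H([t_*,C_H])\ge2\delta$, the support bound and $y\in[y_-,y_+]$, so the limit system has a unique root with the same bounds and a Jacobian singular value at least $j_*$. I then apply a quantitative inverse function argument. Lemma~\ref{lem:original_cm_scalar} gives $\sigma_{\min}\ge j_*/2$ on the ball of radius $\delta_{\rm CM}$ around $\bm\vartheta_{\rho,n}$. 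On that ball the map $\bm\vartheta\mapsto\bm{\mathcal F}_\rho(\bm\vartheta;y,H)$ differs from the finite-$n$ map by at most $C\delta_{\mathrm{det},n}$, with Jacobians differing by at most $C\delta_{\mathrm{det},n}$. Hence for large $n$ the Newton/contraction map $\bm\vartheta\mapsto\bm\vartheta-\{\partial\bm{\mathcal F}_\rho(\bm\vartheta_{\rho,n};y_n,H_{p,b})\}^{-1}\bm{\mathcal F}_\rho(\bm\vartheta;y,H)$ contracts a ball of radius $C'\delta_{\mathrm{det},n}$. Its fixed point is the unique limit root, giving $\|\bm\vartheta_{\rho,n}-\bm\vartheta_\rho\|\le C\delta_{\mathrm{det},n}$. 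Uniformity over $k$ is automatic because the grid is finite and the constants are grid-uniform. Note that $\delta_{\mathrm{det},n}\to0$ by Assumption~\ref{ass:dimshape}(ii)--(iii).

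The correlations come last. By Corollary~\ref{cor:original_omega}, $R_{kl,n}$ is a smooth function of $\Theta_{12,n}(\lambda_k,\lambda_l)$, $\Theta_{2,n}(\lambda_k)=\Theta_{12,n}(\lambda_k,\lambda_k)$ and $\Theta_{2,n}(\lambda_l)$. Here
\[
\Theta_{12,n}=\frac{\mathcal J_{12,n}}{1-y_nd_kd_l\mathcal J_{12,n}} .
\]
I write $\mathcal J_{12}$ with $(y,H)$ and limiting $(\lambda,d)$, and split the difference into a part from changing the arguments $(\lambda_k,d_k,\lambda_l,d_l)$ and a part from changing the measure. The argument part is Lipschitz with constant $C$, since the integrand $t^2/\{(td_k+\lambda_k)(td_l+\lambda_l)\}$ has bounded derivatives on the compact parameter region. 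The measure part is bounded by $C\|H_{p,b}-H\|_\infty$ through the same total-variation bound. The denominators satisfy $1-y_nK_{kl}\ge c$, as shown in Corollary~\ref{cor:original_omega}. The same Cauchy--Schwarz argument with $d\ge\underline d$ gives the analogous bound in the limit, and with both bounds in hand the quotient map is Lipschitz.

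The main obstacle I expect is keeping $\Theta_{2,n}$ bounded away from zero uniformly, so that $R_{kl,n}=\Theta_{12}/\sqrt{\Theta_2\Theta_2}$ is Lipschitz. The plan is to use $\mathcal J_{12,n}(\lambda,\lambda)\ge\int_{[t_*,C_H]}t^2(tC+\overline\lambda)^{-2}\,dH_{p,b}\ge c\,\delta$, where the inequality uses the mass bound $H_{p,b}([t_*,C_H])\ge\delta$ from the proof of Lemma~\ref{lem:original_cm_scalar}. With that bound, the Lipschitz estimates combine to $\max_{k,l}|R_{kl,n}-R_{\HR,kl}|\le C\delta_{\mathrm{det},n}$. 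In particular the limit $\mR_{\HR}$ exists and is given by the limiting formulas.
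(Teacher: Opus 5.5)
Your proposal is correct, but it takes a different route from the paper's proof.

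\textbf{Your route.} You treat the CM system as a three-dimensional nondegenerate equation and run a Newton-type contraction around $\bm\vartheta_{\rho,n}$. This uses the bound $\sigma_{\min}\ge j_*/2$ and the Jacobian Lipschitz constant from Lemma~\ref{lem:original_cm_scalar}. Existence and uniqueness for $(y,H)$ come from rerunning that lemma. The fixed point identifies with $\bm\vartheta_\rho$ by uniqueness: it is automatically a positive root with $0<d<1$, since the second equation forces $d=1-yJ<1$. For the correlations you work only with the closed-form integral $\mathcal J_{12}$. You use $\Theta_2(\lambda)=\Theta_{12}(\lambda,\lambda)$, the Cauchy--Schwarz denominator bound, and an explicit lower bound on $\Theta_2$ from the bulk mass on $[t_*,C_H]$.

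\textbf{The paper's route.} It never uses the three-dimensional Jacobian here. It reduces everything to two nested scalar monotone equations:
\begin{itemize}
\item First, $d_n(\lambda)$ is handled via $\partial_d\mathfrak g_n\ge1$. This gives $\sup_{\mathcal L}|d_n-d|+\sup_{\mathcal L}|a_n-a|\le C\delta_{\mathrm{det},n}$ uniformly over the ridge interval.
\item Second, $\lambda_{\rho,n}$ is the root of $f_{\rho,n}$ with $f_{\rho,n}'\le-c$. The limit root comes from compactness plus uniform convergence, and the rate from the one-dimensional mean value theorem.
\end{itemize}
It then controls $\Theta_{2,n}$ through $a_n/d_n^3-\lambda d_n'/(y_nd_n^4)$. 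This requires differentiating the scalar equation to bound $d_n'-d'$.

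\textbf{Comparison.} The paper's argument is more elementary, needing no fixed-point step. As a by-product it gives uniform-in-$\lambda$ control of $d_n$, $a_n$ and $d_n'$. Yours is more generic and reuses constants already established in Lemma~\ref{lem:original_cm_scalar}. By staying with $\mathcal J_{12}$ it avoids derivative estimates for $d_n'$ altogether. It also makes explicit the lower bound $\Theta_{2,\rho_k,n}\ge c$, which the paper only asserts.
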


\begin{proof}
For a continuously differentiable $f$ on $[0,C_H]$, integration by parts gives
\[
 \int f(t)\,d(H_{p,b}-H)(t)
 =-\int_0^{C_H}f'(t)\{H_{p,b}(t)-H(t)\}\,dt,
\]
\[
 \left|\int f\,dH_{p,b}-\int f\,dH\right|
 \le\|H_{p,b}-H\|_\infty\int_0^{C_H}|f'(t)|\,dt.
\]
All rational integrands below and their first two parameter derivatives
have bounded $t$ variation on the compact positive ridge interval.
Write
\[
 \mathfrak g_n(d,\lambda)=d-1+y_n\int\frac{td}{td+\lambda}\,dH_{p,b}(t),
 \quad
 \mathfrak g(d,\lambda)=d-1+y\int\frac{td}{td+\lambda}\,dH(t).
\]
For $d\ge0$,
\[
 \partial_d\mathfrak g_n(d,\lambda)
 =1+y_n\int\frac{t\lambda}{(td+\lambda)^2}\,dH_{p,b}(t)\ge1,
 \qquad
 \sup_{d\in[0,1],\lambda\in\mathcal L}|\mathfrak g_n-\mathfrak g|\le C\delta_{\mathrm{det},n}.
\]
Their roots lie in $(0,1]$, so the scalar mean-value identity gives
\[
 \sup_{\lambda\in\mathcal L}|d_n(\lambda)-d(\lambda)|
 \le C\delta_{\mathrm{det},n},\qquad
 a_n(\lambda)=\frac{1-d_n(\lambda)}{y_n},
 \quad a(\lambda)=\frac{1-d(\lambda)}y.
\]
Consequently $\sup_{\mathcal L}|a_n-a|\le C\delta_{\mathrm{det},n}$.
For the CM ridge equation put
\[
 f_{\rho,n}(\lambda)
 =a_n(\lambda)-\frac{1-\rho}{\rho}\frac{\lambda}{1+\lambda},
 \quad
 f_\rho(\lambda)
 =a(\lambda)-\frac{1-\rho}{\rho}\frac{\lambda}{1+\lambda}.
\]
The scalar calculation in Lemma~\ref{lem:original_cm_scalar} gives
\[
 f_{\rho,n}'(\lambda)
 \le-\frac{1-\rho}{\rho(1+\lambda)^2}\le-c<0.
\]
The compact interval in Lemma~\ref{lem:original_cm_scalar} provides a
convergent subsequence of roots; uniform
convergence of $f_{\rho,n}$ makes its limit the unique root of $f_\rho$.
The mean-value identity then yields
\[
 |\lambda_{\rho,n}-\lambda_\rho|
 \le c^{-1}\sup_{\mathcal L}|f_{\rho,n}-f_\rho|
 \le C\delta_{\mathrm{det},n}.
\]
Since $c_{\rho,n}=1+\lambda_{\rho,n}$ and
$d_{\rho,n}=d_n(\lambda_{\rho,n})$, the same rate holds for these constants.
Differentiating the exact scalar equation gives
\[
 d_n'(\lambda)
 =\frac{y_n\int td_n(\lambda)/(td_n(\lambda)+\lambda)^2\,dH_{p,b}(t)}
 {1+y_n\int t\lambda/(td_n(\lambda)+\lambda)^2\,dH_{p,b}(t)}.
\]
The integral bound, the positive denominators, and the identities
\[
 \Theta_{1,n}=\frac{a_n}{d_n},\qquad
 \Theta_{2,n}=\frac{a_n}{d_n^3}
                 -\frac{\lambda d_n'}{y_nd_n^4}
\]
therefore imply
\[
 \sup_{\mathcal L}|d_n'-d'|\le C\delta_{\mathrm{det},n},\qquad
 \max_k|\Theta_{j,\rho_k,n}-\Theta_{j,\rho_k}|\le C\delta_{\mathrm{det},n}
 \quad(j=1,2).
\]
The same integral estimate for $\mathcal J_{12,n}$, together with the
mixed-trace denominator bound proved above, gives
\[
 \max_{k,l}|\Theta_{12,n}(\lambda_{\rho_k,n},\lambda_{\rho_l,n})
                  -\Theta_{12}(\lambda_{\rho_k},\lambda_{\rho_l})|
 \le C\delta_{\mathrm{det},n}.
\]
Finally $\Theta_{2,\rho_k,n}\ge c>0$ and the difference of two ratios yields
$\max_{k,l}|R_{kl,n}-R_{\HR,kl}|\le C\delta_{\mathrm{det},n}$.
\end{proof}

\section{HR location and feasible variance}\label{app:hr_location}
\subsection{Weighted coordinates and radial bounds}

Fix $\rho$ in the finite grid and work under $H_0$.  Let $\alpha_R>8$ be the radial exponent in Assumption~\ref{ass:radial}.
Let $\widehat{\mA}_\rho^0$ denote the inverse of the trace-normalized
shrinkage Tyler estimator in \eqref{eq:shrinkage_tyler}, computed from
the centered errors $\ve_i=R_i\mSigma_p^{1/2}\vU_i$ in place of $\vY_i$.
The empirical scale $a$ and the matrices used below are
\[
 \begin{gathered}
 \mathbf{A}=\widehat{\mA}_\rho^0,\quad \bm{e}_i=\vX_i-\vmu_0,\quad
 \bm{u}_i=\mSigma_p^{1/2}\vU_i,\quad \bm{x}_i=\mathbf{A}^{1/2}\bm{e}_i,\\
 b_i=\bm{u}_i^\T \mathbf{A} \bm{u}_i,\quad
 \tau_i=\|\bm{x}_i\|=R_i b_i^{1/2},\quad
 \bm{t}_i=\bm{x}_i/\tau_i,\quad \mathbf{T}=(\bm{t}_1,\ldots,\bm{t}_n),\\
 a=a_{\rho,n}=\frac1n\sum_i\tau_i^{-1},\qquad
 \bm{m}=\frac1n\sum_i \bm{t}_i,\qquad
 \bm{m}_2=\frac1n\sum_i\tau_i^{-2}\bm{t}_i .
 \end{gathered}
\]
The location equation \eqref{eq:hr_location_score} keeps $\mathbf{A}$ fixed.  Its transformed unknown and
objective are
\[
 \begin{gathered}
 \widehat{\bm{\theta}}=\mathbf{A}^{1/2}(\widehat{\vmu}_\rho-\vmu_0),\\
 F(\bm{\theta})=\frac1n\sum_i\|\bm{x}_i-\bm{\theta}\|,\\
 -\nabla F(\bm{\theta})=\frac1n\sum_i\frac{\bm{x}_i-\bm{\theta}}{\|\bm{x}_i-\bm{\theta}\|}.
 \end{gathered}
\]

\begin{lemma}\label{lem:orders}
Under Assumptions~\ref{ass:dimshape}--\ref{ass:grid},
\[
 \begin{gathered}
 \|\mathbf{T}\|_{\op}=O_p(1),\\
 a\asymp_p p^{-1/2},\\
 \|\bm{m}\|=O_p(n^{-1/2}),\\
 \|\bm{m}_2\|=O_p(p^{-1}n^{-1/2}).
 \end{gathered}
\]
For $1\le k\le8$,
\[
 \frac1n\sum_i\tau_i^{-k}=O_p(p^{-k/2}),\qquad
 \max_i\tau_i^{-1}=O_p(p^{-1/2}n^{1/\alpha_R}).
\]
\end{lemma}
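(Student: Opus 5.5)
The plan is to reduce every claim to the scatter approximation of Theorem~\ref{thm:original_spike_inverse} together with the radial lower-tail bound of Assumption~\ref{ass:radial}. The basic observation is scale invariance. Since $\mathbf A$ is computed from the centered errors $\bm e_i=R_i\bm u_i$, and Tyler scatter is invariant to positive column rescaling, $\mathbf A$ coincides with the scatter computed from the angular observations $\bm z_i=\sqrt p\,\bm u_i$. It is therefore independent of the radii $(R_i)$. Moreover $\bm t_i=\mathbf A^{1/2}\bm u_i/b_i^{1/2}$ depends only on the $\vU_i$.

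First, the operator norm of $\mathbf T$. We have
\[
\mathbf T^\T\mathbf T=\bigl(\bm z_i^\T\mathbf A\bm z_j/(\bm z_i^\T\mathbf A\bm z_i\,\bm z_j^\T\mathbf A\bm z_j)^{1/2}\bigr)_{ij}=\mathbf W_\rho .
\]
By \eqref{eq:original_gram_error}, $\|\mathbf W_\rho\|_{\op}=O_p(1)$, which gives $\|\mathbf T\|_{\op}=O_p(1)$.

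Second, the size of $b_i$. By \eqref{eq:original_inverse_bounds}, uniformly in $i$,
\[
b_i=p^{-1}\bm z_i^\T\mathbf A\bm z_i=c_{\rho,n}a_n(\lambda_{\rho,n})+O_p(\varepsilon_n),
\]
and the leading constant is bounded above and away from zero by Lemma~\ref{lem:original_cm_scalar}. Hence $\min_i b_i$ and $\max_i b_i$ are both $\asymp_p 1$.

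Next, the radial averages. By Assumption~\ref{ass:radial}, $R_i=\bar R_p\xi_i$ with $\bar R_p\asymp\sqrt p$. Integrating the lower-tail bound gives, for $k<\alpha_R$,
\[
\E\xi_i^{-k}=\int_0^\infty\Pr\{\xi_i^{-k}>s\}\,ds\le 1+C_Rk/(\alpha_R-k).
\]
For $1\le k\le 8<\alpha_R$ this is finite, so Markov's inequality gives $n^{-1}\sum_i\xi_i^{-k}=O_p(1)$. Combined with $b_i\asymp_p1$ uniformly, this yields $n^{-1}\sum_i\tau_i^{-k}=O_p(p^{-k/2})$.

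For the maximum, a union bound gives $\Pr\{\min_i\xi_i\le tn^{-1/\alpha_R}\}\le C_Rt^{\alpha_R}$, so $\max_i\xi_i^{-1}=O_p(n^{1/\alpha_R})$. This gives the bound on $\max_i\tau_i^{-1}$.

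For $a\asymp_p p^{-1/2}$, the upper bound is the case $k=1$ above. For the lower bound, Jensen gives $\xi_i^{-1}\ge$ a quantity whose mean is at least $1/\E\xi_i=1$. Since $\E\xi_i^{-1}\ge1$ and $\Var(\xi_i^{-1})$ is finite (as $\alpha_R>2$), Chebyshev gives $n^{-1}\sum\xi_i^{-1}\ge 1/2$ with probability tending to one. Using $\max_ib_i=O_p(1)$ then finishes the lower bound.

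The two vector bounds use a sign-symmetrization argument, which I expect to be the main point needing care. Write $\vU_i=\epsilon_i\widetilde{\vU}_i$ with independent Rademacher signs, as in the signed-pair argument already used. The matrix $\mathbf A$, the $b_i$ and the radii are all invariant to these signs, while $\bm t_i=\epsilon_i\widetilde{\bm t}_i$. Conditionally on everything except the signs,
\[
\E\|\bm m\|^2=n^{-2}\sum_i\|\bm t_i\|^2=n^{-1},
\qquad
\E\|\bm m_2\|^2=n^{-2}\sum_i\tau_i^{-4}.
\]
By the $k=4$ average, the second quantity is $O_p(n^{-1}p^{-2})$. Conditional Chebyshev, exactly as in the displayed signed-pair step, then gives $\|\bm m\|=O_p(n^{-1/2})$ and $\|\bm m_2\|=O_p(p^{-1}n^{-1/2})$.

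The only subtlety is to check that $\mathbf A$ is a function of the outer products $\bm u_i\bm u_i^\T$, so that it is unchanged when the signs are flipped. This holds because the fixed-point equation \eqref{eq:shrinkage_tyler} depends on the data only through $\vY_i\vY_i^\T$ and $\vY_i^\T\widehat{\mC}_\rho^{-1}\vY_i$. Finally, the exceptional event on which the solution fails to exist has probability tending to zero by Theorem~\ref{thm:original_spike_inverse}, so it does not affect any of these $O_p$ statements.
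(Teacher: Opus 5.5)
Your proposal is correct and follows the paper's proof step for step. The frame bound comes from $\mathbf T^\T\mathbf T=\mathbf W_\rho$ and Theorem~\ref{thm:original_spike_inverse}; the $b_i$ are uniformly bounded; integrated lower-tail moments, Markov and a union bound give the $\tau_i^{-k}$ averages and maximum; Chebyshev with $\E\xi_i^{-1}\ge1$ gives the lower bound on $a$; and conditional second moments over observation signs give the bounds for $\bm m$ and $\bm m_2$. The only cosmetic difference is that the paper realizes the signs deterministically through the antipodal representative $\bm W_i$ and conditions on $\mathcal F_n=\sigma\{R_i,\vU_i\vU_i^\T\}$, whereas you introduce external Rademacher signs by equality in law; both are valid.
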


\begin{proof}
For $0<k<\alpha_R$, integration of the tail probability gives
\[
 \begin{aligned}
 \E\xi_i^{-k}
 &=\int_0^\infty \Pr\{\xi_i^{-k}>s\}\,ds\\
 &\le 1+C_R\int_1^\infty s^{-\alpha_R/k}\,ds
 =1+\frac{C_Rk}{\alpha_R-k}.
 \end{aligned}
\]
Consequently, for a constant $C_k$ independent of $p,n$,
\[
 \E R_i^{-k}\le C_kp^{-k/2},\qquad
 \Pr\left\{\frac1n\sum_iR_i^{-k}>Lp^{-k/2}\right\}
 \le C_k/L.
\]
The bounds from Theorem~\ref{thm:original_spike_inverse} imply, on
events whose probabilities tend to one,
\[
 c\le\min_i b_i\le\max_i b_i\le C,\qquad
 \|\mathbf{T}\|_{\op}\le C.
\]
Thus
\[
 C^{-k/2}\frac1n\sum_iR_i^{-k}
 \le\frac1n\sum_i\tau_i^{-k}
 \le c^{-k/2}\frac1n\sum_iR_i^{-k}
 =O_p(p^{-k/2}).
\]
For $L\ge1$, the small-ball assumption also gives
\[
 \begin{aligned}
 \Pr\left\{\max_i\xi_i^{-1}>L n^{1/\alpha_R}\right\}
 &\le\sum_{i=1}^n
   \Pr\{\xi_i<L^{-1}n^{-1/\alpha_R}\}\\
 &\le C_RL^{-\alpha_R}.
 \end{aligned}
\]
Since $\bar R_p\asymp\sqrt p$ and $\min_i b_i\ge c$ with
probability tending to one, this proves the stated maximum bound.

For the lower bound on $a$, use
\[
 \E\xi_i=1,\quad
 1\le\E\xi_i^{-1}\le C,\quad
 \E\left(\frac1n\sum_i\xi_i^{-1}-\E\xi_i^{-1}\right)^2
 =\frac{\Var(\xi_i^{-1})}{n}\le\frac Cn .
\]
It follows that
\[
 \frac1n\sum_i\xi_i^{-1}
 =\E\xi_i^{-1}+O_p(n^{-1/2}),\qquad
 a\ge\frac1{\sqrt C\,\bar R_p}
       \left\{\E\xi_i^{-1}+O_p(n^{-1/2})\right\}.
\]
The preceding upper bound with $k=1$ now gives $a\asymp_p p^{-1/2}$.

The cancellations for $\bm{m}$ and $\bm{m}_2$ require the individual observation
signs. Choose the antipodal representative $\bm{W}_i$ of $\vU_i$ whose first
nonzero coordinate is positive, and write $\vU_i=\epsilon_i\bm{W}_i$. Set
\begin{equation}\label{eq:unsigned_sign_decomposition}
 \begin{gathered}
 \mathcal F_n=\sigma\{R_i,\vU_i\vU_i^\T:1\le i\le n\},\\
 \bm{s}_i=\frac{\mathbf{A}^{1/2}\mSigma_p^{1/2}\bm{W}_i}{b_i^{1/2}},
 \qquad \bm{t}_i=\epsilon_i \bm{s}_i,\qquad \|\bm{s}_i\|=1.
 \end{gathered}
\end{equation}
The $\bm{s}_i$ are $\mathcal F_n$-measurable. Conditionally on $\mathcal F_n$,
the $\epsilon_i$ are independent with probability $1/2$ for each sign.
The same $\bm{W}_i$ and $\epsilon_i$ are used at every grid value.

Individual signs cancel from the scatter equation.  Hence $\mathbf{A}$, $\tau_i$,
and $\bm{s}_i$ are measurable with respect to $\mathcal F_n$, and
\[
 \begin{aligned}
 \E(\|\bm{m}\|^2\mid\mathcal F_n)
 &=\frac1{n^2}\sum_{i,j}\bm{s}_i^\T \bm{s}_j\,
                  \E(\epsilon_i\epsilon_j\mid\mathcal F_n)
 =\frac1n,\\
 \E(\|\bm{m}_2\|^2\mid\mathcal F_n)
 &=\frac1{n^2}\sum_i\tau_i^{-4}
 =O_p(n^{-1}p^{-2}).
 \end{aligned}
\]
The corresponding conditional probability bounds are
\[
 \Pr\{\|\bm{m}\|>L n^{-1/2}\mid\mathcal F_n\}\le L^{-2},
\]
\[
 \Pr\{\|\bm{m}_2\|>L p^{-1}n^{-1/2}\mid\mathcal F_n\}
 \le L^{-2}\left(\frac1n\sum_i p^2\tau_i^{-4}\right).
\]
First bounding the parenthesized random average and then increasing
$L$ proves both assertions.
\end{proof}

Put $\mathfrak b_{\rho,n}=c_{\rho,n}a_n(\lambda_{\rho,n})$ and retain the
rate $\varepsilon_n$ defined in the preceding section.

\begin{corollary*}

Uniformly over the finite grid,
\begin{equation}\label{eq:inverse_radius_population}
 a_{\rho,n}
 =\frac{\E(\xi_i^{-1})}{\bar R_p\sqrt{\mathfrak b_{\rho,n}}}
       \{1+O_p(\varepsilon_n+n^{-1/2})\}.
\end{equation}
\end{corollary*}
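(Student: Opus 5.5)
Write $a_{\rho,n}=n^{-1}\sum_i\tau_i^{-1}$ with $\tau_i=R_ib_i^{1/2}$ and $b_i=\bm u_i^\T\mathbf A\bm u_i$. The key observation is that $b_i$ is a normalized quadratic form of the angular observation. Since $\bm u_i=\mSigma_p^{1/2}\vU_i=\bm z_{U,i}/\sqrt p$, with $\bm z_{U,i}$ the spherical representative of Appendix~\ref{app:rmt}, we have $b_i=\bm z_{U,i}^\T\widehat{\mA}_\rho\bm z_{U,i}/p$. The scatter computed from the errors $\ve_i$ coincides with the one computed from the angular columns, because Tyler-type scatter is invariant under positive column rescaling. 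Hence \eqref{eq:original_inverse_bounds}, valid for spherical representatives by Theorem~\ref{thm:original_spike_inverse}, gives
\[
 \max_i|b_i-\mathfrak b_{\rho,n}|=O_p(\varepsilon_n).
\]
By Lemma~\ref{lem:original_cm_scalar}, $\mathfrak b_{\rho,n}=c_{\rho,n}a_n(\lambda_{\rho,n})$ is bounded above and away from zero uniformly over the grid. A first-order expansion of $x\mapsto x^{-1/2}$ therefore yields
\[
 b_i^{-1/2}=\mathfrak b_{\rho,n}^{-1/2}\{1+\delta_i\},\qquad \max_i|\delta_i|=O_p(\varepsilon_n).
\]

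Substituting this expansion gives
\[
 a_{\rho,n}=\mathfrak b_{\rho,n}^{-1/2}\,\frac1n\sum_iR_i^{-1}(1+\delta_i).
\]
The relative perturbation is controlled by
\[
 \Bigl|\frac1n\sum_iR_i^{-1}\delta_i\Bigr|\le\max_i|\delta_i|\,\frac1n\sum_iR_i^{-1}.
\]
Since $n^{-1}\sum_iR_i^{-1}$ is positive, this gives a multiplicative factor $1+O_p(\varepsilon_n)$ and no absolute error. It remains to handle the radial average. Write $n^{-1}\sum_iR_i^{-1}=\bar R_p^{-1}n^{-1}\sum_i\xi_i^{-1}$. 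As shown in the proof of Lemma~\ref{lem:orders}, the $\xi_i^{-1}$ are i.i.d. with $1\le\E\xi_i^{-1}\le C$ and $\Var(\xi_i^{-1})\le\E\xi_i^{-2}\le C$ uniformly in $p$; this uses $\alpha_R>8>2$. Chebyshev's inequality then gives
\[
 n^{-1}\sum_i\xi_i^{-1}=\E(\xi_i^{-1})\{1+O_p(n^{-1/2})\},
\]
where dividing by $\E\xi_i^{-1}\ge1$ keeps the error relative. Multiplying the two relative factors gives \eqref{eq:inverse_radius_population}. Uniformity over the finite grid is automatic, since $K$ is fixed and the radial factor does not depend on $\rho$.

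The main obstacle is the first step: justifying that the uniform bound $\max_i|b_i-\mathfrak b_{\rho,n}|=O_p(\varepsilon_n)$ transfers to the present setting. Three points must be checked. First, the scatter is computed from the $\ve_i$ rather than the $\bm z_i$; this is exactly covered by the scale invariance plus the spherical-representative clause of Theorem~\ref{thm:original_spike_inverse}. Second, $b_i$ uses the full-sample $\mathbf A$ and not a leave-one-out version, so no self-leverage factor $d_n$ enters; that is why the constant is $c_{\rho,n}a_n$ rather than $c_{\rho,n}\Theta_{1,n}$. Third, the maximum over $i$ is uniform. This holds because the bound is derived from the balancing-equation solution $x_i(\widehat b)$, which is controlled uniformly in $i$.
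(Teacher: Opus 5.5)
Your proposal is correct and follows the paper's proof essentially step for step. Both arguments use the uniform bound $\max_i|b_i-\mathfrak b_{\rho,n}|=O_p(\varepsilon_n)$ from \eqref{eq:original_inverse_bounds}, which transfers to the errors $\ve_i$ by column-scale invariance. Both then apply a Lipschitz bound for $x\mapsto x^{-1/2}$ on $\{\min_i b_i\ge c/2\}$, and the Chebyshev estimate $n^{-1}\sum_i\xi_i^{-1}=\E\xi_i^{-1}+O_p(n^{-1/2})$ with $\E\xi_i^{-1}\ge1$. The only cosmetic difference is that you carry the error multiplicatively throughout. The paper instead bounds an absolute error $O_p(p^{-1/2}\varepsilon_n)$ and divides by the leading term, which is of exact order $p^{-1/2}$.
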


\begin{proof}
Equation~\eqref{eq:original_inverse_bounds} gives
\[
 \max_i|b_i-\mathfrak b_{\rho,n}|=O_p(\varepsilon_n),
 \qquad c\le\mathfrak b_{\rho,n}\le C.
\]
On the event $\min_i b_i\ge c/2$,
\[
 \begin{aligned}
 |b_i^{-1/2}-\mathfrak b_{\rho,n}^{-1/2}|
 &=\frac{|b_i-\mathfrak b_{\rho,n}|}
 {\sqrt{b_i\mathfrak b_{\rho,n}}(\sqrt{b_i}+\sqrt{\mathfrak b_{\rho,n}})}
 \le C|b_i-\mathfrak b_{\rho,n}|,\\
 \left|a-\mathfrak b_{\rho,n}^{-1/2}\frac1n\sum_iR_i^{-1}\right|
 &\le C\max_i|b_i-\mathfrak b_{\rho,n}|
                  \frac1n\sum_iR_i^{-1}\\
 &=O_p(p^{-1/2}\varepsilon_n).
 \end{aligned}
\]
Lemma~\ref{lem:orders} proved
\[
 \frac1n\sum_iR_i^{-1}
 =\bar R_p^{-1}\{\E\xi_i^{-1}+O_p(n^{-1/2})\},
 \qquad \E\xi_i^{-1}\ge1 .
\]
Dividing the absolute error by
$\bar R_p^{-1}\mathfrak b_{\rho,n}^{-1/2}\E\xi_i^{-1}\asymp p^{-1/2}$
proves the result.
\end{proof}

Define the derivative matrix and its angular correction by
\[
 \mathbf{D}=\frac1n\sum_i\tau_i^{-1}(\mathbf{I}_p-\bm{t}_i\bm{t}_i^\T)=a\mathbf{I}_p-\mathbf{K}_\rho,\qquad
 \mathbf{K}_\rho=\frac1n\mathbf{T}\operatorname{diag}(\tau_i^{-1})\mathbf{T}^\T .
\]
For brevity, write $d_p=1-p^{-1}$.

\begin{lemma}\label{lem:derivative}
With $\mathbf{E}=\mathbf{D}-d_pa\mathbf{I}_p$,
\[
 \begin{gathered}
 \|\mathbf{E}\|_{\op}=O_p(p^{-3/2}n^{1/\alpha_R}),\\
 \|\mathbf{D}^{-1}\|_{\op}=O_p(p^{1/2}),\\
 \|\mathbf{K}_\rho/a\|_{\op}=O_p(p^{-1}n^{1/\alpha_R}).
 \end{gathered}
\]
\end{lemma}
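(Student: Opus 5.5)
The plan is to use the algebraic identity for $\mathbf{E}$ and then apply the orders in Lemma~\ref{lem:orders} directly. Since $d_p=1-p^{-1}$ and $\mathbf{D}=a\mathbf{I}_p-\mathbf{K}_\rho$, we have exactly
\[
 \mathbf{E}=\mathbf{D}-d_pa\mathbf{I}_p=\frac{a}{p}\mathbf{I}_p-\mathbf{K}_\rho ,
\]
and therefore $\|\mathbf{E}\|_{\op}\le a/p+\|\mathbf{K}_\rho\|_{\op}$. Lemma~\ref{lem:orders} gives $a\asymp_p p^{-1/2}$, so the first term is $O_p(p^{-3/2})$. This is dominated by the claimed rate, because $n^{1/\alpha_R}\ge1$.

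For $\mathbf{K}_\rho$, the matrix is positive semidefinite, and
\[
 \|\mathbf{K}_\rho\|_{\op}\le\frac1n\,\|\mathbf{T}\|_{\op}^2\,\max_i\tau_i^{-1}.
\]
Lemma~\ref{lem:orders} supplies $\|\mathbf{T}\|_{\op}=O_p(1)$ and $\max_i\tau_i^{-1}=O_p(p^{-1/2}n^{1/\alpha_R})$. Under Assumption~\ref{ass:dimshape}(ii) we have $n\asymp p$, so $\|\mathbf{K}_\rho\|_{\op}=O_p(p^{-3/2}n^{1/\alpha_R})$. This proves the bound on $\|\mathbf{E}\|_{\op}$. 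Dividing by $a$ and using the lower bound $a^{-1}=O_p(p^{1/2})$ from $a\asymp_p p^{-1/2}$ gives $\|\mathbf{K}_\rho/a\|_{\op}=O_p(p^{-1}n^{1/\alpha_R})$.

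For the inverse, we have $0\preceq\mathbf{K}_\rho$ and $\lambda_{\min}(\mathbf{D})\ge a-\|\mathbf{K}_\rho\|_{\op}=a\{1-\|\mathbf{K}_\rho/a\|_{\op}\}$. Since $\alpha_R>8$ and $n\asymp p$, we get $p^{-1}n^{1/\alpha_R}=O(p^{-7/8})\to0$. Hence $\|\mathbf{K}_\rho/a\|_{\op}\le1/2$ with probability tending to one. On that event $\mathbf{D}$ is positive definite with $\lambda_{\min}(\mathbf{D})\ge a/2$, so $\|\mathbf{D}^{-1}\|_{\op}\le 2/a=O_p(p^{1/2})$.

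The only step that is not immediate is the operator-norm control $\|\mathbf{T}\|_{\op}=O_p(1)$, which prevents the crude estimate $\|\mathbf{T}\|_{\op}^2\le n$. That bound is already part of Lemma~\ref{lem:orders}, where it comes from the Gram comparison in Theorem~\ref{thm:original_spike_inverse}, so I take it as given. The rest is bookkeeping of the $O_p$ events: each intersection of finitely many high-probability events again has probability tending to one.
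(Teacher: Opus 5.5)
Your proof is correct and follows essentially the same route as the paper. You use the exact identity $\mathbf{E}=(a/p)\mathbf{I}_p-\mathbf{K}_\rho$, bound $\|\mathbf{K}_\rho\|_{\op}\le n^{-1}\|\mathbf{T}\|_{\op}^2\max_i\tau_i^{-1}$ with the rates from Lemma~\ref{lem:orders}, and obtain the Loewner bound $\mathbf{D}\succeq(a/2)\mathbf{I}_p$ on the event $\|\mathbf{K}_\rho/a\|_{\op}\le1/2$, which has probability tending to one.
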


\begin{proof}
The matrix $\mathbf{K}_\rho$ is positive semidefinite, and
\[
 \begin{aligned}
 \|\mathbf{K}_\rho\|_{\op}
 &\le n^{-1}\|\mathbf{T}\|_{\op}^2
                    \max_i\tau_i^{-1}
 =O_p(n^{-1}p^{-1/2}n^{1/\alpha_R}),\\
 \|\mathbf{E}\|_{\op}
 &=\|ap^{-1}\mathbf{I}_p-\mathbf{K}_\rho\|_{\op}\\
 &\le a/p+\|\mathbf{K}_\rho\|_{\op}
 =O_p(p^{-3/2}n^{1/\alpha_R}).
 \end{aligned}
\]
Since $p/n$ is bounded above and away from zero,
\[
 \|\mathbf{K}_\rho/a\|_{\op}=O_p(p^{-1}n^{1/\alpha_R}),\qquad
 p^{-1}n^{1/\alpha_R}\longrightarrow0 .
\]
On the event $\|\mathbf{K}_\rho/a\|_{\op}\le1/2$,
\[
 \mathbf{D}\succeq(a-\|\mathbf{K}_\rho\|_{\op})\mathbf{I}_p\succeq(a/2)\mathbf{I}_p,\qquad
 \|\mathbf{D}^{-1}\|_{\op}\le2/a=O_p(p^{1/2}).
\]
\end{proof}

\subsection{Location expansion with explicit Taylor remainders}

\begin{lemma*}
The HR location is unique with probability tending to one, and
\begin{equation}\label{eq:fixed_scatter_hr_expansion}
 \|\widehat{\bm{\theta}}\|=O_p(1),\qquad
 \left\|\widehat{\bm{\theta}}-(d_pa)^{-1}\bm{m}\right\|
 =O_p(p^{-1}n^{1/\alpha_R}).
\end{equation}
Moreover,
\begin{equation}\label{eq:null_residual_scale_expansion}
 \widehat a_\rho-a=O_p(p^{-3/2}),\qquad
 \widehat a_\rho/a-1=O_p(p^{-1}).
\end{equation}
\end{lemma*}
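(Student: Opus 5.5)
The plan is to work in the whitened coordinates $\bm x_i=\mathbf A^{1/2}\bm e_i$. Under $H_0$ the scatter in \eqref{eq:hr_location_score} is exactly $\mathbf A=\widehat{\mA}_\rho^0$, so $\widehat{\bm\theta}$ is a minimizer of the convex function $F$, i.e.\ a spatial median of $\bm x_1,\ldots,\bm x_n$. The argument has four steps: existence and uniqueness, localization of $\widehat{\bm\theta}$, a second-order expansion of the score with sign-based cancellations, and then the scale $\widehat a_\rho$. I will work on the event where the conclusions of Lemma~\ref{lem:orders} and Lemma~\ref{lem:derivative} hold with fixed constants, together with $c\le b_i\le C$.

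\emph{Existence and uniqueness.} The function $F$ is coercive, so a minimizer exists. It is strictly convex as soon as the $\bm x_i$ are not collinear, which holds almost surely since $p\ge2$ and the $\vU_i$ are continuous. Hence the minimizer is unique, and it solves the score equation whenever no $\bm x_i$ equals it. A coincidence $\widehat{\bm\theta}=\bm x_i$ would force $\|\widehat{\bm\theta}\|\ge\min_i\tau_i\gtrsim \sqrt p\,n^{-1/\alpha_R}\to\infty$, and the localization step rules this out.

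\emph{Score expansion and localization.} Fix $M$ and take $\|\bm\theta\|\le M$. Since $\max_i\tau_i^{-1}=O_p(p^{-1/2}n^{1/\alpha_R})$, we have $\|\bm\theta\|\le\tau_i/2$ for every $i$ with probability tending to one. Taylor expansion of $\bm v\mapsto\bm v/\|\bm v\|$ at $\bm x_i$ then gives
\[
 -\nabla F(\bm\theta)=\bm m-\mathbf D\bm\theta+\bm r_2(\bm\theta)+\bm r_3(\bm\theta),
 \qquad \|\bm r_3(\bm\theta)\|\le C\|\bm\theta\|^3\frac1n\sum_i\tau_i^{-3}=O_p(p^{-3/2}).
\]
The second-order term $\bm r_2$ is an average of $\tau_i^{-2}$ times three kinds of vectors:
\begin{itemize}
\item[(a)] $\|\bm\theta\|^2\bm t_i$, whose average is $\|\bm\theta\|^2\bm m_2=O_p(p^{-1}n^{-1/2})$;
\item[(b)] $(\bm t_i^\T\bm\theta)\bm\theta$, bounded by Cauchy--Schwarz as $\|\bm\theta\|(n^{-1}\sum_i\tau_i^{-4})^{1/2}(n^{-1}\|\mathbf T^\T\bm\theta\|^2)^{1/2}=O_p(p^{-1}n^{-1/2})$;
\item[(c)] $(\bm t_i^\T\bm\theta)^2\bm t_i$, equal to $n^{-1}\mathbf T\bm c$ with $c_i=\tau_i^{-2}(\bm t_i^\T\bm\theta)^2$, so of norm at most $n^{-1/2}\|\mathbf T\|_{\op}(n^{-1}\sum_i c_i^2)^{1/2}=O_p(n^{-1}p^{-1}n^{2/\alpha_R})$.
\end{itemize}
All three bounds hold uniformly on the ball. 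For localization, on the sphere $\|\bm\theta\|=M$ we get
\[
 \bm\theta^\T\{-\nabla F(\bm\theta)\}\le M\|\bm m\|-\tfrac a2M^2+o_p(M p^{-1/2}),
\]
using $\mathbf D\succeq(a/2)\mathbf I_p$ from Lemma~\ref{lem:derivative}. Because $\|\bm m\|/a=O_p(1)$ by Lemma~\ref{lem:orders}, this is negative for large $M$. Convexity of $F$ then places the minimizer inside the ball, which gives $\|\widehat{\bm\theta}\|=O_p(1)$.

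\emph{Expansion of $\widehat{\bm\theta}$.} At $\widehat{\bm\theta}$ the score vanishes, so
\[
 \widehat{\bm\theta}=\mathbf D^{-1}\{\bm m+\bm r_2(\widehat{\bm\theta})+\bm r_3(\widehat{\bm\theta})\}.
\]
Since $\|\mathbf D^{-1}\|_{\op}=O_p(p^{1/2})$, the remainders contribute $O_p(p^{-1}+p^{-3/2}n^{2/\alpha_R})$. Replacing $\mathbf D^{-1}$ by $(d_pa)^{-1}\mathbf I_p$ costs at most $\|\mathbf D^{-1}\|\,\|\mathbf E\|\,(d_pa)^{-1}\|\bm m\|=O_p(p^{1/2}\cdot p^{-3/2}n^{1/\alpha_R}\cdot p^{1/2}\cdot n^{-1/2})=O_p(p^{-1}n^{1/\alpha_R})$. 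This proves \eqref{eq:fixed_scatter_hr_expansion}.

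The main obstacle is exactly this $p^{1/2}$ amplification by $\mathbf D^{-1}$. A crude bound $\|\bm r_2\|=O_p(p^{-1})$ would give only $p^{-1/2}$, so the proof must use the sign cancellations in $\bm m_2$ and the $\mathbf T$-norm bounds (a)--(c) above. In this step I would write $\bm r_2$ out explicitly in terms of $\bm t_i$ and $\bm\theta$ rather than bounding it abstractly.

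\emph{Residual scale.} Under $H_0$ the residual quadratic form is $\widehat q_{ii,\rho}=\|\bm x_i-\widehat{\bm\theta}\|^2$, so
\[
 \widehat q_{ii,\rho}^{-1/2}-\tau_i^{-1}=\tau_i^{-2}\bm t_i^\T\widehat{\bm\theta}+O(\tau_i^{-3}\|\widehat{\bm\theta}\|^2).
\]
Averaging over $i$ gives
\[
 \widehat a_\rho-a=\widehat{\bm\theta}^\T\bm m_2+O_p(p^{-3/2})=O_p(p^{-1}n^{-1/2}+p^{-3/2})=O_p(p^{-3/2}).
\]
Dividing by $a\asymp_p p^{-1/2}$ yields \eqref{eq:null_residual_scale_expansion}.
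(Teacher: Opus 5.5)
Your proposal is correct and follows essentially the paper's route: you Taylor-expand the spatial score in $\mathbf A^{1/2}$-coordinates, localize via the sign of the radial score on a fixed sphere, and control the explicit second-order terms through $\bm m_2$ and $\|\mathbf T\|_{\op}$ so that they survive the $p^{1/2}$ amplification by $\mathbf D^{-1}$. You then replace $\mathbf D^{-1}$ by $(d_pa)^{-1}\mathbf I_p$ at cost $\|\mathbf D^{-1}\|_{\op}\|\mathbf E\|_{\op}(d_pa)^{-1}\|\bm m\|$, and expand $\|\bm x_i-\widehat{\bm\theta}\|^{-1}$ for the scale. The remaining differences are cosmetic and do not affect any rate: almost-sure non-collinearity for uniqueness instead of the paper's bound $n=\tr(\mathbf T\mathbf T^\T)\le 2\|\mathbf T\|_{\op}^2$ (the almost-sure claim really needs $\operatorname{rank}\mSigma_p\ge2$, not just $p\ge2$), Cauchy--Schwarz for the $(\bm t_i^\T\bm\theta)\bm\theta$ term, and a second-order rather than third-order expansion for $\widehat a_\rho$.
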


\begin{proof}
Write $\bm{f}(\bm{x})=\bm{x}/\|\bm{x}\|$ and
$\bm{\Psi}(\bm{\theta})=n^{-1}\sum_i \bm{f}(\bm{x}_i-\bm{\theta})$.
For each fixed $M$, Lemma~\ref{lem:orders} gives
\[
 M\max_i\tau_i^{-1}
 =O_p(Mp^{-1/2}n^{1/\alpha_R}),\qquad
 p^{-1/2}n^{1/\alpha_R}\longrightarrow0 .
\]
Thus, with probability tending to one,
\[
 \|\bm{x}_i-t\bm{\theta}\|\ge\tau_i-\|\bm{\theta}\|\ge\tau_i/2
 \quad(0\le t\le1,\ \|\bm{\theta}\|\le M,\ 1\le i\le n).
\]

For $r=\|\bm{x}\|$ and $\bm{z}=\bm{x}/r$, direct differentiation gives
\[
 D\bm{f}(\bm{x})[\bm{v}]=r^{-1}\{\bm{v}-\bm{z}(\bm{z}^\T \bm{v})\},
\]
\[
 D^2\bm{f}(\bm{x})[\bm{v},\bm{v}]
 =r^{-2}\{-2\bm{v}(\bm{z}^\T \bm{v})-\bm{z}\|\bm{v}\|^2+3\bm{z}(\bm{z}^\T \bm{v})^2\}.
\]
In particular,
\[
 \|D^2\bm{f}(\bm{x})[\bm{v},\bm{v}]\|
 \le 6r^{-2}\|\bm{v}\|^2.
\]
The integral Taylor formula therefore gives
\[
 \bm{\Psi}(\bm{\theta})=\bm{m}-\mathbf{D}\bm{\theta}+\bm{R}_1(\bm{\theta}),\qquad
 \|\bm{R}_1(\bm{\theta})\|
 \le12\|\bm{\theta}\|^2\frac1n\sum_i\tau_i^{-2}.
\]

Choose constants $c_a,C_m,C_\tau$ so that
\[
 a\ge c_a p^{-1/2},\quad
 \|\bm{m}\|\le C_m n^{-1/2},\quad
 \mathbf{D}\succeq(a/2)\mathbf{I}_p,\quad
 \frac1n\sum_i\tau_i^{-2}\le C_\tau p^{-1}
\]
hold with arbitrarily high probability.  On $\|\bm{\theta}\|=M$,
\[
 \begin{aligned}
 \bm{\theta}^\T\bm{\Psi}(\bm{\theta})
 &\le C_mMn^{-1/2}
       -\frac{c_aM^2}{2\sqrt p}
       +12C_\tau M^3p^{-1}\\
 &=p^{-1/2}
   \left\{C_mM\sqrt{p/n}-\frac{c_aM^2}{2}
                         +12C_\tau M^3p^{-1/2}\right\}.
 \end{aligned}
\]
First choose $M>4C_m\sup_n\sqrt{p/n}/c_a$ and then increase $n$.
The expression in braces is negative.

The continuous function $F$ attains a minimum on this closed ball.
At any boundary point,
\[
 \left.\frac{d}{dt}F((1-t)\bm{\theta})\right|_{t=0}
 =-\bm{\theta}^\T\nabla F(\bm{\theta})=\bm{\theta}^\T\bm{\Psi}(\bm{\theta})<0.
\]
The minimum is therefore interior and satisfies $\bm{\Psi}(\bm{\theta})=\bm{0}$.
For any $\bm{v}$, Cauchy--Schwarz gives
\[
 \begin{aligned}
 \|\bm{x}_i-\bm{v}\|
 &\ge\frac{(\bm{x}_i-\bm{\theta})^\T(\bm{x}_i-\bm{v})}{\|\bm{x}_i-\bm{\theta}\|}\\
 &=\|\bm{x}_i-\bm{\theta}\|
   -\frac{(\bm{x}_i-\bm{\theta})^\T(\bm{v}-\bm{\theta})}{\|\bm{x}_i-\bm{\theta}\|},\\
 F(\bm{v})&\ge F(\bm{\theta})-\bm{\Psi}(\bm{\theta})^\T(\bm{v}-\bm{\theta})=F(\bm{\theta}).
 \end{aligned}
\]

If all observations lay on an affine line, their linear span would have
dimension at most two; hence
\[
 n=\operatorname{tr}(\mathbf{T}\mathbf{T}^\T)
 \le2\|\mathbf{T}\|_{\op}^2 .
\]
Lemma~\ref{lem:orders} shows that this event has probability tending
to zero.

For distinct $\bm{\theta},\bm{v}$ and $0<t<1$, equality in
\[
 \|\bm{x}_i-\{t\bm{\theta}+(1-t)\bm{v}\}\|
 \le t\|\bm{x}_i-\bm{\theta}\|+(1-t)\|\bm{x}_i-\bm{v}\|
\]
requires $\bm{x}_i-\bm{\theta}$ and $\bm{x}_i-\bm{v}$ to be nonnegative scalar multiples.
Equality for every $i$ forces all $\bm{x}_i$ onto the line through $\bm{\theta},\bm{v}$.
Thus $F$ is strictly convex on the complementary event.  We have proved
uniqueness and $\|\widehat{\bm{\theta}}\|=O_p(1)$.

The second-order terms must be averaged before their norms are bounded.
The third repeated derivative is
\[
 \begin{aligned}
 D^3\bm{f}(\bm{x})[\bm{v},\bm{v},\bm{v}]=r^{-3}\{&
 -3\bm{v}\|\bm{v}\|^2+9\bm{v}(\bm{z}^\T \bm{v})^2\\
 &+9\bm{z}(\bm{z}^\T \bm{v})\|\bm{v}\|^2-15\bm{z}(\bm{z}^\T \bm{v})^3\}.
 \end{aligned}
\]
Its norm is at most $36r^{-3}\|\bm{v}\|^3$.  Therefore
\[
 \bm{\Psi}(\bm{\theta})=\bm{m}-\mathbf{D}\bm{\theta}+\bm{Q}_2(\bm{\theta})+\bm{Q}_3(\bm{\theta}),
\]
\[
 \bm{Q}_2(\bm{\theta})
 =-\bm{\theta}(\bm{\theta}^\T \bm{m}_2)-\frac12\|\bm{\theta}\|^2\bm{m}_2
   +\frac{3}{2n}\sum_i\tau_i^{-2}\bm{t}_i(\bm{t}_i^\T\bm{\theta})^2,
\]
\[
 \bm{Q}_3(\bm{\theta})
 =-\frac1{2n}\sum_i\int_0^1(1-t)^2
      D^3\bm{f}(\bm{x}_i-t\bm{\theta})[\bm{\theta},\bm{\theta},\bm{\theta}]\,dt .
\]
On the fixed ball,
\[
 \begin{aligned}
 \left\|\frac1n\sum_i\tau_i^{-2}\bm{t}_i(\bm{t}_i^\T\bm{\theta})^2\right\|
 &\le\frac{\max_i\tau_i^{-2}}n
                \sum_i(\bm{t}_i^\T\bm{\theta})^2\\
 &\le\frac{\max_i\tau_i^{-2}}n
                \|\mathbf{T}\|_{\op}^2\|\bm{\theta}\|^2
 =O_p(p^{-2}n^{2/\alpha_R}),\\
 \sup_{\|\bm{\theta}\|\le M}\|\bm{Q}_2(\bm{\theta})\|
 &\le\frac32M^2\|\bm{m}_2\|
      +\frac{3M^2}{2n}\max_i\tau_i^{-2}\|\mathbf{T}\|_{\op}^2\\
 &=O_p(p^{-1}n^{-1/2}+p^{-2}n^{2/\alpha_R})
 =O_p(p^{-3/2}),\\
 \sup_{\|\bm{\theta}\|\le M}\|\bm{Q}_3(\bm{\theta})\|
 &\le48M^3\frac1n\sum_i\tau_i^{-3}
 =O_p(p^{-3/2}).
 \end{aligned}
\]
The last equality for $\bm{Q}_2$ uses
$p^{-1/2}n^{2/\alpha_R}=O(1)$, which follows from $\alpha_R>8$.

These are uniform bounds on each fixed ball.  Localization permits
substitution of $\widehat{\bm{\theta}}$ and gives
\[
 \mathbf{D}\widehat{\bm{\theta}}=\bm{m}+O_p(p^{-3/2}),\qquad
 \widehat{\bm{\theta}}=\mathbf{D}^{-1}\bm{m}+O_p(p^{-1}).
\]
The exact inverse identity and Lemma~\ref{lem:derivative} yield
\[
 \begin{aligned}
 \mathbf{D}^{-1}\bm{m}-(d_pa)^{-1}\bm{m}
 &=-(d_pa)^{-1}\mathbf{E} \mathbf{D}^{-1}\bm{m},\\
 \|\mathbf{D}^{-1}\bm{m}\|
 &\le\|\mathbf{D}^{-1}\|_{\op}\|\bm{m}\|
 =O_p(p^{1/2}n^{-1/2})=O_p(1),\\
 \|\mathbf{D}^{-1}\bm{m}-(d_pa)^{-1}\bm{m}\|
 &=O_p(p^{1/2}p^{-3/2}n^{1/\alpha_R})
 =O_p(p^{-1}n^{1/\alpha_R}).
 \end{aligned}
\]

For the scale, set $g(\bm{x})=\|\bm{x}\|^{-1}$.  Its derivatives are
\[
 \begin{aligned}
 Dg(\bm{x})[\bm{v}]&=-r^{-2}\bm{z}^\T \bm{v},\\
 D^2g(\bm{x})[\bm{v},\bm{v}]&=r^{-3}\{3(\bm{z}^\T \bm{v})^2-\|\bm{v}\|^2\},\\
 D^3g(\bm{x})[\bm{v},\bm{v},\bm{v}]
 &=r^{-4}\{9(\bm{z}^\T \bm{v})\|\bm{v}\|^2-15(\bm{z}^\T \bm{v})^3\}.
 \end{aligned}
\]
Thus, uniformly for $\|\bm{\theta}\|\le M$,
\[
 \begin{aligned}
 \|\bm{x}_i-\bm{\theta}\|^{-1}
 &=\tau_i^{-1}+\tau_i^{-2}\bm{t}_i^\T\bm{\theta}\\
 &\quad+\frac12\tau_i^{-3}
             \{3(\bm{t}_i^\T\bm{\theta})^2-\|\bm{\theta}\|^2\}+r_i(\bm{\theta}),\\
 |r_i(\bm{\theta})|&\le64M^3\tau_i^{-4}.
 \end{aligned}
\]
Averaging gives
\[
 \begin{aligned}
 \sup_{\|\bm{\theta}\|\le M}\!
 \left|\frac1n\sum_i\|\bm{x}_i-\bm{\theta}\|^{-1}-a\right|
 &\le M\|\bm{m}_2\|+
     2M^2\frac1n\sum_i\tau_i^{-3}
     +64M^3\frac1n\sum_i\tau_i^{-4}\\
 &=O_p(p^{-3/2})+O_p(p^{-2})
 =O_p(p^{-3/2}).
 \end{aligned}
\]
Evaluating at $\widehat{\bm{\theta}}$ and dividing by
$a\asymp_p p^{-1/2}$ proves both scale assertions.
\end{proof}

\subsection{Euclidean remainder under finite diverging spikes}

Use the original-coordinate quantities
\[
 \begin{gathered}
 \bm{\delta}=\widehat{\vmu}_\rho-\vmu_0,\\
 \bm{\theta}=\widehat{\bm{\theta}}=\mathbf{A}^{1/2}\bm{\delta},\\
 \bm{v}_i=\mathbf{A}^{-1/2}\bm{t}_i=\bm{u}_i/b_i^{1/2},\\
 \mathbf{V}=(\bm{v}_1,\ldots,\bm{v}_n).
 \end{gathered}
\]
Although $\|\mathbf{A}^{-1/2}\|_{\op}$ may diverge, the following averaged bounds
control its action on the estimating equation.

\begin{lemma*}

Under the assumptions,
\[
 \|\bm{\delta}\|=O_p(1),\qquad
 \left\|\bm{\delta}-(d_pa)^{-1}\frac1n\sum_i \bm{v}_i\right\|
 =O_p(p^{-1}).
\]
Consequently, the remainder after multiplication by $\sqrt n$ is
$O_p(p^{-1/2})$.
\end{lemma*}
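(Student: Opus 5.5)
The plan is to avoid multiplying the transformed expansion \eqref{eq:fixed_scatter_hr_expansion} by $\mathbf{A}^{-1/2}$ directly, since $\|\mathbf{A}^{-1/2}\|_{\op}$ can be of order $\sqrt p$ under diverging spikes. Instead I would pull the \emph{exact} estimating equation $\bm{\Psi}(\widehat{\bm\theta})=\bm 0$ back to original coordinates. Every term that then appears has the form $\mathbf{A}^{-1/2}\bm t_i=\bm v_i$, multiplied by scalar weights, and $\|\bm v_i\|$ can be controlled through $b_i\ge c$ and column-averaged bounds. Concretely, $\mathbf{A}^{-1/2}\bm{\Psi}(\bm\theta)=n^{-1}\sum_i \bm v_i\,\|\bm x_i-\bm\theta\|^{-1}\tau_i\ldots$; more precisely, $\mathbf{A}^{-1/2}(\bm x_i-\bm\theta)=\bm e_i-\bm\delta$. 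The equation therefore reads
\[
 \frac1n\sum_i\frac{\bm e_i-\bm\delta}{\|\bm x_i-\bm\theta\|}=\bm 0 .
\]
Solving linearly for $\bm\delta$ gives the exact identity
\[
 \Bigl(\frac1n\sum_i\|\bm x_i-\bm\theta\|^{-1}\Bigr)\bm\delta=\frac1n\sum_i\frac{\tau_i\bm v_i}{\|\bm x_i-\bm\theta\|},
\]
where $\bm e_i=\tau_i\bm v_i$. The left-hand scalar equals $a+O_p(p^{-3/2})$ by the scale expansion already proved in \eqref{eq:null_residual_scale_expansion}. This step uses only $\|\widehat{\bm\theta}\|=O_p(1)$ and Lemma~\ref{lem:orders}.

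Next I would expand the right-hand side, using the Taylor expansion of $g(\bm x)=\|\bm x\|^{-1}$ already derived for the scale:
\[
 \frac{\tau_i}{\|\bm x_i-\bm\theta\|}=1+\tau_i^{-1}\bm t_i^\T\bm\theta+\tfrac12\tau_i^{-2}\{3(\bm t_i^\T\bm\theta)^2-\|\bm\theta\|^2\}+\tau_i r_i(\bm\theta).
\]
Then $\text{RHS}=\bar{\bm v}+\bm L+\bm N$, where $\bar{\bm v}=n^{-1}\sum_i\bm v_i$, $\bm L=n^{-1}\sum_i\tau_i^{-1}\bm v_i\bm t_i^\T\bm\theta$, and $\bm N$ collects the quadratic and cubic terms. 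The norms of these vector sums are handled through the matrix $\mathbf V=(\bm v_1,\ldots,\bm v_n)$. We have $\mathbf V\mathbf V^\T\preceq c^{-1}n\mathbf S_U/p$, so $\|\mathbf V\|_{\op}=O_p(\sqrt{n\|\mathbf S_U\|/p})$; it is also useful to bound the bulk part via $\mSigma_p^{1/2}$-weighted resolvent bounds (Lemma~\ref{lem:original_weighted_bound}) and the spike part separately. The terms in $\bm N$ have weights $O_p(\tau_i^{-2})$ and $O_p(\tau_i^{-3})$. Writing them as $n^{-1}\mathbf V\bm w$ with $\|\bm w\|\le\sqrt n\max_i\tau_i^{-2}$, and using Lemma~\ref{lem:orders} together with $\alpha_R>8$, I expect them to be $O_p(p^{-3/2})$. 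Dividing by $a\asymp_p p^{-1/2}$ then costs only $O_p(p^{-1})$.

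The main obstacle is the linear term $\bm L$. A crude operator bound gives $n^{-1}\|\mathbf V\|\max_i\tau_i^{-1}\|\mathbf T\|\,\|\bm\theta\|$, which after division by $a$ is only $O_p(p^{-1/2}n^{1/\alpha_R})$ when spikes make $\|\mathbf V\|\asymp\sqrt n$. My approach is to substitute $\bm\theta=(d_pa)^{-1}\bm m+O_p(p^{-1}n^{1/\alpha_R})$ and use the sign decomposition \eqref{eq:unsigned_sign_decomposition}, with $\bm v_i=\epsilon_i\mathbf A^{-1/2}\bm s_i$ and $\bm t_i=\epsilon_i\bm s_i$. The leading part then becomes the double sum
\[
 (d_pa)^{-1}n^{-2}\sum_{i,j}\epsilon_j\epsilon_i^2\tau_i^{-1}\,\mathbf A^{-1/2}\bm s_i\,\bm s_i^\T\bm s_j .
\]
The diagonal $j=i$ contributes $(d_pa)^{-1}n^{-2}\sum_i\tau_i^{-1}\bm v_i$, and this is $O_p(p^{-3/2})$. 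The off-diagonal part is conditionally centred given $\mathcal F_n$; I would bound its conditional second moment by a trace expression in $\mathbf V$ and $\mathbf T$, expecting $O_p(p^{-3})$ and hence $O_p(p^{-3/2})$ in norm. The remaining substitution error is multiplied by $n^{-1}\|\mathbf V\operatorname{diag}(\tau_i^{-1})\mathbf T^\T\|$, which needs a slightly refined bound using $\mSigma_p^{1/2}$-weighting of the spike block.

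Collecting the pieces gives $a\bm\delta=\bar{\bm v}+O_p(p^{-3/2})$. Replacing $a$ by $d_pa$ costs $a p^{-1}\|\bm\delta\|$. The bound $\|\bm\delta\|=O_p(1)$ follows first from the same identity, since $\|\bar{\bm v}\|=O_p(n^{-1/2}p^{-1/2}\cdot\sqrt p)$ by the conditional sign variance $n^{-2}\sum_i\|\bm v_i\|^2$ together with $\max_i\|\bm v_i\|^2\le c^{-1}\|\bm z_{U,i}\|^2/p=O_p(L_n)$; this should be refined to the average $n^{-1}\sum_i\|\bm v_i\|^2=O_p(1)$ via $\tr\mathbf S_U/p$. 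Multiplying the final remainder $O_p(p^{-1})$ by $\sqrt n$ gives the stated $O_p(p^{-1/2})$.
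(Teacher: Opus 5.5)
Your exact identity $\widehat a_\rho\bm\delta=n^{-1}\sum_i\bm e_i/\|\bm x_i-\widehat{\bm\theta}\|$ is valid, and it is a genuinely different route from the paper's. The paper inverts $\mathbf D\bm\theta=\bm m+\bm Q_2+\bm Q_3$ through $\mathbf D^{-1}=a^{-1}\mathbf I_p+a^{-2}\mathbf K_\rho\mathbf N_\rho$ and then closes a bound $\|\bm\delta-\bm\delta_D\|\le\mathfrak r_{1,n}(1+\|\bm\delta\|)+\mathfrak r_{2,n}$. Your form instead absorbs the $\bm\delta(\bm\theta^\T\bm m_2)$-type terms into the already controlled scalar $\widehat a_\rho$. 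Your sign-cancellation argument for the leading part of $\bm L=\mathbf A^{-1/2}\mathbf K_\rho\widehat{\bm\theta}$ also works; it mirrors the paper's treatment of $a^{-2}\mathbf A^{-1/2}\mathbf K_\rho\mathbf N_\rho\bm m$.

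The gap is the nonlinear term $\bm N$. Your bound $n^{-1}\|\mathbf V\|_{\op}\|\bm w\|$ with $\|\bm w\|\le\sqrt n\max_i\tau_i^{-2}$ gives only $O_p(p^{-1}n^{2/\alpha_R})$ when an order-$p$ spike makes $\|\mathbf V\|_{\op}\asymp\sqrt n$. That is the same obstruction you flagged for $\bm L$. After dividing by $\widehat a_\rho$ this is $O_p(p^{-1/2}n^{2/\alpha_R})$, not $O_p(p^{-1})$. Even with bounded spectrum the bound keeps an extra factor $n^{2/\alpha_R}$. Splitting off the spike block does not help by itself. Along a spike eigenvector $\vv_{1,p}$ with $\tau_{1,p}\asymp p$, each summand $\tfrac12\|\bm\theta\|^2\tau_i^{-2}\vv_{1,p}^\T\bm v_i$ is of order $p^{-1}$, and only the independent observation signs bring their average down to $n^{-1/2}p^{-1}$.

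The missing ingredient is an averaged, radius-weighted Frobenius bound:
$\|\mathbf V\operatorname{diag}(\tau_i^{-k})\|_{\mathrm F}^2=\sum_iR_i^{-2k}\|\bm u_i\|^2b_i^{-(k+1)}=O_p(np^{-k})$ for $k=0,1,2$, together with $n^{-1}\sum_i\tau_i^{-3}\|\bm v_i\|=O_p(p^{-3/2})$. Both follow by conditioning on the angles, since the scatter and the $b_i$ do not depend on the radii. One then uses $\E R_i^{-2k}\le Cp^{-k}$ and $\E\|\bm u_i\|^2=\tr(\mSigma_p)/p=1$.

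With this, split $\bm N$ into three pieces, each $O_p(p^{-3/2})$:
\begin{itemize}
\item[(i)] $-\tfrac12\|\bm\theta\|^2n^{-1}\sum_i\tau_i^{-2}\bm v_i$ is $\|\bm\theta\|^2$ times a signed sum with $\mathcal F_n$-measurable coefficients. Its conditional second moment is $n^{-2}\|\mathbf V\operatorname{diag}(\tau_i^{-2})\|_{\mathrm F}^2=O_p(n^{-1}p^{-2})$.
\item[(ii)] $\tfrac32n^{-1}\mathbf V\operatorname{diag}(\tau_i^{-2})\{(\bm t_i^\T\bm\theta)^2\}_{i=1}^n$ is at most $\tfrac32n^{-1}\|\mathbf V\operatorname{diag}(\tau_i^{-2})\|_{\mathrm F}\|\mathbf T\|_{\op}\|\bm\theta\|^2$, because $\sum_i(\bm t_i^\T\bm\theta)^4\le\|\mathbf T\|_{\op}^2\|\bm\theta\|^4$.
\item[(iii)] The Taylor remainder is at most $Cn^{-1}\sum_i\tau_i^{-3}\|\bm v_i\|$.
\end{itemize}

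The same Frobenius bound also fills in your unspecified trace estimate for $\bm L$. It makes the conditional second moment of the leading part of $\bm L$ equal to $O_p(a^{-2}n^{-4}\|\mathbf V\operatorname{diag}(\tau_i^{-1})\|_{\mathrm F}^2\|\mathbf T\|_{\op}^4)=O_p(n^{-3})$. It also gives $\|\mathbf A^{-1/2}\mathbf K_\rho\|_{\op}\le n^{-1}\|\mathbf V\operatorname{diag}(\tau_i^{-1})\|_{\mathrm F}\|\mathbf T\|_{\op}=O_p(n^{-1})$ for the substitution error. So no separate spike-block refinement through weighted resolvents is needed.
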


\begin{proof}
The trace normalization gives
\[
 \E\|\bm{u}_i\|^2=\frac{\operatorname{tr}\mSigma_p}{p}=1,\qquad
 \Pr\left\{\frac1n\sum_i\|\bm{u}_i\|^2>L\right\}\le L^{-1}.
\]
Since $\min_i b_i\ge c$ with probability tending to one,
\[
 \frac1n\sum_i\|\bm{u}_i\|^2=O_p(1),\qquad
 \frac1n\sum_i\|\bm{u}_i\|
 \le\left(\frac1n\sum_i\|\bm{u}_i\|^2\right)^{1/2}=O_p(1).
\]
No upper bound on an individual eigenvalue enters these inequalities.

Condition first on the angular observations.  Scatter is invariant to
the positive radii, so for $k=0,1,2$,
\[
 \begin{aligned}
 \|\mathbf{V}\operatorname{diag}(\tau_i^{-k})\|_{\mathrm F}^2
 &=\sum_iR_i^{-2k}\|\bm{u}_i\|^2 b_i^{-(k+1)},\\
 \E\!\left[
   \sum_iR_i^{-2k}\|\bm{u}_i\|^2b_i^{-(k+1)}
       \,\middle|\,\vU_1,\ldots,\vU_n\right]
 &\le C_kp^{-k}\sum_i\|\bm{u}_i\|^2
  \\&\hbox{on }\{\min_i b_i\ge c\}.
 \end{aligned}
\]
Conditional Markov inequalities and the preceding norm average give
\[
 \|\mathbf{V}\operatorname{diag}(\tau_i^{-k})\|_{\mathrm F}
 =O_p(\sqrt n\,p^{-k/2}),\qquad k=0,1,2.
\]
The same argument with the third inverse moment yields
\[
 \begin{aligned}
 \E\!\left[
 \frac1n\sum_i\tau_i^{-3}\|\bm{v}_i\|
       \,\middle|\,\vU_1,\ldots,\vU_n\right]
 &\le Cp^{-3/2}\frac1n\sum_i\|\bm{u}_i\|,\\
 \frac1n\sum_i\tau_i^{-3}\|\bm{v}_i\|&=O_p(p^{-3/2}).
 \end{aligned}
\]

Condition now on $\mathcal F_n$ and write
\[
 \mathbf{T}^\circ=(\bm{s}_1,\ldots,\bm{s}_n),\quad
 \overline {\mathbf{V}}=(\bar {\bm{v}}_1,\ldots,\bar {\bm{v}}_n),\quad
 \bm{t}_i=\epsilon_i \bm{s}_i,\quad \bm{v}_i=\epsilon_i\bar {\bm{v}}_i .
\]
The signs cancel from $\mathbf{D},\mathbf{K}_\rho,\mathbf{A}$ and all matrix norms above. Hence
\[
 \begin{aligned}
 \E\!\left[
  \left\|a^{-1}\frac1n\sum_i \bm{v}_i\right\|^2
                    \,\middle|\,\mathcal F_n\right]
 &=a^{-2}n^{-2}\|\overline {\mathbf{V}}\|_{\mathrm F}^2=O_p(p/n),\\
 \E\!\left[
  \left\|\frac1n\sum_i\tau_i^{-2}\bm{v}_i\right\|^2
                    \,\middle|\,\mathcal F_n\right]
 &=n^{-2}\|\mathbf{V}\operatorname{diag}(\tau_i^{-2})\|_{\mathrm F}^2
 =O_p(n^{-1}p^{-2}).
 \end{aligned}
\]
Therefore
\[
 a^{-1}n^{-1}\sum_i \bm{v}_i=O_p(1),\qquad
 n^{-1}\sum_i\tau_i^{-2}\bm{v}_i=O_p(p^{-3/2}).
\]

On $\|\mathbf{K}_\rho/a\|_{\op}\le1/2$, define
$\mathbf{N}_\rho=(\mathbf{I}_p-\mathbf{K}_\rho/a)^{-1}$.
The inverse identity is
\[
 \|\mathbf{N}_\rho\|_{\op}\le2,\qquad
 \mathbf{D}^{-1}=a^{-1}\mathbf{I}_p+a^{-2}\mathbf{K}_\rho\mathbf{N}_\rho .
\]
With $\bm{\delta}_D=\mathbf{A}^{-1/2}\mathbf{D}^{-1}\bm{m}$,
\[
 \bm{\delta}_D=a^{-1}\frac1n\sum_i \bm{v}_i+
                   a^{-2}\mathbf{A}^{-1/2}\mathbf{K}_\rho\mathbf{N}_\rho \bm{m}.
\]
The coefficient matrix of the latter signed sum satisfies
\[
 \begin{aligned}
 \|a^{-2}\mathbf{A}^{-1/2}\mathbf{K}_\rho\mathbf{N}_\rho\mathbf{T}^\circ\|_{\mathrm F}
 &\le\frac{a^{-2}}n
       \|\mathbf{V}\operatorname{diag}(\tau_i^{-1})\|_{\mathrm F}
       \|\mathbf{T}^\T\mathbf{N}_\rho\mathbf{T}^\circ\|_{\op}\\
 &\le\frac{a^{-2}}n
       \|\mathbf{V}\operatorname{diag}(\tau_i^{-1})\|_{\mathrm F}
       \|\mathbf{T}\|_{\op}^2\|\mathbf{N}_\rho\|_{\op}
 =O_p(1).
 \end{aligned}
\]
Since $\bm{m}=\mathbf{T}^\circ\bm{\epsilon}/n$, where
$\bm{\epsilon}=(\epsilon_1,\ldots,\epsilon_n)^\T$,
\[
 \begin{aligned}
 \E(\|a^{-2}\mathbf{A}^{-1/2}\mathbf{K}_\rho\mathbf{N}_\rho \bm{m}\|^2\mid\mathcal F_n)
 &=n^{-2}\|a^{-2}\mathbf{A}^{-1/2}\mathbf{K}_\rho\mathbf{N}_\rho\mathbf{T}^\circ\|_{\mathrm F}^2
 =O_p(n^{-2}),\\
 \bm{\delta}_D&=a^{-1}n^{-1}\sum_i \bm{v}_i+O_p(p^{-1})=O_p(1).
 \end{aligned}
\]
Moreover, $d_p^{-1}-1=(p-1)^{-1}$, and thus
\[
 \bm{\delta}_D-(d_pa)^{-1}n^{-1}\sum_i \bm{v}_i=O_p(p^{-1}).
\]

It remains to bound the nonlinear terms in original coordinates.
The established weighted localization gives $\|\bm{\theta}\|=O_p(1)$ and
\[
 \mathbf{D}\bm{\theta}=\bm{m}+\bm{Q}_2(\bm{\theta})+\bm{Q}_3(\bm{\theta}),\qquad
 \sup_{\|\bm{\vartheta}\|\le M}
       \|\bm{Q}_2(\bm{\vartheta})+\bm{Q}_3(\bm{\vartheta})\|=O_p(p^{-3/2}).
\]
The explicit formula for $\bm{Q}_2$ gives
\[
 \begin{aligned}
 \mathbf{A}^{-1/2}\bm{Q}_2(\bm{\theta})
 &=-\bm{\delta}(\bm{\theta}^\T \bm{m}_2)
   -\frac12\|\bm{\theta}\|^2\frac1n\sum_i\tau_i^{-2}\bm{v}_i\\
 &\quad+\frac3{2n}\mathbf{V}\operatorname{diag}(\tau_i^{-2})
               \big((\bm{t}_i^\T\bm{\theta})^2\big)_{i=1}^n .
 \end{aligned}
\]
For $\|\bm{\theta}\|\le M$,
\[
 \sum_i(\bm{t}_i^\T\bm{\theta})^4
 \le\|\bm{\theta}\|^2\sum_i(\bm{t}_i^\T\bm{\theta})^2
 \le\|\mathbf{T}\|_{\op}^2M^4 .
\]
Consequently,
\[
 \begin{aligned}
 \|\mathbf{A}^{-1/2}\bm{Q}_2(\bm{\theta})\|
 &\le M\|\bm{\delta}\|\|\bm{m}_2\|
   +\frac{M^2}{2}\left\|n^{-1}\sum_i\tau_i^{-2}\bm{v}_i\right\|\\
 &\quad+\frac{3M^2}{2n}
       \|\mathbf{V}\operatorname{diag}(\tau_i^{-2})\|_{\mathrm F}
       \|\mathbf{T}\|_{\op}\\
 &\le O_p(p^{-3/2})(1+\|\bm{\delta}\|).
 \end{aligned}
\]
The random coefficient in this inequality is uniform over the fixed
weighted ball.

For the cubic term, put
$r_{it}=\|\bm{x}_i-t\bm{\theta}\|$ and
$\bm{z}_{it}=(\bm{x}_i-t\bm{\theta})/r_{it}$.  On this ball,
\[
 r_{it}\ge\tau_i/2,\qquad
 \|\mathbf{A}^{-1/2}\bm{z}_{it}\|
 =\frac{\|\tau_i\bm{v}_i-t\bm{\delta}\|}{r_{it}}
 \le2\|\bm{v}_i\|+2\|\bm{\delta}\|/\tau_i .
\]
The displayed third derivative of $\bm{f}$ then implies
\[
 \begin{aligned}
 &\|\mathbf{A}^{-1/2}D^3\bm{f}(\bm{x}_i-t\bm{\theta})[\bm{\theta},\bm{\theta},\bm{\theta}]\|\\
 &\quad\le C_M\tau_i^{-3}
    \{\|\bm{\delta}\|+\|\bm{v}_i\|+\|\bm{\delta}\|/\tau_i\},\\
 &\|\mathbf{A}^{-1/2}\bm{Q}_3(\bm{\theta})\|\\
 &\quad\le C_M\|\bm{\delta}\|\frac1n\sum_i
                          (\tau_i^{-3}+\tau_i^{-4})
      +C_M\frac1n\sum_i\tau_i^{-3}\|\bm{v}_i\|\\
 &\quad\le O_p(p^{-3/2})(1+\|\bm{\delta}\|).
 \end{aligned}
\]
This bound does not assume that $\bm{\delta}$ is already bounded.

Finally,
\[
 \begin{aligned}
 \|\mathbf{A}^{-1/2}\mathbf{K}_\rho\|_{\op}
 &\le n^{-1}\|\mathbf{V}\operatorname{diag}(\tau_i^{-1})\|_{\mathrm F}
                          \|\mathbf{T}\|_{\op}
 =O_p(n^{-1}),\\
 \bm{\delta}-\bm{\delta}_D
 &=a^{-1}\mathbf{A}^{-1/2}(\bm{Q}_2+\bm{Q}_3)
    +a^{-2}\mathbf{A}^{-1/2}\mathbf{K}_\rho\mathbf{N}_\rho(\bm{Q}_2+\bm{Q}_3).
 \end{aligned}
\]
There are random coefficients $\mathfrak r_{1,n}=O_p(p^{-1})$ and
$\mathfrak r_{2,n}=O_p(p^{-3/2})$, uniform on the fixed weighted ball, such that
\[
 \|\bm{\delta}-\bm{\delta}_D\|\le\mathfrak r_{1,n}(1+\|\bm{\delta}\|)+\mathfrak r_{2,n}.
\]
On $\mathfrak r_{1,n}\le1/2$,
\[
 \|\bm{\delta}\|
 \le2(\|\bm{\delta}_D\|+\mathfrak r_{1,n}+\mathfrak r_{2,n})=O_p(1),\qquad
 \|\bm{\delta}-\bm{\delta}_D\|=O_p(p^{-1}).
\]
Weighted localization makes the probability of leaving the chosen
ball arbitrarily small.  The preceding approximation to $\bm{\delta}_D$
therefore proves the Euclidean expansion.
\end{proof}

\subsection{Quadratic statistic and feasible variance}

Define the null-centered pair kernels and their variance average by
\begin{equation}\label{eq:null_pair_kernel_variance}
 H_{ij,\rho}=\bm{t}_i^\T \bm{t}_j
 =\frac{\bm{e}_i^\T \mathbf{A} \bm{e}_j}{\tau_i\tau_j},\qquad
 \omega^0_{\rho,n}
 =\frac p{n(n-1)}\sum_{i\ne j}H_{ij,\rho}^2 .
\end{equation}
The full Gram matrix agrees with the matrix from Appendix~\ref{app:rmt}:
\[
 \mathbf{H}_\rho=(H_{ij,\rho})_{i,j=1}^n
       =\mathbf{T}^\T\mathbf{T}=\mathbf{W}_\rho .
\]

\begin{lemma}\label{lem:quad}
Under the null assumptions,
\[
 T_{\HR,\rho}
 =p+\frac{2p}{n}\sum_{i<j}H_{ij,\rho}+O_p(1).
\]
\end{lemma}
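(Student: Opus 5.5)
We rewrite the statistic in whitened coordinates and insert the location expansion \eqref{eq:fixed_scatter_hr_expansion} and the scale expansion \eqref{eq:null_residual_scale_expansion}. Under $H_0$, with $\mathbf A=\widehat{\mA}_\rho^0$ and $\widehat{\bm\theta}=\mathbf A^{1/2}(\widehat{\vmu}_\rho-\vmu_0)$, we have
\[
 T_{\HR,\rho}=np\,d_p^2\,\widehat a_\rho^2\,\|\widehat{\bm\theta}\|^2 .
\]
Write $\widehat{\bm\theta}=(d_pa)^{-1}\bm m+\bm r$, where $\|\bm r\|=O_p(p^{-1}n^{1/\alpha_R})$ and $\|(d_pa)^{-1}\bm m\|=O_p(1)$. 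Also write $\widehat a_\rho=a\{1+O_p(p^{-1})\}$. Then
\[
 T_{\HR,\rho}=np\,\|\bm m\|^2\{1+O_p(p^{-1})\}
 +2np\,d_p\,a\,\bm m^\T\bm r\,\{1+O_p(p^{-1})\}
 +np\,d_p^2a^2\|\bm r\|^2\{1+O_p(p^{-1})\}.
\]

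The leading term is exact algebra. Since $\|\bm t_i\|=1$,
\[
 np\|\bm m\|^2=\frac pn\sum_{i,j}\bm t_i^\T\bm t_j=p+\frac{2p}{n}\sum_{i<j}H_{ij,\rho}.
\]
By Lemma~\ref{lem:orders}, $\|\bm m\|^2=O_p(n^{-1})$, so $np\|\bm m\|^2=O_p(p)$. Hence the multiplicative factor $1+O_p(p^{-1})$ coming from $\widehat a_\rho^2/a^2$ contributes only $O_p(1)$. This is exactly why the scale expansion is needed at relative order $p^{-1}$ and not just $o_p(1)$. The $d_p^2$ factors cancel exactly in the leading term.

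The last term is straightforward. Using $a\asymp_p p^{-1/2}$, we get $np\,a^2\|\bm r\|^2=O_p(n\,p^{-2}n^{2/\alpha_R})=O_p(p^{-1}n^{2/\alpha_R})=o_p(1)$.

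The cross term is the main obstacle. The crude bound is
\[
 np\,a\,|\bm m^\T\bm r|\le np\cdot p^{-1/2}\cdot n^{-1/2}\cdot p^{-1}n^{1/\alpha_R}=O_p(n^{1/\alpha_R}),
\]
which is not $O_p(1)$. I would refine it by going back into the proof of \eqref{eq:fixed_scatter_hr_expansion} and splitting $\bm r$ into two parts:
\[
 \bm r=\bigl\{\mathbf D^{-1}\bm m-(d_pa)^{-1}\bm m\bigr\}+\mathbf D^{-1}(\bm Q_2+\bm Q_3).
\]

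The nonlinear part gives $np\,a\,|\bm m^\T\mathbf D^{-1}(\bm Q_2+\bm Q_3)|\le np\cdot p^{-1/2}\cdot n^{-1/2}\cdot p^{1/2}\cdot p^{-3/2}=O_p(1)$, since $\|\mathbf D^{-1}\|=O_p(p^{1/2})$ and $\|\bm Q_2+\bm Q_3\|=O_p(p^{-3/2})$.

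For the linear part, use $\mathbf D^{-1}-(d_pa)^{-1}\mathbf I=-(d_pa)^{-1}\mathbf E\mathbf D^{-1}$ with $\mathbf E=ap^{-1}\mathbf I-\mathbf K_\rho$. Its contribution is then a quadratic form
\[
 np\,\bm m^\T\mathbf E\mathbf D^{-1}\bm m,
\]
up to constants. The $ap^{-1}\mathbf I$ piece is bounded by $np\,p^{-1}\|\bm m\|^2\|a\mathbf D^{-1}\|=O_p(1)$. The $\mathbf K_\rho$ piece needs the most care. Since $\mathbf K_\rho\succeq0$ and $\mathbf D\succeq(a/2)\mathbf I$, it suffices to bound $np\,a^{-1}\bm m^\T\mathbf K_\rho\bm m$, and
\[
 \bm m^\T\mathbf K_\rho\bm m=\frac1n\sum_i\tau_i^{-1}(\bm t_i^\T\bm m)^2 .
\]
I would condition on $\mathcal F_n$ and use the Rademacher decomposition \eqref{eq:unsigned_sign_decomposition}. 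This gives $\bm t_i^\T\bm m=n^{-1}\epsilon_i+n^{-1}\sum_{j\ne i}\epsilon_j\,\bm s_i^\T\bm s_j\,\epsilon_i$, whose conditional second moment is $O(n^{-2})+n^{-2}\sum_j(\bm s_i^\T\bm s_j)^2\le n^{-2}(1+\|\mathbf T\|_{\op}^2)=O_p(n^{-2})$. Therefore $\E(\bm m^\T\mathbf K_\rho\bm m\mid\mathcal F_n)=O_p(n^{-2}\cdot n^{-1}\sum_i\tau_i^{-1})=O_p(n^{-2}p^{-1/2})$. Multiplying by $np\,a^{-1}\asymp np^{3/2}$ gives $O_p(1)$.

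Combining the three terms yields $T_{\HR,\rho}=p+\frac{2p}{n}\sum_{i<j}H_{ij,\rho}+O_p(1)$, as claimed.
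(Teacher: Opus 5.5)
Your argument is correct, but it is organized differently from the paper's proof. The paper does not reopen \eqref{eq:fixed_scatter_hr_expansion}. Instead it cites the subsequent Euclidean-remainder lemma, which gives $\|\bm\delta-\bm\delta_{\mathrm{lin}}\|=O_p(p^{-1})$ for $\bm\delta_{\mathrm{lin}}=(d_pa)^{-1}n^{-1}\sum_i\bm v_i$, where $\mathbf A^{1/2}\bm\delta_{\mathrm{lin}}=(d_pa)^{-1}\bm m$. This remainder carries no factor $n^{1/\alpha_R}$. With $\|\mathbf A\|_{\op}=O_p(1)$ and $npa^2=O_p(p)$, a single product bound on $\bm\delta^\T\mathbf A\bm\delta-\bm\delta_{\mathrm{lin}}^\T\mathbf A\bm\delta_{\mathrm{lin}}$ absorbs your cross and square terms at once. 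The scale correction is treated as you treat it.

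The mechanism that removes $n^{1/\alpha_R}$ is the same in both proofs. Conditioning on $\mathcal F_n$ and averaging over the signs replaces $\max_i\tau_i^{-1}$, which enters through $\|\mathbf K_\rho\|_{\op}$, by an averaged inverse radius. The difference is where this is applied:
\begin{itemize}
\item The paper does it once at the vector level, via the conditional Frobenius bound on $a^{-2}\mathbf A^{-1/2}\mathbf K_\rho\mathbf N_\rho\bm m$. The lemma itself then reduces to a crude product bound.
\item You do it only for the scalar that needs it, via $\E(\bm m^\T\mathbf K_\rho\bm m\mid\mathcal F_n)$. Your version stays entirely in whitened coordinates and needs neither the Euclidean lemma nor a bound on $\|\mathbf A\|_{\op}$. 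It also makes explicit that the $n^{1/\alpha_R}$ loss in \eqref{eq:fixed_scatter_hr_expansion} threatens only the cross term. The price is reopening that proof through $\widehat{\bm\theta}=\mathbf D^{-1}\{\bm m+\bm Q_2(\widehat{\bm\theta})+\bm Q_3(\widehat{\bm\theta})\}$.
\end{itemize}

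Two small corrections. First, the leading term of $\bm t_i^\T\bm m$ is $n^{-1}$, not $n^{-1}\epsilon_i$; the conditional second moment $n^{-2}\{1+\sum_{j\ne i}(\bm s_i^\T\bm s_j)^2\}$ is unaffected. Second, the step $\bm m^\T\mathbf K_\rho\mathbf D^{-1}\bm m\le(2/a)\,\bm m^\T\mathbf K_\rho\bm m$ relies on $\mathbf D=a\mathbf I_p-\mathbf K_\rho$ commuting with $\mathbf K_\rho$, not merely on the two Loewner bounds you cite.
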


\begin{proof}
Put $\bm{\delta}_{\mathrm{lin}}=(d_pa)^{-1}n^{-1}\sum_i \bm{v}_i$.
The preceding lemmas and Theorem~\ref{thm:original_spike_inverse} give
\[
 \|\bm{\delta}\|+\|\bm{\delta}_{\mathrm{lin}}\|=O_p(1),\quad
 \|\bm{\delta}-\bm{\delta}_{\mathrm{lin}}\|=O_p(p^{-1}),\quad
 \|\mathbf{A}\|_{\op}=O_p(1),
\]
\[
 \widehat a_\rho/a-1=O_p(p^{-1}),\qquad
 np a^2=O_p(p).
\]
The two replacements in \eqref{eq:hr_statistic} obey
\[
 \begin{aligned}
 |\bm{\delta}^\T \mathbf{A}\bm{\delta}-\bm{\delta}_{\mathrm{lin}}^\T \mathbf{A}\bm{\delta}_{\mathrm{lin}}|
 &\le\|\mathbf{A}\|_{\op}\|\bm{\delta}-\bm{\delta}_{\mathrm{lin}}\|
                       (\|\bm{\delta}\|+\|\bm{\delta}_{\mathrm{lin}}\|)
 =O_p(p^{-1}),\\
 npd_p^2|\widehat a_\rho^2-a^2|\,|\bm{\delta}^\T \mathbf{A}\bm{\delta}|
 &\le npd_p^2a^2
      \left|(\widehat a_\rho/a)^2-1\right|
      \|\mathbf{A}\|_{\op}\|\bm{\delta}\|^2
 =O_p(1).
 \end{aligned}
\]
Since $\mathbf{A}^{1/2}\bm{\delta}_{\mathrm{lin}}=(d_pa)^{-1}\bm{m}$ and $\|\bm{t}_i\|=1$,
\[
 \begin{aligned}
 T_{\HR,\rho}
 &=npd_p^2a^2\bm{\delta}_{\mathrm{lin}}^\T \mathbf{A}\bm{\delta}_{\mathrm{lin}}+O_p(1)\\
 &=np\|\bm{m}\|^2+O_p(1)\\
 &=\frac pn\left\{\sum_i\|\bm{t}_i\|^2+
                   2\sum_{i<j}\bm{t}_i^\T \bm{t}_j\right\}+O_p(1).
 \end{aligned}
\]
This is the asserted expansion.
\end{proof}

For the feasible variance, on the event that all residual lengths are positive,
define the normalized residual inner products
\[
 \widehat H_{ij}
 =\frac{\widehat q_{ij,\rho}}
        {(\widehat q_{ii,\rho}\widehat q_{jj,\rho})^{1/2}}
 =\frac{(\bm{x}_i-\widehat{\bm{\theta}})^\T
               (\bm{x}_j-\widehat{\bm{\theta}})}
        {\|\bm{x}_i-\widehat{\bm{\theta}}\|
         \|\bm{x}_j-\widehat{\bm{\theta}}\|}.
\]
The fixed subscript $\rho$ is suppressed in $\widehat H_{ij}$.
Since $\widehat H_{ii}=1$, the variance estimator in
\eqref{eq:hr_variance_estimator} has the equivalent representation
\[
 \widehat\omega_\rho
 =\frac p{n(n-1)}\sum_{i\ne j}\widehat H_{ij}^2
 =\frac p{n(n-1)}
   \left\{\sum_{i,j=1}^n\widehat H_{ij}^2-n\right\}.
\]

\begin{lemma}\label{lem:varhat}
Under the null assumptions,
\[
 \widehat\omega_\rho-\omega^0_{\rho,n}
 =O_p(p^{-1/2}n^{1/\alpha_R}),
\]
\[
 \widehat\omega_\rho-\omega_{\rho,n}
 =O_p(n^{-1/2}+p^{-1/2}+p^{-1/2+1/\alpha_R}+r_p/p).
\]
\end{lemma}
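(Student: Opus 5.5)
The plan has two parts. First compare $\widehat\omega_\rho$ with the oracle average $\omega^0_{\rho,n}$ by perturbing the residual directions $\widehat H_{ij}$ around $H_{ij,\rho}=\bm t_i^\T\bm t_j$. Then identify $\omega^0_{\rho,n}$ with $\omega_{\rho,n}$ through the Gram comparison of Appendix~\ref{app:rmt}.

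\emph{Step 1.} Write $\widehat{\bm t}_i=(\bm x_i-\widehat{\bm\theta})/\|\bm x_i-\widehat{\bm\theta}\|$ and $\widehat{\mathbf T}=(\widehat{\bm t}_1,\ldots,\widehat{\bm t}_n)$. The matrices $\widehat{\mathbf H}=\widehat{\mathbf T}^\T\widehat{\mathbf T}$ and $\mathbf H_\rho=\mathbf T^\T\mathbf T$ both have unit diagonal. Hence
\[
 \widehat\omega_\rho-\omega^0_{\rho,n}=\frac{p}{n(n-1)}\bigl\{\|\widehat{\mathbf H}\|_{\rm F}^2-\|\mathbf H_\rho\|_{\rm F}^2\bigr\},
\]
and it suffices to bound
\[
 \frac{p}{n(n-1)}\|\widehat{\mathbf H}-\mathbf H_\rho\|_{\rm F}\bigl(\|\widehat{\mathbf H}\|_{\rm F}+\|\mathbf H_\rho\|_{\rm F}\bigr).
\]
The second factor is $O_p(\sqrt n)$, because the operator norms are $O_p(1)$ by Lemma~\ref{lem:orders} and the rank is at most $n$. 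For the first factor I use $\|\widehat{\bm\theta}\|=O_p(1)$ from \eqref{eq:fixed_scatter_hr_expansion} and $\max_i\tau_i^{-1}=O_p(p^{-1/2}n^{1/\alpha_R})$. By the derivative formula for $\bm f$, each row satisfies
\[
 \widehat{\bm t}_i-\bm t_i=-\tau_i^{-1}(\mathbf I-\bm t_i\bm t_i^\T)\widehat{\bm\theta}+O(\tau_i^{-2}).
\]
So the columns of $\widehat{\mathbf T}-\mathbf T$ have norm $O_p(p^{-1/2}n^{1/\alpha_R})$ uniformly, which gives $\|\widehat{\mathbf T}-\mathbf T\|_{\rm F}=O_p(\sqrt n\,p^{-1/2}n^{1/\alpha_R})$. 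I then bound
\[
 \|\widehat{\mathbf H}-\mathbf H_\rho\|_{\rm F}\le(\|\widehat{\mathbf T}\|_{\op}+\|\mathbf T\|_{\op})\|\widehat{\mathbf T}-\mathbf T\|_{\rm F}.
\]
Combining, the difference is $O_p\bigl(p\,n^{-2}\cdot\sqrt n\cdot\sqrt n\,p^{-1/2}n^{1/\alpha_R}\bigr)=O_p(p^{-1/2}n^{1/\alpha_R})$ since $p\asymp n$. This is the first claim. Uniqueness of $\widehat{\bm\theta}$ and positivity of the residual lengths hold with probability tending to one by the same localization.

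\emph{Step 2.} By definition $\mathbf H_\rho=\mathbf W_\rho$, computed from $\bm e_i$. Tyler scatter and $\mathbf W_\rho$ are invariant to radii, so I can use the spherical or Gaussian representative $\bm z_i$ of Appendix~\ref{app:rmt}. Corollary~\ref{cor:original_omega} with $k=l$ gives $\omega^0_{\rho,n}=\Omega_{kk,n}+O_p(\varepsilon_n)=\omega_{\rho,n}+O_p(\varepsilon_n)$. That corollary rests on \eqref{eq:original_gram_error} together with the uniform derivative trace bounds of Lemma~\ref{lem:original_sample_traces}, and these hold for spherical columns by Theorem~\ref{thm:original_spike_inverse}. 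Here
\[
 \varepsilon_n=\{\log(n+2)/n\}^{1/2}+r_p/p.
\]
Adding Step~1 gives the second display. The stated rate $n^{-1/2}+p^{-1/2}$ absorbs $\varepsilon_n$ only up to the logarithm, so I would either keep $\sqrt{\log n/n}$ or note that it is dominated by $p^{-1/2+1/\alpha_R}$. The latter is the form the theorem's $r_n$ uses.

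\emph{Main obstacle.} The hard part is the Frobenius perturbation in Step~1 under diverging spikes and heavy radii. Only $\max_i\tau_i^{-1}$ is controlled, with the $n^{1/\alpha_R}$ loss, and the second-order term needs $\max_i\tau_i^{-2}=O_p(p^{-1}n^{2/\alpha_R})$ to be negligible, which follows from $\alpha_R>8$. Everything here is done in the whitened coordinates $\bm x_i=\mathbf A^{1/2}\bm e_i$, where $\mathbf T$ has bounded operator norm. This avoids ever touching $\mathbf A^{-1/2}$.
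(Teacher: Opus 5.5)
Your overall structure is the paper's. Step~2 is exactly its appeal to Corollary~\ref{cor:original_omega} through the identity $\mathbf H_\rho=\mathbf W_\rho$, and the paper also disposes of the logarithm by noting $\sqrt{\log(n+2)/n}=O(p^{-1/2+1/\alpha_R})$. Step~1 is done differently. The paper writes the residual kernel exactly as $\widehat H_{ij}=r_ir_j\{H_{ij,\rho}-B_i/\tau_j-B_j/\tau_i+c_\theta/(\tau_i\tau_j)\}$, with $B_i=\bm t_i^\T\widehat{\bm\theta}$ and $\|\bm x_i-\widehat{\bm\theta}\|^2=\tau_i^2(1+\eta_i)$. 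It uses $n^{-1}\sum_iB_i^2=n^{-1}\|\mathbf T^\T\widehat{\bm\theta}\|^2=O_p(n^{-1})$ to show that the averaged squared entry error is $O_p(p^{-2}n^{2/\alpha_R})$, and then applies Cauchy--Schwarz. You instead perturb the factor matrix and use $\|\widehat{\mathbf T}^\T\widehat{\mathbf T}-\mathbf T^\T\mathbf T\|_{\rm F}\le(\|\widehat{\mathbf T}\|_{\op}+\|\mathbf T\|_{\op})\|\widehat{\mathbf T}-\mathbf T\|_{\rm F}$. The paper's entrywise form is the one it later reuses, uniformly over the balancing parameter, in the local-alternative variance comparison. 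Your factor form avoids the $B_i$ bookkeeping, and once repaired as below it is sharper.

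One step of yours is not supported as written. Lemma~\ref{lem:orders} controls only $\|\mathbf T\|_{\op}$. It therefore gives neither $\|\widehat{\mathbf T}\|_{\op}=O_p(1)$ nor $\|\widehat{\mathbf H}\|_{\rm F}=O_p(\sqrt n)$. Your column-maximum bound yields only $\|\widehat{\mathbf T}\|_{\op}\le\|\mathbf T\|_{\op}+\|\widehat{\mathbf T}-\mathbf T\|_{\rm F}=O_p(n^{1/\alpha_R})$, which would inflate the first rate to $O_p(n^{-1/2+2/\alpha_R})$.

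The repair is one line. Whenever the residual is nonzero, $\|\bm x/\|\bm x\|-\bm y/\|\bm y\|\|\le2\|\bm x-\bm y\|/\|\bm y\|$ gives $\|\widehat{\bm t}_i-\bm t_i\|\le2\|\widehat{\bm\theta}\|/\tau_i$. Hence $\|\widehat{\mathbf T}-\mathbf T\|_{\rm F}^2\le4\|\widehat{\bm\theta}\|^2\sum_i\tau_i^{-2}=O_p(1)$, by the $k=2$ case of Lemma~\ref{lem:orders} and $p\asymp n$, and so $\|\widehat{\mathbf T}\|_{\op}=O_p(1)$.

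Using this averaged bound in place of $\sqrt n$ times the column maximum actually gives $\widehat\omega_\rho-\omega^0_{\rho,n}=O_p(n^{-1/2})$, which implies the stated rate. The paper's extra factor $n^{1/\alpha_R}$ comes from bounding $(r_ir_j-1)^2H_{ij,\rho}^2$ by $C(\eta_i^2+\eta_j^2)$ using only $|H_{ij,\rho}|\le1$.
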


\begin{proof}
Write $\bm{\theta}=\widehat{\bm{\theta}}$, $B_i=\bm{t}_i^\T\bm{\theta}$, and
$c_\theta=\|\bm{\theta}\|^2$.  Then
\[
 c_\theta=O_p(1),\qquad
 \frac1n\sum_iB_i^2
 =\frac1n\|\mathbf{T}^\T\bm{\theta}\|^2=O_p(n^{-1}).
\]
The centered radii have the exact form
\[
 \|\bm{x}_i-\bm{\theta}\|^2=\tau_i^2(1+\eta_i),\qquad
 \eta_i=-2B_i/\tau_i+c_\theta/\tau_i^2 .
\]
Their uniform and averaged errors satisfy
\[
 \begin{aligned}
 \max_i|\eta_i|
 &\le2\|\bm{\theta}\|\max_i\tau_i^{-1}
                 +c_\theta\max_i\tau_i^{-2}\\
 &=O_p(p^{-1/2}n^{1/\alpha_R}
                  +p^{-1}n^{2/\alpha_R}),\\
 \frac1n\sum_i\eta_i^2
 &\le8\max_i\tau_i^{-2}\frac1n\sum_iB_i^2
             +2c_\theta^2\frac1n\sum_i\tau_i^{-4}\\
 &=O_p(p^{-2}n^{2/\alpha_R}).
 \end{aligned}
\]
In particular, $\max_i|\eta_i|\le1/2$ with probability tending to one.

Put $r_i=(1+\eta_i)^{-1/2}$ on that event.
The identity
\[
 |(1+x)^{-1/2}-1|
 =\frac{|x|}{\sqrt{1+x}\{1+\sqrt{1+x}\}}
 \le2|x|\quad(|x|\le1/2)
\]
gives
\[
 |r_i|\le\sqrt2,\qquad
 |r_ir_j-1|\le C(|\eta_i|+|\eta_j|).
\]
The feasible kernel is exactly
\[
 \widehat H_{ij}
 =r_ir_j\left\{
 H_{ij,\rho}-B_i/\tau_j-B_j/\tau_i+
                          c_\theta/(\tau_i\tau_j)\right\}.
\]
As $|H_{ij,\rho}|\le1$, squaring the difference yields
\[
 \begin{aligned}
 |\widehat H_{ij}-H_{ij,\rho}|^2
 \le C\{\eta_i^2+\eta_j^2+
       B_i^2\tau_j^{-2}+B_j^2\tau_i^{-2}
       +c_\theta^2\tau_i^{-2}\tau_j^{-2}\}.
 \end{aligned}
\]
For nonnegative $a_i,b_i$,
\[
 \frac1{n(n-1)}\sum_{i\ne j}a_i b_j
 \le\frac n{n-1}
       \left(\frac1n\sum_i a_i\right)
       \left(\frac1n\sum_j b_j\right).
\]
Applying this inequality to the displayed kernel bound gives
\[
 \begin{aligned}
 D_n&:=\frac1{n(n-1)}
        \sum_{i\ne j}(\widehat H_{ij}-H_{ij,\rho})^2\\
 &\le C\frac1n\sum_i\eta_i^2+
    C\left(\frac1n\sum_iB_i^2\right)
          \left(\frac1n\sum_i\tau_i^{-2}\right)
    +Cc_\theta^2
          \left(\frac1n\sum_i\tau_i^{-2}\right)^2\\
 &=O_p(p^{-2}n^{2/\alpha_R})+O_p(n^{-1}p^{-1})
                                      +O_p(p^{-2})\\
 &=O_p(p^{-2}n^{2/\alpha_R}).
 \end{aligned}
\]

For $\mathbf{H}_\rho=\mathbf{T}^\T\mathbf{T}$, positivity and the unit diagonal give
\[
 \sum_{i,j}H_{ij,\rho}^2
 =\operatorname{tr}(\mathbf{H}_\rho^2)
 \le\|\mathbf{H}_\rho\|_{\op}\operatorname{tr}(\mathbf{H}_\rho)=O_p(n).
\]
Thus, with
$V_n=\{n(n-1)\}^{-1}\sum_{i\ne j}H_{ij,\rho}^2$,
\[
 V_n=O_p(n^{-1}),\qquad
 |\widehat\omega_\rho-\omega^0_{\rho,n}|
 \le2p V_n^{1/2}D_n^{1/2}+pD_n .
\]
The two terms have respective orders
\[
 O_p(p^{-1/2}n^{1/\alpha_R}),\qquad
 O_p(p^{-1}n^{2/\alpha_R}).
\]
Their ratio is $p^{-1/2}n^{1/\alpha_R}\to0$, proving the first
assertion.

Corollary~\ref{cor:original_omega} has already established
\[
 \omega^0_{\rho,n}-\omega_{\rho,n}=O_p(\varepsilon_n),\qquad
 \varepsilon_n=\sqrt{\frac{\log(n+2)}n}+\frac{r_p}p .
\]
Since $p/n$ stays bounded above and away from zero,
\[
 p^{-1/2}n^{1/\alpha_R}=O(p^{-1/2+1/\alpha_R}),\qquad
 \sqrt{\frac{\log(n+2)}n}=O(p^{-1/2+1/\alpha_R}).
\]
Adding these two variance errors proves the stated rate.
\end{proof}

\subsection{Student radial distributions}\label{subsec:student_radial}

For clarity about the scope of Assumption~\ref{ass:radial}, fix $\nu>1$.
Let $V$ have the chi-square distribution with $\nu$ degrees of freedom,
let $\bm G\sim N_p(\bm0,\mathbf I_p)$ be independent of $V$, and put
$W_p=\|\bm G\|$. The multivariate Student representation is
\[
 \bm X-\bm\mu=\sqrt{\nu/V}\,\mSigma_p^{1/2}\bm G
             =R_p\mSigma_p^{1/2}\bm U,\qquad
 R_p=\sqrt{\nu}\,W_p/\sqrt V,\quad \bm U=\bm G/W_p.
\]
The Gaussian radial-angular decomposition makes $R_p$ independent of
$\bm U$. Write $\Gamma(x)=\int_0^\infty s^{x-1}e^{-s}\,ds$ for $x>0$.
Integration against the chi-square densities gives
\[
 \begin{aligned}
 \bar R_p=\E R_p
 &=\sqrt\nu\,
   \frac{\Gamma((p+1)/2)}{\Gamma(p/2)}
   \frac{\Gamma((\nu-1)/2)}{\Gamma(\nu/2)},\\
 \frac{\bar R_p}{\sqrt p}
 &\longrightarrow
   \sqrt{\nu/2}\,\frac{\Gamma((\nu-1)/2)}{\Gamma(\nu/2)}\in(0,\infty).
 \end{aligned}
\]
Choose any fixed $s>8$. For $p>s$, the same integrations yield
\[
 \E R_p^{-s}
 =\nu^{-s/2}
  \frac{\Gamma((\nu+s)/2)}{\Gamma(\nu/2)}
  \frac{\Gamma((p-s)/2)}{\Gamma(p/2)}.
\]
The gamma-ratio asymptotic $\Gamma(x+c)/\Gamma(x)=x^c\{1+O(x^{-1})\}$
for fixed $c$ follows from Stirling's formula. Hence there exist
$p_0>s$ and $C_{\nu,s}<\infty$ such that
\[
 \begin{gathered}
 \sup_{p\ge p_0}\bar R_p^s\E R_p^{-s}\le C_{\nu,s},\\
 \Pr\{R_p/\bar R_p\le t\}
 \le t^s\bar R_p^s\E R_p^{-s}\le C_{\nu,s}t^s\\(0<t<1).
 \end{gathered}
\]
This proves the radial assumption with $\alpha_R=s$, uniformly after
omitting finitely many initial dimensions. The proof uses no second
upper moment of $R_p$. In fact, for $1<\nu\le2$, the integral
$\E(V^{-1})$ diverges, so $\E R_p^2=\infty$, whereas the finite mean and
the displayed lower-tail bound still hold. For $\nu\le1$ the mean
$\E R_p$ diverges, and the present mean-inference assumption does not apply.

\section{Conditional normal limits and Cauchy calibration}\label{app:clt}
\subsection{Conditional quadratic forms}

The following calculation supplies the probability passage from
characteristic functions to the Gaussian limits used below.

\begin{lemma}\label{lem:gaussian_continuity}
If, for every $\bm{t}\in\mathbb R^m$,
\[
 \E e^{i\bm{t}^\T \bm{X}_n}\longrightarrow
       \exp(i\bm{t}^\T\bm{\mu}-\bm{t}^\T\mathbf{\Omega} \bm{t}/2),\qquad \mathbf{\Omega}\succeq\mathbf{0},
\]
then $\bm{X}_n\xrightarrow{d} N_m(\bm{\mu},\mathbf{\Omega})$, including singular $\mathbf{\Omega}$.
\end{lemma}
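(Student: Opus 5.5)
The statement is Lévy's continuity theorem specialised to a Gaussian limit. The plan is to show that the limiting function is the characteristic function of a genuine probability law, namely $N_m(\bm\mu,\mathbf\Omega)$, even when $\mathbf\Omega$ is singular. The continuity theorem then gives the conclusion.

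First, we construct the law. Since $\mathbf\Omega\succeq\mathbf 0$, we take its symmetric square root $\mathbf\Omega^{1/2}$ from the spectral decomposition, with zero eigenvalues left at zero. Let $\bm g\sim N_m(\bm 0,\mathbf I_m)$ and put $\bm Y=\bm\mu+\mathbf\Omega^{1/2}\bm g$. For each $\bm t$, the scalar $\bm t^\T\bm Y$ equals $\bm t^\T\bm\mu+(\mathbf\Omega^{1/2}\bm t)^\T\bm g$. This is univariate normal with mean $\bm t^\T\bm\mu$ and variance $\|\mathbf\Omega^{1/2}\bm t\|^2=\bm t^\T\mathbf\Omega\bm t$. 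When that variance is zero, the scalar is a point mass, which is a degenerate normal.

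Second, we compute the characteristic function. Using the one-dimensional Gaussian formula $\E e^{isN(0,\sigma^2)}=e^{-\sigma^2s^2/2}$, which also holds for $\sigma=0$, we get $\E e^{i\bm t^\T\bm Y}=\exp(i\bm t^\T\bm\mu-\bm t^\T\mathbf\Omega\bm t/2)$. By definition, the law of $\bm Y$ is $N_m(\bm\mu,\mathbf\Omega)$, and this covers the singular case because it is defined through $\bm Y$ rather than through a density.

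Third, we conclude. The limit function equals $\varphi_{\bm Y}$, which is continuous at $\bm 0$. Lévy's continuity theorem in $\mathbb R^m$ then gives $\bm X_n\xrightarrow{d}\bm Y$. Equivalently, the Cramér--Wold device applied to the scalar statements $\bm t^\T\bm X_n\xrightarrow{d}\bm t^\T\bm Y$ gives the same result, and each scalar statement follows from the univariate continuity theorem.

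The only point needing care is the singular case, where no density exists. The construction of $\bm Y$ via $\mathbf\Omega^{1/2}$ addresses this directly, so I expect no real obstacle. The lemma is stated mainly so that it can later be applied to the finite-grid limits whose correlation matrix $\mR_{\HR}$ may be singular, as noted in Corollary~\ref{cor:original_omega}.
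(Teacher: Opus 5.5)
Your argument is correct. You exhibit $\bm Y=\bm\mu+\mathbf\Omega^{1/2}\bm g$ and check that its characteristic function is the limiting function; Lévy's continuity theorem (or Cramér--Wold plus the scalar version) then finishes the proof. The singular case causes no trouble because the law is defined by construction rather than through a density.

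The paper takes a different, self-contained route: instead of invoking the continuity theorem, it essentially reproves it for this special case.
\begin{itemize}
\item It adds a small independent Gaussian, forming $\bm X_n+\delta\bm G$ and $\bm Y+\delta\bm G$. Both then have densities given by Fourier inversion of the characteristic functions multiplied by $e^{-\delta^2\|\bm t\|^2/2}$.
\item Dominated convergence gives pointwise convergence of these densities, and Scheffé's identity upgrades this to $L^1$ convergence.
\item For bounded $1$-Lipschitz $h$, the smoothing costs at most $2\delta\,\E\|\bm G\|\le2\delta\sqrt m$. Hence $\E h(\bm X_n)\to\E h(\bm Y)$ after letting $n\to\infty$ and then $\delta\downarrow0$.
\item It passes from Lipschitz test functions to weak convergence using $h_\eta=\{1-\eta^{-1}\mathrm{dist}(\cdot,F)\}_+$ for closed $F$ and a portmanteau argument.
\end{itemize}
The smoothing makes the singular case automatic, since the smoothed limit always has a density. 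It also avoids a separate tightness step, because the limit is already known to be a probability law.

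Your version is shorter and relies only on a textbook theorem; the paper's buys self-containment. Your explicit construction of $N_m(\bm\mu,\mathbf\Omega)$ via $\mathbf\Omega^{1/2}$ also makes precise what the paper takes for granted when it simply writes ``let $\bm Y\sim N_m(\bm\mu,\mathbf\Omega)$.''
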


\begin{proof}
Let $\bm{Y}\sim N_m(\bm{\mu},\mathbf{\Omega})$ and let $\bm{G}\sim N_m(\bm{0},\mathbf{I}_m)$
be independent of $\bm{Y}$ and all $\bm{X}_n$.
For $\delta>0$, multiplication of the one-dimensional Gaussian integrals gives
\[
 \frac{e^{-\|\bm{u}\|^2/(2\delta^2)}}{(2\pi\delta^2)^{m/2}}
 =\frac1{(2\pi)^m}\int_{\mathbb R^m}
             e^{-i\bm{t}^\T \bm{u}-\delta^2\|\bm{t}\|^2/2}\,d\bm{t}.
\]
The densities of $\bm{X}_n+\delta \bm{G}$ and $\bm{Y}+\delta \bm{G}$ are therefore
\[
 \begin{aligned}
 f_{n,\delta}(\bm{x})
 &=\frac1{(2\pi)^m}\int e^{-i\bm{t}^\T \bm{x}}
          \E e^{i\bm{t}^\T \bm{X}_n}e^{-\delta^2\|\bm{t}\|^2/2}\,d\bm{t},\\
 f_\delta(\bm{x})
 &=\frac1{(2\pi)^m}\int e^{-i\bm{t}^\T \bm{x}}
          e^{i\bm{t}^\T\bm{\mu}-\bm{t}^\T\mathbf{\Omega} \bm{t}/2}
          e^{-\delta^2\|\bm{t}\|^2/2}\,d\bm{t} .
 \end{aligned}
\]
The integrands are bounded by an integrable Gaussian function,
so $f_{n,\delta}(\bm{x})\to f_\delta(\bm{x})$ at every $\bm{x}$.
Both densities integrate to one, whence
\[
 \int|f_{n,\delta}-f_\delta|
 =2-2\int\min(f_{n,\delta},f_\delta)\longrightarrow0,
 \qquad 0\le\min(f_{n,\delta},f_\delta)\le f_\delta.
\]
For every $h$ with $|h|\le1$ and $|h(\bm{x})-h(\bm{y})|\le\|\bm{x}-\bm{y}\|$,
\[
 \begin{aligned}
 |\E h(\bm{X}_n)-\E h(\bm{Y})|
 &\le2\delta\,\E\|\bm{G}\|
      +|\E h(\bm{X}_n+\delta \bm{G})-\E h(\bm{Y}+\delta \bm{G})|\\
 &\le2\delta\sqrt m+\int|f_{n,\delta}-f_\delta|.
 \end{aligned}
\]
 Taking $n\to\infty$ and then $\delta\downarrow0$ proves convergence
 of expectations for bounded Lipschitz functions. Scaling the function
 extends this conclusion to every finite Lipschitz constant.

For a closed set $F$, put
\[
 h_\eta(\bm{x})=\{1-\eta^{-1}\operatorname{dist}(\bm{x},F)\}_+,
 \qquad \eta>0.
\]
This function is bounded by one and is $\eta^{-1}$-Lipschitz. Hence
\[
 \begin{gathered}
 \limsup_n\Pr\{\bm{X}_n\in F\}
 \le\E h_\eta(\bm{Y})
 \le\Pr\{\operatorname{dist}(\bm{Y},F)<\eta\}
 \longrightarrow\Pr\{\bm{Y}\in F\}\\(\eta\downarrow0).
 \end{gathered}
\]
Taking complements gives the corresponding lower bound for open sets.
For a Borel set $B$ with $\Pr\{\bm{Y}\in\partial B\}=0$,
\[
 \begin{aligned}
 &\Pr\{\bm{Y}\in B^\circ\}
 \le\liminf_n\Pr\{\bm{X}_n\in B\}
\\
 & \le\limsup_n\Pr\{\bm{X}_n\in B\}
 \le\Pr\{\bm{Y}\in\overline B\}=\Pr\{\bm{Y}\in B\}.
 \end{aligned}
\]
The endpoints coincide, which is weak convergence.
\end{proof}

Let $\mathcal F_n$ be a sigma-field and let $\epsilon_i$, conditionally
on $\mathcal F_n$, be independent signs with probabilities $1/2$.
For a symmetric, $\mathcal F_n$-measurable matrix
$\mathbf{A}_n=(a_{ij,n})_{i,j=1}^n$ with zero diagonal, put
\[
 \begin{gathered}
 \mathcal Q_n=\sum_{i<j}a_{ij,n}\epsilon_i\epsilon_j,\\
 \upsilon_n^2=\sum_{i<j}a_{ij,n}^2=\frac12\tr(\mathbf{A}_n^2),\\
 \mathbf{B}_n=\begin{cases}
 \mathbf{A}_n/(2\upsilon_n),&\upsilon_n>0,\\
 \mathbf{0},&\upsilon_n=0.
 \end{cases}
 \end{gathered}
\]
When $\upsilon_n=0$, the zero Frobenius norm implies $\mathbf{A}_n=\mathbf{0}$ and
$\mathcal Q_n=0$. Set $\mathcal Q_n/\upsilon_n=0$ and any displayed trace ratio with
zero denominator to zero on this event.

\begin{proposition}[Conditional normal approximation]\label{prop:cond_dejong}
If
\[
 \upsilon_n^2\xrightarrow{p}\upsilon^2>0,\qquad
 \frac{\tr(\mathbf{A}_n^4)}{\{\tr(\mathbf{A}_n^2)\}^2}\xrightarrow{p}0,
\]
then $\mathcal Q_n\xrightarrow{d} N(0,\upsilon^2)$.
The conditional characteristic functions of $\mathcal Q_n/\upsilon_n$
converge in probability to $\exp(-t^2/2)$ for every real $t$.
\end{proposition}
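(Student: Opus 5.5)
We work conditionally on $\mathcal F_n$ and apply a martingale CLT (or de Jong's fourth-moment theorem) to the Rademacher chaos. Afterwards we unconditionalize through characteristic functions and Lemma~\ref{lem:gaussian_continuity}.

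On the event $\upsilon_n>0$, which has probability tending to one because $\upsilon_n^2\xrightarrow{p}\upsilon^2>0$, write
\[
 \mathcal Q_n/\upsilon_n=\sum_{j=2}^n\xi_{nj},\qquad
 \xi_{nj}=\epsilon_j\sum_{i<j}a_{ij,n}\epsilon_i/\upsilon_n .
\]
This is a martingale difference array with respect to the filtration $\sigma(\mathcal F_n,\epsilon_1,\dots,\epsilon_j)$, and $\sum_j\E(\xi_{nj}^2\mid\mathcal F_n)=1$. The argument rests on two conditional quantities. The first is the conditional variance of the predictable quadratic variation:
\[
 V_n=\sum_j\Bigl(\sum_{i<j}a_{ij,n}\epsilon_i\Bigr)^2/\upsilon_n^2 .
\]
A direct fourth-moment computation for signs gives $\Var(V_n\mid\mathcal F_n)\le C\tr(\mathbf A_n^4)/\upsilon_n^4$. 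The second is the conditional Lyapunov sum $\sum_j\E(\xi_{nj}^4\mid\mathcal F_n)$. By Khintchine's inequality this is bounded by
\[
 C\sum_j\Bigl(\sum_{i<j}a_{ij,n}^2\Bigr)^2/\upsilon_n^4\le C\max_j\sum_i a_{ij,n}^2/\upsilon_n^2 ,
\]
and since $\max_j\sum_i a_{ij}^2\le\{\tr(\mathbf A_n^4)\}^{1/2}$, this is at most $C\{\tr(\mathbf A_n^4)\}^{1/2}/\upsilon_n^2$. Both bounds are $O\bigl(\{\tr(\mathbf A_n^4)/(\tr\mathbf A_n^2)^2\}^{1/2}\bigr)$, which tends to zero in probability by hypothesis.

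To obtain the conditional characteristic function statement quantitatively, I would not cite an abstract martingale CLT. Instead I would use the standard Lindeberg-type telescoping for martingales, in the form of the McLeish or Brown argument, carried out conditionally. For fixed $t$ this gives
\[
 \bigl|\E(e^{it\mathcal Q_n/\upsilon_n}\mid\mathcal F_n)-e^{-t^2/2}\bigr|\le \psi_t\bigl(\E(|V_n-1|\mid\mathcal F_n)+\textstyle\sum_j\E(\xi_{nj}^4\mid\mathcal F_n)^{1/2}\bigr),
\]
where $\psi_t$ is an explicit continuous function vanishing at zero. An alternative route is de Jong's bound for degenerate $U$-statistics: for Rademacher chaos the fourth cumulant of $\mathcal Q_n/\upsilon_n$ is controlled by $\tr(\mathbf A_n^4)/\upsilon_n^4$, and the maximal influence is controlled as above, so one could instead invoke a quantitative fourth-moment or invariance theorem (Mossel--O'Donnell--Oleszkiewicz) conditionally. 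Either route yields convergence in probability of the conditional characteristic function to $e^{-t^2/2}$. The step I regard as the main obstacle is making this telescoping bound explicit and uniform in the random coefficients, so that it holds pathwise on $\mathcal F_n$ rather than merely for deterministic arrays.

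Finally, $\mathcal Q_n=\upsilon_n(\mathcal Q_n/\upsilon_n)$, and $e^{its\upsilon_n}$ can be handled by writing
\[
 \E\bigl(e^{it\mathcal Q_n}\mid\mathcal F_n\bigr)=\phi_n(t\upsilon_n),
\]
where $\phi_n$ is the conditional characteristic function of $\mathcal Q_n/\upsilon_n$. Two facts close the gap between $\phi_n(t\upsilon_n)$ and $e^{-t^2\upsilon^2/2}$. First, the pointwise bound can be made uniform in $t$ over compact sets, since $\psi_t$ is locally bounded. Second, $\upsilon_n\xrightarrow{p}\upsilon$. Together these give $\phi_n(t\upsilon_n)\xrightarrow{p}e^{-t^2\upsilon^2/2}$. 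Bounded convergence then shows that the unconditional characteristic function converges to $e^{-t^2\upsilon^2/2}$, and Lemma~\ref{lem:gaussian_continuity} yields $\mathcal Q_n\xrightarrow{d}N(0,\upsilon^2)$.
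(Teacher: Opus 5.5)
Your argument is correct in substance, but it takes a different route from the paper. You order the signs, write $\mathcal Q_n/\upsilon_n$ as a martingale in $\sigma(\mathcal F_n,\epsilon_1,\ldots,\epsilon_j)$, and verify the hypotheses of the martingale CLT: $V_n\to1$ plus a conditional Lyapunov condition.

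The paper never orders anything. It runs a Lindeberg swap, replacing $\epsilon_i$ by $g_i\sim N(0,1)$ one coordinate at a time; the two share moments through order three, so the total cost is $8\|f^{(4)}\|_\infty\tr(\mathbf B_n^4)$. It then computes the characteristic function of the Gaussian chaos $\sum_j\lambda_j(g_j^2-1)$ exactly, with cumulant error governed by $\chi_n=\|\mathbf B_n\|_{\rm op}\le\{\tr(\mathbf B_n^4)\}^{1/4}$. This is essentially the invariance-principle alternative you mention, carried out explicitly. It produces the pathwise bound \eqref{eq:clt_characteristic_bound}. You can substitute $t\upsilon_n$ into that bound directly, and the paper reuses it for the $O_p(p^{-1/2})$ rates in the null and joint theorems.

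What each route buys:
\begin{itemize}
\item The swap route gives an order-free error stated in the spectral quantities $\tr(\mathbf B_n^4)$ and $\|\mathbf B_n\|_{\rm op}$.
\item The martingale route is the more familiar tool and needs no moment matching. Here it can be made equally explicit. Since $\xi_{nj}^2$ is predictable and $\E\{e^{is\xi_{nj}}\mid\sigma(\mathcal F_n,\epsilon_1,\ldots,\epsilon_{j-1})\}=\cos(sS_j/\upsilon_n)$ with $S_j=\sum_{i<j}a_{ij,n}\epsilon_i$, the exponential martingale $\exp\{is\sum_{j\le k}\xi_{nj}+s^2\sum_{j\le k}\xi_{nj}^2/2\}$, stopped before $\sum_{j\le k}\xi_{nj}^2$ exceeds $2$, telescopes with increments $O(s^4S_j^4/\upsilon_n^4)$.
\end{itemize}
Both routes end with an error of order $\|\mathbf A_n\|_{\rm op}/\upsilon_n$.

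Three repairs are needed.

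\textbf{First, the variance bound.} Your claimed bound $C\tr(\mathbf A_n^4)/\upsilon_n^4$ is true but is not a direct computation. The exact expression $\Var(V_n\mid\mathcal F_n)=4\upsilon_n^{-4}\sum_{i<k}c_{ik}^2$, with $c_{ik}=\sum_{j>k}a_{ij,n}a_{kj,n}$, is a triangular sum, and bounding it by $\tr(\mathbf A_n^4)$ needs boundedness of triangular truncation on the Schatten $4$-class. What a direct computation does give is enough. With $\bm w_k=(a_{jk,n}\mathbf 1\{j>k\})_j$ you have $\sum_{i<k}c_{ik}^2\le\|\mathbf A_n\bm w_k\|^2\le\|\mathbf A_n\|_{\rm op}^2\sum_{j>k}a_{jk,n}^2$. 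Summing over $k$ gives $\Var(V_n\mid\mathcal F_n)\le4\|\mathbf A_n\|_{\rm op}^2/\upsilon_n^2\le4\{\tr(\mathbf A_n^4)\}^{1/2}/\upsilon_n^2$, which is the square-root rate you actually use.

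\textbf{Second, the displayed $\psi_t$ bound.} The square root must sit outside the sum. As written, $\sum_j\{\E(\xi_{nj}^4\mid\mathcal F_n)\}^{1/2}\ge\sum_j\E(\xi_{nj}^2\mid\mathcal F_n)=1$, so that term never vanishes.

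\textbf{Third, the step you call the main obstacle is not one.} $\mathbf A_n$ is $\mathcal F_n$-measurable and the signs are conditionally i.i.d.\ Rademacher. So the conditional characteristic function is the deterministic-array characteristic function evaluated at $\mathbf A_n(\omega)$, and any deterministic bound holds pathwise. Even the qualitative martingale CLT suffices, via the subsequence characterization of convergence in probability.
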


This is a quadratic special case of \citet{deJong1987} and
\citet{DoblerPeccati2017}. The calculation below proves the required
case directly and records its error bound.

\begin{proof}
 Condition on $\mathcal F_n$ on the event $\upsilon_n>0$, and omit
 $n$ from the notation.
Let $g_1,\ldots,g_n$ be independent $N(0,1)$ variables, independent of the signs.
For $0\le i\le n$, define
\[
 \bm{V}^{(i)}=(g_1,\ldots,g_i,\epsilon_{i+1},\ldots,\epsilon_n)^\T,\qquad
 q_i=(\bm{V}^{(i)})^\T \mathbf{B} \bm{V}^{(i)},\qquad
 q_0=\mathcal Q/\upsilon.
\]
The two adjacent forms differ only in coordinate $i$:
\[
 q_{i-1}=u_i+2\epsilon_i v_i,\quad
 q_i=u_i+2g_i v_i,\quad
 v_i=\sum_{j<i}b_{ij}g_j+\sum_{j>i}b_{ij}\epsilon_j.
\]
Here $(u_i,v_i)$ is independent of $(\epsilon_i,g_i)$, and
$\mathbf{B}=(b_{ij})_{i,j=1}^n$.
For $f\in C^4(\mathbb R)$ with bounded fourth derivative,
\[
 f(u+x)=\sum_{k=0}^3\frac{f^{(k)}(u)x^k}{k!}
       +\frac{x^4}{3!}\int_0^1(1-s)^3f^{(4)}(u+sx)\,ds.
\]
The moments of $\epsilon_i$ and $g_i$ agree through order three. Therefore
\[
 \begin{aligned}
 |\E f(q_i)-\E f(q_{i-1})|
 &\le\frac{16\|f^{(4)}\|_\infty}{24}
       (\E g_i^4+\E\epsilon_i^4)\E v_i^4,\\
 \E v_i^4
 &=\sum_{j\ne i}b_{ij}^4\E\{(V_j^{(i)})^4\}
   +6\sum_{\substack{j<k\\j,k\ne i}}b_{ij}^2b_{ik}^2\\
 &\le3\left(\sum_jb_{ij}^2\right)^2.
 \end{aligned}
\]
Summing the adjacent differences gives
\begin{equation}\label{eq:clt_replacement}
 \begin{aligned}
 |\E f(q_0)-\E f(q_n)|
 &\le8\|f^{(4)}\|_\infty\sum_i\left(\sum_jb_{ij}^2\right)^2\\
 &=8\|f^{(4)}\|_\infty\sum_i\{(\mathbf{B}^2)_{ii}\}^2
 \le8\|f^{(4)}\|_\infty\tr(\mathbf{B}^4).
 \end{aligned}
\end{equation}
Let $\lambda_1,\ldots,\lambda_n$ be the eigenvalues of $\mathbf{B}$.
Orthogonal diagonalization of the Gaussian quadratic form yields
\[
 q_n\ \stackrel{\mathcal D}{=}\ \sum_j\lambda_j(g_j^2-1),\qquad
 \sum_j\lambda_j=0,\quad \sum_j\lambda_j^2=\frac12,\quad
 \chi=\max_j|\lambda_j|=\|\mathbf{B}\|_{\op}.
\]
For $2|t|\chi<1$, the Gaussian integral and the power series for $\log(1-z)$ give
\[
 \begin{aligned}
 \E e^{it\lambda(g^2-1)}
 &=e^{-it\lambda}(1-2it\lambda)^{-1/2},\\
 \log\E e^{itq_n}
 &=-\frac{t^2}{2}+\mathcal E(t),\\
 |\mathcal E(t)|
 &\le\sum_{k=3}^\infty\frac{(2|t|)^k}{2k}
                 \chi^{k-2}\sum_j\lambda_j^2
 \le\frac{2|t|^3\chi}{3(1-2|t|\chi)}.
 \end{aligned}
\]
 Using $f(x)=e^{itx}$ in \eqref{eq:clt_replacement}, we obtain the
 following bound, with $\chi_n=\|\mathbf{B}_n\|_{\op}$, on
 $\mathcal E_n(t)=\{\upsilon_n>0,\ 2|t|\chi_n<1\}$:
\begin{equation}\label{eq:clt_characteristic_bound}
 \begin{aligned}
 \left|\E(e^{it\mathcal Q_n/\upsilon_n}\mid\mathcal F_n)-e^{-t^2/2}\right|
 &\le8t^4\tr(\mathbf{B}_n^4)\\
 &\quad+e^{-t^2/2}
 \left[\exp\left\{\frac{2|t|^3\chi_n}
                         {3(1-2|t|\chi_n)}\right\}-1\right].
 \end{aligned}
\end{equation}
 On this event all terms on the right tend to zero in probability, since
\[
 \chi_n^4\le\tr(\mathbf{B}_n^4)
 =\frac{\tr(\mathbf{A}_n^4)}{4\{\tr(\mathbf{A}_n^2)\}^2}\xrightarrow{p}0.
\]
Moreover,
\[
 \Pr\{\mathcal E_n(t)^c\}
 \le\Pr\{\upsilon_n=0\}+\Pr\{2|t|\chi_n\ge1\}\longrightarrow0.
\]
The difference is bounded by two on the complement, so it converges
to zero in probability. To average this convergence, for $0\le D_n\le2$ use
\[
 \E D_n\le\delta+2\Pr\{D_n>\delta\},\qquad \delta>0.
\]
Apply this inequality to the characteristic-function difference.
 For the unstandardized form, substitute $t\upsilon_n$ into the same
 bound on $\widetilde{\mathcal E}_n(t)
 =\{\upsilon_n>0,\ 2|t|\upsilon_n\chi_n<1\}$.
 Here $\upsilon_n=O_p(1)$ and $\upsilon_n\chi_n\xrightarrow{p}0$, so
 $\Pr\{\widetilde{\mathcal E}_n(t)^c\}\to0$. Also
\[
 \left|e^{-t^2\upsilon_n^2/2}-e^{-t^2\upsilon^2/2}\right|
 \le \min\left\{2,\frac{t^2}{2}|\upsilon_n^2-\upsilon^2|\right\}.
\]
Thus $\E e^{it\mathcal Q_n}\to e^{-t^2\upsilon^2/2}$ for every $t$.
Lemma~\ref{lem:gaussian_continuity} proves the claim.
\end{proof}

\subsection{Null and joint limits}

For $\rho$ in the fixed grid, retain the kernels in
\eqref{eq:null_pair_kernel_variance}, with $H_{ii,\rho}=1$, and write
\[
 H_{ij,\rho}=
 \frac{\ve_i^\T\widehat {\mathbf{A}}_\rho^0\ve_j}
 {\{(\ve_i^\T\widehat {\mathbf{A}}_\rho^0\ve_i)
       (\ve_j^\T\widehat {\mathbf{A}}_\rho^0\ve_j)\}^{1/2}},\qquad i\ne j.
\]
Use the representatives $\bm{W}_i$, signs $\epsilon_i$ and sigma-field
$\mathcal F_n$ in \eqref{eq:unsigned_sign_decomposition}, so that
$\bm{U}_i=\epsilon_i\bm{W}_i$. Then
\[
 H_{ij,\rho}=\epsilon_i\epsilon_j\bar H_{ij,\rho}\ (i\ne j),\qquad
 \bar H_{ii,\rho}=0,\qquad
 \bar {\mathbf{H}}_{\rho}=(\bar H_{ij,\rho})_{i,j=1}^n .
\]

\begin{lemma*}
Uniformly over the fixed grid,
$\|\bar {\mathbf{H}}_\rho\|_{\op}=O_p(1)$.
For every fixed $\bm{t}\in\mathbb R^K$, the symmetric matrix
\[
 \mathbf{C}_n(\bm{t})=(c_{ij,n}(\bm{t}))_{i,j=1}^n
       =\sum_{k=1}^K
       t_k\frac{\sqrt{2p}}{n\sqrt{\omega_{k,n}}}\bar {\mathbf{H}}_{\rho_k}
\]
satisfies $\|\mathbf{C}_n(\bm{t})\|_{\op}=O_p(p^{-1/2})$.
\end{lemma*}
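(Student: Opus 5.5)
The plan rests on the identity $\bar{\mathbf H}_\rho=\mathbf D_\epsilon(\mathbf H_\rho-\mathbf I_n)\mathbf D_\epsilon$ with $\mathbf D_\epsilon=\operatorname{diag}(\epsilon_i)$. Since $H_{ij,\rho}=\epsilon_i\epsilon_j\bar H_{ij,\rho}$ for $i\ne j$ and $\bar H_{ii,\rho}=0=H_{ii,\rho}-1$, the identity holds exactly. Conjugation by the orthogonal matrix $\mathbf D_\epsilon$ preserves operator norm, so
\[
 \|\bar{\mathbf H}_\rho\|_{\op}=\|\mathbf H_\rho-\mathbf I_n\|_{\op}\le\|\mathbf H_\rho\|_{\op}+1 .
\]
The problem therefore reduces to bounding the full normalized Gram matrix $\mathbf H_\rho=\mathbf T^\T\mathbf T=\mathbf W_\rho$.

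For this step I would quote \eqref{eq:original_gram_error} in Theorem~\ref{thm:original_spike_inverse}, which gives $\|\mathbf W_\rho\|_{\op}=O_p(1)$, and note that the theorem also covers spherical representatives. Equivalently, Lemma~\ref{lem:orders} gives $\|\mathbf T\|_{\op}=O_p(1)$, hence $\|\mathbf H_\rho\|_{\op}=\|\mathbf T\|_{\op}^2=O_p(1)$. One point needs checking: $\mathbf H_\rho$ is built from $\ve_i=R_i\mSigma_p^{1/2}\vU_i$ and $\widehat{\mA}_\rho^0$, while $\mathbf W_\rho$ is built from the $\bm z_i$. The radii cancel in the normalization of each entry, and they also cancel in the scatter equation, so the two matrices coincide, as already recorded before Lemma~\ref{lem:quad}. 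Uniformity over the grid follows because the grid is finite: a finite union of $O_p(1)$ events is $O_p(1)$, so $\max_k\|\bar{\mathbf H}_{\rho_k}\|_{\op}=O_p(1)$.

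For $\mathbf C_n(\bm t)$ I would apply the triangle inequality,
\[
 \|\mathbf C_n(\bm t)\|_{\op}\le\sum_{k=1}^K|t_k|\frac{\sqrt{2p}}{n\sqrt{\omega_{k,n}}}\|\bar{\mathbf H}_{\rho_k}\|_{\op}.
\]
Because $p\asymp n$, $\sqrt{2p}/n\asymp p^{-1/2}$. What remains is a deterministic lower bound $\omega_{k,n}\ge c>0$ for all large $n$. This is the step needing the most care, though it should follow from results already proved. By \eqref{eq:hr_variance_factor} and Corollary~\ref{cor:original_omega}, $\omega_{k,n}=d_{k,n}^2\Theta_{2,n}(\lambda_k)/\Theta_{1,n}(\lambda_k)^2$.

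Lemma~\ref{lem:original_cm_scalar} gives $d_{k,n}\ge\underline d$. It also gives $\lambda_k\in[\underline\lambda,\overline\lambda]$, which implies $\Theta_{1,n}=a_n/d_n\le1/\underline d$. For the lower bound on $\Theta_{2,n}=\Theta_{12,n}(\lambda_k,\lambda_k)\ge\mathcal J_{12,n}(\lambda_k,\lambda_k)$, I would use the mass $H_{p,b}([t_*,C_H])\ge\delta$ established in the proof of Lemma~\ref{lem:original_cm_scalar}. On that set $t^2/(td+\lambda)^2\ge t_*^2/(C_H+\overline\lambda)^2$, so $\Theta_{2,n}\ge\delta t_*^2/(C_H+\overline\lambda)^2$. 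Alternatively, Corollary~\ref{cor:deterministic_continuity} gives convergence to a positive limit. Combining these bounds yields $\|\mathbf C_n(\bm t)\|_{\op}=O_p(p^{-1/2})$ for each fixed $\bm t$.
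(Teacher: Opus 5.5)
Your proposal is correct and is essentially the paper's argument: both reduce $\|\bar{\mathbf H}_\rho\|_{\op}$ to the operator norm of the normalized Gram matrix plus one. Both then bound $\mathbf C_n(\bm t)$ by the triangle inequality, using $\sqrt{2p}/n\asymp p^{-1/2}$ and a lower bound on $\omega_{k,n}$. The paper's proof of this lemma differs in two minor respects. First, instead of citing $\|\mathbf T\|_{\op}=O_p(1)$ after your sign conjugation, it re-derives the Gram bound from the factorization $\bar{\mathbf H}_\rho+\mathbf I_n=\mathbf D_\rho\mathbf W^\T\mathbf B_\rho\mathbf W\mathbf D_\rho$, bounding the weighted inverse, the diagonal scaling, and the spherical frame via $\|\mathbf W\|_{\op}^2\le\|(\bm G_1,\ldots,\bm G_n)\|_{\op}^2/\min_i\|\bm G_i\|^2$. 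Second, it leaves $\omega_{k,n}\ge c$ implicit, whereas your derivation from $\underline d$, $a_n\le1$ and the bulk mass on $[t_*,C_H]$ makes it explicit.
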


\begin{proof}
The defining scatter equation depends on $\bm{U}_i$ through $\bm{U}_i\bm{U}_i^\T$.
Conditional symmetry gives independent signs and an
$\mathcal F_n$-measurable scatter inverse. Put
\[
 \mathbf{B}_\rho=\mathbf{\Sigma}_p^{1/2}\widehat {\mathbf{A}}_\rho^0\mathbf{\Sigma}_p^{1/2},\quad
 \mathbf{W}=(\bm{W}_1,\ldots,\bm{W}_n),\quad
 \mathbf{D}_\rho=\diag\{(\bm{W}_i^\T \mathbf{B}_\rho \bm{W}_i)^{-1/2}\}.
\]
The previously proved weighted inverse and diagonal bounds imply
\[
 \|\mathbf{B}_\rho\|_{\op}=O_p(1),\qquad
 \|\mathbf{D}_\rho\|_{\op}=O_p(1),\qquad
 \bar {\mathbf{H}}_\rho+\mathbf{I}_n=\mathbf{D}_\rho \mathbf{W}^\T \mathbf{B}_\rho \mathbf{W} \mathbf{D}_\rho.
\]
The Gaussian representation and concentration bounds established in
Appendix~\ref{app:rmt} give
\[
 \begin{aligned}
 \bm{U}_i&=\bm{G}_i/\|\bm{G}_i\|,\qquad
 \max_i\left|\frac{\|\bm{G}_i\|^2}{p}-1\right|
     =O_p\!\left(\sqrt{\frac{\log(n+2)}p}\right),\\
 \|\mathbf{W}\|_{\op}^2
 &=\left\|\sum_i\bm{U}_i\bm{U}_i^\T\right\|_{\op}
 \le\frac{\|(\bm{G}_1,\ldots,\bm{G}_n)\|_{\op}^2}
          {\min_i\|\bm{G}_i\|^2}
 =O_p\!\left(\frac{p+n}{p}\right)=O_p(1).
 \end{aligned}
\]
Consequently,
\[
 \|\bar {\mathbf{H}}_\rho\|_{\op}
 \le\|\mathbf{D}_\rho\|_{\op}^2\|\mathbf{W}\|_{\op}^2\|\mathbf{B}_\rho\|_{\op}+1=O_p(1),
\]
\[
 \|\mathbf{C}_n(\bm{t})\|_{\op}
 \le\frac{\sqrt{2p}}n
       \sum_k\frac{|t_k|}{\sqrt{\omega_{k,n}}}
                          \|\bar {\mathbf{H}}_{\rho_k}\|_{\op}
 =O_p(p^{-1/2}).
\]
\end{proof}

\begin{proof}[Proof of Theorem~\ref{thm:hrst_null}]
Let $\mathcal F_n$ and $\bm{s}_i$ be as in
\eqref{eq:unsigned_sign_decomposition}.  Conditional on $\mathcal F_n$, put
\[
 c_{ij,n}=\frac{\sqrt{2p}}{n\sqrt{\omega_{\rho,n}}}
                      \bm{s}_i^\T \bm{s}_j\quad(i\ne j),\qquad c_{ii,n}=0,
 \qquad \mathbf{C}_n=(c_{ij,n})_{i,j=1}^n .
\]
The relevant quadratic form is
\[
 S_{\rho,n}
 =\sum_{i<j}c_{ij,n}\epsilon_i\epsilon_j
 =\frac{2p}{n\sqrt{2p\omega_{\rho,n}}}
                   \sum_{i<j}H_{ij,\rho}.
\]
Different unordered pairs of signs are orthogonal.  Thus
\[
 \begin{aligned}
 \upsilon_n^2
 &:=\E(S_{\rho,n}^2\mid\mathcal F_n)
   =\sum_{i<j}c_{ij,n}^2
   =\frac{n-1}{n}\frac{\omega^0_{\rho,n}}{\omega_{\rho,n}},\\
 \upsilon_n^2-1&=O_p(\varepsilon_n+n^{-1}),\qquad
 \operatorname{tr}(\mathbf{C}_n^2)=2\upsilon_n^2=O_p(1).
 \end{aligned}
\]
Here Corollary~\ref{cor:original_omega} supplies both the approximation
to $\omega^0_{\rho,n}$ and $0<c\le\omega_{\rho,n}\le C$.

Changing column orientations conjugates the Gram matrix by a diagonal
orthogonal matrix.  Hence Lemma~\ref{lem:orders} gives
\[
 \begin{aligned}
 \|\mathbf{C}_n\|_{\op}
 &\le\frac{\sqrt{2p}}{n\sqrt{\omega_{\rho,n}}}
                  (\|\mathbf{T}\|_{\op}^2+1)=O_p(p^{-1/2}),\\
 \operatorname{tr}(\mathbf{C}_n^4)
 &\le\|\mathbf{C}_n\|_{\op}^2\operatorname{tr}(\mathbf{C}_n^2)
 =O_p(p^{-1}).
 \end{aligned}
\]
 For the normalized comparison matrix
 $\mathbf{B}_n=\mathbf{C}_n/(2\upsilon_n)$ when $\upsilon_n>0$ and $\mathbf{B}_n=\mathbf{0}$
 when $\upsilon_n=0$, Proposition~\ref{prop:cond_dejong} gives
\[
 \operatorname{tr}(\mathbf{B}_n^4)=O_p(p^{-1}),\qquad
 \chi_n=\|\mathbf{B}_n\|_{\op}=O_p(p^{-1/2}).
\]
 For a fixed $t$, apply \eqref{eq:clt_characteristic_bound} at
 $t\upsilon_n$ only on
\[
 \mathcal E_n(t)=\{\upsilon_n>0,\ 2|t|\upsilon_n\chi_n<1\}.
\]
Because $\upsilon_n\xrightarrow{p}1$ and $\chi_n=O_p(p^{-1/2})$,
\[
 \Pr\{\mathcal E_n(t)^c\}
 \le\Pr\{\upsilon_n=0\}+\Pr\{2|t|\upsilon_n\chi_n\ge1\}
 \longrightarrow0.
\]
Since $\upsilon_n=O_p(1)$, the bounds on this event are
\[
 \begin{aligned}
 8t^4\upsilon_n^4\operatorname{tr}(\mathbf{B}_n^4)&=O_p(p^{-1}),\\
 \frac{2|t\upsilon_n|^3\chi_n}
             {3(1-2|t\upsilon_n|\chi_n)}&=O_p(p^{-1/2}),\\
 \left|\E(e^{itS_{\rho,n}}\mid\mathcal F_n)
                 -e^{-t^2\upsilon_n^2/2}\right|&=O_p(p^{-1/2}).
 \end{aligned}
\]
The characteristic-function difference is at most two outside
$\mathcal E_n(t)$. Its complement has probability tending to zero,
so the same $O_p$ bound applies to that difference on the full sample space.
\[
 \left|\E(e^{itS_{\rho,n}}\mid\mathcal F_n)-e^{-t^2/2}\right|
 =O_p(p^{-1/2}+\varepsilon_n+n^{-1}).
\]
To pass to unconditional expectations, denote the last absolute
difference by $D_n(t)$.  Since $0\le D_n(t)\le2$, for every $\eta>0$,
\[
 \left|\E e^{itS_{\rho,n}}-e^{-t^2/2}\right|
 \le\E D_n(t)
 \le\eta+2\Pr\{D_n(t)>\eta\}\longrightarrow\eta .
\]
Letting $\eta\downarrow0$ proves
$S_{\rho,n}\xrightarrow{d} N(0,1)$.

The statistic remainder in Lemma~\ref{lem:quad} gives
\[
 \frac{T_{\HR,\rho}-p}{\sqrt{2p\omega_{\rho,n}}}
 =S_{\rho,n}+O_p(p^{-1/2}).
\]
This establishes the stated normal limit; the variance rate follows from
Lemma~\ref{lem:varhat}.
For the joint limit in Theorem~\ref{thm:joint_hrst}, we also record the
following feasible expansion. The denominator has an explicit error:
\[
 q_{\rho,n}:=\frac{\widehat\omega_\rho}{\omega_{\rho,n}}-1
 =O_p\!\left(\varepsilon_n+p^{-1/2}n^{1/\alpha_R}\right).
\]
On $|q_{\rho,n}|\le1/2$,
\[
 |(1+q_{\rho,n})^{-1/2}-1|\le2|q_{\rho,n}|.
\]
As $S_{\rho,n}=O_p(1)$, the feasible statistic in
\eqref{eq:hr_standardization} satisfies
\begin{equation}\label{eq:null_feasible_expansion}
 \begin{aligned}
 Z_{\HR,\rho}
 &=\{S_{\rho,n}+O_p(p^{-1/2})\}
                         (1+q_{\rho,n})^{-1/2}\\
 &=S_{\rho,n}
    +O_p\!\left(p^{-1/2}+\varepsilon_n
                          +p^{-1/2}n^{1/\alpha_R}\right).
 \end{aligned}
\end{equation}
The remainder rate in \eqref{eq:null_feasible_expansion} tends to zero.  If its remainder is
$E_n$, then for each fixed $t$ and every $\eta>0$,
\[
 |\E e^{it(S_{\rho,n}+E_n)}-\E e^{itS_{\rho,n}}|
 \le |t|\eta+2\Pr\{|E_n|>\eta\}\longrightarrow |t|\eta .
\]
Letting $\eta\downarrow0$ shows that feasible standardization preserves
the scalar normal limit. We next use \eqref{eq:null_feasible_expansion}
to prove Theorem~\ref{thm:joint_hrst}.
\end{proof}

\begin{proof}[Proof of Theorem~\ref{thm:joint_hrst}]
Let $r_n=\varepsilon_n+p^{-1/2}n^{1/\alpha_R}$.
By \eqref{eq:null_feasible_expansion}, for fixed $\bm{t}$,
\[
 \sum_kt_k Z_{\HR,\rho_k}
 =\mathcal Q_n(\bm{t})+O_p(r_n),\qquad
 \mathcal Q_n(\bm{t})=\sum_{i<j}c_{ij,n}(\bm{t})\epsilon_i\epsilon_j.
\]
Set
\[
 R_{kl,n}=
 \frac{\Theta_{12,n}(\lambda_{\rho_k,n},\lambda_{\rho_l,n})}
 {\{\Theta_{2,n}(\lambda_{\rho_k,n})
                    \Theta_{2,n}(\lambda_{\rho_l,n})\}^{1/2}},
 \qquad R_{kk,n}=1,\qquad \mathbf{R}_n=(R_{kl,n})_{k,l=1}^K .
\]
Corollary~\ref{cor:original_omega} evaluates its conditional variance:
\[
 \begin{aligned}
 \upsilon_n^2(\bm{t})
 &=\frac12\tr\{\mathbf{C}_n(\bm{t})^2\}\\
 &=\sum_{k,l}\frac{t_kt_l p}
              {n^2(\omega_{k,n}\omega_{l,n})^{1/2}}
                 \sum_{i\ne j}H_{ij,\rho_k}H_{ij,\rho_l}\\
 &=\bm{t}^\T \mathbf{R}_n \bm{t}+O_p(\varepsilon_n+n^{-1}),\\
 \tr\{\mathbf{C}_n(\bm{t})^4\}
 &\le\|\mathbf{C}_n(\bm{t})\|_{\op}^2\tr\{\mathbf{C}_n(\bm{t})^2\}=O_p(p^{-1}).
 \end{aligned}
\]
 If $\bm{t}^\T \mathbf{R}_{\HR}\bm{t}>0$, then $\upsilon_n(\bm{t})$ converges to a
 positive constant. With the preceding zero-denominator convention,
 put $\chi_n(\bm{t})=\|\mathbf{C}_n(\bm{t})/(2\upsilon_n(\bm{t}))\|_{\op}$ when
 $\upsilon_n(\bm{t})>0$ and $\chi_n(\bm{t})=0$ otherwise. The bound is applied on
\[
 \{\upsilon_n(\bm{t})>0,\ 2|s|\upsilon_n(\bm{t})\chi_n(\bm{t})<1\},
\]
\[
 \Pr\{\upsilon_n(\bm{t})=0\text{ or }
       2|s|\upsilon_n(\bm{t})\chi_n(\bm{t})\ge1\}\longrightarrow0.
\]
 On its complement the difference is bounded by two.
 Equation~\eqref{eq:clt_characteristic_bound} therefore yields
\[
 \E(e^{is\mathcal Q_n(\bm{t})}\mid\mathcal F_n)
 =\exp\{-s^2\bm{t}^\T \mathbf{R}_n \bm{t}/2\}
       +O_p(p^{-1/2}+\varepsilon_n),\qquad s\in\mathbb R.
\]
If $\bm{t}^\T \mathbf{R}_{\HR}\bm{t}=0$, then, for $\delta,\eta>0$,
\[
 \Pr\{|\mathcal Q_n(\bm{t})|>\delta\}
 \le\Pr\{\upsilon_n^2(\bm{t})>\eta\}+\eta/\delta^2\longrightarrow\eta/\delta^2.
\]
Letting $\eta\downarrow0$ treats every degenerate direction.
For the remainder $E_n(\bm{t})=O_p(r_n)$,
\[
 |\E e^{is\{\mathcal Q_n(\bm{t})+E_n(\bm{t})\}}-\E e^{is\mathcal Q_n(\bm{t})}|
 \le |s|\delta+2\Pr\{|E_n(\bm{t})|>\delta\}.
\]
Since $r_n\to0$ and $\mathbf{R}_n\to \mathbf{R}_{\HR}$, the vector characteristic
functions converge to $\exp(-\bm{t}^\T \mathbf{R}_{\HR}\bm{t}/2)$.
Lemma~\ref{lem:gaussian_continuity} includes directions with variance zero.
\end{proof}

\subsection{Cauchy calibration}
After merging equal coordinates and removing zero weights in the proof below,
$\bm G$, $\mathbf R_{\HR}$ and $(w_j)_{j=1}^L$ denote the retained vector,
its principal correlation matrix and the merged weights, respectively.

\begin{proof}[Proof of Corollary~\ref{cor:hrst_cct}]
The weights are fixed deterministic numbers, independent of $n,p$,
with $w_k\ge0$ and $\sum_k w_k=1$.
The mixed-resolvent formula and Cauchy--Schwarz inequality give
\[
 0\le R_{\HR,kl}\le1,\qquad R_{\HR,kk}=1.
\]
Let $\bm{G}\sim N_K(\bm{0},\mathbf{R}_{\HR})$. If $R_{\HR,kl}=1$, then
$\E(G_k-G_l)^2=0$; merge these equal coordinates and add their weights.
Discard zero weights, and write the remaining number as $L$:
\[
 X_j=\tan\{\pi(\Phi(G_j)-1/2)\},\qquad
 C_G=\sum_{j=1}^Lw_jX_j,\qquad w_j>0,\quad\sum_jw_j=1.
\]
Each $X_j$ is standard Cauchy. If $L=1$, the assertion is exact.
For $L\ge2$, define
\[
 r_*=\max_{j<k}R_{\HR,jk}<1,\qquad
 \nu=\frac2{1+r_*}>1,\qquad
 \zeta=\frac{\nu-1}{\nu+1}>0.
\]
For $x>0$, let
$u(x)=\Phi^{-1}\{1/2+\pi^{-1}\arctan x\}$.
Writing $\phi$ for the standard normal density, integration by parts
in the Gaussian tail gives, for $z>0$,
\[
 \frac{z\phi(z)}{1+z^2}\le1-\Phi(z)\le\frac{\phi(z)}z.
\]
Indeed,
\[
 1-\Phi(z)=\frac{\phi(z)}z-\int_z^\infty\frac{\phi(v)}{v^2}\,dv,\quad
 0\le\int_z^\infty\frac{\phi(v)}{v^2}\,dv
       \le z^{-2}\{1-\Phi(z)\}.
\]
Because $1-\Phi\{u(x)\}=\pi^{-1}\arctan(1/x)$,
\[
 2\log x=u(x)^2+2\log u(x)+O(1),\qquad
 u(x)^2\sim2\log x,
\]
\[
 2\log u(x)=\log\log x+O(1),\qquad
 u(x)^2=2\log x-\log\log x+O(1).
\]
For a centered Gaussian $Y$ with variance $v$, its moment generating
function gives
\[
 \Pr\{Y>a\}\le\inf_{s>0}e^{-sa+s^2v/2}
             =e^{-a^2/(2v)}.
\]
Applying this to $G_j+G_k$ and $G_j-G_k$ shows, uniformly over the finite pairs,
\begin{equation}\label{eq:cauchy_pair_tail}
 \begin{aligned}
 \Pr\{X_j>x,X_k>x\}
 &\le \exp\{-u(x)^2/(1+R_{\HR,jk})\},\\
 \Pr\{X_j>x,X_k<-x\}
 &\le \exp\{-u(x)^2/(1-R_{\HR,jk})\},\\
 \Pr\{X_j>x,\ |X_k|>x\}
 &\le Cx^{-\nu}(\log x)^{\nu/2}.
 \end{aligned}
\end{equation}
The bound \eqref{eq:cauchy_pair_tail}, with a changed constant, holds when each threshold is
a fixed positive multiple of $x$. Put $\delta_x=x^{-\zeta}$.
If $C_G>x$ and every $w_jX_j\le(1-\delta_x)x$, two summands exceed
$\delta_x x/L$.
Thus
\[
 \Pr\{C_G>x\}
 \le\sum_j\Pr\{w_jX_j>(1-\delta_x)x\}
       +C(\delta_x x)^{-\nu}\{\log(\delta_x x)\}^{\nu/2}.
\]
Conversely, a summand above $(1+\delta_x)x$ makes $C_G>x$ unless
some other summand is below $-\delta_x x/(L-1)$.
The pair bound \eqref{eq:cauchy_pair_tail} and inclusion--exclusion give
\[
 \Pr\{C_G>x\}
 \ge\sum_j\Pr\{w_jX_j>(1+\delta_x)x\}
       -C(\delta_x x)^{-\nu}\{\log(\delta_x x)\}^{\nu/2}.
\]
Since
\[
 \Pr\{w_jX_j>x\}=\frac1\pi\arctan(w_j/x)
              =\frac{w_j}{\pi x}+O(x^{-3}),
\]
and $\nu(1-\zeta)=1+\zeta$, both bounds imply
\[
 \Pr\{C_G>x\}
 =\frac1{\pi x}\left[1+O\{x^{-\zeta}(\log x)^{\nu/2}\}\right].
\]
The limit variable $C_G$ has no atoms even when $\mathbf{R}_{\HR}$ is singular.
Write
\[
 Z=\frac{\bm{1}^\T \bm{G}}{(\bm{1}^\T \mathbf{R}_{\HR}\bm{1})^{1/2}},
 \quad \bm{v}=\frac{\mathbf{R}_{\HR}\bm{1}}
              {(\bm{1}^\T \mathbf{R}_{\HR}\bm{1})^{1/2}},
 \quad \bm{G}=\bm{v}Z+\bm{G}^\perp,\qquad Z\perp \bm{G}^\perp.
\]
Indeed, $\Cov(Z,\bm{G}^\perp)=\bm{0}$ and the joint Gaussian characteristic function
factorizes:
\[
 \E e^{isZ+i\bm{t}^\T \bm{G}^\perp}
 =e^{-s^2/2}\exp\{-\bm{t}^\T\Cov(\bm{G}^\perp)\bm{t}/2\}.
\]
Every $v_j>0$. Conditional on $\bm{G}^\perp$, the transform has derivative
\[
 \frac{dC_G}{dZ}
 =\pi\sum_jw_jv_j\phi(G_j)
                \sec^2\{\pi(\Phi(G_j)-1/2)\}>0.
\]
For each $c$, the conditional equation $C_G=c$ has at most one solution in $Z$:
\[
 \Pr\{C_G=c\mid \bm{G}^\perp\}\le\sup_z\Pr\{Z=z\}=0.
\]
Theorem~\ref{thm:joint_hrst} and \eqref{eq:cauchy_combination} therefore give,
for each fixed $\alpha\in(0,1)$,
\[
 \lim_{n,p\to\infty}\Pr\{p_{\HR}^{\CCT}\le\alpha\}
 =\Pr\{C_G\ge\cot(\pi\alpha)\}.
\]
Finally $\cot(\pi\alpha)=(\pi\alpha)^{-1}+O(\alpha)$, so
\[
 \frac{\Pr\{C_G\ge\cot(\pi\alpha)\}}{\alpha}
 =1+O\{\alpha^\zeta(\log(1/\alpha))^{\nu/2}\}\longrightarrow1.
\]
\end{proof}

\section{Local alternatives}\label{app:local}

\subsection{Population signal and random drift}\label{subsec:population_bridge}

Write $d_p=1-p^{-1}$ and
$d_{k,\infty}=\lim_n d_{\rho_k,n}=d_{\rho_k}$; the latter limits
exist by Corollary~\ref{cor:deterministic_continuity}.
The rate $\varepsilon_n=\{\log(n+2)/n\}^{1/2}+r_p/p$ is retained
from Appendix~\ref{app:rmt}, and put $\varepsilon_n^{\mathrm{br}}=\varepsilon_n+n^{-1/2}$.
For each grid value define the uncorrected population drift by
\begin{equation}\label{eq:population_signal_drift}
 \Delta_{\rho,n}^{\mathrm{pop}}
 =\frac{npd_p^2\{\E(R_i^{-1})\}^2}{a_n(\lambda_{\rho,n})}
      \bm h_n^\T\mD_{\rho,n}\bm h_n,
 \qquad
 \widetilde\Delta_{\rho,n}^{\mathrm{pop}}
       =d_{\rho,n}^2\Delta_{\rho,n}^{\mathrm{pop}}.
\end{equation}
The matrices $\mD_{\rho,n}$ and parameters
$\widetilde\kappa_{\rho,n}$ are defined in Section~\ref{sec:theory}.
The superscripts distinguish population quantities from random
quantities without suppressing their dependence on $n$.

Since $a_n(\lambda_{\rho,n})>0$ and $d_{\rho,n}>0$, the definitions
of $\Theta_{1,n}$ and $\omega_{\rho,n}$ give the exact identities
\begin{equation}\label{eq:population_drift_standardization}
 \begin{aligned}
 \omega_{\rho,n}
   &=\frac{d_{\rho,n}^4\Theta_{2,n}(\lambda_{\rho,n})}
           {a_n(\lambda_{\rho,n})^2},\\
 \frac{\widetilde\Delta_{\rho,n}^{\mathrm{pop}}}
          {\sqrt{2p\omega_{\rho,n}}}
   &=\widetilde\kappa_{\rho,n}.
 \end{aligned}
\end{equation}
Thus the cancellation in \eqref{eq:population_signal_parameter}
retains the second-order scatter correction.

For the centered errors retain $\widehat{\mA}_\rho^0$ and
\[
 a_{\rho,n}=\frac1n\sum_{i=1}^n
       (\ve_i^\T\widehat{\mA}_\rho^0\ve_i)^{-1/2}.
\]
These auxiliary quantities are computed from the centered errors even
under an alternative. Define
\begin{equation}\label{eq:auxiliary_random_drift}
 \begin{aligned}
 \Delta_{k,n}^{\mathrm{ran}}
   &=npd_p^2a_{k,n}^2\bm h_n^\T\mA_{k,n}^{\mathrm R}\bm h_n,\\
 \widetilde\Delta_{k,n}^{\mathrm{ran}}
   &=d_{k,n}^2\Delta_{k,n}^{\mathrm{ran}},\qquad k=1,\ldots,K.
 \end{aligned}
\end{equation}
The same notation with $\rho$ instead of $k$ denotes the corresponding
single grid value. We first prove a relative directional comparison
that retains every spike coordinate.

\begin{lemma}\label{lem:population_direction}
Under Assumptions~\ref{ass:dimshape} and~\ref{ass:grid}, put
\[
 \mD_n(\lambda)=\{d_n(\lambda)\mSigma_p+\lambda\I_p\}^{-1}.
\]
For every deterministic vector sequence $\bm h_n\in\mathbb R^p$,
\begin{equation}\label{eq:population_direction}
 \max_{1\le k\le K}
 \frac{|\bm h_n^\T\{\mQ_n(\lambda_{\rho_k,n})
             -\mD_n(\lambda_{\rho_k,n})\}\bm h_n|}
      {\bm h_n^\T\mD_n(\lambda_{\rho_k,n})\bm h_n}
 =O_p(\varepsilon_n).
\end{equation}
The quotient is defined to be zero when $\bm h_n=\vzero$.
This is a statement for each deterministic direction sequence and the
fixed finite grid, rather than a supremum over all directions.
\end{lemma}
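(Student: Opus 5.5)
Because the grid is finite, it suffices to fix $k$ and write $\lambda=\lambda_{\rho_k,n}\in\mathcal L$, $\mQ=\mQ_n(\lambda)$, $\mD=\mD_n(\lambda)$, $d=d_n(\lambda)$. Put $\bm w=\mD^{1/2}\bm h_n/\|\mD^{1/2}\bm h_n\|$, which is deterministic and of unit norm. The target becomes
\[
 \bm w^\T\mD^{-1/2}(\mQ-\mD)\mD^{-1/2}\bm w=O_p(\varepsilon_n).
\]
The plan is to prove this single-direction bound first for the Gaussian representative $\mathbf Q(\lambda)$ of Appendix~\ref{app:rmt}. We then transfer it to the spherical $\mQ_n$ using the relative comparison $(1-\delta_n)\mathbf S\preceq\mathbf S_U\preceq(1+\delta_n)\mathbf S$ from Lemma~\ref{lem:original_weighted_bound}. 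That comparison is multiplicative, so it yields $\bm h^\T\mQ_U\bm h=\bm h^\T\mathbf Q\bm h\{1+O_p(\sqrt{L_n/n})\}$ with no loss from spikes. Relative errors are needed throughout, because on a diverging spike coordinate $\bm h^\T\mD\bm h$ is of order $1/(d\tau_1)$ and may be tiny.

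The core step is the leave-one-out resolvent identity
\[
 \mD^{-1}-\mathbf Q^{-1}=d\mSigma_p-\mathbf S,\qquad
 \mathbf Q-\mD=\mathbf Q(d\mSigma_p-\mathbf S)\mD .
\]
We expand $\mathbf S=n^{-1}\sum_i\bm z_i\bm z_i^\T$ and apply Sherman--Morrison in the form $\mathbf Q\bm z_i=\mathbf Q^{(-i)}\bm z_i/(1+n^{-1}\bm z_i^\T\mathbf Q^{(-i)}\bm z_i)$. This gives
\[
 \bm h^\T(\mathbf Q-\mD)\bm h=
 \frac1n\sum_i\left\{\bm h^\T\mathbf Q\,d\mSigma_p\mD\bm h
 -\frac{\bm h^\T\mathbf Q^{(-i)}\bm z_i\bm z_i^\T\mD\bm h}{1+n^{-1}\bm z_i^\T\mathbf Q^{(-i)}\bm z_i}\right\}.
\]
By the corollary following Theorem~\ref{thm:original_spike_inverse}, and using $a_n/\Theta_{1,n}=d$, the denominators satisfy $1+y_n\Theta_{1,n}+O_p(\varepsilon_n)=d^{-1}+O_p(\varepsilon_n)$ uniformly in $i$. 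Replacing $\bm z_i\bm z_i^\T$ by its conditional mean $\mSigma_p$ given the other columns, the leading terms cancel: $d\cdot d^{-1}=1$. We are left with two remainders. The first is a centered fluctuation $n^{-1}\sum_i\{\bm h^\T\mathbf Q^{(-i)}(\bm z_i\bm z_i^\T-\mSigma_p)\mD\bm h\}$. The second is the difference $\bm h^\T(\mathbf Q-\mathbf Q^{(-i)})\mSigma_p\mD\bm h$, which is a rank-one term.

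Both remainders will be controlled relative to $\bm h^\T\mD\bm h$ by Cauchy--Schwarz with weights. We need $\bm h^\T\mathbf Q\mSigma_p\mathbf Q\bm h\lesssim\bm h^\T\mathbf Q\bm h$, and $\mSigma_p^{1/2}\mD\bm h$ has squared norm at most $d^{-1}\bm h^\T\mD\bm h$ since $d\mSigma_p\preceq\mD^{-1}$. For the first inequality, Lemma~\ref{lem:original_weighted_bound} gives $\|\mSigma_p^{1/2}\mathbf Q\mSigma_p^{1/2}\|=O_p(1)$, hence $\mathbf Q\mSigma_p\mathbf Q\preceq C\mathbf Q$. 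This in turn requires an a priori bound $\bm h^\T\mathbf Q\bm h\lesssim\bm h^\T\mD\bm h$. We will obtain it by a self-bounding argument: the identity gives $|\bm h^\T(\mathbf Q-\mD)\bm h|\le o_p(1)(\bm h^\T\mathbf Q\bm h+\bm h^\T\mD\bm h)$, and we then absorb. The centered fluctuation is a sum of martingale differences. Its conditional variance is of order $n^{-1}\cdot(\bm h^\T\mathbf Q\mSigma_p\mathbf Q\bm h)(\bm h^\T\mD\mSigma_p\mD\bm h)$, which is of order $(\bm h^\T\mD\bm h)^2/n$, giving relative error $O_p(n^{-1/2})$. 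The rank-one differences contribute $O_p(n^{-1/2})$ relatively, by the same weighted Cauchy--Schwarz.

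We expect the main obstacle to be the spike directions. There, $\bm z_i\bm z_i^\T$ restricted to the spike subspace is not concentrated in operator norm, and one cannot bound $\|\mD^{-1/2}\mathbf Q\mD^{-1/2}\|$ by unweighted norms. To handle this, we would first try the block decomposition of Lemma~\ref{lem:original_weighted_bound}. On the bulk block the argument is standard, with errors $\sqrt{L_n/n}$ plus the bulk-versus-$H_{p,0}$ discrepancy $r_p/p$; this matches the paper's treatment of $\mathfrak g_0$. On the finite spike block we would use the Schur complement $\mathbf F=\lambda\I+\mathbf T_s^{1/2}(\mathbf G_s\mathbf M\mathbf G_s^\T/n)\mathbf T_s^{1/2}$ together with the concentration $\mathbf G_s\mathbf M\mathbf G_s^\T/n=\tr(\mathbf M)/n\,\I_{r_p}+O_p(n^{-1/2})=d\I+O_p(\varepsilon_n)$. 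Since $r_p$ is bounded, this gives $\mathbf F^{-1}=(d\mathbf T_s+\lambda)^{-1}\{\I+O_p(\varepsilon_n)\}$ in the relative sense. The off-diagonal coupling $\mathbf A_b^{-1}\mathbf E$ is handled by the same weighted Cauchy--Schwarz.
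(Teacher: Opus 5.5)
Your route differs from the paper's and is viable, but one step fails as written.

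\textbf{Comparison with the paper.} The paper never uses the sample-side identity $\mQ-\mD=\mQ(d\mSigma_p-\mathbf S)\mD$. It works in the eigenbasis of $\mSigma_p$ with the Gaussian representative $\operatorname{diag}(\sqrt{\tau_j})\mathbf G$ and deletes population \emph{rows}. The resulting companion matrices satisfy $0\preceq\mathbf M_{-J}\preceq\I_n$ whatever the spike sizes. One- and two-row Schur complements then give, uniformly over all index pairs, $(\lambda+d\tau_j)(\mathbf Q_G)_{jj}=1+O_p(\varepsilon_n)$ and $\{(\lambda+d\tau_i)(\lambda+d\tau_j)\}^{1/2}|(\mathbf Q_G)_{ij}|=O_p(\varepsilon_n)$. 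The off-diagonal sum is handled by symmetry: flipping the signs of the rows of $\mathbf G$ preserves its law. So the off-diagonal part of $\bm h^\T\mathbf Q_G\bm h$ may be replaced by the Rademacher chaos $2\sum_{i<j}h_ih_j\eta_i\eta_j(\mathbf Q_G)_{ij}$, whose conditional variance is $O_p(\varepsilon_n^2)(\bm h^\T\mD\bm h)^2$. Relative errors are automatic, and no sample-side fluctuation analysis is needed.

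Your leave-one-sample-out argument is the standard deterministic-equivalent proof. It cancels $d$ against $(1+\beta_i)^{-1}$, with $\beta_i=n^{-1}\bm z_i^\T\mathbf Q^{(-i)}\bm z_i$, and uses weighted Cauchy--Schwarz through Lemma~\ref{lem:original_weighted_bound} and $\mD\mSigma_p\mD\preceq d^{-1}\mD$. It does not need the row-sign symmetry, but it pays in the fluctuation step. Two simplifications are available:
\begin{itemize}
\item No self-bounding is needed. Since $\mQ^{1/2}\mD^{-1}\mQ^{1/2}=d\,\mQ^{1/2}\mSigma_p\mQ^{1/2}+\lambda\mQ\preceq(dC+1)\I_p$, you get $\mQ\preceq(dC+1)\mD$ directly.
\item For the same reason your final spike-block paragraph is redundant. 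That block analysis is exactly what Lemma~\ref{lem:original_weighted_bound} already contains.
\end{itemize}

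\textbf{The gap.} You claim that $F=n^{-1}\sum_iX_i$, with $X_i=\bm h^\T\mathbf Q^{(-i)}(\bm z_i\bm z_i^\T-\mSigma_p)\mD\bm h$, is a sum of martingale differences. It is not. Each $X_i$ satisfies $\E(X_i\mid\bm z_j,\,j\ne i)=0$, but $X_i$ depends on every other column through $\mathbf Q^{(-i)}$, and in general $\E(X_iX_j)\ne0$ for $i\ne j$. Martingale differences are orthogonal, so the $X_i$ are not martingale differences for any filtration. Summing individual conditional variances therefore does not bound $\E F^2$. With $n^2$ cross terms you need $|\E X_iX_j|=O\{n^{-1}(\bm h^\T\mD\bm h)^2\}$.

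\textbf{Repair.} Expand $\mathbf Q^{(-i)}$ about $\mathbf Q^{(-ij)}$ and write $X_i=L_i+C_i$, with $L_i=\bm h^\T\mathbf Q^{(-ij)}(\bm z_i\bm z_i^\T-\mSigma_p)\mD\bm h$. The products $L_iL_j$, $L_iC_j$ and $C_iL_j$ have mean zero, because each contains a factor centered in a column on which the other factor does not depend. The correction $C_i$ has dominant piece $n^{-1}(\bm h^\T\mathbf Q^{(-ij)}\bm z_j)(\bm z_j^\T\mathbf Q^{(-ij)}\bm z_i)(\bm z_i^\T\mD\bm h)$ and is $O(n^{-1/2}\bm h^\T\mD\bm h)$ in $L^2$. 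Alternatively, get concentration from the genuine martingale $\sum_k(\E_k-\E_{k-1})\{\bm h^\T\mathbf Q\bm h-\bm h^\T\mathbf Q^{(-k)}\bm h\}$ and use your identity only in expectation.

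Either way these are moment computations, whereas $\|\mSigma_p^{1/2}\mathbf Q^{(-I)}\mSigma_p^{1/2}\|=O_p(1)$ holds only in probability. You therefore also need a truncation. For the Gaussian representative, the failure events have probability $O(n^{-A})$ for every fixed $A$. The worst case is only polynomial: $\|\mSigma_p^{1/2}\mathbf Q\mSigma_p^{1/2}\|\le p/\lambda$ and $\bm h^\T\mD\bm h\ge\|\bm h\|^2/(dp+\lambda)$. With these additions your argument proves the lemma; without them its central variance bound is unjustified.
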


\begin{proof}
The zero vector requires no argument. Work in an orthonormal eigenbasis
of $\mSigma_p$, writing its eigenvalues as $\tau_1,\ldots,\tau_p\ge0$
and using $\bm h$ for the rotated deterministic vector. Fix one CM
ridge $\lambda=\lambda_{\rho_k,n}$ and put $d=d_n(\lambda)$.
Lemma~\ref{lem:original_cm_scalar} bounds $\lambda$ above and away
from zero and gives $d\ge d_*>0$, uniformly on the fixed grid.

First use the Gaussian representative
$\mathbf Z_G=\operatorname{diag}(\sqrt{\tau_j})\mathbf G$, where the
rows $\bm g_j^\T$ of $\mathbf G$ are independent $N_n(\vzero,\I_n)$.
Let $\mathbf S_G=\mathbf Z_G\mathbf Z_G^\T/n$ and
$\mathbf Q_G=(\mathbf S_G+\lambda\I_p)^{-1}$. For a set $J$ of
deleted population-coordinate rows, define
\[
 \mM_{-J}=\left(\I_n+\frac1{n\lambda}
                 \sum_{j\notin J}\tau_j\bm g_j\bm g_j^\T\right)^{-1},
 \qquad \mM=\mM_{-\varnothing}.
\]
Thus $0\preceq\mM_{-J}\preceq\I_n$. Row deletion increases this
matrix by a positive semidefinite matrix of rank at most $|J|$, so
$0\le\operatorname{tr}(\mM_{-J}-\mM)\le |J|$.
The companion trace identity and Lemma~\ref{lem:original_sample_traces}
give
\[
 n^{-1}\operatorname{tr}\mM=1-y_na_S(\lambda)
       =d+O_p(\varepsilon_n).
\]
For $i\in J$, $\bm g_i$ is independent of $\mM_{-J}$.
The conditional quadratic-form bound from Appendix~\ref{app:rmt}
gives, for $t\ge1$,
\[
 \Pr\left\{\left|\frac{\bm g_i^\T\mM_{-J}\bm g_i}{n}
                  -\frac{\tr\mM_{-J}}n\right|
          >C\left(\sqrt{t/n}+t/n\right)
          \,\middle|\,\mM_{-J}\right\}\le2e^{-t}.
\]
Indeed $\|\mM_{-J}\|\le1$ and $\|\mM_{-J}\|_{\mathrm F}\le\sqrt n$.
For $J=\{i,j\}$ the two deleted rows are conditionally independent;
writing $\bm g_i^\T\mM_{-J}\bm g_j$ as one half of the difference
of two Gaussian quadratic forms gives the same bound, with $4e^{-t}$,
for its absolute value divided by $n$.

There are at most $Cp^2$ terms. For any fixed $A>0$, take
$t=(A+4)\log(n+2)$ and then take conditional expectations.
With probability at least $1-C_A n^{-A}$, all centered quadratic
and bilinear errors are at most $C_A\sqrt{\log(n+2)/n}$.
This probability statement concerns the centered errors only.
The trace approximation is added separately using
\[
 \max_{|J|\le2}\left|\frac{\tr\mM_{-J}}n-d\right|
 \le\left|\frac{\tr\mM}n-d\right|+\frac2n
 =O_p(\varepsilon_n).
\]
Consequently,
\[
 \begin{aligned}
 \max_{\substack{1\le |J|\le2\\i\in J}}
 \left|n^{-1}\bm g_i^\T\mM_{-J}\bm g_i-d\right|
      &=O_p(\varepsilon_n),\\
 \max_{i\ne j}\left|n^{-1}\bm g_i^\T\mM_{-\{i,j\}}\bm g_j\right|
      &=O_p\!\left(\sqrt{\frac{\log(n+2)}n}\right).
\end{aligned}
\]

Write $D_j=(\lambda+d\tau_j)^{-1}$. The one-row Schur complement gives
\[
 (\mathbf Q_G)_{jj}=(\lambda+\tau_j\kappa_j)^{-1},\qquad
 \kappa_j=n^{-1}\bm g_j^\T\mM_{-\{j\}}\bm g_j.
\]
On an event of probability tending to one, all $\kappa_j\ge d_*/2$.
Consequently
\[
 \max_j\left|\frac{(\mathbf Q_G)_{jj}}{D_j}-1\right|
 \le\frac2{d_*}\max_j|\kappa_j-d|=O_p(\varepsilon_n).
\]
For $J=\{i,j\}$, put $K_{ab}=n^{-1}\bm g_a^\T\mM_{-J}\bm g_b$
for $a,b\in J$. The two-row Schur complement yields
\[
 (\mathbf Q_G)_{JJ}=\mathbf B_J^{-1},\qquad
 \mathbf B_J=\lambda\I_2+
       \operatorname{diag}(\sqrt{\tau_a})_{a\in J}
       (K_{ab})_{a,b\in J}
       \operatorname{diag}(\sqrt{\tau_a})_{a\in J}.
\]
Let $\mathbf D_J=\operatorname{diag}(D_i,D_j)$ and
$\mathbf C_J=\mathbf D_J^{1/2}\mathbf B_J\mathbf D_J^{1/2}$.
Its diagonal entries differ from one by
$\tau_a(K_{aa}-d)/(\lambda+d\tau_a)=O_p(\varepsilon_n)$ uniformly,
and
\[
 |(\mathbf C_J)_{12}|=|K_{ij}|
    \sqrt{\frac{\tau_i\tau_j}{(\lambda+d\tau_i)(\lambda+d\tau_j)}}
 \le d_*^{-1}|K_{ij}|.
\]
Hence every $\det\mathbf C_J\ge1/2$ with probability tending to one.
Using $(\mathbf C_J^{-1})_{12}=-(\mathbf C_J)_{12}/\det\mathbf C_J$
in the normalized inverse identity proves
\[
 \max_{i\ne j}\frac{|(\mathbf Q_G)_{ij}|}{\sqrt{D_iD_j}}
       =O_p(\varepsilon_n).
\]
These inequalities do not require any bound on the individual $\tau_j$.

To sum the off-diagonal terms, introduce an independent matrix
$\mE=\operatorname{diag}(\eta_1,\ldots,\eta_p)$ of independent
Rademacher signs. Since $\mE\mathbf G$ has the same law as $\mathbf G$,
$\mE\mathbf Q_G\mE$ has the same law as $\mathbf Q_G$.
Set $s=\sum_jD_jh_j^2>0$ and
$W=2\sum_{i<j}h_ih_j\eta_i\eta_j(\mathbf Q_G)_{ij}$.
Conditional on $\mathbf G$, its mean is zero and
\[
 \E(W^2\mid\mathbf G)=4\sum_{i<j}h_i^2h_j^2(\mathbf Q_G)_{ij}^2.
\]
For fixed $M>0$, the event
$\max_{i\ne j}|(\mathbf Q_G)_{ij}|/\sqrt{D_iD_j}
 \le M\varepsilon_n$ is measurable with respect to $\mathbf G$.
On it the last display is at most $2M^2\varepsilon_n^2s^2$.
Conditional Chebyshev therefore bounds
$\Pr\{|W|>t\varepsilon_ns\}$ by the probability of the complementary
event plus $2M^2/t^2$. First choose $M$ using the preceding stochastic
bound, and then choose $t$. This proves $W=O_p(\varepsilon_ns)$.
The diagonal bound gives
$\sum_jh_j^2(\mathbf Q_G)_{jj}=s\{1+O_p(\varepsilon_n)\}$.
Together with the equality in law under $\mE$, this proves the Gaussian
directional assertion.

For the spherical representative, let $\bm\gamma_i$ denote the columns
of $\mathbf G$ and multiply column $i$ of $\mathbf Z_G$ by
$\sqrt p/\|\bm\gamma_i\|$. The resulting matrix has the distribution
of the angular observations in Section~\ref{sec:theory}. Gaussian norm
concentration gives
\[
 \delta_n=\max_i|p/\|\bm\gamma_i\|^2-1|
       =O_p\!\left(\sqrt{\frac{\log(n+2)}p}\right).
\]
On $\delta_n<1$, the corresponding sample matrices and resolvents obey
\[
 \begin{gathered}
 (1-\delta_n)\mathbf S_G\preceq\mathbf S_U
                         \preceq(1+\delta_n)\mathbf S_G,\\
 (1+\delta_n)^{-1}\mathbf Q_G\preceq\mathbf Q_U
                         \preceq(1-\delta_n)^{-1}\mathbf Q_G.
 \end{gathered}
\]
Thus every quadratic form changes relatively by $O_p(\delta_n)$.
Since $p\asymp n$, this is absorbed by $\varepsilon_n$. Rotating back
and taking a union over the fixed finite CM grid proves
\eqref{eq:population_direction}.
\end{proof}

\begin{lemma}\label{lem:population_drift_bridge}
Under Assumptions~\ref{ass:dimshape}--\ref{ass:grid}, for every deterministic
signal sequence and uniformly over the fixed grid,
\begin{equation}\label{eq:population_drift_relative}
 \Delta_{k,n}^{\mathrm{ran}}-\Delta_{k,n}^{\mathrm{pop}}
       =O_p\!\left(\varepsilon_n^{\mathrm{br}}\Delta_{k,n}^{\mathrm{pop}}\right).
\end{equation}
Condition~\eqref{eq:local_alternative} is equivalent to
\begin{equation}\label{eq:random_drift_equivalence}
 \frac{\Delta_{k,n}^{\mathrm{ran}}}{\sqrt{2p\omega_{k,n}}}
 \xrightarrow{p}\frac{\widetilde\kappa_k}{d_{k,\infty}^2},
 \qquad k=1,\ldots,K.
\end{equation}
Under either condition,
\begin{equation}\label{eq:population_drift_bridge}
 \max_k\left|
 \frac{\widetilde\Delta_{k,n}^{\mathrm{ran}}}
                  {\sqrt{2p\omega_{k,n}}}
          -\widetilde\kappa_{k,n}\right|=O_p(\varepsilon_n^{\mathrm{br}}).
\end{equation}
\end{lemma}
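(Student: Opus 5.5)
The plan is to factor the random drift into a radial part and a directional part, control each in relative terms, and then read off the equivalence and the bridge bound by algebra.

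\textbf{Factorization.} Recall $\mA_{k,n}^{\mathrm R}=c_{k,n}\mQ_n(\lambda_{k,n})$. Then
\[
 \frac{\Delta_{k,n}^{\mathrm{ran}}}{\Delta_{k,n}^{\mathrm{pop}}}
 =\frac{a_{k,n}^2\,c_{k,n}\,a_n(\lambda_{k,n})}{\{\E(R_i^{-1})\}^2}\cdot
 \frac{\bm h_n^\T\mQ_n(\lambda_{k,n})\bm h_n}{\bm h_n^\T\mD_{k,n}\bm h_n},
\]
with the convention that the ratio equals one when $\bm h_n=\vzero$. Here $\mD_{k,n}=\mD_n(\lambda_{k,n})$.

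\emph{Radial factor.} By \eqref{eq:inverse_radius_population}, $a_{k,n}^2=\{\E\xi_i^{-1}\}^2\bar R_p^{-2}\mathfrak b_{k,n}^{-1}\{1+O_p(\varepsilon_n+n^{-1/2})\}$. Since $\E(R_i^{-1})=\bar R_p^{-1}\E\xi_i^{-1}$ and $\mathfrak b_{k,n}=c_{k,n}a_n(\lambda_{k,n})$, the first factor is $1+O_p(\varepsilon_n^{\mathrm{br}})$. Uniformity over $k$ holds because the grid is finite.

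\emph{Directional factor.} Lemma~\ref{lem:population_direction} gives directly that the second factor is $1+O_p(\varepsilon_n)$, uniformly over the grid and with every spike coordinate retained.

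Multiplying the two factors proves \eqref{eq:population_drift_relative}.

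\textbf{Equivalence.} By \eqref{eq:population_drift_standardization},
\[
 \frac{\Delta_{k,n}^{\mathrm{pop}}}{\sqrt{2p\omega_{k,n}}}
 =\frac{\widetilde\kappa_{k,n}}{d_{k,n}^2},
\]
and Corollary~\ref{cor:deterministic_continuity} gives $d_{k,n}\to d_{k,\infty}>0$. The lower bound on $d_{k,\infty}$ comes from Lemma~\ref{lem:original_cm_scalar}.

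If \eqref{eq:local_alternative} holds, the population ratio converges to $\widetilde\kappa_k/d_{k,\infty}^2$ and is therefore bounded. The relative bound \eqref{eq:population_drift_relative} then shows the random ratio differs from it by $o_p(1)$, which gives \eqref{eq:random_drift_equivalence}.

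Conversely, suppose \eqref{eq:random_drift_equivalence} holds. Write $X_n=\Delta_{k,n}^{\mathrm{ran}}/\sqrt{2p\omega_{k,n}}$ and $x_n=\Delta_{k,n}^{\mathrm{pop}}/\sqrt{2p\omega_{k,n}}$, so that $X_n=x_n(1+O_p(\varepsilon_n^{\mathrm{br}}))$.
\begin{itemize}
\item The sequence $x_n$ is deterministic and nonnegative.
\item If some subsequence had $x_n\to\infty$, then $X_n\to\infty$ in probability along it, contradicting tightness of $X_n$.
\item Hence $x_n$ is bounded. Along any subsequence with $x_n\to x$, we get $X_n\to x$ in probability, so $x=\widetilde\kappa_k/d_{k,\infty}^2$.
\item Every subsequential limit is the same, so $x_n$ converges. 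Multiplying by $d_{k,n}^2$ gives $\widetilde\kappa_{k,n}\to\widetilde\kappa_k$.
\end{itemize}

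\textbf{Bridge bound.} Using $\widetilde\Delta^{\mathrm{ran}}=d_{k,n}^2\Delta^{\mathrm{ran}}$ and $\widetilde\Delta^{\mathrm{pop}}/\sqrt{2p\omega_{k,n}}=\widetilde\kappa_{k,n}$, the difference in \eqref{eq:population_drift_bridge} equals $\widetilde\kappa_{k,n}\,O_p(\varepsilon_n^{\mathrm{br}})$. Under either condition $\widetilde\kappa_{k,n}=O(1)$, so the difference is $O_p(\varepsilon_n^{\mathrm{br}})$. Taking the maximum over the finite grid completes the argument.

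\textbf{Main obstacle.} The substantive content is the directional relative comparison with diverging spikes. That is already supplied by Lemma~\ref{lem:population_direction}, so the only delicate remaining point is the converse direction of the equivalence. It needs the relative (multiplicative) error bound rather than an additive one, and the subsequence argument above handles it.
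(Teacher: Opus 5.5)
Your proposal is correct and follows essentially the paper's proof: the radial factor comes from \eqref{eq:inverse_radius_population}, the directional factor from Lemma~\ref{lem:population_direction}, and the bridge bound from the exact ratio identity $\widetilde\Delta_{k,n}^{\mathrm{ran}}/\sqrt{2p\omega_{k,n}}=U_{k,n}\widetilde\kappa_{k,n}$ with $U_{k,n}=\Delta_{k,n}^{\mathrm{ran}}/\Delta_{k,n}^{\mathrm{pop}}$. The only difference is the converse of the equivalence: the paper divides by $U_{k,n}\xrightarrow{p}1$ and applies Slutsky, which is shorter than your subsequence/tightness argument and shows that this step is less delicate than you suggest.
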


\begin{proof}
Equation~\eqref{eq:inverse_radius_population} gives
\[
 a_{k,n}^2
 =\frac{\{\E(R_i^{-1})\}^2}
          {c_{k,n}a_n(\lambda_{\rho_k,n})}
    \{1+O_p(\varepsilon_n^{\mathrm{br}})\}.
\]
For $\bm h_n\ne\vzero$, Lemma~\ref{lem:population_direction} and
$\mA_{k,n}^{\mathrm R}=c_{k,n}\mQ_n(\lambda_{\rho_k,n})$ imply
\[
 \begin{aligned}
 \Delta_{k,n}^{\mathrm{ran}}
 &=\frac{npd_p^2\{\E(R_i^{-1})\}^2}
          {c_{k,n}a_n(\lambda_{\rho_k,n})}
     c_{k,n}\bm h_n^\T\mD_{\rho_k,n}\bm h_n
       \{1+O_p(\varepsilon_n^{\mathrm{br}})\}\\
 &=\Delta_{k,n}^{\mathrm{pop}}\{1+O_p(\varepsilon_n^{\mathrm{br}})\}.
 \end{aligned}
\]
No independence between the two approximations is used: their product
remainder is $O_p(\varepsilon_n^{\mathrm{br}}+(\varepsilon_n^{\mathrm{br}})^2)=O_p(\varepsilon_n^{\mathrm{br}})$. This proves \eqref{eq:population_drift_relative}.
If $\bm h_n=\vzero$, both drifts vanish exactly.

Define $U_{k,n}=\Delta_{k,n}^{\mathrm{ran}}/
\Delta_{k,n}^{\mathrm{pop}}$ when $\bm h_n\ne\vzero$ and set
$U_{k,n}=1$ otherwise. Then
$\max_k|U_{k,n}-1|=O_p(\varepsilon_n^{\mathrm{br}})$ and
\begin{equation}\label{eq:bridge_exact_ratio}
 \frac{\widetilde\Delta_{k,n}^{\mathrm{ran}}}
      {\sqrt{2p\omega_{k,n}}}
       =U_{k,n}\widetilde\kappa_{k,n}
\end{equation}
for every signal, including zero signals. Since $\varepsilon_n^{\mathrm{br}}\to0$,
$U_{k,n}\xrightarrow{p}1$ and $U_{k,n}^{-1}\xrightarrow{p}1$.
Thus \eqref{eq:local_alternative} implies convergence of the left
side of \eqref{eq:bridge_exact_ratio} to $\widetilde\kappa_k$.
Conversely, that convergence implies
$\widetilde\kappa_{k,n}\xrightarrow{p}\widetilde\kappa_k$ after
dividing by $U_{k,n}$. The sequence $\widetilde\kappa_{k,n}$ is
deterministic, so this is ordinary convergence.
The positive limits of $d_{k,n}$ convert the corrected random drift
to \eqref{eq:random_drift_equivalence} in both directions.

Under either condition $\max_k\widetilde\kappa_{k,n}$ is bounded.
Subtracting $\widetilde\kappa_{k,n}$ from
\eqref{eq:bridge_exact_ratio} proves
\eqref{eq:population_drift_bridge}. The deterministic bounds on
$d_{k,n}$ and $\omega_{k,n}$ also yield
\[
 \begin{aligned}
 \max_k|\Delta_{k,n}^{\mathrm{ran}}-\Delta_{k,n}^{\mathrm{pop}}|
   & =O_p(\sqrt p\,\varepsilon_n^{\mathrm{br}}),\\
 \max_k|\widetilde\Delta_{k,n}^{\mathrm{ran}}
             -\widetilde\Delta_{k,n}^{\mathrm{pop}}|
   & =O_p(\sqrt p\,\varepsilon_n^{\mathrm{br}}).
 \end{aligned}
\]
Here $\sqrt p\,\varepsilon_n^{\mathrm{br}}=O(\sqrt{\log(n+2)})$ because
$p\asymp n$ and $r_p=O(1)$; the standardized error is
$O_p(\sqrt{\log(n+2)/n})$.
\end{proof}

For the signal-scale interpretation, Jensen's inequality and the
inverse-moment bound imply
$1\le\E\xi_i^{-1}\le C$ and
$\E(R_i^{-1})=\bar R_p^{-1}\E\xi_i^{-1}\asymp p^{-1/2}$.
The fixed-grid bounds on $a_n$, $d_{k,n}$ and $\Theta_{2,n}$ therefore give
\[
 \Delta_{k,n}^{\mathrm{pop}}\asymp
     n\bm h_n^\T\mD_{\rho_k,n}\bm h_n,\qquad
 \widetilde\kappa_{k,n}\asymp
     \frac n{\sqrt p}\bm h_n^\T\mD_{\rho_k,n}\bm h_n.
\]
These are two-sided order bounds, not separate convergence assumptions.
No limit of $p\{\E(R_i^{-1})\}^2$ and no Euclidean restriction
$\|\bm h_n\|\to0$ are imposed.

\subsection{The shifted leverage-scaling system}

Fix $\rho$; all estimates below hold simultaneously on the fixed grid.
To prove Theorem~\ref{thm:local_hrst}, we use the auxiliary signal
covariance terms
\[
 \begin{aligned}
 \Gamma_{kl,n}
 &=np(1-p^{-1})^2a_{k,n}a_{l,n}
   \frac{(\ell_{k,n}\ell_{l,n})^{1/2}}
        {(c_{k,n}c_{l,n}\Theta_{1,k,n}\Theta_{1,l,n})^{1/2}}\\
 &\quad{}\times
   \bm h_n^\T \mA_{k,n}^{\mathrm R}\mSigma_p \mA_{l,n}^{\mathrm R}\bm h_n,
 \qquad k,l=1,\ldots,K.
 \end{aligned}
\]
For $\rho=\rho_k$, write $\Gamma_{\rho,n}=\Gamma_{kk,n}$.
The subscripts follow the shrinkage convention stated at the beginning
of the appendices.

Matrix norms without a subscript are operator norms.  Write $\alpha_R>8$
for the original radial exponent, and set
\[
 \begin{aligned}
 L_n&=\log(n+2),\qquad \varepsilon_n=\sqrt{L_n/n}+r_p/p,\\
 \eta_n&=p^{-5/4}n^{1/\alpha_R}L_n+p^{-3/2}n^{2/\alpha_R}L_n,\\
 \sigma_n&=p^{-3/4}n^{1/\alpha_R}\sqrt{L_n}.
 \end{aligned}
\]
Since $p/n$ is bounded above and below and $\alpha_R>8$,
\begin{equation}\label{eq:local_weight_rate}
 p\eta_n\to0,\qquad \sqrt p\,\sigma_n\to0,
 \qquad \eta_n/\sigma_n\to0.
\end{equation}
The empirical radial scale remains the original $a_{\rho,n}$.
The conditioning field is the unsigned-data field in
\eqref{eq:unsigned_sign_decomposition},
$\mathcal F_n=\sigma\{R_i,\bm{U}_i\bm{U}_i^\T:1\le i\le n\}$.
Lemmas~\ref{lem:orders} and~\ref{lem:original_cm_scalar}, together with
Corollary~\ref{cor:original_omega}, give uniformly on the fixed grid,
\[
 a_{k,n}\asymp_p p^{-1/2},\qquad
 c\le\omega_{k,n},\ell_{k,n},c_{k,n}\Theta_{1,k,n}\le C .
\]
For the last bound, $H\ne\delta_0$ supplies a fixed positive bulk mass
away from zero, and
\[
 c_{k,n}\Theta_{1,k,n}
 =c_{k,n}\int\frac{t}{td_{k,n}+\lambda_{k,n}}\,dH_{p,b}(t).
\]
Consequently the population condition in \eqref{eq:local_alternative},
through Lemma~\ref{lem:population_drift_bridge}, implies
\[
 \frac{\Delta_{k,n}^{\mathrm{ran}}}{\sqrt p}
 =\sqrt{2\omega_{k,n}}\,
   \frac{\Delta_{k,n}^{\mathrm{ran}}}{\sqrt{2p\omega_{k,n}}}=O_p(1),
 \qquad npd_p^2a_{k,n}^2\asymp_p p .
\]
In particular,
\begin{equation}\label{eq:local_rmt_signal}
 q_{\mathrm R}=\bm{h}^\T \mathbf{A}_{\rho,n}^{\mathrm R}\bm{h}
 =\frac{\Delta_{\rho,n}^{\mathrm{ran}}}{npd_p^2a_{\rho,n}^2}
 =O_p(p^{-1/2}),\qquad \bm{h}=\bm{h}_n.
\end{equation}
The weighted and spherical inverse bounds
\eqref{eq:original_weighted_resolvent}--\eqref{eq:original_spherical_resolvent}
and the bounded $c_{k,n}$ yield
\[
 \begin{aligned}
 M^\Sigma_{k,n}
 &:=\|\mSigma_p^{1/2}\mathbf{A}_{k,n}^{\mathrm R}\mSigma_p^{1/2}\|\\
 &=\|(\mathbf{A}_{k,n}^{\mathrm R})^{1/2}
                 \mSigma_p(\mathbf{A}_{k,n}^{\mathrm R})^{1/2}\|=O_p(1),\\
 \mathbf{A}_{k,n}^{\mathrm R}\mSigma_p \mathbf{A}_{k,n}^{\mathrm R}
 &=(\mathbf{A}_{k,n}^{\mathrm R})^{1/2}
   \{(\mathbf{A}_{k,n}^{\mathrm R})^{1/2}\mSigma_p
                          (\mathbf{A}_{k,n}^{\mathrm R})^{1/2}\}
   (\mathbf{A}_{k,n}^{\mathrm R})^{1/2}
 \preceq M^\Sigma_{k,n} \mathbf{A}_{k,n}^{\mathrm R}.
 \end{aligned}
\]
The two norms are equal because $\|\mathbf{T}\mathbf{T}^\T\|=\|\mathbf{T}^\T \mathbf{T}\|=\|\mathbf{T}\|^2$
for $\mathbf{T}=\mSigma_p^{1/2}(\mathbf{A}_{k,n}^{\mathrm R})^{1/2}$. Hence
\[
 0\le\Gamma_{kk,n}
 \le\frac{\ell_{k,n}}{c_{k,n}\Theta_{1,k,n}}M^\Sigma_{k,n}\Delta_{k,n}^{\mathrm{ran}}
 =O_p(\sqrt p).
\]
For mixed indices, define
\[
 \bm{v}^\Gamma_{k,n}
 :=\sqrt{np}\,d_p a_{k,n}
       \sqrt{\frac{\ell_{k,n}}{c_{k,n}\Theta_{1,k,n}}}\,
       \mSigma_p^{1/2}\mathbf{A}_{k,n}^{\mathrm R}\bm{h}_n .
\]
Then $\Gamma_{kl,n}=(\bm{v}^\Gamma_{k,n})^\T \bm{v}^\Gamma_{l,n}$, so
\begin{equation}\label{eq:local_derived_gamma}
 \begin{gathered}
 |\Gamma_{kl,n}|\le
       \sqrt{\Gamma_{kk,n}\Gamma_{ll,n}}=O_p(\sqrt p),\\
 \max_{k,l}\frac{2|\Gamma_{kl,n}|}
                   {p\sqrt{\omega_{k,n}\omega_{l,n}}}
       =O_p(p^{-1/2})\xrightarrow{p}0 .
 \end{gathered}
\end{equation}
For every realized sample, $\|\vU_i\|=1$ gives
\[
 \begin{aligned}
 |\bm{h}_n^\T \mathbf{A}_{k,n}^{\mathrm R}\mSigma_p^{1/2}\vU_i|^2
 &\le\|\mSigma_p^{1/2}\mathbf{A}_{k,n}^{\mathrm R}\bm{h}_n\|^2\|\vU_i\|^2\\
 &=\bm{h}_n^\T \mathbf{A}_{k,n}^{\mathrm R}\mSigma_p \mathbf{A}_{k,n}^{\mathrm R}\bm{h}_n.
 \end{aligned}
\]
Multiplication by the nonnegative coefficient in $\Gamma_{kk,n}$
therefore proves the exact maximum-term bound
\begin{equation}\label{eq:local_pathwise_lindeberg}
 \begin{aligned}
 0&\le
 \frac{\displaystyle\max_{1\le i\le n}
       pd_p^2a_{k,n}^2\ell_{k,n}(c_{k,n}\Theta_{1,k,n})^{-1}
       (\bm{h}_n^\T \mathbf{A}_{k,n}^{\mathrm R}\mSigma_p^{1/2}\vU_i)^2}
      {p+\Gamma_{kk,n}}\\
 &\le\frac{\Gamma_{kk,n}}{n(p+\Gamma_{kk,n})}\le\frac1n .
 \end{aligned}
\end{equation}
This samplewise inequality also holds when $\Gamma_{kk,n}=0$.
It requires no independence between $\mathbf{A}_{k,n}^{\mathrm R}$ and $\vU_i$.
For the angular data and their shifted representatives, put
\[
 \begin{gathered}
 \bm{z}_i=\sqrt p\,\mSigma_p^{1/2}\vU_i,\qquad \mathbf{Z}=(\bm{z}_1,\ldots,\bm{z}_n),\\
 \zeta_i=\sqrt p/R_i,\qquad \bm\zeta=(\zeta_1,\ldots,\zeta_n)^\T,
 \qquad \mathbf{Z}_h=\mathbf{Z}+\bm{h}\bm\zeta^\T.
 \end{gathered}
\]
The sample covariance and resolvent here use these spherical representatives:
\[
 \mathbf{S}=\mathbf{Z}\mathbf{Z}^\T/n,\qquad \mathbf{Q}(\lambda)=(\mathbf{S}+\lambda \mathbf{I}_p)^{-1}.
\]
Gaussian representatives below are used through the preceding Gaussian-to-sphere coupling.
Column rescaling does not change Tyler scatter.  Thus $\mathbf{Z}_h$ and
$(\bm{e}_i+\bm{h})_{i=1}^n$ compute the same shape, since
\[
 \bm{z}_i+\zeta_i \bm{h}=\frac{\sqrt p}{R_i}(\bm{e}_i+\bm{h}),\qquad
 \frac{(c\bm{u})(c\bm{u})^\T}{(c\bm{u})^\T \mathbf{A}(c\bm{u})}
 =\frac{\bm{u}\bm{u}^\T}{\bm{u}^\T \mathbf{A}\bm{u}}\quad(c>0).
\]
The original small-ball bound, a union bound and Markov's inequality give
\begin{equation}\label{eq:local_radial_inputs}
 \max_i\zeta_i=O_p(n^{1/\alpha_R}),\qquad
 \frac1n\sum_i\zeta_i^k=O_p(1),\quad 1\le k\le8.
\end{equation}
Indeed, $\zeta_i=(\sqrt p/\bar R_p)\xi_i^{-1}$ and
\[
 \Pr\{\max_i\zeta_i>M n^{1/\alpha_R}\}
 \le Cn(Mn^{1/\alpha_R})^{-\alpha_R}=CM^{-\alpha_R},
 \qquad \sup_p\E\zeta_i^k<\infty.
\]

Let $b_0$ denote the exact trace-normalization root for the centered
errors (the root $\widehat b$ constructed in Appendix~\ref{app:rmt}), and
let $\mathbf{A}_0=\widehat{\mathbf{A}}_\rho^0$ be their normalized inverse.  The curve
notation in this section suppresses $n$ and the fixed $\rho$.
For the null balancing curve define
\[
 \begin{gathered}
 \mathbf{D}_b=\mathbf{I}_p+\mathbf{Z}\mathbf{W}_b\mathbf{Z}^\T,\qquad
 \mathbf{W}_b=\operatorname{diag}\{w_i(b)\},\qquad w_i(b)=e^{x_i(b)},\\
 \mathbf{J}_b=\mathbf{W}_b^{1/2}\mathbf{Z}^\T \mathbf{D}_b^{-1}\mathbf{Z}\mathbf{W}_b^{1/2},\qquad
 (\mathbf{J}_b)_{ii}=b.
 \end{gathered}
\]
\[
 \begin{gathered}
 d_i(b)=\bm{z}_i^\T \mathbf{D}_b^{-1}\bm{z}_i=b/e^{x_i(b)},\qquad
 \bm{t}_i^{\mathrm D}(b)=\mathbf{D}_b^{-1/2}\bm{z}_i/\sqrt{d_i(b)},\\
 \bm{r}_b^{\mathrm D}=\mathbf{D}_b^{-1/2}\bm{h},\qquad q_0=\bm{h}^\T \mathbf{A}_0\bm{h}.
 \end{gathered}
\]
The normalization and baseline identities are
\[
 T_b=\operatorname{tr}(\mathbf{D}_b)/p,\qquad \mathbf{A}_b=T_b\mathbf{D}_b^{-1},\qquad
 T_0=T_{b_0},\quad \mathbf{A}_0=\mathbf{A}_{b_0},\quad \mathbf{C}_0=\mathbf{A}_0^{-1}.
\]
A subscript $0$ on a curve quantity means evaluation at $b_0$,
not at the numerical value $b=0$.
If $\bm{h}=0$, then $\mathbf{Z}_h=\mathbf{Z}$, the shifted and null balancing roots,
locations and scales coincide, and $\Delta_{k,n}^{\mathrm{ran}}=\Gamma_{kl,n}=0$.
The local conclusions then follow from the null results.
In the remaining derivations take $\bm{h}\ne0$; since $\mathbf{A}_0$ is positive definite,
$q_0>0$. This also handles sequences having both zero and nonzero shifts.

\begin{lemma*}

There is a fixed-width interval $B_n$ containing $b_0$ with probability
tending to one on which the null curve exists.  Uniformly on it,
\begin{equation}\label{eq:local_relative_loewner}
 (1-O_p(\varepsilon_n))\mathbf{A}_{\rho,n}^{\mathrm R}
 \preceq \mathbf{A}_0\preceq(1+O_p(\varepsilon_n))\mathbf{A}_{\rho,n}^{\mathrm R},
 \qquad q_0=O_p(p^{-1/2}),
\end{equation}
\begin{equation}\label{eq:local_curve_weights}
 c/n\le e^{x_i(b)}\le C/n,\qquad
 \|\bm{x}'(b)\|_\infty+\|\bm{x}''(b)\|_\infty\le C,
\end{equation}
\begin{equation}\label{eq:local_angular_projection}
 \begin{gathered}
 \|\bm{r}_b^{\mathrm D}\|^2=O_p(q_0),\\
 \max_i|\bm{t}_i^{\mathrm D}(b)^\T \bm{r}_b^{\mathrm D}|=O_p\{\sqrt{q_0L_n/p}\},\\
 \|\bm{h}\|^2/p=O_p(q_0).
 \end{gathered}
\end{equation}
\end{lemma*}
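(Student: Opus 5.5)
The plan is to reuse the null balancing construction of Theorem~\ref{thm:original_spike_inverse} essentially verbatim and to read off the three displays from its relative Loewner comparison \eqref{eq:relative_scatter_comparison}. Curve existence is obtained first. Lemma~\ref{lem:original_scaling} applied at $\bm x_0=-\log(n\widehat\lambda)\bm 1$ produces a solution branch $\bm x(b)$ on the fixed-width interval $|b-b_0|\le m^2/24$. Here $m$ is the leverage margin from \eqref{eq:original_hat_margin}, which stays bounded below with probability tending to one, so this interval serves as $B_n$. Throughout, $\|\bm x(b)-\bm x_0\|_\infty=O_p(\varepsilon_n)$ at $b_0$. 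Combined with the Lipschitz bound $\|\bm x(b')-\bm x(b)\|_\infty\le2|b'-b|/m$, this gives $e^{x_i(b)}\asymp 1/n$ uniformly on $B_n$.

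For the derivatives, $\bm x'=\mathbf L^{-1}\bm 1$ gives $\|\bm x'\|_\infty\le 2/m$ by \eqref{eq:original_scaling_monotonicity}. For $\bm x''$ one differentiates $\mathbf L\bm x'=\bm 1$ to get $\mathbf L\bm x''=-\mathbf L'\bm x'$, and $\|\mathbf L'\|$ is controlled by \eqref{eq:leverage_matrix_derivative}. The remaining question is the $\ell_\infty\to\ell_\infty$ bound on $\mathbf L^{-1}$. Here I would use the diagonal-dominance argument already given in the scaling lemma: $\mathbf L$ is an M-matrix with row sums $m_i\ge m/2$, so $\|\mathbf L^{-1}\|_{\infty\to\infty}\le 2/m$. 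This also requires $\max_i|(\mathbf L'\bm x')_i|\le C$. Each entry involves $\sum_j J_{ij}\,\partial J_{ij}$, which by Cauchy--Schwarz is at most $\|\mathbf J\|\,\|d\mathbf J\|\le C$. This proves \eqref{eq:local_curve_weights}.

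Next comes the Loewner statement \eqref{eq:local_relative_loewner}. At $b_0$, the comparison $e^{\mp\zeta}\mathbf D_0\preceq\mathbf D_{b_0}$ with $\mathbf D_0=\mathbf I+\mathbf S/\lambda_0$, together with $T_0\lambda_0=c_{\rho,n}+O_p(\varepsilon_n)$, gives $\mathbf A_0=T_0\mathbf D_{b_0}^{-1}$ between $(1\pm O_p(\varepsilon_n))c_{\rho,n}\lambda_0\mathbf D_0^{-1}$. Since $\lambda_0\mathbf D_0^{-1}=\lambda_0\mathbf Q(\lambda_0)$, the right-hand side is $(1\pm O_p(\varepsilon_n))\mathbf A^{\mathrm R}_{\rho,n}$. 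The bound $q_0=O_p(p^{-1/2})$ then follows from \eqref{eq:local_rmt_signal}.

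For \eqref{eq:local_angular_projection}, \eqref{eq:local_curve_weights} gives $\mathbf D_b\asymp\mathbf D_{b_0}$ uniformly on $B_n$. Hence $\|\bm r_b^{\mathrm D}\|^2=\bm h^\T\mathbf D_b^{-1}\bm h\le C T_0^{-1}q_0=O_p(q_0)$, using $T_0\ge1$. The bound $\|\bm h\|^2/p=O_p(q_0)$ uses $\mathbf A_0\succeq c\,\mathbf Q(\lambda_0)\succeq c(\|\mathbf S\|+\lambda_0)^{-1}\mathbf I$. Because $\|\mathbf S\|$ may be of order $p$ under diverging spikes, this yields $\|\bm h\|^2\le Cp\,q_0$ precisely. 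I would verify $\|\mathbf S\|=O_p(p)$ from $\operatorname{tr}\mathbf S/p=1+O_p(n^{-1/2})$ in Lemma~\ref{lem:original_sample_traces}.

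The main obstacle is the uniform bilinear bound $\max_i|\bm t_i^{\mathrm D}(b)^\T\bm r_b^{\mathrm D}|$, because $\mathbf D_b$ depends on $\bm z_i$. I would remove $\bm z_i$ by Sherman--Morrison: $\bm z_i^\T\mathbf D_b^{-1}\bm h=\bm z_i^\T\mathbf D_{b,-i}^{-1}\bm h/(1+w_i\bm z_i^\T\mathbf D_{b,-i}^{-1}\bm z_i)$, with denominator $\asymp1$. For the leave-one-out coefficient, I would first freeze the weights at the deterministic-in-$i$ surrogate $w\equiv1/(n\lambda_0)$ by the relative comparison, costing a factor $1+O_p(\varepsilon_n)$. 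Then \eqref{eq:conditional_gaussian_max} applies to the Gaussian representative with vector $\mSigma_p^{1/2}\mathbf Q^{(-i)}\bm h$, whose squared norm is at most $M^\Sigma\,\bm h^\T\mathbf Q^{(-i)}\bm h=O_p(q_0)$ by Lemma~\ref{lem:original_weighted_bound}. This gives $\max_i|\bm z_i^\T\mathbf Q^{(-i)}\bm h|=O_p(\sqrt{q_0L_n})$, and dividing by $\sqrt{d_i}\asymp\sqrt p$ produces the stated rate. Uniformity over $b\in B_n$ comes from a mesh of width $n^{-2}$ combined with the derivative bounds just proved, and the transfer back to the spherical case uses the coupling \eqref{eq:original_spherical_resolvent}.
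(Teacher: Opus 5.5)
Most of your plan follows the paper. That covers the Loewner sandwich from \eqref{eq:relative_scatter_comparison}, the bounds on $\bm x'$ and $\bm x''$ through the diagonally dominant Hessian with $\|\mathbf L^{-1}\|_{\infty\to\infty}\le 2/m$, and the two norm bounds from $\mathbf D_b\asymp\mathbf D_{b_0}$ and $\|\mathbf S\|\le\operatorname{tr}\mathbf S=O_p(p)$. In the Loewner step there is a harmless slip: $\mathbf D_0^{-1}=\lambda_0\mathbf Q(\lambda_0)$, so the middle term should read $c_{\rho,n}\lambda_0^{-1}\mathbf D_0^{-1}=c_{\rho,n}\mathbf Q(\lambda_0)$.

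\textbf{Genuine gap: the projection bound uniformly on a fixed-width $B_n$.} You freeze the weights at the single surrogate $1/(n\lambda_0)$. That is accurate to relative order $\varepsilon_n$ only at $b_0$, or on a window $|b-b_0|=O(\varepsilon_n)$. For $|b-b_0|\asymp1$ the weights $e^{x_i(b)}$ differ from $1/(n\lambda_0)$ by a constant factor $e^{O(|b-b_0|)}$.

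A relative Loewner perturbation of size $\kappa$ does not act as a multiplicative factor on a bilinear form. It only gives the additive bound $|\bm z_i^\T\{\mathbf D_b^{-1}-\lambda_0\mathbf Q(\lambda_0)\}\bm h|\lesssim\kappa\sqrt{\bm z_i^\T\mathbf Q\bm z_i}\sqrt{\bm h^\T\mathbf Q\bm h}\asymp\kappa\sqrt{pq_0}$. This is $O(\sqrt{L_nq_0})$ only when $\kappa=O(\varepsilon_n)$, because $\varepsilon_n=O(\sqrt{L_n/p})$. That identity, and not a factor $1+O_p(\varepsilon_n)$, is why freezing is harmless at $b_0$.

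For $b$ at fixed distance from $b_0$ the additive error exceeds the target by a factor $\sqrt{p/L_n}$. A $b$-mesh does not repair this, because at each mesh point you still need the bilinear bound with the correct ridge. Integrating in $b$ from $b_0$ does not help either: the available bound on $\frac{d}{db}\,\bm t_i^{\mathrm D}(b)^\T\bm r_b^{\mathrm D}$ is only $O_p(\sqrt{q_0})$. As written, your argument proves the projection bound only on an $O(\varepsilon_n)$-neighbourhood of $b_0$.

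\textbf{Missing idea: a $b$-dependent ridge.} The paper proceeds as follows.
\begin{enumerate}
\item It defines $\lambda_S(b)$ by $a_S(\lambda_S(b))=b/y_n$. This is well defined and Lipschitz in $b$ because $-a_S'\ge c_*/2$ on $\mathcal L$ with probability tending to one.
\item It shrinks $\beta$ so that $\lambda_S(B_n)\subset\mathcal L$.
\item For each $b$ it applies the scaling lemma at $-\log\{n\lambda_S(b)\}\bm 1$, using the uniform-in-$\lambda$ hat-diagonal estimate \eqref{eq:original_full_hat}. This gives $\|\bm x(b)+\log\{n\lambda_S(b)\}\bm 1\|_\infty=O_p(\varepsilon_n)$ uniformly on $B_n$, and hence $\mathbf D_b^{-1}=\lambda_S(b)\mathbf Q(\lambda_S(b))+\mathbf E_b$ with relative error $O_p(\varepsilon_n)$ uniformly.
\item It proves the leave-one-out Gaussian bilinear bound for $\mathbf Q(\lambda)$ simultaneously over a deterministic ridge mesh in $\mathcal L$. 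At deterministic ridges $\mathbf Q^{(-i)}(\lambda)$ is independent of $\bm z_i$, and the threshold $M\sqrt{L_n\,\bm h^\T\mathbf Q_i\bm h}$ is kept $\mathbf Q_i$-measurable. Transferring to $\bm h^\T\mathbf Q\bm h$ uses $\bm h^\T\mathbf Q\bm h\ge\bm h^\T\mathbf Q_i\bm h/(1+\bm z_i^\T\mathbf Q_i\bm z_i/n)$ with a uniformly bounded denominator.
\item It interpolates in $\lambda$ and only then evaluates at the random ridge $\lambda_S(b)$.
\end{enumerate}
With this modification your Sherman--Morrison step goes through.

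A minor point: the paper centres $B_n$ at the deterministic $b_{*,n}=y_na_n(\lambda_{\rho,n})$ and uses $|b_0-b_{*,n}|=O_p(\varepsilon_n)$. Later conditioning arguments rely on that deterministic centre, whereas your interval centred at $b_0$ is random.
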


\begin{proof}
The relative comparison \eqref{eq:relative_scatter_comparison}, with
$\mathbf{D}_*=\mathbf{I}+\mathbf{S}/\lambda_{\rho,n}$,
and the scalar normalization bound give
\[
 e^{-O_p(\varepsilon_n)}\mathbf{D}_*\preceq \mathbf{D}_{b_0}
 \preceq e^{O_p(\varepsilon_n)}\mathbf{D}_*,\qquad
 \mathbf{A}_0=T_0\mathbf{D}_{b_0}^{-1},\quad
 T_0\lambda_{\rho,n}=c_{\rho,n}+O_p(\varepsilon_n).
\]
Inverting these inequalities proves \eqref{eq:local_relative_loewner}
and transfers \eqref{eq:local_rmt_signal} to $q_0$.

The scaling Hessian is $\mathbf{K}_b=\operatorname{diag}\{(\mathbf{J}_b)_{ii}\}-\mathbf{J}_b\circ \mathbf{J}_b$.
Its positive row-sum margin is at least $m>0$ on a fixed neighbourhood.
Lemma~\ref{lem:original_scaling} gives
\[
 \|\mathbf{K}_b^{-1}\|_\infty\le m^{-1},\qquad
 \mathbf{K}_b\bm{x}'(b)=\bm{1},\qquad \|\bm{x}'(b)\|_\infty\le m^{-1}.
\]
For $\mathbf{V}=\operatorname{diag}(v_i)$, abbreviate $\mathbf{J}=\mathbf{J}(\bm{x}(b))$.
Applying \eqref{eq:leverage_matrix_derivative} along $\bm{x}(b)+t\bm{v}$ yields
\[
 \mathbf{J}_{\bm{v}}'=\tfrac12\{(\mathbf{I}-\mathbf{J})\mathbf{V}\mathbf{J}+\mathbf{J} \mathbf{V}(\mathbf{I}-\mathbf{J})\},\qquad
 \|\mathbf{J}_{\bm{v}}'\|\le\|\bm{v}\|_\infty,
\]
\[
 \mathbf{J}_{\bm{v}\bm{v}}''=\tfrac12\{-\mathbf{J}_{\bm{v}}'\mathbf{V}\mathbf{J}+(\mathbf{I}-\mathbf{J})\mathbf{V}\mathbf{J}_{\bm{v}}'
                   +\mathbf{J}_{\bm{v}}'\mathbf{V}(\mathbf{I}-\mathbf{J})-\mathbf{J} \mathbf{V}\mathbf{J}_{\bm{v}}'\},\qquad
 \|\mathbf{J}_{\bm{v}\bm{v}}''\|\le2\|\bm{v}\|_\infty^2.
\]
Differentiating the leverage constraint a second time therefore gives
\[
 \mathbf{K}_b\bm{x}''(b)=-\operatorname{diag}(\mathbf{J}_{\bm{x}'(b),\bm{x}'(b)}''),\qquad
 \|\bm{x}''(b)\|_\infty\le2m^{-1}\|\bm{x}'(b)\|_\infty^2\le C.
\]
Put $b_{*,n}=y_na_n(\lambda_{\rho,n})$. The null normalization
bound gives $|b_0-b_{*,n}|=O_p(\varepsilon_n)$.
Choose a fixed $\beta>0$ smaller than one quarter of the fixed
branch radius in Lemma~\ref{lem:original_scaling}, and set
\[
 B_n=[b_{*,n}-2\beta,b_{*,n}+2\beta].
\]
Choose $\beta$ also small enough that this interval stays inside
$(0,\min(1,y_n))$. On $|b_0-b_{*,n}|\le\beta$, the curve exists
throughout $B_n$ and
\[
 \operatorname{dist}(b_0,\partial B_n)\ge\beta,\qquad
 \Pr\{|b_0-b_{*,n}|>\beta\}\longrightarrow0.
\]
Integrating $\bm{x}'$ on this interval proves \eqref{eq:local_curve_weights} and
\[
 e^{-C|b-b_0|}\mathbf{D}_{b_0}\preceq \mathbf{D}_b\preceq e^{C|b-b_0|}\mathbf{D}_{b_0}.
\]
This implies the first bound in \eqref{eq:local_angular_projection}.

For this curve use the sample trace
\[
 a_S(\lambda)=p^{-1}\operatorname{tr}\{\mathbf{S}(\mathbf{S}+\lambda \mathbf{I})^{-1}\},\qquad
 a_S(\lambda_S(b))=b/y_n.
\]
This $a_S$ is a resolvent trace; $a$ below denotes the empirical inverse radius.
Choose $t_*>0,\pi_*>0$ with $H_{p,b}([t_*,C_H])\ge\pi_*$
for all sufficiently large $p$. This is possible because $H\ne\delta_0$.
The scalar equation and $\lambda\in\mathcal L$ give
\[
 d_n(\lambda)\ge d_*:=(1+\sup_ny_nC_H/\lambda_-)^{-1}>0,
\]
\[
 \begin{aligned}
 &-a_n'(\lambda)
 =\frac{\int td_n/(td_n+\lambda)^2\,dH_{p,b}(t)}
 {1+y_n\int t\lambda/(td_n+\lambda)^2\,dH_{p,b}(t)}
\\
 & \ge\frac{\pi_*t_*d_*}
 {(C_H+\lambda_+)^2(1+\sup_ny_nC_H/\lambda_-)}=:c_*>0.
 \end{aligned}
\]
For this comparison, write $\mathbf Q_U(\lambda)=\mathbf Q(\lambda)$
for the spherical resolvent, and let $\mathbf Q_G(\lambda)$ denote its
coupled Gaussian counterpart from Appendix~\ref{app:rmt}.
The Gaussian derivative estimate transfers to spherical samples because
\[
 \begin{gathered}
 a_S'(\lambda)=-p^{-1}\tr \mathbf{Q}+\lambda p^{-1}\tr \mathbf{Q}^2,\\
 \|\mathbf{Q}_U^2-\mathbf{Q}_G^2\|\le(\|\mathbf{Q}_U\|+\|\mathbf{Q}_G\|)\|\mathbf{Q}_U-\mathbf{Q}_G\|
 =O_p(\varepsilon_n).
 \end{gathered}
\]
Thus $\sup_{\mathcal L}|a_S'-a_n'|=O_p(\varepsilon_n)$.
With probability tending to one, $-a_S'\ge c_*/2$.
Shrink $\beta$ so that $\lambda_S(b)\in\mathcal L$ for $b\in B_n$.
The scaling bounds give uniformly
\[
 \begin{gathered}
 \bm{x}_*=-\log\{n\lambda_S(b)\}\bm{1},\qquad
 \|\bm{x}(b)-\bm{x}_*\|_\infty=O_p(\varepsilon_n),\\
 \mathbf{D}_b^{-1}=\lambda_S(b)\mathbf{Q}(\lambda_S(b))+\mathbf{E}_b,
 \end{gathered}
\]
\[
 \|\mathbf{Q}(\lambda_S(b))^{-1/2}\mathbf{E}_b\mathbf{Q}(\lambda_S(b))^{-1/2}\|
 =O_p(\varepsilon_n).
\]
For the next conditional calculation, write $\bm{z}_i,\mathbf{Q},\mathbf{Q}_i$ for the coupled
Gaussian $\bm{z}_i^G,\mathbf{Q}_G(\lambda),\mathbf{Q}_G^{(-i)}(\lambda)$.
At a deterministic ridge, the deletion identity and Cauchy--Schwarz give
\[
 \bm{z}_i^\T \mathbf{Q}\bm{h}=\frac{\bm{z}_i^\T \mathbf{Q}_i\bm{h}}{1+\bm{z}_i^\T \mathbf{Q}_i\bm{z}_i/n},\qquad
 \bm{h}^\T \mathbf{Q}\bm{h}\ge\frac{\bm{h}^\T \mathbf{Q}_i\bm{h}}{1+\bm{z}_i^\T \mathbf{Q}_i\bm{z}_i/n}.
\]
The deleted quadratic bound makes the denominator at most $C$ uniformly.
For each deleted resolvent, use the $\mathbf{Q}_i$-measurable event
\[
 E_{i,\lambda}(C)=
 \{\|\mSigma_p^{1/2}\mathbf{Q}_i(\lambda)\mSigma_p^{1/2}\|\le C\}.
\]
On this event the conditional Gaussian variance satisfies
\[
 \mathbf{Q}_i\mSigma_p\mathbf{Q}_i\preceq C \mathbf{Q}_i,\qquad
 \operatorname{Var}(\bm{z}_i^\T \mathbf{Q}_i\bm{h}\mid \mathbf{Q}_i)
 \le C \bm{h}^\T \mathbf{Q}_i\bm{h},
\]
\[
 \mathbf1_{E_{i,\lambda}(C)}
 \Pr\{|\bm{z}_i^\T \mathbf{Q}_i\bm{h}|>M\sqrt{L_n\bm{h}^\T \mathbf{Q}_i\bm{h}}\mid \mathbf{Q}_i\}
 \le2\exp\{-M^2L_n/(2C)\}.
\]
Let $\mathcal M_n$ be a ridge mesh of spacing $n^{-2}$ and
$E_n(C)=\bigcap_{i,\lambda\in\mathcal M_n}E_{i,\lambda}(C)$.
Uniform weighted inverse bounds give
\[
 \lim_{C\to\infty}\limsup_n\Pr\{E_n(C)^c\}=0.
\]
Taking conditional expectations and then a union bound gives
\[
 \begin{aligned}
 &\Pr\{\exists i,\lambda\in\mathcal M_n:
      |\bm{z}_i^\T \mathbf{Q}_i(\lambda)\bm{h}|>M\sqrt{L_n\bm{h}^\T \mathbf{Q}_i(\lambda)\bm{h}}\}\\
 &\qquad\le\Pr\{E_n(C)^c\}
              +C_1n^3\exp\{-M^2L_n/(2C)\}.
 \end{aligned}
\]
For each large $C$, choose $M$ with $M^2/(2C)>4$.
The second term tends to zero; then increase $C$.
This establishes the required uniform probabilistic bound.
\[
 \max_i\frac{|\bm{z}_i^\T \mathbf{Q}\bm{h}|}{\sqrt{\bm{z}_i^\T \mathbf{Q}\bm{z}_i}}
 =O_p\{\sqrt{q_0L_n/p}\}.
\]
Here $\bm{z}_i^\T \mathbf{Q}\bm{z}_i\asymp_p p$ uniformly and
$\bm{h}^\T \mathbf{Q}\bm{h}\asymp_p q_0$ on the compact ridge interval.
The interpolation follows from $\mathbf{Q}'=-\mathbf{Q}^2$ and
\[
 \left|\frac{d}{d\lambda}
   \frac{\bm{z}_i^\T \mathbf{Q}\bm{h}}{\sqrt{\bm{z}_i^\T \mathbf{Q}\bm{z}_i}}\right|
 \le\sqrt{\bm{h}^\T \mathbf{Q}^3\bm{h}}
   +\frac{|\bm{z}_i^\T \mathbf{Q}\bm{h}|}{2\sqrt{\bm{z}_i^\T \mathbf{Q}\bm{z}_i}}\|\mathbf{Q}\|
 \le C\sqrt{q_0}.
\]
The relative resolvent error contributes at most
\[
 \frac{|\bm{z}_i^\T \mathbf{E}_b\bm{h}|}{\sqrt{\bm{z}_i^\T \mathbf{Q}\bm{z}_i}}
 \le O_p(\varepsilon_n)\sqrt{\bm{h}^\T \mathbf{Q}\bm{h}},\qquad
 \varepsilon_n=O\{\sqrt{L_n/p}\}.
\]
This proves the projection bound, including random ridges selected by $b$.
Gaussian-to-sphere column factors are uniformly $1+O_p(\varepsilon_n)$.
Their relative covariance comparison transfers the same bilinear bound;
the common positive factor of column $i$ cancels in its normalization.

Finally $\|\mSigma_p\|\le p$, $\|\mathbf{S}\|=O_p(p)$, and the weight bounds imply
\[
 \mathbf{D}_b\preceq(1+C\|\mathbf{S}\|)\mathbf{I}\preceq O_p(p)\mathbf{I},\qquad
 \|\bm{h}\|^2\le O_p(p)\bm{h}^\T \mathbf{D}_b^{-1}\bm{h}=O_p(pq_0).
\]
This completes \eqref{eq:local_angular_projection}.
\end{proof}

Freeze the null weights at a fixed $b\in B_n$, and define
\[
 \begin{gathered}
 \mathbf{D}_{f,b}=\mathbf{I}+\mathbf{Z}_h\mathbf{W}_b\mathbf{Z}_h^\T,\qquad
 \bm{v}_b=\mathbf{D}_b^{-1/2}\mathbf{Z}\mathbf{W}_b\bm\zeta,\qquad c_b=\bm\zeta^\T \mathbf{W}_b\bm\zeta,\\
 \mathbf{E}_b=\bm{r}_b^{\mathrm D}\bm{v}_b^\T+\bm{v}_b(\bm{r}_b^{\mathrm D})^\T
                      +c_b\bm{r}_b^{\mathrm D}(\bm{r}_b^{\mathrm D})^\T.
 \end{gathered}
\]
Write $\bm{x}_h(b)$ for the shifted balancing solution at leverage $b$.
Its weights, matrix and normalized inverse are
\[
 \begin{gathered}
 w_{h,i}(b)=e^{x_{h,i}(b)},\qquad
 \mathbf{W}_{h,b}=\operatorname{diag}\{w_{h,i}(b)\},\\
 \mathbf{D}_{h,b}=\mathbf{I}_p+\mathbf{Z}_h\mathbf{W}_{h,b}\mathbf{Z}_h^\T,\qquad
 T_{h,b}=\operatorname{tr}(\mathbf{D}_{h,b})/p,\qquad
 \mathbf{A}_h(b)=T_{h,b}\mathbf{D}_{h,b}^{-1}.
 \end{gathered}
\]
The next lemma establishes this solution uniformly on $B_n$.

\begin{lemma*}

Uniformly on $B_n$,
\begin{equation}\label{eq:local_frozen_rank_two}
 \mathbf{D}_b^{-1/2}\mathbf{D}_{f,b}\mathbf{D}_b^{-1/2}=\mathbf{I}+\mathbf{E}_b,\qquad
 \|\mathbf{E}_b\|=O_p(\sqrt{q_0}),
\end{equation}
\begin{equation}\label{eq:local_frozen_weights}
 \begin{gathered}
 \max_i|\log w_{h,i}(b)-\log w_i(b)|=O_p(\eta_n),\\
 \|\mathbf{D}_{f,b}^{-1/2}(\mathbf{D}_{h,b}-\mathbf{D}_{f,b})\mathbf{D}_{f,b}^{-1/2}\|=O_p(\eta_n).
 \end{gathered}
\end{equation}
The inverse comparison holds after the corresponding scalar rescaling.
\end{lemma*}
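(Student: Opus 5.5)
The rank-two identity \eqref{eq:local_frozen_rank_two} comes first, and it is purely algebraic. Expand
\[
\mathbf{Z}_h\mathbf{W}_b\mathbf{Z}_h^\T=\mathbf{Z}\mathbf{W}_b\mathbf{Z}^\T+\bm h\bm\zeta^\T\mathbf{W}_b\mathbf{Z}^\T+\mathbf{Z}\mathbf{W}_b\bm\zeta\bm h^\T+c_b\bm h\bm h^\T .
\]
Conjugating by $\mathbf{D}_b^{-1/2}$ turns the first summand plus $\mathbf I$ into $\mathbf I$, and the remaining three into exactly $\mathbf{E}_b$, with $\bm r_b^{\mathrm D}=\mathbf D_b^{-1/2}\bm h$ and $\bm v_b=\mathbf D_b^{-1/2}\mathbf Z\mathbf W_b\bm\zeta$.

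For the norm bound, note that $\|\mathbf E_b\|\le 2\|\bm r_b^{\mathrm D}\|\|\bm v_b\|+c_b\|\bm r_b^{\mathrm D}\|^2$. By \eqref{eq:local_angular_projection}, $\|\bm r_b^{\mathrm D}\|^2=O_p(q_0)$. By \eqref{eq:local_curve_weights} and \eqref{eq:local_radial_inputs}, $c_b\le (C/n)\sum_i\zeta_i^2=O_p(1)$. To control $\bm v_b$, write $\bm v_b=\mathbf D_b^{-1/2}\mathbf Z\mathbf W_b^{1/2}\,\mathbf W_b^{1/2}\bm\zeta$. The matrix $\mathbf D_b^{-1/2}\mathbf Z\mathbf W_b^{1/2}$ has operator norm $\|\mathbf J_b\|^{1/2}\le1$, so $\|\bm v_b\|\le c_b^{1/2}=O_p(1)$. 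Together these give $\|\mathbf E_b\|=O_p(\sqrt{q_0})$, and uniformity over $b\in B_n$ follows because every bound used is uniform on the curve.

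The weight comparison \eqref{eq:local_frozen_weights} will be handled with Lemma~\ref{lem:original_scaling} applied to the shifted data matrix $\mathbf Z_h$, taking the frozen null weights $\bm x(b)$ as the candidate $\bm x_0$. That lemma needs two inputs:
\begin{itemize}
\item[(a)] a bound $\eta$ on the shifted leverages at the frozen weights, $h^{(h)}_i(\bm x(b))=w_i(b)(\bm z_i+\zeta_i\bm h)^\T\mathbf D_{f,b}^{-1}(\bm z_i+\zeta_i\bm h)$, measured against $b$;
\item[(b)] a positive margin $(\mathbf J-\mathbf J^2)_{ii}\ge m$ along uniform shifts.
\end{itemize}
Once both are in hand, the lemma gives $\max_i|x_{h,i}(b)-x_i(b)|\le\eta/m$. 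The Loewner comparison of $\mathbf D_{h,b}$ with $\mathbf D_{f,b}$ then follows from coefficientwise comparison, $e^{-\|\Delta\bm x\|_\infty}\mathbf D_{f,b}\preceq\mathbf D_{h,b}\preceq e^{\|\Delta\bm x\|_\infty}\mathbf D_{f,b}$, exactly as in \eqref{eq:relative_scatter_comparison}. The inverse statement after scalar rescaling follows by inverting and using $T_{h,b}/T_b=1+O_p(\eta_n)$ from traces.

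The margin (b) comes from the null margin \eqref{eq:original_hat_margin}. The shifted leverage matrix is obtained from the null one by the column perturbation $\bm z_i\mapsto\bm z_i+\zeta_i\bm h$. Relative to $\mathbf D_b$, this is an $O_p(\sqrt{q_0})$ perturbation in the normalized sense by \eqref{eq:local_frozen_rank_two}, and $q_0=O_p(p^{-1/2})\to0$, so the margin drops at most from $m$ to $m/2$.

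The main obstacle is (a): we need $\eta=O_p(\eta_n)$ uniformly in $i$, which is much smaller than the crude $\sqrt{q_0}$. The plan is to write $\mathbf D_{f,b}^{-1}=\mathbf D_b^{-1/2}(\mathbf I+\mathbf E_b)^{-1}\mathbf D_b^{-1/2}$ and expand the leverage of column $i$ into three kinds of terms.
\begin{itemize}
\item[(i)] The null leverage $b$ plus the effect of the rank-two correction on $\bm t_i^{\mathrm D}$. Here $\mathbf E_b$ acts through the scalars $\bm t_i^{\mathrm D\T}\bm r_b^{\mathrm D}$ and $\bm t_i^{\mathrm D\T}\bm v_b$. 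By \eqref{eq:local_angular_projection}, $|\bm t_i^{\mathrm D\T}\bm r_b^{\mathrm D}|=O_p(\sqrt{q_0L_n/p})$, which is of order $p^{-3/4}\sqrt{L_n}$.
\item[(ii)] Cross terms $2\zeta_iw_i\bm z_i^\T\mathbf D_{f,b}^{-1}\bm h$. Since $w_i\asymp 1/n$, $\zeta_i\le n^{1/\alpha_R}$ and $\sqrt{d_i}\asymp\sqrt p$, these are bounded by $n^{1/\alpha_R}\sqrt{q_0L_n}/\sqrt p\cdot p^{-1/2}$, giving the first part of $\eta_n$.
\item[(iii)] Quadratic terms $\zeta_i^2w_iq_0$, bounded by $n^{2/\alpha_R}p^{-3/2}$, giving the second part of $\eta_n$.
\end{itemize}
Which I expect to be hardest is the diagonal sandwich term $b\,\bm t_i^{\mathrm D\T}[(\mathbf I+\mathbf E_b)^{-1}-\mathbf I]\bm t_i^{\mathrm D}$. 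Its leading contribution is $-2(\bm t_i^{\mathrm D\T}\bm r_b^{\mathrm D})(\bm t_i^{\mathrm D\T}\bm v_b)$. To bound it I need $\max_i|\bm t_i^{\mathrm D\T}\bm v_b|$ small rather than merely $O_p(1)$. The first thing I would try is to bound $\bm t_i^{\mathrm D\T}\bm v_b$ by a leave-one-out argument together with the sign symmetry $\bm z_i=\epsilon_i\widetilde{\bm z}_i$. If that fails, I would fall back on a Woodbury expansion of the rank-two inverse, retaining the product structure of these scalars, so that they contribute at order $\sqrt{q_0L_n/p}\,n^{1/\alpha_R}$. That order is still within $\eta_n$.
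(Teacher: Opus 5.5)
\textbf{Gap in the key leverage estimate.} Your rank-two identity, the bound $\|\mathbf E_b\|=O_p(\sqrt{q_0})$, and the reduction of \eqref{eq:local_frozen_weights} to a leverage bound via Lemma~\ref{lem:original_scaling} (frozen null weights as candidate, margin inherited from the null) all match the paper. The gap is the step you flag yourself, and it carries the whole $\eta_n$ rate. You need
\[
\sup_{b\in B_n}\max_i|\beta_i|=O_p(p^{-1/2}n^{1/\alpha_R}\sqrt{L_n}),\qquad \beta_i=\bm t_i^{\mathrm D}(b)^\T\bm v_b,
\]
and you do not prove it. Your fallback cannot replace it.

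\begin{itemize}
\item Woodbury does not eliminate $\beta_i$. With $\alpha_i=\bm t_i^{\mathrm D}(b)^\T\bm r_b^{\mathrm D}$, the exact rank-two correction to $\bm t_i^{\mathrm D}(b)^\T(\mathbf I+\mathbf E_b)^{-1}\bm t_i^{\mathrm D}(b)$ is a quadratic form in $(\alpha_i,\beta_i)$ with order-one coefficients. It contains both $\alpha_i\beta_i$ and $\|\bm r_b^{\mathrm D}\|^2\beta_i^2$. Your cross term (ii) likewise contains $\theta_i q_0\beta_i$, with $\theta_i=\zeta_i/\sqrt{d_i}$.
\item With only $|\beta_i|\le\|\bm v_b\|=O_p(1)$, these give $O_p(p^{-3/4}\sqrt{L_n})$, $O_p(p^{-1/2})$ and $O_p(p^{-1}n^{1/\alpha_R})$, none of which is $O(\eta_n)$.
\item The order you quote, $\sqrt{q_0L_n/p}\,n^{1/\alpha_R}\asymp p^{-3/4}n^{1/\alpha_R}\sqrt{L_n}$, is exactly $\sigma_n$. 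By \eqref{eq:local_weight_rate}, $\eta_n/\sigma_n\to0$, so it is not ``within $\eta_n$''.
\end{itemize}

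\textbf{How to close it.} Your sign-symmetry idea is the right one, and no leave-one-out is needed.

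\begin{itemize}
\item Since $w_j=b/d_j$, you have $\beta_i=\sum_j a_j\,\bm t_i^\T\bm t_j$ with $a_j=\zeta_j\sqrt{bw_j}$. The diagonal term is $a_i=O_p(n^{1/\alpha_R-1/2})$.
\item Write $\bm t_i=\epsilon_i\bm s_i$ as in \eqref{eq:unsigned_sign_decomposition}. The remainder is $\epsilon_i\sum_{j\ne i}\epsilon_j a_jH_{ij}$ with $H_{ij}=\bm s_i^\T\bm s_j$.
\item The null balancing weights depend only on $\bm U_j\bm U_j^\T$ and the radii. Hence all $a_j$ and $H_{ij}$ are $\mathcal F_n$-measurable, and the remainder is a conditional Rademacher sum.
\item Its conditional variance is at most $Cn^{-1}\max_j\zeta_j^2$, using $w_j\le C/n$ and $\sum_jH_{ij}^2\le\|\mathbf J_b\|/b$. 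A Chernoff bound and a union over $i$ then give the rate at each fixed $b$.
\item Uniformity on $B_n$ follows from a polynomial mesh. Indeed $\|\mathbf H_b'\|\le C$ by \eqref{eq:leverage_matrix_derivative}, and $a_j'=\tfrac12a_j(b^{-1}+x_j')$, so $\max_i|\partial_b\beta_i|=O_p(1)$.
\item The leverage error is then a tight multiple of $|\alpha_i|(|\theta_i|+|\beta_i|)+\alpha_i^2+q_0(\theta_i^2+\beta_i^2)=O_p(\eta_n)$.
\end{itemize}

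\textbf{A separate slip in the rescaling step.} Your claim $T_{h,b}/T_b=1+O_p(\eta_n)$ is false in general. The frozen trace difference is
\[
p^{-1}\tr(\mathbf D_{f,b}-\mathbf D_b)=2p^{-1}\bm h^\T\mathbf Z\mathbf W_b\bm\zeta+p^{-1}c_b\|\bm h\|^2,
\]
which is of order $q_0+\sqrt{q_0/n}$ (for example, when the signal lies along a diverging spike). Only the ratio of $T_{h,b}$ to $\tr(\mathbf D_{f,b})/p$ is $1+O_p(\eta_n)$. The scalar-rescaling clause only needs the comparison of $\mathbf D_{h,b}/T_b=(T_{h,b}/T_b)\mathbf C_h$ with $\mathbf C_f=\mathbf D_{f,b}/T_b$.
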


\begin{proof}
Expanding $\mathbf{Z}_h\mathbf{W}_b\mathbf{Z}_h^\T$ proves \eqref{eq:local_frozen_rank_two}.  Also
\[
 \|\mathbf{D}_b^{-1/2}\mathbf{Z}\mathbf{W}_b^{1/2}\|^2=\|\mathbf{J}_b\|\le1,\qquad
 \|\bm{v}_b\|^2\le\bm\zeta^\T \mathbf{W}_b\bm\zeta\le\frac Cn\sum_i\zeta_i^2=O_p(1),
\]
\[
 c_b=O_p(1),\qquad
 \|\mathbf{E}_b\|\le2\|\bm{r}_b^{\mathrm D}\|\|\bm{v}_b\|+c_b\|\bm{r}_b^{\mathrm D}\|^2=O_p(\sqrt{q_0}).
\]
In this proof suppress $(b)$ in $w_i,d_i$ and write
$\bm{t}_i=\bm{t}_i^{\mathrm D}(b)$.  Set
$\alpha_i=\bm{t}_i^\T \bm{r}_b^{\mathrm D}$, $\beta_i=\bm{t}_i^\T \bm{v}_b$ and
$\theta_i=\zeta_i/\sqrt{d_i}$.  The frozen leverage is exactly
\begin{equation}\label{eq:local_frozen_leverage}
 w_i(\bm{z}_i+\zeta_i \bm{h})^\T \mathbf{D}_{f,b}^{-1}(\bm{z}_i+\zeta_i \bm{h})
 =b(\bm{t}_i+\theta_i \bm{r}_b^{\mathrm D})^\T(\mathbf{I}+\mathbf{E}_b)^{-1}(\bm{t}_i+\theta_i \bm{r}_b^{\mathrm D}).
\end{equation}
By \eqref{eq:local_curve_weights}, \eqref{eq:local_angular_projection} and
\eqref{eq:local_radial_inputs},
\[
 \max_i|\alpha_i|=O_p(p^{-3/4}\sqrt{L_n}),\qquad
 \max_i|\theta_i|=O_p(p^{-1/2}n^{1/\alpha_R}).
\]
Condition on null angular outer products and radii, and write
$\bm{t}_i=\epsilon_i \bm{s}_i$.  Then
\[
 \beta_i=\zeta_i\sqrt{bw_i}
  +\sum_{j\ne i}\zeta_j\sqrt{bw_j}\epsilon_i\epsilon_j \bm{s}_i^\T \bm{s}_j,
\]
\[
 \sum_{j\ne i}\zeta_j^2bw_j(\bm{s}_i^\T \bm{s}_j)^2
 \le\frac{C\max_j\zeta_j^2}{n}\sum_j(\bm{s}_i^\T \bm{s}_j)^2
 \le\frac{C\max_j\zeta_j^2}{n}.
\]
For independent conditional signs and coefficients $a_j$, the moment
generating function and Chernoff bound are
\[
 \E\{e^{s\sum_j a_j\epsilon_j}\mid\mathcal F_n\}
 =\prod_j\cosh(sa_j)\le\exp\{s^2\sum_j a_j^2/2\},
\]
\[
 \Pr\{|\sum_j a_j\epsilon_j|>t\sqrt{\sum_j a_j^2}\mid\mathcal F_n\}
 \le2\inf_{s>0}e^{-st\sqrt{\sum a_j^2}+s^2\sum a_j^2/2}
 =2e^{-t^2/2}.
\]
Apply this inequality to $\beta_i$ and take a union bound to obtain
\[
 \sup_{b\in B_n}\max_i|\beta_i|
 =O_p(p^{-1/2}n^{1/\alpha_R}\sqrt{L_n}).
\]
For uniformity put $a_j^*=\zeta_j\sqrt{bw_j}$ and
$\mathbf{H}_b=(\bm{t}_i^\T \bm{t}_j)_{ij}=\mathbf{J}_b/b$.  Equation~\eqref{eq:leverage_matrix_derivative}
gives
\[
 \begin{gathered}
 \|\mathbf{H}_b'\|\le C,\\ \|\bm{a}^*\|^2\le\frac Cn\sum_j\zeta_j^2=O_p(1),
 \\ (a_j^*)'=\tfrac12a_j^*(b^{-1}+x_j'),\\ \|(\bm{a}^*)'\|\le C\|\bm{a}^*\|.
 \end{gathered}
\]
Thus $\|\beta'\|_\infty\le\|\mathbf{H}_b'\bm{a}^*+\mathbf{H}_b(\bm{a}^*)'\|=O_p(1)$.
A polynomial mesh has negligible interpolation error.

In \eqref{eq:local_frozen_leverage} put $\bm{y}_i=\bm{t}_i+\theta_i \bm{r}_b^{\mathrm D}$.
On $\|\mathbf{E}_b\|\le1/2$,
\[
 (\mathbf{I}+\mathbf{E}_b)^{-1}=\mathbf{I}-\mathbf{E}_b+\mathbf{E}_b^2(\mathbf{I}+\mathbf{E}_b)^{-1},\qquad
 \|(\mathbf{I}+\mathbf{E}_b)^{-1}\|\le2.
\]
The exact linear terms and a bound for the remaining quadratic form are
\[
 \bm{y}_i^\T \bm{y}_i-1=2\theta_i\alpha_i+\theta_i^2\|\bm{r}_b^{\mathrm D}\|^2,
\]
\[
 \bm{y}_i^\T \mathbf{E}_b\bm{y}_i
 =2\alpha_i\beta_i+c_b\alpha_i^2
  +2\theta_i (\bm{r}_b^{\mathrm D})^\T \mathbf{E}_b\bm{t}_i+\theta_i^2(\bm{r}_b^{\mathrm D})^\T \mathbf{E}_b\bm{r}_b^{\mathrm D},
\]
\[
 \|\mathbf{E}_b\bm{y}_i\|
 \le C\{|\alpha_i|+\sqrt{q_0}\,|\beta_i|+\sqrt{q_0}\,|\theta_i|\},
 \qquad |\bm{y}_i^\T \mathbf{E}_b^2(\mathbf{I}+\mathbf{E}_b)^{-1}\bm{y}_i|\le2\|\mathbf{E}_b\bm{y}_i\|^2.
\]
Consequently the absolute leverage error is at most a tight multiple of
\[
 |\alpha_i|(|\theta_i|+|\beta_i|)+\alpha_i^2
                  +q_0(\theta_i^2+\beta_i^2).
\]
Taking the maximum and inserting the preceding rates gives
\[
 O_p\{p^{-5/4}n^{1/\alpha_R}L_n
       +p^{-3/2}L_n+p^{-3/2}n^{2/\alpha_R}L_n\}
 =O_p(\eta_n).
\]
For the frozen hat matrix set
\[
 \mathbf{Y}=\mathbf{D}_b^{-1/2}\mathbf{Z}\mathbf{W}_b^{1/2},\quad
 \delta \mathbf{Y}=\bm{r}_b^{\mathrm D}\bm\zeta^\T \mathbf{W}_b^{1/2},\quad
 \|\mathbf{Y}\|\le1,\quad \|\delta \mathbf{Y}\|=O_p(\sqrt{q_0}),
\]
\[
 \begin{gathered}
 \mathbf{J}_f=(\mathbf{Y}+\delta \mathbf{Y})^\T(\mathbf{I}+\mathbf{E}_b)^{-1}(\mathbf{Y}+\delta \mathbf{Y}),\\ \mathbf{J}_b=\mathbf{Y}^\T \mathbf{Y},
 \\ \|\mathbf{J}_f-\mathbf{J}_b\|\le C(\|\delta \mathbf{Y}\|+\|\mathbf{E}_b\|)=O_p(\sqrt{q_0}).
 \end{gathered}
\]
Its margin remains at least $m/2$ because
\[
 \max_i|(\mathbf{J}_f-\mathbf{J}_f^2)_{ii}-(\mathbf{J}_b-\mathbf{J}_b^2)_{ii}|
 \le3\|\mathbf{J}_f-\mathbf{J}_b\|=O_p(\sqrt{q_0}).
\]
The scaling maximum principle now yields
$\|\bm{x}_h-\bm{x}\|_\infty\le C\max_i|\mathrm{leverage}_{f,i}-b|=O_p(\eta_n)$.
Therefore, with $\delta_x=\|\bm{x}_h-\bm{x}\|_\infty$,
\[
 \begin{gathered}
 e^{-\delta_x}\mathbf{D}_{f,b}\preceq \mathbf{D}_{h,b}\preceq e^{\delta_x}\mathbf{D}_{f,b},\\
 \|\mathbf{D}_{f,b}^{-1/2}(\mathbf{D}_{h,b}-\mathbf{D}_{f,b})\mathbf{D}_{f,b}^{-1/2}\|
 \le e^{\delta_x}-1=O_p(\eta_n).
 \end{gathered}
\]
Inversion proves \eqref{eq:local_frozen_weights} for inverse matrices.
For $T_{h,b}=\operatorname{tr}(\mathbf{D}_{h,b})/p$ and $T_b=\operatorname{tr}(\mathbf{D}_b)/p$,
\[
 \mathbf{C}_h=\mathbf{D}_{h,b}/T_{h,b},\quad \mathbf{C}_f=\mathbf{D}_{f,b}/T_b,\quad
 (T_{h,b}/T_b)\mathbf{C}_h=\mathbf{D}_{h,b}/T_b.
\]
This identity identifies the scalar rescaling used in the lemma.
\end{proof}

For later HR calculations use the null inverse $\mathbf{A}_0$ coordinates, and put
\[
 \begin{gathered}
 \tau_i=\sqrt{\bm{e}_i^\T \mathbf{A}_0\bm{e}_i},\\ \bm{t}_i=\mathbf{A}_0^{1/2}\bm{e}_i/\tau_i,
 \\ \bm{r}=\mathbf{A}_0^{1/2}\bm{h},\\
\bm{g}=\sum_i \bm{t}_i/\tau_i,\\ c_2=\sum_i\tau_i^{-2}.
 \end{gathered}
\]
The radial bounds and conditional sign variance give
\[
 c_2=O_p(1),\qquad
 \E(\|\bm{g}\|^2\mid\mathcal F_n)=\sum_i\tau_i^{-2}=O_p(1),
 \qquad \|\bm{g}\|=O_p(1).
\]
Put $\gamma_0=T_0^{-1}$ and $\mathbf{C}_f=\mathbf{D}_{f,b_0}/T_0$.
The exact null equation and its frozen shifted version are
\[
 \mathbf{C}_0=\gamma_0\mathbf{I}+b_0\sum_i \bm{e}_i\bm{e}_i^\T/\tau_i^2,
 \qquad \mathbf{C}_f=\gamma_0\mathbf{I}+b_0\sum_i(\bm{e}_i+\bm{h})(\bm{e}_i+\bm{h})^\T/\tau_i^2,
\]
\begin{equation}\label{eq:local_frozen_error_coordinates}
 \mathbf{A}_0^{1/2}\mathbf{C}_f\mathbf{A}_0^{1/2}=\mathbf{I}+\mathbf{P}_1+\mathbf{P}_2,
 \quad \mathbf{P}_1=b_0(\bm{r}\bm{g}^\T+\bm{g}\bm{r}^\T),\quad \mathbf{P}_2=b_0c_2\bm{r}\bm{r}^\T.
\end{equation}
For the decomposition \eqref{eq:local_frozen_error_coordinates},
put $\mathbf{P}=\mathbf{P}_1+\mathbf{P}_2$ and $\mathbf{G}_f=(\mathbf{I}+\mathbf{P})^{-1}$.  The exact identity is
\begin{equation}\label{eq:local_exact_inverse_expansion}
 \mathbf{G}_f=\mathbf{I}-\mathbf{P}+\mathbf{P}^2-\mathbf{P}^3(\mathbf{I}+\mathbf{P})^{-1},\qquad
 \mathbf{P}^2-\mathbf{P}_1^2=\mathbf{P}_1\mathbf{P}_2+\mathbf{P}_2\mathbf{P}_1+\mathbf{P}_2^2.
\end{equation}
Since $\|\mathbf{P}_1\|=O_p(\sqrt{q_0})$ and $\|\mathbf{P}_2\|=O_p(q_0)$,
\[
 \|\mathbf{P}^2-\mathbf{P}_1^2\|+\|\mathbf{P}^3(\mathbf{I}+\mathbf{P})^{-1}\|
 =O_p(q_0^{3/2}+q_0^2).
\]
The balanced inverse comparison gives, for $\alpha_h=T_{h,b_0}/T_{b_0}$
(the reciprocal of the later scale factor $c_h(b_0)$),
\begin{equation}\label{eq:local_inverse_second_order}
 \mathbf{A}_0^{-1/2}\{\mathbf{A}_h(b_0)/\alpha_h\}\mathbf{A}_0^{-1/2}
 =\mathbf{I}-\mathbf{P}_1-\mathbf{P}_2+\mathbf{P}_1^2+\mathbf{R}_h,\qquad
 \|\mathbf{R}_h\|=O_p(q_0^{3/2}+\eta_n).
\end{equation}
In particular, by \eqref{eq:local_weight_rate},
\[
 \sqrt p\,\|\mathbf{R}_h\|
    =O_p(p^{-1/4}+\sqrt p\,\eta_n)\xrightarrow{p}0.
\]

For $j=0,\bm{h}$, set $\mathbf{Z}_0=\mathbf{Z}$, $\mathbf{D}_{0,b}=\mathbf{D}_b$, $\mathbf{W}_{0,b}=\mathbf{W}_b$ and
$\bm{x}_0(b)=\bm{x}(b)$.  The balanced and normalized Grams are
\[
 \mathbf{J}_j(b)=\mathbf{W}_{j,b}^{1/2}\mathbf{Z}_j^\T \mathbf{D}_{j,b}^{-1}\mathbf{Z}_j\mathbf{W}_{j,b}^{1/2},
 \qquad \mathbf{W}_j(b)=\mathbf{J}_j(b)/b.
\]
Put $\mathcal T_j(b)=p\bm{1}^\T\mathbf{W}_j(b)\bm{1}/n$ and
$\delta_n=q_0+\sqrt{q_0/n}+\eta_n$.
The next lemma constructs the shifted trace-normalization root $b_h$
inside $B_n$ before estimating its displacement.

\begin{lemma*}

Uniformly on $B_n$,
\begin{equation}\label{eq:local_b_response}
 b_h-b_0=O_p(\delta_n)=O_p(p^{-1/2}),
\end{equation}
\begin{equation}\label{eq:local_gram_curve}
 \begin{gathered}
 \|\mathbf{W}_0'(b)\|+\|\mathbf{W}_0''(b)\|=O_p(1),\\
 \|\mathbf{W}_h(b)-\mathbf{W}_0(b)\|
 +\|\mathbf{W}_h'(b)-\mathbf{W}_0'(b)\|=O_p(\sqrt{q_0}),
 \end{gathered}
\end{equation}
\begin{equation}\label{eq:local_inverse_curve}
 \|\mathbf{D}_b^{1/2}\{(\mathbf{D}_{h,b}^{-1})'-(\mathbf{D}_b^{-1})'\}\mathbf{D}_b^{1/2}\|
 =O_p(\sqrt{q_0}),
\end{equation}
\begin{equation}\label{eq:local_b_freeze_gram}
 \mathcal T_0(b_h)-\mathcal T_0(b_0)=O_p(\sqrt p\,\delta_n),\qquad
 \mathcal T_h(b_h)-\mathcal T_h(b_0)
 =O_p\{(\sqrt p+p\sqrt{q_0})\delta_n\}.
\end{equation}
The two rates in \eqref{eq:local_b_freeze_gram} are respectively $O_p(1)$ and
$O_p(1+p^{1/4})$.
\end{lemma*}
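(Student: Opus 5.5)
I would first construct the shifted balancing curve $b\mapsto\bm{x}_h(b)$ on all of $B_n$, and only then locate its trace-normalization root. The previous lemma supplies this curve pointwise at each fixed $b$. I would make it uniform with a polynomial mesh on $B_n$. Between mesh points I would interpolate using the Lipschitz bounds from \eqref{eq:leverage_matrix_derivative} and \eqref{eq:local_curve_weights}.

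For the Gram statements in \eqref{eq:local_gram_curve}, I would use $\mathbf{W}_j(b)=\mathbf{J}_j(b)/b$. Differentiating the leverage constraint along the curve gives $\mathbf{J}_0'=\tfrac12\{(\mathbf{I}-\mathbf{J})\mathbf{V}\mathbf{J}+\mathbf{J}\mathbf{V}(\mathbf{I}-\mathbf{J})\}$ with $\mathbf{V}=\operatorname{diag}(\bm{x}'(b))$. A second differentiation gives the analogous expression involving $\bm{x}''$. Since $b$ stays in a compact subset of $(0,1)$, both derivatives are $O_p(1)$.

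For the shifted curve I would write $\mathbf{J}_h$ in the frozen coordinates $\mathbf{Y}+\delta\mathbf{Y}$ and $(\mathbf{I}+\mathbf{E}_b)^{-1}$, as in the previous proof, with $\|\delta\mathbf{Y}\|+\|\mathbf{E}_b\|=O_p(\sqrt{q_0})$. The weight discrepancy \eqref{eq:local_frozen_weights} adds $O_p(\eta_n)$, which is smaller. The derivative $\bm{x}_h'$ solves $\mathbf{K}_{h,b}\bm{x}_h'=\bm{1}$. Here $\mathbf{K}_{h,b}$ differs from $\mathbf{K}_b$ by $O_p(\sqrt{q_0})$ in $\ell_\infty$-operator norm, because Hadamard squares of matrices with bounded rows perturb row sums by at most $3\|\mathbf{J}_h-\mathbf{J}_b\|$. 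The row-sum margin $m/2$ then gives $\|\bm{x}_h'-\bm{x}'\|_\infty=O_p(\sqrt{q_0})$. That bound yields the derivative half of \eqref{eq:local_gram_curve}.

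The bound \eqref{eq:local_inverse_curve} comes from the identity $(\mathbf{D}^{-1})'=-\mathbf{D}^{-1}\mathbf{Z}\mathbf{W}\operatorname{diag}(\bm{x}')\mathbf{Z}^\T\mathbf{D}^{-1}$. Conjugating by $\mathbf{D}_b^{1/2}$ turns it into $\mathbf{Y}\operatorname{diag}(\bm{x}')\mathbf{Y}^\T$-type products, and the same perturbation bounds apply to their differences.

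The root $b_h$ solves $\Psi_{h}(b)=bT_{h,b}/y_n=(1-\rho)/\rho$. Following the null argument, $\Psi_h$ is strictly increasing with $\Psi_h'\ge 1/y_n$, since $\mathbf{D}_{h,b}$ is Loewner-increasing in $b$ by \eqref{eq:original_scaling_monotonicity}. It therefore suffices to bound $\Psi_h(b_0)-\Psi_0(b_0)=b_0(T_{h,b_0}-T_0)/y_n$.

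To do so, I would write $T_{h,b_0}-T_0=p^{-1}\operatorname{tr}(\mathbf{D}_{h,b_0}-\mathbf{D}_{f,b_0})+p^{-1}\operatorname{tr}(\mathbf{D}_{f,b_0}-\mathbf{D}_{b_0})$. The first term is $O_p(\eta_n)T_0$ by the relative comparison. The second term equals $p^{-1}\{2\bm{h}^\T\mathbf{Z}\mathbf{W}_b\bm\zeta+c_b\|\bm{h}\|^2\}$. By \eqref{eq:local_angular_projection}, $c_b\|\bm{h}\|^2/p=O_p(q_0)$.

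I expect the cross term $p^{-1}\bm{h}^\T\mathbf{Z}\mathbf{W}_b\bm\zeta$ to be the main obstacle, because the naive bound $\|\mathbf{D}_b^{1/2}\|\,\|\bm{r}_b^{\mathrm D}\|\,\|\bm{v}_b\|$ loses a factor $\sqrt p$ through the spikes. I would try the conditional Rademacher-sign argument first. Conditionally on $\mathcal F_n$, $\mathbf{W}_b$ and $\bm\zeta$ are fixed and $\bm{z}_i=\epsilon_i\bar{\bm z}_i$. The conditional variance is then $\sum_i w_i^2\zeta_i^2(\bm{h}^\T\bar{\bm z}_i)^2/p^2$. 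With $(\bm{h}^\T\bm{z}_i)^2\le\|\mathbf{D}_b^{1/2}\bm{r}\|^2$-type bounds, this gives $O_p(q_0/n)$, so the cross term is $O_p(\sqrt{q_0/n})$. This is exactly the middle term of $\delta_n$. The need to bound $(\bm{h}^\T\bm{z}_i)^2$ using only the weighted norm, without $\|\mSigma_p\|$, is the delicate point. I would control it through $\bm{h}^\T\mathbf{Z}\mathbf{W}_b\mathbf{Z}^\T\bm{h}\le\bm{h}^\T\mathbf{D}_b\bm{h}$, using $\mathbf{D}_b^{-1}$ comparisons together with $\mathbf{A}_0\asymp\mathbf{A}^{\mathrm R}$.

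Combining these pieces gives $b_h-b_0=O_p(\delta_n)$. Because $q_0=O_p(p^{-1/2})$ and $\eta_n=o(p^{-1/2})$, this is $O_p(p^{-1/2})$, and $b_h\in B_n$ with probability tending to one.

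For \eqref{eq:local_b_freeze_gram}, I would apply the mean value theorem to $\mathcal T_j(b)=p\bm{1}^\T\mathbf{W}_j(b)\bm{1}/n$. For $j=0$, the centered null Gram has the sign structure $\epsilon_i\epsilon_j$. Hence $\bm{1}^\T\mathbf{W}_0'\bm{1}$ equals the diagonal part plus a signed off-diagonal sum of conditional order $\sqrt n$. This gives $|\mathcal T_0'|=O_p(\sqrt p)$ and the first rate.

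For $j=h$, the extra derivative piece $p\bm{1}^\T(\mathbf{W}_h'-\mathbf{W}_0')\bm{1}/n$ is at most $p\|\mathbf{W}_h'-\mathbf{W}_0'\|=O_p(p\sqrt{q_0})$. Multiplying both derivative bounds by $\delta_n$ gives the stated rates, which are $O_p(1)$ and $O_p(1+p^{1/4})$ respectively.
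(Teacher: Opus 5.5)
Most of your plan matches the paper's proof. This includes the frozen-trace split with a conditional-sign cross term, the monotone bracketing of $\Psi_{\mathrm{tr},h}$ with slope at least $y_n^{-1}$, the row-wise Hadamard perturbation of the scaling Hessian, and the $\mathbf Y_j\operatorname{diag}(\bm x_j')\mathbf Y_j^\T$ form of the inverse derivative. \textbf{The gap is in the step for $\mathcal T_0(b_h)-\mathcal T_0(b_0)$.}

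Your chaos bound $|\mathcal T_0'(b)|=O_p(\sqrt p)$ holds only at an $\mathcal F_n$-measurable $b$. There $\mathbf W_0'(b)=\operatorname{diag}(\epsilon_i)\,\bar{\mathbf W}_0'(b)\operatorname{diag}(\epsilon_i)$ has coefficients fixed given $\mathcal F_n$. The root $b_h$, and hence any mean-value point between $b_0$ and $b_h$, is computed from the shifted data $\bm e_i+\bm h$, which are not sign-symmetric. Already $e_{h,n}$ contains $\bm h^\T\mathbf Z\mathbf W_{b_0}\bm\zeta=\sum_i\epsilon_i(\cdots)$. At such a point the off-diagonal sum is no longer a Rademacher chaos with frozen coefficients.

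What you need is the simultaneous bound $\sup_{b\in B_n}|\mathcal T_0'(b)|=O_p(\sqrt p)$. The paper gets it from $\sup_{b\in B_n}|\mathcal T_0'(b)|\le|\mathcal T_0'(b_*)|+\int_{B_n}|\mathcal T_0''(s)|\,ds$, where $b_*$ is the deterministic centre of $B_n$. Work on the $\mathcal F_n$-measurable event that the null curve exists on $B_n$ with $\sup_{b,\,j=1,2}\|\mathbf W_0^{(j)}(b)\|\le C$. There $\mathbf W_0'$ and $\mathbf W_0''$ have zero diagonal and are conjugated by the signs, so each pointwise conditional second moment is $O(p)$. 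Conditional Cauchy--Schwarz inside the $ds$-integral and conditional Markov then finish the bound.

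This is exactly why $\|\mathbf W_0''\|=O_p(1)$ is part of the lemma; you prove it but never use it. The pathwise fallback $|\mathcal T_0'|\le p\|\mathbf W_0'\|=O_p(p)$ only yields an $O_p(\sqrt p)$ increment, the size of the null standard deviation. A mesh with a union bound would need exponential chaos tails and would lose a $\sqrt{\log n}$ factor against the stated $O_p(1)$.

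Two smaller corrections. First, you say the weight discrepancy $O_p(\eta_n)$ is dominated by $\sqrt{q_0}$. That fails when $q_0$ is very small, which the lemma allows for any nonzero $\bm h$. Use instead the scaling maximum principle with the rough diagonal bound, $\|\bm x_h-\bm x\|_\infty\le C\max_i|(\mathbf J_f-\mathbf J_b)_{ii}|\le C\|\mathbf J_f-\mathbf J_b\|=O_p(\sqrt{q_0})$, as the paper does.

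Second, for the cross term, no per-observation bound on $(\bm h^\T\bm z_i)^2$ yields $q_0/n$. The maximum over $i$ costs a $\log n$, and pulling out $\max_i\zeta_i^2$ costs $n^{2/\alpha_R}$. The spike scale also cannot be avoided. Your aggregate route does work if you average the radii given the angles (the weights depend only on angles) and then use $\bm h^\T\mathbf Z\mathbf W_b\mathbf Z^\T\bm h\le\bm h^\T\mathbf D_b\bm h\le\|\mathbf D_b\|^2\,\bm h^\T\mathbf D_b^{-1}\bm h=O_p(p^2q_0)$, since $\|\mathbf D_b\|=O_p(p)$. The paper's route is shorter. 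It truncates on $\{\max_i nw_i(b_0)\le C\}\in\mathcal F_n$ and takes unconditional expectations with the deterministic $\bm h$, so that $\E(\bm h^\T\bm z_i)^2=\bm h^\T\mSigma_p\bm h\le p\|\bm h\|^2$. It then uses $\|\bm h\|^2=O_p(pq_0)$ from \eqref{eq:local_angular_projection}.
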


\begin{proof}
The frozen trace difference is
\[
 \frac{\operatorname{tr}(\mathbf{D}_{f,b_0}-\mathbf{D}_{b_0})}{p}
 =\frac{2\bm{h}^\T \mathbf{Z}\mathbf{W}_{b_0}\bm\zeta}{p}
   +\frac{\bm\zeta^\T \mathbf{W}_{b_0}\bm\zeta}{p}\|\bm{h}\|^2.
\]
The second term is $O_p(q_0)$ by \eqref{eq:local_angular_projection}.
For the first put $\mathcal X_{h,n}=\bm{h}^\T \mathbf{Z}\mathbf{W}_{b_0}\bm\zeta/p$ and
$E_w(C)=\{\max_i nw_i(b_0)\le C\}$. Conditional sign orthogonality gives
the exact identity
\[
 \E(\mathcal X_{h,n}^2\mid\mathcal F_n)
 =p^{-2}\sum_iw_i(b_0)^2\zeta_i^2(\bm{h}^\T \bm{z}_i)^2 .
\]
Since $E_w(C)$ is $\mathcal F_n$-measurable and the radii are
independent of the angular observations,
\[
 \begin{aligned}
 \E\{\mathbf1_{E_w(C)}\mathcal X_{h,n}^2\}
 &\le\frac{C^2}{n^2p^2}\sum_i\E\{\zeta_i^2(\bm{h}^\T \bm{z}_i)^2\}\\
 &\le\frac{C_2C^2\bm{h}^\T\mSigma_p\bm{h}}{np^2}.
 \end{aligned}
\]
Truncated Chebyshev, followed by increasing $C$, proves
\[
 \mathcal X_{h,n}=O_p\!\left(\sqrt{\frac{\bm{h}^\T\mSigma_p\bm{h}}{np^2}}\right),\qquad
 \frac{\bm{h}^\T\mSigma_p\bm{h}/(np^2)}{q_0/n}
 \le\frac{\|\bm{h}\|^2}{pq_0}=O_p(1).
\]
Thus $\mathcal X_{h,n}=O_p(\sqrt{q_0/n})$; if $\bm{h}^\T\mSigma_p\bm{h}=0$, the cross term is identically
zero. This argument requires no moment bound for the random $q_0$.

Balancing adds $O_p(\eta_n)$ to the normalized trace.
Indeed, the relative comparison and the tight frozen normalized trace give
\[
 p^{-1}|\tr(\mathbf{D}_{h,b_0}-\mathbf{D}_{f,b_0})|
 \le(e^{\|\bm{x}_h-\bm{x}_0\|_\infty}-1)\tr(\mathbf{D}_{f,b_0})/p=O_p(\eta_n).
\]
Define the continuous normalization maps on the already constructed
curves by
\[
 \begin{gathered} \mathbf{D}_{0,b}=\mathbf{D}_b,\qquad \Psi_{\mathrm{tr},j}(b)=\frac b{y_n}\frac{\tr \mathbf{D}_{j,b}}p,\\
 k_\rho=\frac{1-\rho}{\rho},\qquad \Psi_{\mathrm{tr},0}(b_0)=k_\rho .\end{gathered}
\]
The scaling Hessian has positive row sums, so $\bm{x}_h'(b)>0$
coordinatewise, $\mathbf{D}_{h,b}'\succeq0$, and
\[
 \Psi_{\mathrm{tr},h}'(b)=
 \frac{\tr \mathbf{D}_{h,b}}{py_n}
 +\frac b{py_n}\tr \mathbf{D}_{h,b}'\ge y_n^{-1}.
\]
The trace estimates above establish, before defining the root,
\[
 e_{h,n}:=\Psi_{\mathrm{tr},h}(b_0)-k_\rho=O_p(\delta_n),\qquad
 \delta_n=O_p(p^{-1/2})\xrightarrow{p}0.
\]
On the event
$\{\operatorname{dist}(b_0,\partial B_n)\ge\beta,\
y_n|e_{h,n}|\le\beta/2\}$, put
\[
 b_-=b_0-y_n|e_{h,n}|,\qquad
 b_+=b_0+y_n|e_{h,n}|.
\]
Both points lie in $B_n$, and integration of the derivative lower bound gives
\[
 \Psi_{\mathrm{tr},h}(b_-)\le\Psi_{\mathrm{tr},h}(b_0)-|e_{h,n}|\le k_\rho,\qquad
 \Psi_{\mathrm{tr},h}(b_+)\ge\Psi_{\mathrm{tr},h}(b_0)+|e_{h,n}|\ge k_\rho.
\]
The intermediate value theorem and strict monotonicity give a unique
root $b_h\in[b_-,b_+]$. The event has probability tending to one, and
\[
 \Psi_{\mathrm{tr},h}(b_h)=k_\rho,\qquad
 |b_h-b_0|\le y_n|e_{h,n}|=O_p(\delta_n).
\]
The scatter $\mathbf{D}_{h,b_h}/\{\tr(\mathbf{D}_{h,b_h})/p\}$ solves the original
shifted equation. The global monotonicity in
Lemma~\ref{lem:original_scaling} excludes a second normalized solution.
This proves existence, uniqueness and \eqref{eq:local_b_response}.

The null and shifted scaling Hessians obey
\[
 \mathbf{K}_j=\operatorname{diag}\{(\mathbf{J}_j)_{ii}\}-\mathbf{J}_j\circ \mathbf{J}_j,\qquad
 \bm{x}_j'=\mathbf{K}_j^{-1}\bm{1},\qquad \|\mathbf{K}_j^{-1}\|_\infty\le C.
\]
The frozen matrix identity also gives the rough bounds
\[
 \begin{gathered}
 \max_i|(\mathbf{J}_f-\mathbf{J}_0)_{ii}|\le\|\mathbf{J}_f-\mathbf{J}_0\|=O_p(\sqrt{q_0}),\\
 \|\bm{x}_h-\bm{x}_0\|_\infty\le C\max_i|(\mathbf{J}_f-\mathbf{J}_0)_{ii}|=O_p(\sqrt{q_0}),
 \end{gathered}
\]
\[
 \|\mathbf{J}_h-\mathbf{J}_0\|\le\|\mathbf{J}_h-\mathbf{J}_f\|+\|\mathbf{J}_f-\mathbf{J}_0\|
 \le C\|\bm{x}_h-\bm{x}_0\|_\infty+\|\mathbf{J}_f-\mathbf{J}_0\|=O_p(\sqrt{q_0}).
\]
They hold even when $q_0$ is much smaller than $p^{-1/2}$.
Rowwise Cauchy--Schwarz now gives
\[
 \sum_j|(\mathbf{J}_h)_{ij}^2-(\mathbf{J}_0)_{ij}^2|
 \le\Bigl\{\sum_j(\mathbf{J}_h-\mathbf{J}_0)_{ij}^2\Bigr\}^{1/2}
     \Bigl\{\sum_j(\mathbf{J}_h+\mathbf{J}_0)_{ij}^2\Bigr\}^{1/2}
 \le C\|\mathbf{J}_h-\mathbf{J}_0\|,
\]
\[
 \begin{gathered}
 \|\mathbf{K}_h-\mathbf{K}_0\|_\infty=O_p(\sqrt{q_0}),\\
 \|\bm{x}_h'-\bm{x}_0'\|_\infty
 \le\|\mathbf{K}_h^{-1}\|_\infty\|\mathbf{K}_h-\mathbf{K}_0\|_\infty\|\bm{x}_0'\|_\infty
 =O_p(\sqrt{q_0}).
 \end{gathered}
\]
Using \eqref{eq:leverage_matrix_derivative} and $\mathbf{W}_j=\mathbf{J}_j/b$,
\[
 \mathbf{W}_j'=\mathbf{J}_j'/b-\mathbf{J}_j/b^2,\qquad
 \mathbf{W}_0''=\mathbf{J}_0''/b-2\mathbf{J}_0'/b^2+2\mathbf{J}_0/b^3,
\]
proves \eqref{eq:local_gram_curve}.  To verify the inverse derivative,
put $\mathbf{Y}_0=\mathbf{D}_b^{-1/2}\mathbf{Z}\mathbf{W}_b^{1/2}$ and
$\mathbf{Y}_h=\mathbf{D}_b^{-1/2}\mathbf{Z}_h\mathbf{W}_{h,b}^{1/2}$.  Then
\[
 \begin{gathered}
 \|\mathbf{Y}_h-\mathbf{Y}_0\|=O_p(\sqrt{q_0}),\\ \|\mathbf{Y}_j\|=O_p(1),\\
 \mathbf{D}_b^{-1/2}\mathbf{D}_j'\mathbf{D}_b^{-1/2}=\mathbf{Y}_j\operatorname{diag}(\bm{x}_j')\mathbf{Y}_j^\T,
 \end{gathered}
\]
\[
 \|\mathbf{D}_b^{-1/2}(\mathbf{D}_h'-\mathbf{D}_0')\mathbf{D}_b^{-1/2}\|=O_p(\sqrt{q_0}).
\]
Combining this with $(\mathbf{D}_j^{-1})'=-\mathbf{D}_j^{-1}\mathbf{D}_j'\mathbf{D}_j^{-1}$ and the
relative inverse comparison proves \eqref{eq:local_inverse_curve}.

The null Gram curve has diagonal one; its first two derivatives have
diagonal zero. For $C>0$ let $\mathcal E_n(C)\in\mathcal F_n$ be the
event that the null curve exists on the deterministic interval $B_n$
and $\sup_{b\in B_n,j=1,2}\|\mathbf W_0^{(j)}(b)\|\le C$.
The preceding bounds imply
$\lim_{C\to\infty}\limsup_n\Pr\{\mathcal E_n(C)^c\}=0$.
For these probability bounds, extend the derivatives by zero when the
null curve is not defined on $B_n$. On $\mathcal E_n(C)$, the derivative Grams are
conjugated by the independent observation signs. Thus
\[
 \begin{gathered}
 \E\{|\mathcal T_0^{(j)}(b)|^2\mid\mathcal F_n\}
 =\frac{2p^2}{n^2}\sum_{i\ne k}
         \{\mathbf{W}_0^{(j)}(b)_{ik}\}^2
 \le\frac{2p^2}{n}\|\mathbf{W}_0^{(j)}(b)\|^2=O_p(p),\\ j=1,2.
 \end{gathered}
\]
At a deterministic center $b_*$ of $B_n$, the fundamental theorem of
calculus and conditional Cauchy--Schwarz yield
\[
 \sup_{b\in B_n}|\mathcal T_0'(b)|
 \le|\mathcal T_0'(b_*)|+\int_{B_n}|\mathcal T_0''(s)|\,ds,
\]
\[
 \mathbf1_{\mathcal E_n(C)}
 \E\left[\sup_{b\in B_n}|\mathcal T_0'(b)|\middle|\mathcal F_n\right]
 \le C_1C(1+|B_n|)\sqrt p.
\]
Conditional Markov's inequality and then increasing $C$ give the
uniform $O_p(\sqrt p)$ bound. Integration of this simultaneous bound
is valid at the sign-dependent $b_h$; no conditional symmetry is
asserted at $b_h$ itself.
Finally \eqref{eq:local_gram_curve} gives
\[
 |(\mathcal T_h-\mathcal T_0)'(b)|
 \le\frac pn\|\bm{1}\|^2\|\mathbf{W}_h'-\mathbf{W}_0'\|
 =O_p(p\sqrt{q_0}).
\]
Integrating over $|b_h-b_0|=O_p(\delta_n)$ proves
\eqref{eq:local_b_freeze_gram}, without a contiguity argument.
\end{proof}

\subsection{The general radial local HR expansion}

For an inverse scatter $\mathbf{A}$, define the unweighted location score by
\[
 \bm{\Psi}_A(\bm{\mu})=\frac1n\sum_i
 \frac{\bm{X}_i-\bm{\mu}}{\{(\bm{X}_i-\bm{\mu})^\T \mathbf{A}(\bm{X}_i-\bm{\mu})\}^{1/2}}.
\]
Let $\widehat{\bm{\mu}}(\mathbf{A})$ solve this equation,
$\bm{\delta}=\widehat{\bm{\mu}}(\mathbf{A})-\bm{\mu}_0$, and let $\widehat a(\mathbf{A})$ be the residual
inverse-radius average there.  Multiplying $\mathbf{A}$ by $c>0$ leaves the
solution unchanged and gives
\[
 \begin{gathered}
 \bm{\Psi}_{cA}(\bm{\mu})=c^{-1/2}\bm{\Psi}_A(\bm{\mu}),\\
 \widehat a(c\mathbf{A})=c^{-1/2}\widehat a(\mathbf{A}),\\
 \widehat a(c\mathbf{A})^2\bm{\delta}^\T(c\mathbf{A})\bm{\delta}=\widehat a(\mathbf{A})^2\bm{\delta}^\T \mathbf{A}\bm{\delta},
 \end{gathered}
\]
and the normalized pairwise inner products are identical.  We use this
invariance when applying \eqref{eq:local_inverse_second_order}.

For a null inverse $\mathbf{A}$ on the balancing curve define
\[
 T_b=\operatorname{tr}(\mathbf{D}_b)/p,\qquad \mathbf{A}_b=T_b\mathbf{D}_b^{-1},\qquad
 1\le T_b\le1+C\operatorname{tr}(\mathbf{S})/p=O_p(1).
\]
The HR coordinates below use this normalized $\mathbf{A}_b$.  Their signal
vector in native coordinates is $\sqrt{T_b}\bm{r}_b^{\mathrm D}$, since
$\mathbf{A}_b^{1/2}=\sqrt{T_b}\mathbf{D}_b^{-1/2}$.  Native unit columns equal
$\bm{t}_i^{\mathrm D}(b)$, so all earlier projection bounds still apply.
For a fixed $b$, put $\mathbf{A}=\mathbf{A}_b$ and
\[
 \begin{gathered}
 \tau_i=\sqrt{\bm{e}_i^\T \mathbf{A}\bm{e}_i},\\ \bm{t}_i=\mathbf{A}^{1/2}\bm{e}_i/\tau_i,\\
 \mathbf{T}=(\bm{t}_1,\ldots,\bm{t}_n),\\ \bm{r}=\mathbf{A}^{1/2}\bm{h},\\ q=\|\bm{r}\|^2,
 \end{gathered}
\]
\[
 \begin{gathered}
 a=n^{-1}\sum_i\tau_i^{-1},\\ \bm{m}=n^{-1}\sum_i \bm{t}_i,
 \\ \bm{u}=\bm{m}/(d_pa),\\ \bm{g}=\sum_i \bm{t}_i/\tau_i,\\ c_2=\sum_i\tau_i^{-2}.
 \end{gathered}
\]
A subscript $b$ denotes evaluation at $\mathbf{A}_b$; unsubscripted symbols
abbreviate the current fixed $b$.  A superscript $\mathrm N$ distinguishes
native vectors from the common coordinates used for comparisons:
\[
 \begin{gathered}
 \bm{r}_b^{\mathrm N}=\mathbf{A}_b^{1/2}\bm{h}=\sqrt{T_b}\bm{r}_b^{\mathrm D},\qquad
 \bm{t}_i^{\mathrm N}(b)=\mathbf{A}_b^{1/2}\bm{e}_i/\tau_i(b)=\bm{t}_i^{\mathrm D}(b),\\
 \bm{m}_b^{\mathrm N}=n^{-1}\sum_i \bm{t}_i^{\mathrm N}(b),\qquad
 \bm{u}_b^{\mathrm N}=\bm{m}_b^{\mathrm N}/(d_pa_b),\qquad
 \bm{g}_b^{\mathrm N}=\sum_i \bm{t}_i^{\mathrm N}(b)/\tau_i(b),\\
 q_b=\bm{h}^\T \mathbf{A}_b\bm{h}=T_b\|\bm{r}_b^{\mathrm D}\|^2,\qquad
 q_{b_0}=q_0,\qquad a_{b_0}=a_0=a_{\rho,n},\\
 \bm{m}_0=\bm{m}_{b_0}^{\mathrm N},\qquad \bm{r}_0=\bm{r}_{b_0}^{\mathrm N},\qquad
 \bm{g}_0=\bm{g}_{b_0}^{\mathrm N},\qquad \bm{u}_0=\bm{u}_{b_0}^{\mathrm N}.
 \end{gathered}
\]
In native coordinates, the two radial coefficients satisfy exact conversions.
Since $\tau_i^2=T_bd_i(b)/\zeta_i^2$ and $w_i(b)=b/d_i(b)$,
\[
 c_b=bT_bc_{2,b},\qquad \bm{v}_b=b\sqrt{T_b}\,\bm{g}_b^{\mathrm N}.
\]
For each fixed $b$, define the shifted HR solution and residual scale by
\[
 \widehat{\vmu}_h(b)\in\mathop{\rm argmin}_{\bm{\mu}}
 \frac1n\sum_i\{(\bm{X}_i-\bm{\mu})^\T \mathbf{A}_h(b)(\bm{X}_i-\bm{\mu})\}^{1/2},
\]
\[
 \widehat a_h(b)=\frac1n\sum_i
 \{(\bm{X}_i-\widehat{\vmu}_h(b))^\T \mathbf{A}_h(b)
                   (\bm{X}_i-\widehat{\vmu}_h(b))\}^{-1/2}.
\]
Their statistic and feasible variance are
\[
 T_h(b)=npd_p^2\widehat a_h(b)^2
 (\widehat{\vmu}_h(b)-\bm{\mu}_0)^\T \mathbf{A}_h(b)(\widehat{\vmu}_h(b)-\bm{\mu}_0),
\]
\[
 \begin{gathered}
 \widehat{\bm{t}}_{h,i}(b)=
 \frac{\mathbf{A}_h(b)^{1/2}(\bm{X}_i-\widehat{\vmu}_h(b))}
 {\{(\bm{X}_i-\widehat{\vmu}_h(b))^\T \mathbf{A}_h(b)
                    (\bm{X}_i-\widehat{\vmu}_h(b))\}^{1/2}},\\
 \widehat\omega_h(b)=\frac p{n(n-1)}\sum_{i\ne j}
       \{\widehat{\bm{t}}_{h,i}(b)^\T\widehat{\bm{t}}_{h,j}(b)\}^2.
 \end{gathered}
\]
At the trace-normalization root these are the implemented quantities:
\[
 \begin{gathered}
 \mathbf{A}_h(b_h)=\widehat{\mathbf{A}}_\rho,\quad
 \widehat{\vmu}_h(b_h)=\widehat{\vmu}_\rho,\quad
 \widehat a_h(b_h)=\widehat a_\rho,\\
 T_h(b_h)=T_{\HR,\rho},\qquad
 \widehat\omega_h(b_h)=\widehat\omega_\rho.
 \end{gathered}
\]

\begin{lemma*}

At $b=b_0$,
\[
 \|\bm{u}\|+\|\bm{g}\|+c_2=O_p(1),\quad
 \bm{r}^\T \bm{u}=O_p(\sqrt{q/p}),\quad
 \bm{r}^\T \bm{g}=O_p(q^{1/2}p^{-1/2}n^{1/\alpha_R}),
\]
\[
 \bm{g}^\T \bm{u}=d_p^{-1}+O_p(p^{-1/2}n^{1/\alpha_R}),\quad
 \max_i|\bm{t}_i^\T \bm{g}|=O_p(p^{-1/2}n^{1/\alpha_R}\sqrt{L_n}).
\]
\end{lemma*}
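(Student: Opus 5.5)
Every bound in the statement is proved by conditioning on the unsigned field $\mathcal F_n$ of \eqref{eq:unsigned_sign_decomposition}. At $b=b_0$ the matrix $\mathbf A=\mathbf A_0=\widehat{\mathbf A}^0_\rho$, the radii $\tau_i$, the unsigned directions $\bm s_i$ (with $\bm t_i=\epsilon_i\bm s_i$) and the deterministic-given-$\mathcal F_n$ signal $\bm r=\mathbf A_0^{1/2}\bm h$ are all $\mathcal F_n$-measurable. The vectors $\bm m$, $\bm u$ and $\bm g$ are linear Rademacher sums in the conditionally independent signs $\epsilon_i$.

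The first group of bounds is quick. $\|\bm u\|=O_p(1)$ follows from $\|\bm m\|=O_p(n^{-1/2})$ and $a\asymp_p p^{-1/2}$ (Lemma~\ref{lem:orders}). $\|\bm g\|$ and $c_2$ were already shown to be $O_p(1)$ just before \eqref{eq:local_frozen_error_coordinates}. For $\bm r^\T\bm u=(d_pan)^{-1}\sum_i\epsilon_i\bm s_i^\T\bm r$, I will use the conditional second moment $(d_pan)^{-2}\sum_i(\bm s_i^\T\bm r)^2\le (d_pan)^{-2}\|\mathbf T\|^2 q$. Since $a^{-2}\asymp p$ and $\|\mathbf T\|=O_p(1)$, this is $O_p(pq/n^2)=O_p(q/p)$, and conditional Chebyshev gives $\bm r^\T\bm u=O_p(\sqrt{q/p})$. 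Similarly, $\E\{(\bm r^\T\bm g)^2\mid\mathcal F_n\}=\sum_i\tau_i^{-2}(\bm s_i^\T\bm r)^2\le \max_i\tau_i^{-2}\|\mathbf T\|^2q=O_p(p^{-1}n^{2/\alpha_R}q)$ by the maximum bound in Lemma~\ref{lem:orders}, which gives the stated rate for $\bm r^\T\bm g$.

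For $\bm g^\T\bm u=(d_pan)^{-1}\sum_{i,j}\tau_i^{-1}\bm t_i^\T\bm t_j$, I split into the diagonal and the off-diagonal part. The diagonal is exactly $(d_pan)^{-1}\sum_i\tau_i^{-1}=d_p^{-1}$ because $\|\bm t_i\|=1$ and $a=n^{-1}\sum\tau_i^{-1}$. The off-diagonal part is the Rademacher chaos $(d_pan)^{-1}\sum_{i\ne j}\tau_i^{-1}\epsilon_i\epsilon_j\bm s_i^\T\bm s_j$. Its conditional variance is at most $C(an)^{-2}\sum_{i\ne j}(\tau_i^{-1}+\tau_j^{-1})^2(\bm s_i^\T\bm s_j)^2\le C(an)^{-2}\max_i\tau_i^{-2}\sum_j\|\mathbf T^\T\bm s_j\|^2$. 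The last sum is at most $n\|\mathbf T\|^2$, so the variance is $O_p(p\cdot n^{-2}\cdot p^{-1}n^{2/\alpha_R}\cdot n)=O_p(p^{-1}n^{2/\alpha_R})$. Conditional Chebyshev then yields the claimed $O_p(p^{-1/2}n^{1/\alpha_R})$ error.

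The one step requiring a union bound, and the main obstacle, is $\max_i|\bm t_i^\T\bm g|$. I would write $\bm t_i^\T\bm g=\tau_i^{-1}+\epsilon_i\sum_{j\ne i}\tau_j^{-1}\epsilon_j\bm s_i^\T\bm s_j$. The first term is $O_p(p^{-1/2}n^{1/\alpha_R})$ by Lemma~\ref{lem:orders}. For fixed $i$, the sum over $j\ne i$ is a linear Rademacher form whose coefficient variance is $\sum_{j}\tau_j^{-2}(\bm s_i^\T\bm s_j)^2\le\max_j\tau_j^{-2}\|\mathbf T\|^2=O_p(p^{-1}n^{2/\alpha_R})$, uniformly in $i$. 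I then apply the conditional Chernoff bound $2e^{-t^2/2}$ (the one recorded in the frozen-weights lemma) with $t=\sqrt{6L_n}$ and take a union over $n$ indices, localizing on the $\mathcal F_n$-measurable event that the coefficient maximum is at most $Mp^{-1}n^{2/\alpha_R}$, as in \eqref{eq:conditional_gaussian_max}. This gives $O_p(p^{-1/2}n^{1/\alpha_R}\sqrt{L_n})$. The delicate point is that this localization must be done before conditioning, since the coefficient bounds hold only in probability. I would handle it exactly as in the conditional-union argument of Appendix~\ref{app:rmt}, first increasing $M$ and then letting $n\to\infty$.
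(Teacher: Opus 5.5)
Your proposal is correct and follows essentially the same route as the paper. You condition on $\mathcal F_n$, use conditional second moments plus Chebyshev for $\bm u$, $\bm r^\T\bm u$, $\bm r^\T\bm g$ and the off-diagonal sign chaos in $\bm g^\T\bm u$ (after the exact diagonal cancellation to $d_p^{-1}$), and a conditional Chernoff bound with a union over the $n$ indices for $\max_i|\bm t_i^\T\bm g|$. The only cosmetic difference is in that last step: you localize on an $\mathcal F_n$-measurable event bounding the coefficient variance, whereas the paper keeps the random quantities $\|\mathbf T\|$ and $\max_j\tau_j^{-1}$ inside the conditional threshold. Both devices are valid.
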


\begin{proof}
Conditional on $\mathcal F_n$, write $\bm{t}_i=\epsilon_i \bm{s}_i$ and
$H_{ij}=\bm{s}_i^\T \bm{s}_j$.  Lemma~\ref{lem:orders} gives
\[
 \E(\|\bm{g}\|^2\mid\mathcal F_n)=\sum_i\tau_i^{-2}=O_p(1),\qquad
 \E(\|\bm{u}\|^2\mid\mathcal F_n)=(d_p^2a^2n)^{-1}=O_p(1),
\]
\[
 \E\{(\bm{r}^\T \bm{u})^2\mid\mathcal F_n\}
 =\frac{\sum_i(\bm{r}^\T \bm{s}_i)^2}{d_p^2a^2n^2}
 \le\frac{\|\mathbf{T}\|^2q}{d_p^2a^2n^2}=O_p(q/p),
\]
\[
 \E\{(\bm{r}^\T \bm{g})^2\mid\mathcal F_n\}
 =\sum_i\tau_i^{-2}(\bm{r}^\T \bm{s}_i)^2
 \le\max_i\tau_i^{-2}\|\mathbf{T}\|^2q
 =O_p(qp^{-1}n^{2/\alpha_R}).
\]
The inner product has the exact diagonal cancellation
\[
 \bm{g}^\T \bm{u}=\frac{\sum_i\tau_i^{-1}}{d_pan}
 +\frac1{d_pan}\sum_{i<j}(\tau_i^{-1}+\tau_j^{-1})
                 \epsilon_i\epsilon_jH_{ij}
 =d_p^{-1}+B_n^{\mathrm{sign}}.
\]
Distinct sign pairs are orthogonal, so
\[
 \E\{(B_n^{\mathrm{sign}})^2\mid\mathcal F_n\}
 \le\frac{C\max_i\tau_i^{-2}}{n^2a^2}\sum_{i,j}H_{ij}^2
 \le\frac{C\max_i\tau_i^{-2}}{na^2}
 =O_p(p^{-1}n^{2/\alpha_R}).
\]
For the maximum use
\[
 \bm{t}_i^\T \bm{g}=\tau_i^{-1}
 +\sum_{j\ne i}\tau_j^{-1}\epsilon_i\epsilon_jH_{ij},\qquad
 \sum_{j\ne i}\tau_j^{-2}H_{ij}^2
 \le\max_j\tau_j^{-2}\|\mathbf{T}^\T\mathbf{T}\|^2.
\]
Since $H_{ii}=1$, positivity gives
$\sum_jH_{ij}^2=(\mathbf H^2)_{ii}\le\|\mathbf H\|
=\|\mathbf T\|^2$. Conditional on $\mathcal F_n$ and on $\epsilon_i$,
the remaining signs are independent. Their conditional moment-generating
function is bounded by $\exp\{s^2\sum_{j\ne i}\tau_j^{-2}H_{ij}^2/2\}$.
Consequently, with $m_\tau=\max_j\tau_j^{-1}$,
\[
 \Pr\left\{\max_i|\bm t_i^\T\bm g|>
    m_\tau\{1+M\|\mathbf T\|\sqrt{L_n}\}
          \,\middle|\,\mathcal F_n\right\}
 \le2n\exp(-M^2L_n/2).
\]
The random frame norm is kept inside the measurable threshold; it is
not replaced by a deterministic constant in a conditional probability.
Using $\|\mathbf T\|=O_p(1)$ and
$m_\tau=O_p(p^{-1/2}n^{1/\alpha_R})$ now gives the claimed maximum.
The other rates follow from the displayed conditional second moments
by localization and Chebyshev's inequality.
\end{proof}

We next prove the shrinking-window bounds needed below.  Integrating
$\bm{x}'$ gives the relative inequalities
\[
 e^{-C|b-b_0|}\mathbf{D}_{b_0}\preceq \mathbf{D}_b\preceq e^{C|b-b_0|}\mathbf{D}_{b_0},\quad
 e^{-C|b-b_0|}\le T_b/T_{b_0}\le e^{C|b-b_0|},
\]
\[
 e^{-2C|b-b_0|}\mathbf{A}_{b_0}\preceq \mathbf{A}_b\preceq e^{2C|b-b_0|}\mathbf{A}_{b_0}.
\]
In common coordinates put
\[
 \mathbf{G}_b=\mathbf{A}_{b_0}^{-1/2}\mathbf{A}_b\mathbf{A}_{b_0}^{-1/2},\qquad
 \|\mathbf{G}_b-\mathbf{I}\|\le C|b-b_0|.
\]
The common factor and its orthogonal conversion are
\[
 \mathbf{L}_b=\mathbf{G}_b^{1/2}\mathbf{A}_{b_0}^{1/2},\qquad
 \mathbf{O}_b=\mathbf{L}_b\mathbf{A}_b^{-1/2},\qquad \mathbf{O}_b^\T \mathbf{O}_b=\mathbf{I}_p,\qquad \mathbf{L}_b^\T \mathbf{L}_b=\mathbf{A}_b.
\]
The following comparison displays use common-coordinate vectors.
Their exact conversion from native coordinates is
\[
 \begin{gathered}
 \bm{r}_b^{\mathrm C}=\mathbf{L}_b\bm{h}=\mathbf{O}_b\bm{r}_b^{\mathrm N}
                     =\sqrt{T_b}\mathbf{O}_b\bm{r}_b^{\mathrm D},\\
 \bm{t}_i^{\mathrm C}(b)=\mathbf{O}_b\bm{t}_i^{\mathrm N}(b)=\mathbf{O}_b\bm{t}_i^{\mathrm D}(b),\\
 \bm{m}_b^{\mathrm C}=\mathbf{O}_b\bm{m}_b^{\mathrm N},\qquad
 \bm{u}_b^{\mathrm C}=\mathbf{O}_b\bm{u}_b^{\mathrm N},\qquad
 \bm{g}_b^{\mathrm C}=\mathbf{O}_b\bm{g}_b^{\mathrm N}.
 \end{gathered}
\]
The orthogonal matrix changes neither norms nor inner products.
At $b=b_0$, $\mathbf{G}_{b_0}=\mathbf{I}_p$ and $\mathbf{O}_{b_0}=\mathbf{I}_p$.

For the baseline columns and their weighted mean, set
\[
 \begin{gathered}
 \bm{t}_{i,0}=\bm{t}_i^{\mathrm N}(b_0)=\bm{t}_i^{\mathrm C}(b_0),\qquad
 \tau_{i,0}=\tau_i(b_0),\qquad \mathbf{T}_0=(\bm{t}_{1,0},\ldots,\bm{t}_{n,0}),\\
 \bm{m}_{2,0}=n^{-1}\sum_i\tau_{i,0}^{-2}\bm{t}_{i,0},\qquad
 \bm{m}_{2,b}^{\mathrm C}=n^{-1}\sum_i\tau_i(b)^{-2}\bm{t}_i^{\mathrm C}(b).
 \end{gathered}
\]
Write $\gamma_i=\bm{t}_{i,0}^\T \mathbf{G}_b\bm{t}_{i,0}$.  Normalized columns give
\[
 \bm{t}_i^{\mathrm C}(b)=\mathbf{G}_b^{1/2}\bm{t}_{i,0}/\sqrt{\gamma_i},\quad
 \tau_i(b)=\tau_{i,0}\sqrt{\gamma_i},\quad
 \max_i|\gamma_i-1|\le\|\mathbf{G}_b-\mathbf{I}\|.
\]
The frame bound consequently yields, uniformly for $|b-b_0|\le Lp^{-1/2}$,
\[
 \|\bm{m}_b^{\mathrm C}-\mathbf{G}_b^{1/2}\bm{m}_0\|
 \le\frac{C|b-b_0|}{n}\|\mathbf{T}_0\|\sqrt n
 =O_p(p^{-1}),
\]
\[
 \begin{gathered}
 \|\bm{g}_b^{\mathrm C}-\mathbf{G}_b^{1/2}\bm{g}_0\|
 \le C|b-b_0|\|\mathbf{T}_0\|\sqrt{\sum_i\tau_{i,0}^{-2}}
 =O_p(p^{-1/2}),\\ |a_b/a_0-1|\le C|b-b_0|,
 \end{gathered}
\]
\[
 \begin{aligned}
 &\|\bm{m}_{2,b}^{\mathrm C}-\mathbf{G}_b^{1/2}\bm{m}_{2,0}\|
 \le\frac{C|b-b_0|}{n}\|\mathbf{T}_0\|
          \sqrt{\sum_i\tau_{i,0}^{-4}}
\\
 & \le C|b-b_0|n^{-1/2}\max_i\tau_{i,0}^{-2}
 =O_p(p^{-2}n^{2/\alpha_R}).
 \end{aligned}
\]
Because $p^{-1/2}n^{2/\alpha_R}\to0$, these comparisons give
\begin{equation}\label{eq:local_uniform_null_inputs}
 \begin{gathered}
 \sup_{|b-b_0|\le Lp^{-1/2}}\|\bm m_b^{\mathrm C}\|=O_p(n^{-1/2}),\\
 \sup_{|b-b_0|\le Lp^{-1/2}}\|\bm m_{2,b}^{\mathrm C}\|=O_p(p^{-3/2}),\\
 \sup_{|b-b_0|\le Lp^{-1/2}}\|\mathbf T_b\|=O_p(1),\\
 \sup_{|b-b_0|\le Lp^{-1/2}}\frac1n\sum_i\tau_i(b)^{-j}
       =O_p(p^{-j/2})\quad(1\le j\le8).
 \end{gathered}
\end{equation}
Here $\mathbf T_b$ has columns $\bm t_i^{\mathrm C}(b)$.
The frame and inverse-moment bounds follow directly from
$\bm t_i^{\mathrm C}(b)=\mathbf G_b^{1/2}\bm t_{i,0}/\sqrt{\gamma_i}$
and $\sup_i|\gamma_i-1|=O_p(p^{-1/2})$.
These are simultaneous bounds; no symmetry at the shifted root is used.
For $\bm{u}$,
\[
 \bm{u}_b^{\mathrm C}-\mathbf{G}_b^{1/2}\bm{u}_0
 =\frac{\bm{m}_b^{\mathrm C}-\mathbf{G}_b^{1/2}\bm{m}_0}{d_pa_b}
   +(a_0/a_b-1)\mathbf{G}_b^{1/2}\bm{u}_0=O_p(p^{-1/2}).
\]
Thus
\[
 \begin{gathered}
 \|\bm{u}_b^{\mathrm C}-\mathbf{G}_b^{1/2}\bm{u}_0\|=O_p(p^{-1/2}),\\
 |(\bm{r}_b^{\mathrm C})^\T \bm{u}_b^{\mathrm C}-\bm{r}_0^\T \bm{u}_0|=O_p(p^{-3/4}),\\
 |(\bm{g}_b^{\mathrm C})^\T \bm{u}_b^{\mathrm C}-\bm{g}_0^\T \bm{u}_0|=O_p(p^{-1/2}).
 \end{gathered}
\]
For example the second rate follows from
\[
 \begin{aligned}
 &|(\bm{r}_b^{\mathrm C})^\T \bm{u}_b^{\mathrm C}-\bm{r}_0^\T \bm{u}_0|
\\
 & \le\|\bm{r}_0\|\|\mathbf{G}_b-\mathbf{I}\|\|\bm{u}_0\|
   +\|\bm{r}_b^{\mathrm C}\|\|\bm{u}_b^{\mathrm C}-\mathbf{G}_b^{1/2}\bm{u}_0\|
 =O_p(q_0^{1/2}p^{-1/2}).
 \end{aligned}
\]
The remaining inputs obey the same deterministic coordinate comparison:
\[
 \begin{aligned}
 &|(\bm{r}_b^{\mathrm C})^\T \bm{g}_b^{\mathrm C}-\bm{r}_0^\T \bm{g}_0|\\
 &\le\|\bm{r}_0\|\|\mathbf{G}_b-\mathbf{I}\|\|\bm{g}_0\|
   \\
 &\quad+\|\bm{r}_b^{\mathrm C}\|\|\bm{g}_b^{\mathrm C}-\mathbf{G}_b^{1/2}\bm{g}_0\|
 \\
 &=O_p(q_0^{1/2}p^{-1/2}),\\
 &q_b=\bm{r}_0^\T \mathbf{G}_b\bm{r}_0=q_0\{1+O_p(p^{-1/2})\},\\
 &c_{2,b}=\sum_i\tau_{i,0}^{-2}\gamma_i^{-1}
          =c_{2,0}\{1+O_p(p^{-1/2})\}.
 \end{aligned}
\]
For the maximum signed projection, define
$\bm e_g(b)=\bm g_b^{\mathrm C}-\mathbf G_b^{1/2}\bm g_0$.
The same coordinate identities give the exact expression
\[
 \{\bm t_i^{\mathrm C}(b)\}^\T\bm g_b^{\mathrm C}
 =\gamma_i^{-1/2}\bm t_{i,0}^\T\mathbf G_b\bm g_0
  +\gamma_i^{-1/2}\bm t_{i,0}^\T\mathbf G_b^{1/2}\bm e_g(b).
\]
Since $\|\bm t_{i,0}\|=1$, $\|\bm g_0\|=O_p(1)$, and
$\sup_{|b-b_0|\le Lp^{-1/2}}\|\bm e_g(b)\|=O_p(p^{-1/2})$, it follows that
\[
 \begin{aligned}
 &\sup_{|b-b_0|\le Lp^{-1/2}}\max_i
 \left|\{\bm t_i^{\mathrm C}(b)\}^\T\bm g_b^{\mathrm C}
                     -\bm t_{i,0}^\T\bm g_0\right|\\
 &\quad\le C\sup_{|b-b_0|\le Lp^{-1/2}}
       \{|b-b_0|\|\bm g_0\|+\|\bm e_g(b)\|\}
       =O_p(p^{-1/2}),\\
 &\sup_{|b-b_0|\le Lp^{-1/2}}\max_i
       |\{\bm t_i^{\mathrm C}(b)\}^\T\bm g_b^{\mathrm C}|
       =O_p(p^{-1/2}n^{1/\alpha_R}\sqrt{L_n}).
 \end{aligned}
\]
Thus the pointwise signed bound has been extended to the entire random
shrinking window before it is used in the local HR expansion.
Here $q_0=\|\bm r_0\|^2>0$, and every bound is uniform over that window.
All comparisons hold simultaneously, so substituting the sign-dependent
value $b_h$ requires no conditional independence at $b_h$.

After these cross-$b$ comparisons, return to native $\mathbf{A}_b^{1/2}$
coordinates for each fixed $b$.  The scalar quantities and uniform
bounds just obtained are unchanged by the orthogonal conversion.
For $\mathbf{P}_1=b(\bm{r}\bm{g}^\T+\bm{g}\bm{r}^\T)$, $\mathbf{P}_2=bc_2\bm{r}\bm{r}^\T$ and $\mathbf{G}_f=(\mathbf{I}+\mathbf{P}_1+\mathbf{P}_2)^{-1}$,
the coefficient comes from the following complete quadratic expansion.

\begin{lemma}
\label{lem:local_rank_two_drift}
Uniformly for $|b-b_0|\le Lp^{-1/2}$, with fixed $L$,
\[
 (\bm{r}+\bm{u})^\T \mathbf{G}_f(\bm{r}+\bm{u})-\bm{u}^\T \bm{u}=(1-b)^2q+O_p(p^{-3/4}).
\]
\end{lemma}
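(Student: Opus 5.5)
We expand $\mathbf{G}_f$ exactly via \eqref{eq:local_exact_inverse_expansion}, $\mathbf{G}_f=\mathbf{I}-\mathbf{P}+\mathbf{P}^2-\mathbf{P}^3(\mathbf{I}+\mathbf{P})^{-1}$ with $\mathbf{P}=\mathbf{P}_1+\mathbf{P}_2$, and evaluate the quadratic form term by term using the scalar inputs from the preceding lemma at $b=b_0$. Its window extension supplies, uniformly for $|b-b_0|\le Lp^{-1/2}$:
\[
 \|\bm u\|+\|\bm g\|+c_2=O_p(1),\qquad
 \bm g^\T\bm u=1+O_p(p^{-1/2}n^{1/\alpha_R}),
\]
\[
 \bm r^\T\bm u=O_p(\sqrt{q/p}),\qquad
 \bm r^\T\bm g=O_p(q^{1/2}p^{-1/2}n^{1/\alpha_R}),\qquad
 q=O_p(p^{-1/2}).
\]
Write $\bm w=\bm r+\bm u$. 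The zeroth-order term is $\bm w^\T\bm w-\bm u^\T\bm u=q+2\bm r^\T\bm u=q+O_p(p^{-3/4})$.

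For the linear term, $\bm w^\T\mathbf{P}_1\bm w=2b(\bm r^\T\bm w)(\bm g^\T\bm w)$. Here $\bm r^\T\bm w=q+O_p(p^{-3/4})$ and $\bm g^\T\bm w=\bm g^\T\bm u+\bm g^\T\bm r=1+o_p(1)$. The product is therefore $2bq+O_p(q\,p^{-1/2}n^{1/\alpha_R})+O_p(p^{-3/4})$. Since $\alpha_R>8$, $q\,p^{-1/2}n^{1/\alpha_R}=O_p(p^{-1+1/\alpha_R})=o_p(p^{-3/4})$. Also $\bm w^\T\mathbf{P}_2\bm w=bc_2(\bm r^\T\bm w)^2=O_p(q^2)=O_p(p^{-1})$.

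For the second-order term, $\bm w^\T\mathbf{P}_1^2\bm w=\|\mathbf{P}_1\bm w\|^2$ with $\mathbf{P}_1\bm w=b\{\bm r(\bm g^\T\bm w)+\bm g(\bm r^\T\bm w)\}$. The dominant piece is $b^2\|\bm r\|^2(\bm g^\T\bm w)^2=b^2q\{1+o_p(p^{-1/4})\}$. The cross and $\bm g$-pieces are $O_p(q\,|\bm r^\T\bm g|)+O_p(q^2)=o_p(p^{-3/4})$. The remaining pieces are $\bm w^\T(\mathbf{P}_1\mathbf{P}_2+\mathbf{P}_2\mathbf{P}_1+\mathbf{P}_2^2)\bm w$ and the cubic remainder. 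Each factor $\mathbf{P}\bm w$ is $O_p(\sqrt q)$ and $\|\mathbf{P}\|=O_p(\sqrt q)$, so these are $O_p(q^{3/2})=O_p(p^{-3/4})$. Collecting terms gives $q-2bq+b^2q=(1-b)^2q$ up to $O_p(p^{-3/4})$. The bound holds uniformly in $b$ because every random coefficient was bounded simultaneously over the window.

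The main obstacle is the precision of $\bm g^\T\bm w$. We need $(\bm g^\T\bm w-1)q=o_p(p^{-3/4})$ in both the linear and quadratic terms. This relies on the exact diagonal cancellation giving $\bm g^\T\bm u=d_p^{-1}+O_p(p^{-1/2}n^{1/\alpha_R})$, with $d_p^{-1}-1=O(p^{-1})$, and on the window extension of that bound. It also uses $q=O_p(p^{-1/2})$ from \eqref{eq:local_rmt_signal} transferred via \eqref{eq:local_relative_loewner} and $q_b=q_0\{1+O_p(p^{-1/2})\}$. I would verify this cancellation carefully, together with the reading of $\bm g^\T\bm u\approx1$ under the normalization $\bm g=\sum_i\bm t_i/\tau_i$, $\bm u=\bm m/(d_pa)$. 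The other terms are routine size counting.
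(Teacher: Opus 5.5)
Your proposal is correct and follows essentially the same route as the paper. Both arguments use the exact Neumann expansion of $\mathbf{G}_f$ and reduce every term to the scalars $\bm r^\T\bm w$ and $\bm g^\T\bm w$, with $\bm g^\T\bm w=1+O_p(p^{-1/2}n^{1/\alpha_R}+p^{-1})$ coming from the diagonal cancellation, and both obtain uniformity over the window from the window-uniform input bounds.

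One harmless imprecision: $\bm r^\T\bm w=q+\bm r^\T\bm u$ is in general only $O_p(q+\sqrt{q/p})=O_p(p^{-1/2})$, not $O_p(q)$. The $\mathbf{P}_2$ term and the $\bm g$-piece of $\|\mathbf{P}_1\bm w\|^2$ should therefore be bounded by $O_p((\bm r^\T\bm w)^2)=O_p(p^{-1})$ rather than $O_p(q^2)$, which is still $o_p(p^{-3/4})$.
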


\begin{proof}
First fix $b=b_0$.  Set $\bm{Y}=\bm{r}+\bm{u}$, $z=\bm{r}^\T \bm{Y}$, $w=\bm{g}^\T \bm{Y}$, and
$\zeta_n^{\mathrm{err}}=p^{-1/2}n^{1/\alpha_R}+p^{-1}$.  The preceding lemma gives
\[
 \begin{gathered}
 z=q+O_p(p^{-3/4}),\quad w=1+O_p(\zeta_n^{\mathrm{err}}),\\
 \|\bm{Y}\|=O_p(1),\quad \|\mathbf{P}_1\|=O_p(p^{-1/4}),\quad\|\mathbf{P}_2\|=O_p(p^{-1/2}).
 \end{gathered}
\]
By the exact inverse expansion \eqref{eq:local_exact_inverse_expansion},
\[
 \bm{Y}^\T \mathbf{G}_f\bm{Y}=\bm{Y}^\T \bm{Y}-\bm{Y}^\T(\mathbf{P}_1+\mathbf{P}_2)\bm{Y}+\bm{Y}^\T \mathbf{P}_1^2\bm{Y}+O_p(p^{-3/4}).
\]
Every retained term has an explicit form:
\[
 \bm{Y}^\T \bm{Y}-\bm{u}^\T \bm{u}=q+2\bm{r}^\T \bm{u}=q+O_p(p^{-3/4}),
\]
\[
 \bm{Y}^\T(\mathbf{P}_1+\mathbf{P}_2)\bm{Y}=2bzw+bc_2z^2,
\]
\[
 |zw-q|\le|z-q||w|+q|w-1|
 =O_p(p^{-3/4}+p^{-1}n^{1/\alpha_R}),\quad
 c_2z^2=O_p(p^{-1}),
\]
\[
 \mathbf{P}_1^2=b^2\{\|\bm{g}\|^2\bm{r}\bm{r}^\T+q\bm{g}\bm{g}^\T
                    +(\bm{r}^\T \bm{g})(\bm{r}\bm{g}^\T+\bm{g}\bm{r}^\T)\},
\]
\[
 \bm{Y}^\T \mathbf{P}_1^2\bm{Y}=b^2\{\|\bm{g}\|^2z^2+qw^2+2(\bm{r}^\T \bm{g})zw\},
\]
\[
 \begin{gathered}
 \|\bm{g}\|^2z^2=O_p(p^{-1}),\\
 q(w^2-1)=O_p(p^{-1}n^{1/\alpha_R}+p^{-3/2}),\\
 (\bm{r}^\T \bm{g})zw=O_p(p^{-5/4}n^{1/\alpha_R}).
 \end{gathered}
\]
All rates after the principal $q$ are $O_p(p^{-3/4})$ because
$\alpha_R>8$.  Thus
\[
 \bm{Y}^\T \mathbf{G}_f\bm{Y}-\bm{u}^\T \bm{u}=(1-2b+b^2)q+O_p(p^{-3/4}).
\]
The term containing $c_2$ is only $O_p(p^{-1})$.
No additional radial-moment multiplier enters this coefficient.
The scalar products are independent of the coordinate factor.  For
\[
 z_b=(\bm{r}_b^{\mathrm N})^\T(\bm{r}_b^{\mathrm N}+\bm{u}_b^{\mathrm N}),\qquad
 w_b=(\bm{g}_b^{\mathrm N})^\T(\bm{r}_b^{\mathrm N}+\bm{u}_b^{\mathrm N}),
\]
the common-coordinate estimates explicitly give
\[
 \begin{gathered}
 \sup_{|b-b_0|\le Lp^{-1/2}}|z_b-q_b|=O_p(p^{-3/4}),\qquad
 \sup_{|b-b_0|\le Lp^{-1/2}}|w_b-1|=O_p(\zeta_n^{\mathrm{err}}),\\
 \sup_{|b-b_0|\le Lp^{-1/2}}\{\|\bm{u}_b^{\mathrm N}\|+\|\bm{g}_b^{\mathrm N}\|+c_{2,b}\}=O_p(1),\qquad
 \sup_{|b-b_0|\le Lp^{-1/2}}q_b=O_p(p^{-1/2}).
 \end{gathered}
\]
Inserting these bounds in each displayed retained term and in the
Neumann remainder proves the stated uniform $O_p(p^{-3/4})$ error.
\end{proof}

In native null coordinates $\mathbf{A}=\mathbf{A}_b$, select the scalar multiple
\[
 \begin{gathered}
 c_h(b)=T_b/T_{h,b},\qquad
 \overline{\mathbf{A}}_h(b)=c_h(b)\mathbf{A}_h(b)=T_b\mathbf{D}_{h,b}^{-1},\\
 \mathbf{G}=\mathbf{A}^{-1/2}\overline{\mathbf{A}}_h(b)\mathbf{A}^{-1/2}.
 \end{gathered}
\]
By \eqref{eq:local_frozen_weights} and
\eqref{eq:local_inverse_second_order},
\[
 \|\mathbf{G}-\mathbf{G}_f\|=O_p(\eta_n),\qquad \|\mathbf{G}-\mathbf{I}\|=O_p(p^{-1/4}).
\]
The residual scale for this selected inverse is
\[
 \widehat a_h^{[\mathbf{G}]}(b)=\frac1n\sum_i
 \{(\bm{X}_i-\widehat{\vmu}_h(b))^\T\overline{\mathbf{A}}_h(b)
                         (\bm{X}_i-\widehat{\vmu}_h(b))\}^{-1/2}
 =c_h(b)^{-1/2}\widehat a_h(b).
\]
The scale ratio below refers to $\widehat a_h^{[\mathbf{G}]}(b)$ and the null
curve scale $a=a_b$.  At the baseline, $a_{b_0}=a_{\rho,n}$ is exactly
the empirical inverse-radius scale.
Define the deterministic error envelope
\[
\mathcal R_n=p\sigma_n+n^{1/\alpha_R}+p^{1/4}+p\eta_n+1.
\]
The unshifted null-centered variance proxy at this parameter is
\[
 \omega_0^0(b)=\frac p{n(n-1)}\sum_{i\ne j}(\bm{t}_i^\T \bm{t}_j)^2.
\]

\begin{lemma*}

Uniformly for $|b-b_0|\le Lp^{-1/2}$,
\[
 \begin{gathered}
 \mathbf{A}^{1/2}(\widehat{\vmu}_h(b)-\vmu_0)
 =\bm{r}+\bm{u}+O_p(\sigma_n+p^{-1}n^{1/\alpha_R}),\\
 \widehat a_h^{[\mathbf{G}]}(b)/a=1+O_p(\sigma_n+p^{-1}),
 \end{gathered}
\]
\begin{equation}\label{eq:local_hr_curve_expansion}
 T_h(b)=np\|\bm{m}\|^2+(1-b)^2npd_p^2a^2q+O_p(\mathcal R_n),
 \qquad \mathcal R_n/\sqrt p\to0,
\end{equation}
\[
 \widehat\omega_h(b)-\omega_0^0(b)
 =O_p\{p^{-1/4}+\sigma_n+p^{-1/2}n^{1/\alpha_R}\},
\]
\end{lemma*}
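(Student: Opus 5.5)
The plan is to rerun the null HR expansion of Appendix~\ref{app:hr_location}, but with the shifted inverse $\overline{\mathbf A}_h(b)=c_h(b)\mathbf A_h(b)$ written in null coordinates $\mathbf A=\mathbf A_b$. By the scale invariance recorded at the start of this subsection, neither the location solution nor the product $\widehat a^2\bm\delta^\T\mathbf A\bm\delta$ nor the normalized inner products change when $\mathbf A_h(b)$ is replaced by $\overline{\mathbf A}_h(b)$. So all computations can be done with $\overline{\mathbf A}_h(b)$.

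In transformed coordinates $\bm\theta=\overline{\mathbf A}_h^{1/2}(\bm\mu-\vmu_0)$, the observations become $\mathbf G^{1/2}(\tau_i\bm t_i+\bm r)$, where $\mathbf G=\mathbf A^{-1/2}\overline{\mathbf A}_h\mathbf A^{-1/2}$. We have $\|\mathbf G-\mathbf G_f\|=O_p(\eta_n)$ and $\|\mathbf G-\mathbf I\|=O_p(p^{-1/4})$.

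\textbf{Step 1: location.} Write $\widetilde{\bm\theta}=\mathbf A^{1/2}(\widehat\vmu_h(b)-\vmu_0)$. The score equation becomes
\[
\frac1n\sum_i\frac{\tau_i\bm t_i+\bm r-\widetilde{\bm\theta}}{\{(\tau_i\bm t_i+\bm r-\widetilde{\bm\theta})^\T\mathbf G(\tau_i\bm t_i+\bm r-\widetilde{\bm\theta})\}^{1/2}}=\vzero ,
\]
which has the same form as the null equation with the shifted centre $\bm r-\widetilde{\bm\theta}$.

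I would repeat the localization and Taylor argument of the lemma giving \eqref{eq:fixed_scatter_hr_expansion}, with $\bm x_i-\bm\theta$ replaced by $\tau_i\bm t_i-(\widetilde{\bm\theta}-\bm r)$. The denominators pick up the factor $\bm t_i^\T\mathbf G\bm t_i=1+O_p(p^{-1/4})$, while the linear part is $\mathbf D(\widetilde{\bm\theta}-\bm r)-\bm m$ up to a correction from $\mathbf G-\mathbf I$. That correction enters through $n^{-1}\sum_i\tau_i^{-1}(\mathbf G-\mathbf I)(\cdot)$ and through the angular terms $\bm t_i^\T(\mathbf G-\mathbf I)\bm t_i$.

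Using $\mathbf G\approx\mathbf I-\mathbf P_1-\mathbf P_2+\mathbf P_1^2$ together with the bound $\max_i|\bm t_i^\T\bm g|$ and the bound $\max_i|\bm t_i^\T\bm r_b^{\mathrm D}|=O_p(\sqrt{q_0L_n/p})$ from \eqref{eq:local_angular_projection}, the anisotropic corrections are $O_p(\sigma_n)$ after dividing by $a$. This gives $\widetilde{\bm\theta}=\bm r+\bm u+O_p(\sigma_n+p^{-1}n^{1/\alpha_R})$. The residual-scale ratio follows the same way from the expansion of $\|\cdot\|^{-1}$, with the extra $O_p(\sigma_n)$ coming from the $\mathbf G$ distortion of the radii.

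The uniformity over the window $|b-b_0|\le Lp^{-1/2}$ comes from the simultaneous bounds \eqref{eq:local_uniform_null_inputs} and the signed-projection bounds established just before Lemma~\ref{lem:local_rank_two_drift}. No conditional symmetry at a random $b$ is used.

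\textbf{Step 2: statistic.} Write
\[
T_h(b)=npd_p^2\bigl(\widehat a_h^{[\mathbf G]}\bigr)^2\,\widetilde{\bm\theta}^\T\mathbf G\widetilde{\bm\theta}.
\]
Insert Step 1, replace $\mathbf G$ by $\mathbf G_f$ at cost $O_p(\eta_n)$, and apply Lemma~\ref{lem:local_rank_two_drift} to get
\[
(\bm r+\bm u)^\T\mathbf G_f(\bm r+\bm u)=\bm u^\T\bm u+(1-b)^2q+O_p(p^{-3/4}).
\]
Multiply by $npd_p^2a^2\asymp p$ and note that $npd_p^2a^2\bm u^\T\bm u=np\|\bm m\|^2$. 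The errors then collect as follows:
\begin{itemize}
\item the location remainder contributes $p\sigma_n+n^{1/\alpha_R}$ (through cross terms with $\|\bm r+\bm u\|=O_p(1)$);
\item the $O_p(p^{-3/4})$ term contributes $p^{1/4}$;
\item the replacement $\mathbf G\to\mathbf G_f$ contributes $p\eta_n$;
\item the scale ratio contributes $O_p(\sigma_n p)$ times a bounded quantity.
\end{itemize}
This gives $\mathcal R_n$. The bound $\mathcal R_n/\sqrt p\to0$ follows from \eqref{eq:local_weight_rate} and $\alpha_R>8$.

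\textbf{Step 3: feasible variance.} Copy the proof of Lemma~\ref{lem:varhat}, using the decomposition
\[
\widehat{\bm t}_{h,i}\propto\mathbf G^{1/2}\bigl(\tau_i\bm t_i+\bm r-\widetilde{\bm\theta}\bigr)
=\mathbf G^{1/2}\bigl(\tau_i\bm t_i-\bm u+O_p(\cdot)\bigr).
\]
The $-\bm u$ perturbation is handled exactly as $\widehat{\bm\theta}$ is in the null case, giving $O_p(p^{-1/2}n^{1/\alpha_R})$. The factor $\mathbf G^{1/2}$ changes the Gram matrix by $O_p(\|\mathbf G-\mathbf I\|)=O_p(p^{-1/4})$ in operator norm, and hence changes $\frac p{n(n-1)}\sum_{i\ne j}(\cdot)^2$ by $O_p(p^{-1/4})$, via the Frobenius–operator bound used in Theorem~\ref{thm:original_spike_inverse}.

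\textbf{Main obstacle.} The hardest part is Step 1. The first-order effect of $\mathbf P_1$, namely $b(\bm r\bm g^\T+\bm g\bm r^\T)$, is of size $p^{-1/4}$ in operator norm, so it cannot be treated as negligible. It has to be carried inside the score expansion and shown to cancel or be absorbed into the drift coefficient $(1-b)^2$. I would handle it by keeping $\mathbf G$ exact in the linearization, so that the combination $\mathbf G(\bm r+\bm u)$ appears, and by deferring all simplification to Lemma~\ref{lem:local_rank_two_drift} rather than expanding $\mathbf G$ inside the score.
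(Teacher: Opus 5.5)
Your outline is correct, and Steps 2 and 3 coincide with the paper's argument. Both use the scale-invariant rewriting $T_h(b)=npd_p^2(\widehat a_h^{[\mathbf G]})^2\widetilde{\bm\theta}^\T\mathbf G\widetilde{\bm\theta}$, the replacement $\mathbf G\to\mathbf G_f$ followed by Lemma~\ref{lem:local_rank_two_drift}, and the same error bookkeeping producing $\mathcal R_n$. For the variance, both compare the metric-$\mathbf G$ Gram in operator norm and then apply the centering bound of Lemma~\ref{lem:varhat}.

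The genuine difference is in Step 1. You Taylor-expand the $\mathbf G$-score around $\widetilde{\bm\theta}=\bm r$ and redo the null derivation with weights $\gamma_i=\bm t_i^\T\mathbf G\bm t_i$. The paper instead starts from the null median $\bm\nu_0$, whose normalized residual directions satisfy $\sum_i\bm z_i=\vzero$ exactly. The gradient of the $\mathbf G$-objective at $\bm\nu_0$ is then $-n^{-1}\mathbf G\sum_i\bm z_id_i^*$. So $\mathbf G$ factors out, and only the scalar distortions $d_i^*$ survive; these are controlled by $\max_i|\bm z_i^\T(\mathbf G-\mathbf I)\bm z_i|=O_p(\sigma_n)$. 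The Hessian lower bound $ca\mathbf I$ then converts the $O_p(\sigma_n/\sqrt n)$ gradient into $\|\bm\nu_G-\bm\nu_0\|=O_p(\sigma_n)$, and the same rowwise comparison yields the scale ratio.

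The two routes trade off as follows.
\begin{itemize}
\item The paper's route recycles the null solution and needs no new second- or third-order bookkeeping with $\mathbf G$. Its cost is that distortions are evaluated at the centered $\bm z_i$, which brings in $\|\mathbf G-\mathbf I\|\max_i\|\bm z_i-\bm t_i\|$ and hence $\sigma_n$.
\item Your route needs distortions only at the uncentered $\bm t_i$. These are $O_p(\eta_n)$ by $\|\mathbf G-\mathbf G_f\|=O_p(\eta_n)$ and the bounds on $\max_i|\bm t_i^\T\bm r|$ and $\max_i|\bm t_i^\T\bm g|$, so your location bound is slightly sharper. The price is that you must redo the localization and re-bound every remainder with $\mathbf G$ inside the denominators, uniformly over the window. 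This includes the $\gamma_i$-weighted analogue of $\bm m_2$ and the term $n^{-1}\sum_i\tau_i^{-1}\gamma_i^{-3/2}\bm t_i\bm t_i^\T\mathbf G$.
\end{itemize}

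Your ``main obstacle'' paragraph, however, misplaces the difficulty. In your own score equation the numerator contains no $\mathbf G$. So $\mathbf P_1$ reaches the location only through negligible quantities: scalars such as $\bm t_i^\T\mathbf P_1\bm t_i=2b(\bm t_i^\T\bm r)(\bm t_i^\T\bm g)$, and $\mathbf K_\rho$-type terms. Nothing has to cancel or be absorbed there. The $p^{-1/4}$ operator size of $\mathbf P_1$ matters only in the quadratic form $\widetilde{\bm\theta}^\T\mathbf G\widetilde{\bm\theta}$, where Lemma~\ref{lem:local_rank_two_drift} turns it into the factor $(1-b)^2$, exactly as in your Step 2.

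The remedy you propose there, avoiding any expansion of $\mathbf G$ inside the score, would not suffice. With only $\bm t_i^\T\mathbf G\bm t_i=1+O_p(p^{-1/4})$, the location error is bounded only by $O_p(p^{-1/4})$, which gives an error of order $p^{3/4}\gg\sqrt p$ in $T_h(b)$. The rowwise expansion you invoke in Step 1 is the indispensable ingredient.
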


\begin{proof}
Let $\bm{\nu}_0$ minimize $n^{-1}\sum_i\|\tau_i\bm{t}_i-\bm{\nu}\|$ in native $\mathbf{A}$
coordinates, and define
\[
 \begin{aligned}
 \bm{\nu}_0&=\bm{u}+O_p(p^{-1}n^{1/\alpha_R}),\qquad \|\bm{\nu}_0\|=O_p(1),\\
 \varpi_i&=\|\tau_i\bm{t}_i-\bm{\nu}_0\|,\qquad \bm{z}_i=(\tau_i\bm{t}_i-\bm{\nu}_0)/\varpi_i,\\
 \mathbf{Z}&=(\bm{z}_1,\ldots,\bm{z}_n),\qquad
 \bm w_\tau=(\tau_1^{-1},\ldots,\tau_n^{-1})^\T.
 \end{aligned}
\]
The first rate is uniform in the shrinking window, as follows from
\eqref{eq:local_uniform_null_inputs} and the Taylor identities in
Appendix~\ref{app:hr_location}. In detail, writing $\bm\Psi_b$ for the
native null score and $\mathbf D_b^{\mathrm{HR}}$ for its negative
derivative at zero, those identities give, for every fixed $M$,
\[
 \begin{aligned}
 \sup_b\sup_{\|\bm\nu\|\le M}
 \|\bm\Psi_b(\bm\nu)-\bm m_b^{\mathrm N}
                  +\mathbf D_b^{\mathrm{HR}}\bm\nu\|
      &=O_p(p^{-3/2}),\\
 \sup_b\|\mathbf D_b^{\mathrm{HR}}-d_pa_b\mathbf I\|
      &=O_p(p^{-3/2}n^{1/\alpha_R}),\\
 \inf_b\lambda_{\min}(\mathbf D_b^{\mathrm{HR}})/a_b
      &\ge1/2
 \end{aligned}
\]
with probability tending to one in the last line.
All suprema here and below range over $|b-b_0|\le Lp^{-1/2}$.
The outward-score inequality used in the null proof now holds
simultaneously on the boundary of a sufficiently large fixed ball.
It localizes all null medians; evaluating the uniform Taylor bound at
them proves $\sup_b\|\bm\nu_0(b)-\bm u_b^{\mathrm N}\|
=O_p(p^{-1}n^{1/\alpha_R})$.
The inverse-radius Taylor formula likewise yields
\[
 \begin{aligned}
 &\sup_b\left|\frac1n\sum_i\varpi_i(b)^{-1}-a_b\right|\\
 &\quad\le O_p(1)\sup_b\left\{\|\bm m_{2,b}^{\mathrm N}\|
       +\frac1n\sum_i[\tau_i(b)^{-3}+\tau_i(b)^{-4}]\right\}
 =O_p(p^{-3/2}).
 \end{aligned}
\]
The native and common norms agree under the orthogonal conversion.
The location equation \eqref{eq:hr_location_score} then gives
\[
 \sum_i \bm{z}_i=0,\quad
 \max_i|\varpi_i/\tau_i-1|\le\|\bm{\nu}_0\|\max_i\tau_i^{-1}
 =O_p(p^{-1/2}n^{1/\alpha_R}),
\]
\[
 \mathbf{Z}=(\mathbf{T}-\bm{\nu}_0\bm w_\tau^\T)
                \operatorname{diag}(\tau_i/\varpi_i),\qquad
 \|\mathbf{Z}\|\le C\{\|\mathbf{T}\|+\|\bm{\nu}_0\|\sqrt{c_2}\}=O_p(1).
\]
Put $A_n^*=\max_i|\bm{t}_i^\T \bm{r}|$ and $B_n^*=\max_i|\bm{t}_i^\T \bm{g}|$.
The projection and signed bounds imply
\[
 A_n^*=O_p(p^{-3/4}\sqrt{L_n}),\quad
 B_n^*=O_p(p^{-1/2}n^{1/\alpha_R}\sqrt{L_n}),
\]
\[
 \begin{gathered}
 |\bm{t}_i^\T(\mathbf{P}_1+\mathbf{P}_2)\bm{t}_i|
 \le C\{|\bm{t}_i^\T \bm{r}||\bm{t}_i^\T \bm{g}|+(\bm{t}_i^\T \bm{r})^2\},\\
 \|(\mathbf{P}_1+\mathbf{P}_2)\bm{t}_i\|
 \le C\{|\bm{t}_i^\T \bm{r}|+\sqrt q\,|\bm{t}_i^\T \bm{g}|\}.
 \end{gathered}
\]
The exact inverse remainder gives the rowwise bound
\[
 \max_i|\bm{t}_i^\T(\mathbf{G}_f-\mathbf{I})\bm{t}_i|
 \le C\{A_n^*B_n^*+(A_n^*)^2+q(B_n^*)^2\}=O_p(\eta_n).
\]
Since $\|\bm{z}_i-\bm{t}_i\|\le C\|\bm{\nu}_0\|/\tau_i$,
\[
 \max_i|\bm{z}_i^\T(\mathbf{G}-\mathbf{I})\bm{z}_i|
 \le\|\mathbf{G}-\mathbf{G}_f\|+\max_i|\bm{t}_i^\T(\mathbf{G}_f-\mathbf{I})\bm{t}_i|
       +C\|\mathbf{G}-\mathbf{I}\|\max_i\|\bm{z}_i-\bm{t}_i\|
\]
\[
 =O_p(\eta_n+p^{-1/4}p^{-1/2}n^{1/\alpha_R})=O_p(\sigma_n).
\]
This is a rowwise radius comparison, rather than an operator bound
alone on the shifted scatter.

For the shifted location after subtracting $\bm{r}$, let $\bm{\nu}_G$ minimize
the convex objective
\[
 F_G(\bm{\nu})=\frac1n\sum_i\sqrt{(\tau_i\bm{t}_i-\bm{\nu})^\T \mathbf{G}(\tau_i\bm{t}_i-\bm{\nu})}.
\]
Since $\bm{X}_i=\bm{\mu}_0+\bm{h}+\bm{e}_i$, this change of variables is exactly
\[
 \bm{\nu}_G=\mathbf{A}^{1/2}\{\widehat{\vmu}_h(b)-\bm{\mu}_0-\bm{h}\},\qquad
 \mathbf{A}^{1/2}\{\widehat{\vmu}_h(b)-\bm{\mu}_0\}=\bm{r}+\bm{\nu}_G .
\]
Let $d_i^*=\{1+\bm{z}_i^\T(\mathbf{G}-\mathbf{I})\bm{z}_i\}^{-1/2}-1$.
The derivative of $(1+s)^{-1/2}$ is bounded on $|s|\le1/2$, so
\[
 \begin{aligned}
 \max_i|d_i^*|&=O_p(\sigma_n),\qquad
 \nabla F_G(\bm{\nu}_0)=-\frac{\mathbf{G}}{n}\sum_i \bm{z}_i d_i^*,\\
 \|\nabla F_G(\bm{\nu}_0)\|
 &\le\frac{\|\mathbf{G}\|\|\mathbf{Z}\|}{n}\sqrt n\max_i|d_i^*|\\
 &=O_p(\sigma_n/\sqrt n).
 \end{aligned}
\]
For $\bm{y}_i(\bm{\nu})=\tau_i\bm{t}_i-\bm{\nu}$ and $s_i(\bm{\nu})=\sqrt{\bm{y}_i^\T \mathbf{G}\bm{y}_i}$,
the Hessian is exactly
\[
 \nabla^2F_G(\bm{\nu})
 =\left(\frac1n\sum_i s_i^{-1}\right)\mathbf{G}
   -\frac1n\sum_i\frac{\mathbf{G}\bm{y}_i\bm{y}_i^\T \mathbf{G}}{s_i^3}.
\]
On each fixed ball, $s_i\asymp\tau_i$.  The normalized residual frame satisfies
\[
 \begin{gathered}
 \mathbf{Z}_G(\bm{\nu})=\mathbf{G}^{1/2}(\mathbf{T}-\bm{\nu}\bm w_\tau^\T)
                      \operatorname{diag}(\tau_i/s_i),\\
 \|\mathbf{Z}_G(\bm{\nu})\|
 \le C\{\|\mathbf{T}\|+\|\bm{\nu}\|\sqrt{c_2}\}=O_p(1).
 \end{gathered}
\]
Consequently, uniformly on that ball,
\[
 \begin{gathered}
 \frac1n\sum_i s_i^{-1}\asymp_p a,\\
 \left\|\frac1n\sum_i\frac{\mathbf{G}\bm{y}_i\bm{y}_i^\T \mathbf{G}}{s_i^3}\right\|
 \le\frac{\|\mathbf{G}\|}{n}\max_i s_i^{-1}\|\mathbf{Z}_G\|^2
 =O_p(p^{-3/2}n^{1/\alpha_R}),
 \end{gathered}
\]
\[
 \nabla^2F_G(\bm{\nu})\succeq ca\mathbf{I},
 \quad \frac{p^{-3/2}n^{1/\alpha_R}}a
 =O_p(p^{-1}n^{1/\alpha_R})\to0.
\]
For a ball of radius $M\sigma_n$ around $\bm{\nu}_0$, its outward gradient
has positive radial inner product when $M$ is sufficiently large:
\[
 \bm{v}^\T\nabla F_G(\bm{\nu}_0+\bm{v})
 \ge ca\|\bm{v}\|^2-\|\bm{v}\|\|\nabla F_G(\bm{\nu}_0)\|,
 \qquad a\sqrt n\asymp_p1.
\]
The minimum is inside that ball.  Strong convexity gives
\[
 \|\bm{\nu}_G-\bm{\nu}_0\|
 \le(ca)^{-1}\|\nabla F_G(\bm{\nu}_0)\|=O_p(\sigma_n).
\]
Adding the shift $\bm{r}$ proves the location assertion.

For the scale, $s_i(\bm{\nu}_0)=\varpi_i\sqrt{1+\bm{z}_i^\T(\mathbf{G}-\mathbf{I})\bm{z}_i}$ and
$|s_i(\bm{\nu}_G)-s_i(\bm{\nu}_0)|\le\|\mathbf{G}^{1/2}\|\|\bm{\nu}_G-\bm{\nu}_0\|$.  Therefore
\[
 \max_i\left|\frac{s_i(\bm{\nu}_G)}{\varpi_i}-1\right|
 =O_p(\sigma_n+\sigma_n\max_i\varpi_i^{-1})=O_p(\sigma_n),
\]
\[
 \left|\frac{n^{-1}\sum_i s_i(\bm{\nu}_G)^{-1}}
                  {n^{-1}\sum_i\varpi_i^{-1}}-1\right|=O_p(\sigma_n),
 \qquad \frac{n^{-1}\sum_i\varpi_i^{-1}}a=1+O_p(p^{-1}).
\]
The last equality follows from \eqref{eq:null_residual_scale_expansion}.  This proves
the residual scale assertion without requiring shifted sign symmetry.

Set $\bm{Y}=\bm{r}+\bm{u}$ and $\widehat{\bm{Y}}=\bm{r}+\bm{\nu}_G$.  Since $\|\bm{Y}\|+\|\widehat{\bm{Y}}\|=O_p(1)$,
\[
 T_h(b)=npd_p^2\widehat a_h(b)^2
          (\widehat{\vmu}_h(b)-\vmu_0)^\T \mathbf{A}_h(b)(\widehat{\vmu}_h(b)-\vmu_0)
 =npd_p^2(\widehat a_h^{[\mathbf{G}]}(b))^2\widehat{\bm{Y}}^\T \mathbf{G}\widehat{\bm{Y}},
\]
\[
 \begin{gathered}
 \|\widehat{\bm{Y}}-\bm{Y}\|=O_p(\sigma_n+p^{-1}n^{1/\alpha_R}),\\
 |\widehat{\bm{Y}}^\T \mathbf{G}\widehat{\bm{Y}}-\bm{Y}^\T \mathbf{G}\bm{Y}|
 \le\|\mathbf{G}\|\|\widehat{\bm{Y}}-\bm{Y}\|(\|\widehat{\bm{Y}}\|+\|\bm{Y}\|),
 \end{gathered}
\]
\[
 T_h(b)=npd_p^2a^2\bm{Y}^\T \mathbf{G}_f\bm{Y}
 +O_p(p\sigma_n+n^{1/\alpha_R}+p\eta_n+1).
\]
Lemma~\ref{lem:local_rank_two_drift} and
$npd_p^2a^2\bm{u}^\T \bm{u}=np\|\bm{m}\|^2$ add $O_p(p^{1/4})$ to this remainder.
Finally
\[
 \frac{\mathcal R_n}{\sqrt p}
 =p^{-1/4}n^{1/\alpha_R}\sqrt{L_n}
  +p^{-1/2}n^{1/\alpha_R}+p^{-1/4}+\sqrt p\,\eta_n+p^{-1/2}\to0.
\]

For variance stability first keep the same median $\bm{\nu}_0$.  With
$\mathbf{H}_0=\mathbf{Z}^\T\mathbf{Z}$ and
$\mathbf{E}=\operatorname{diag}\{(\bm{z}_i^\T \mathbf{G}\bm{z}_i)^{-1/2}\}$, its metric-$\mathbf{G}$ Gram is
\[
 \mathbf{H}_G=\mathbf{E}\mathbf{Z}^\T \mathbf{G}\mathbf{Z} \mathbf{E},\qquad
 \|\mathbf{E}-\mathbf{I}\|=O_p(\sigma_n),\quad \|\mathbf{H}_0\|+\|\mathbf{H}_G\|=O_p(1),
\]
\[
 \|\mathbf{H}_G-\mathbf{H}_0\|\le C\{\|\mathbf{G}-\mathbf{I}\|+\|\mathbf{E}-\mathbf{I}\|\}
 =O_p(p^{-1/4}+\sigma_n).
\]
Changing the median to $\bm{\nu}_G$ changes its normalized residual frame
by at most $O_p(\sigma_n)$.  Indeed the unnormalized column change is
a rank-one matrix with norm at most
\[
 C\|\bm{\nu}_G-\bm{\nu}_0\|\Bigl(\sum_i\varpi_i^{-2}\Bigr)^{1/2}=O_p(\sigma_n),
\]
and its column normalization changes by $O_p(\sigma_n\max_i\varpi_i^{-1})$.
Thus, with $\widehat{\mathbf{H}}_G$ the actual residual Gram,
\[
 \begin{gathered}
 \|\widehat{\mathbf{H}}_G-\mathbf{H}_0\|_{\mathrm F}
 \le\sqrt n\|\widehat{\mathbf{H}}_G-\mathbf{H}_0\|=O_p\{\sqrt n(p^{-1/4}+\sigma_n)\},
 \\ \|\widehat{\mathbf{H}}_G\|_{\mathrm F}+\|\mathbf{H}_0\|_{\mathrm F}=O_p(\sqrt n),
 \end{gathered}
\]
\[
 \begin{aligned}
 &\frac p{n(n-1)}\left|\sum_{i\ne j}
       (\widehat H_{G,ij}^2-H_{0,ij}^2)\right|
\\
 & \le\frac p{n(n-1)}\|\widehat{\mathbf{H}}_G-\mathbf{H}_0\|_{\mathrm F}
                       (\|\widehat{\mathbf{H}}_G\|_{\mathrm F}+\|\mathbf{H}_0\|_{\mathrm F})
 =O_p(p^{-1/4}+\sigma_n).
 \end{aligned}
\]
The null centering error is also uniform. Set
$B_i(b)=\bm t_i(b)^\T\bm\nu_0(b)$ and
$e_i(b)=-2B_i(b)/\tau_i(b)+\|\bm\nu_0(b)\|^2/\tau_i(b)^2$.
The uniform location, frame and inverse-moment bounds give
\[
 \begin{gathered}
 \sup_b\frac1n\sum_iB_i(b)^2=O_p(n^{-1}),\qquad
 \sup_b\frac1n\sum_i e_i(b)^2=O_p(p^{-2}n^{2/\alpha_R}),\\
 \sup_b\max_i|e_i(b)|
     =O_p(p^{-1/2}n^{1/\alpha_R})\xrightarrow{p}0.
 \end{gathered}
\]
For $H_{ij}(b)=\bm t_i(b)^\T\bm t_j(b)$ the exact centered Gram is
\[
 H_{0,ij}(b)=
 \frac{H_{ij}(b)-B_i(b)/\tau_j(b)-B_j(b)/\tau_i(b)
                 +\|\bm\nu_0(b)\|^2/[\tau_i(b)\tau_j(b)]}
      {\sqrt{\{1+e_i(b)\}\{1+e_j(b)\}}}.
\]
Squaring and averaging as in Lemma~\ref{lem:varhat} yields
\[
 \begin{aligned}
 &\sup_b\frac1{n(n-1)}\sum_{i\ne j}[H_{0,ij}(b)-H_{ij}(b)]^2\\
 &\quad\le O_p(1)\sup_b\Biggl\{\frac1n\sum_i e_i(b)^2
     +\frac{\sum_i B_i(b)^2}{n}\frac{\sum_i\tau_i(b)^{-2}}n\\
 &\hspace{10em}{}+\left(\frac{\sum_i\tau_i(b)^{-2}}n\right)^2\Biggr\}
 =O_p(p^{-2}n^{2/\alpha_R}).
 \end{aligned}
\]
Also $\sup_b\sum_{i,j}H_{ij}(b)^2=O_p(n)$.
Cauchy--Schwarz therefore bounds the difference of the squared-kernel
variance averages by
$O_p(p^{-1/2}n^{1/\alpha_R}+p^{-1}n^{2/\alpha_R})
=O_p(p^{-1/2}n^{1/\alpha_R})$.
Combining it with the metric and location comparison proves the
stated variance bound on the full window.
\end{proof}

\begin{proof}[Proof of Theorem~\ref{thm:local_hrst}]
The population condition in \eqref{eq:local_alternative} and
Lemma~\ref{lem:population_drift_bridge} supply \eqref{eq:local_rmt_signal},
\eqref{eq:local_derived_gamma} and \eqref{eq:local_pathwise_lindeberg};
no additional local size or maximum-term condition is needed.
For each grid value, \eqref{eq:local_b_response} puts $b_h$ in the
shrinking window with arbitrarily high probability. Equations
\eqref{eq:local_hr_curve_expansion} and \eqref{eq:local_b_freeze_gram} therefore give
\[
 T_h(b_h)=np\|\bm{m}_0\|^2
 +(1-b_0)^2npd_p^2a_{b_0}^2q_{b_0}
 +O_p(\mathcal R_n+1).
\]
The signal replacement within the window costs $O_p(1)$, because
\[
 |q_b-q_{b_0}|\le C|b-b_0|q_{b_0},\qquad
 |a_b/a_{b_0}-1|\le C|b-b_0|,
 \qquad np a_{b_0}^2q_{b_0}=O_p(\sqrt p),
\]
and the null Gram replacement costs $O_p(1)$ by the conditional-sign
derivative estimate.  The original CM comparison and relative inverse bound give
\[
 1-b_0=d_{\rho,n}+O_p(\varepsilon_n),\quad
 |q_{b_0}-q_{\mathrm R}|=O_p(\varepsilon_nq_{\mathrm R}),\quad
 npd_p^2a_{b_0}^2q_{\mathrm R}=\Delta_{\rho,n}^{\mathrm{ran}}.
\]
For $\rho=\rho_k$, define the centered-error pair kernel by
\[
 H^0_{ij,k}:=H_{ij,\rho_k}
 =\frac{\bm e_i^\T\widehat{\mathbf A}_{\rho_k}^{0}\bm e_j}
 {\{(\bm e_i^\T\widehat{\mathbf A}_{\rho_k}^{0}\bm e_i)
       (\bm e_j^\T\widehat{\mathbf A}_{\rho_k}^{0}\bm e_j)\}^{1/2}},
 \qquad i\ne j,
\]
where $\widehat{\mathbf A}_{\rho_k}^{0}$ is computed from the centered
errors. This is the kernel used in Lemma~\ref{lem:quad}.
Thus, simultaneously on the fixed grid,
\begin{equation}\label{eq:local_statistic_expansion}
 T_{\HR,\rho_k}=p+\frac{2p}{n}\sum_{i<j}H^0_{ij,k}
   +d_{k,n}^2\Delta_{k,n}^{\mathrm{ran}}+O_p(\mathcal R_n+\sqrt p\,\varepsilon_n),
\end{equation}
\[
 \begin{aligned}
 \frac{\mathcal R_n+\sqrt p\,\varepsilon_n}{\sqrt p}
 &=O\Bigl\{p^{-1/4}n^{1/\alpha_R}\sqrt{L_n}
       +p^{-1/2}n^{1/\alpha_R}+p^{-1/4}\\
 &\hspace{2em}{}+\sqrt p\,\eta_n+p^{-1/2}+\varepsilon_n\Bigr\}\to0.
 \end{aligned}
\]
The expansion \eqref{eq:local_statistic_expansion} uses the empirical $a_{k,n}$.
The variance estimate has the explicit local bound
\begin{equation}\label{eq:local_variance_stability}
 \widehat\omega_{\rho_k}-\omega_{k,n}
 =O_p\{p^{-1/4}+\sigma_n+p^{-1/2}n^{1/\alpha_R}
                  +\varepsilon_n+p^{-1/2}\}=O_p(p^{-1/4}).
\end{equation}
Here the parameter-window change of the null proxy is $O_p(p^{-1/2})$:
its derivative satisfies
\[
 \left|\frac{d}{db}\omega_0^0(b)\right|
 \le\frac{2p}{n(n-1)}\|\mathbf{W}_0(b)\|_{\mathrm F}
                         \|\mathbf{W}_0'(b)\|_{\mathrm F}=O_p(1),
\]
so integration over $|b_h-b_0|=O_p(p^{-1/2})$ proves the claim.

Corollary~\ref{cor:deterministic_continuity} gives the deterministic continuity bound
\[
 \begin{gathered}
 \max_k|d_{k,n}-d_{k,\infty}|+\max_{k,l}|R_{kl,n}-R_{\HR,kl}|
 \le C\delta_{\mathrm{det},n},\\
 \delta_{\mathrm{det},n}=|y_n-y|+\|H_{p,b}-H\|_\infty\to0.
 \end{gathered}
\]
The bridge identities now give the quantitative replacement
\[
 \max_k\left|
 \frac{\widetilde\Delta_{k,n}^{\mathrm{ran}}}
             {\sqrt{2p\omega_{k,n}}}-\widetilde\kappa_{k,n}\right|
 =O_p(\varepsilon_n^{\mathrm{br}}).
\]
In particular these random drift ratios converge to
$\widetilde\kappa_k$ under \eqref{eq:local_alternative}.
No separate convergence rate for the signal or the radial factor is
used here.
To complete standardization, define the centered-error noise, drift
ratio and deterministic error envelope
\[
 \begin{aligned}
 \mathcal Q_{k,n}&=\frac{2p}{n\sqrt{2p\omega_{k,n}}}
                  \sum_{i<j}H^0_{ij,k},\\
 D_{k,n}&=\frac{\widetilde\Delta_{k,n}^{\mathrm{ran}}}{\sqrt{2p\omega_{k,n}}},\\
 \varrho_n&=p^{-1/4}n^{1/\alpha_R}\sqrt{L_n}+\varepsilon_n\to0.
 \end{aligned}
\]
The preceding explicit remainder satisfies
\[
 \frac{\mathcal R_n+\sqrt p\,\varepsilon_n}{\sqrt p}=O(\varrho_n),
 \qquad p^{-1/4}=O(\varrho_n).
\]
Since $\omega_{k,n}\ge c$, \eqref{eq:local_variance_stability} puts every
$\widehat\omega_{\rho_k}\ge c/2$ with probability tending to one.  On this event,
\[
 \begin{aligned}
 \left|\sqrt{\omega_{k,n}/\widehat\omega_{\rho_k}}-1\right|
 &=\frac{|\omega_{k,n}-\widehat\omega_{\rho_k}|}
 {\sqrt{\widehat\omega_{\rho_k}}
       (\sqrt{\omega_{k,n}}+\sqrt{\widehat\omega_{\rho_k}})}\\
 &\le C|\omega_{k,n}-\widehat\omega_{\rho_k}|=O_p(p^{-1/4}).
 \end{aligned}
\]
The null vector limit and the finite drift ratio give
$\mathcal Q_{k,n}+D_{k,n}=O_p(1)$. Thus \eqref{eq:local_statistic_expansion}
and the standardization \eqref{eq:hr_standardization} give
\[
 \begin{aligned}
 Z_{\HR,\rho_k}
 &=\sqrt{\omega_{k,n}/\widehat\omega_{\rho_k}}
       \{\mathcal Q_{k,n}+D_{k,n}+O_p(\varrho_n)\}\\
 &=\mathcal Q_{k,n}+D_{k,n}+O_p(\varrho_n).
 \end{aligned}
\]
Since $\varepsilon_n^{\mathrm{br}}=O(\varrho_n)$,
\eqref{eq:population_drift_bridge} sharpens the last display to
\begin{equation}\label{eq:local_population_standardized_expansion}
 \max_k|Z_{\HR,\rho_k}-\mathcal Q_{k,n}
                     -\widetilde\kappa_{k,n}|=O_p(\varrho_n).
\end{equation}
For completeness set
$\chi_n^{\mathrm{sig}}=\max_k|\widetilde\kappa_{k,n}-\widetilde\kappa_k|$.
Condition~\eqref{eq:local_alternative} says $\chi_n^{\mathrm{sig}}\to0$ without
specifying its rate. By \eqref{eq:local_population_standardized_expansion},
replacement by the limiting mean has error
$O_p(\varrho_n+\chi_n^{\mathrm{sig}})$, not necessarily $O_p(\varrho_n)$.

Write $\bm Z_n=(Z_{\HR,\rho_k})_k$,
$\bm{\mathcal Q}_n=(\mathcal Q_{k,n})_k$ and $\bm D_n=(D_{k,n})_k$.
Set $\bm E_n=\bm Z_n-\bm{\mathcal Q}_n-\bm D_n$, so
$\|\bm E_n\|=O_p(\varrho_n)$ because $K$ is fixed.
For every fixed $\bm{t}\in\mathbb R^K$ and $\delta>0$, probability truncation gives
\[
 \begin{aligned}
 \left|\E e^{i\bm{t}^\T\bm Z_n}
     -e^{i\bm{t}^\T\widetilde{\bm\kappa}}\E e^{i\bm{t}^\T\bm{\mathcal Q}_n}\right|&{}\\
 &\le\E\min\{2,\|\bm{t}\|\|\bm Z_n-\bm{\mathcal Q}_n
                                      -\widetilde{\bm\kappa}\|\}\\
 &\le\|\bm{t}\|\delta+2\Pr\{\|\bm D_n-\widetilde{\bm\kappa}
                                      +\bm E_n\|>\delta\}.
 \end{aligned}
\]
The probability in the last line is bounded by
\[
 \Pr\{\|\bm D_n-\widetilde{\bm\kappa}\|>\delta/2\}
 +\Pr\{\|\bm E_n\|>\delta/2\}\longrightarrow0.
\]
First let $n,p\to\infty$, then $\delta\downarrow0$.  The proved null
vector limit yields
\[
 \E e^{i\bm{t}^\T\bm Z_n}
 \longrightarrow e^{i\bm{t}^\T\widetilde{\bm\kappa}}
                  e^{-\bm{t}^\T \mathbf{R}_{\HR}\bm{t}/2}.
\]
Lemma~\ref{lem:gaussian_continuity} now gives
\[
 (Z_{\HR,\rho_1},\ldots,Z_{\HR,\rho_K})^\T
 \xrightarrow{d} N_K(\widetilde{\bm\kappa},\mathbf{R}_{\HR}).
\]
The signal covariance has the vanishing normalized bound
\eqref{eq:local_derived_gamma}. The marginal limit with the auxiliary random center is
\[
 \frac{T_{\HR,\rho}-p-\widetilde\Delta_{\rho,n}^{\mathrm{ran}}}
      {\sqrt{2p\omega_{\rho,n}}}\xrightarrow{d} N(0,1).
\]
By Lemma~\ref{lem:population_drift_bridge},
\[
 \frac{\widetilde\Delta_{\rho,n}^{\mathrm{ran}}
          -\widetilde\Delta_{\rho,n}^{\mathrm{pop}}}
      {\sqrt{2p\omega_{\rho,n}}}\xrightarrow{p}0.
\]
By \eqref{eq:population_drift_standardization},
$\widetilde\Delta_{\rho,n}^{\mathrm{pop}}
=\sqrt{2p\omega_{\rho,n}}\,\widetilde\kappa_{\rho,n}$.
Slutsky's theorem therefore gives, for each $\rho\in\mathcal A$,
\begin{equation}\label{eq:local_population_centered_limit}
 \frac{T_{\HR,\rho}-p-\widetilde\Delta_{\rho,n}^{\mathrm{pop}}}
      {\sqrt{2p\omega_{\rho,n}}}\xrightarrow{d} N(0,1).
\end{equation}
For the power assertion, continuity of the limiting normal distribution
at the fixed rejection threshold gives explicitly
\[
 \begin{aligned}
 \Pr\nolimits_{H_{1,n}}\{Z_{\HR,\rho}\le z_{1-\alpha}\}
 &\longrightarrow\Phi(z_{1-\alpha}-\widetilde\kappa_\rho),\\
 \Pr\nolimits_{H_{1,n}}\{Z_{\HR,\rho}>z_{1-\alpha}\}
 &=1-\Pr\nolimits_{H_{1,n}}\{Z_{\HR,\rho}\le z_{1-\alpha}\}\\
 &\longrightarrow1-\Phi(z_{1-\alpha}-\widetilde\kappa_\rho).
 \end{aligned}
\]
Including $4\Gamma_{\rho,n}$ in the denominator of
\eqref{eq:local_population_centered_limit} gives the same limit,
since
\[
 \begin{gathered}
 \frac{4\Gamma_{\rho,n}}{2p\omega_{\rho,n}}=O_p(p^{-1/2}),\\
 \frac{\sqrt{2p\omega_{\rho,n}+4\Gamma_{\rho,n}}}
      {\sqrt{2p\omega_{\rho,n}}}=1+O_p(p^{-1/2}).
 \end{gathered}
\]
\end{proof}

\section{Additional simulation tables and figures}\label{sec:sim_details}

The simulation design and discussion are given in Section~\ref{sec:simulation}. Table~\ref{tab:size-n100-p150} and Figure~\ref{fig:power-normal} appear there; the remaining empirical-size tables and power curves are collected below.

\begin{table}[!htbp]

\renewcommand{\arraystretch}{1.06}

\centering
\caption{Empirical sizes of four methods for $n=100$, $p=200$.}
\label{tab:size-n100-p200}
\begin{tabular}{llcccc}
\toprule
Distribution & Covariance & HRST & ARHT & WPL & SR \\
\midrule
Normal & $\I_p$ & 0.033 & 0.042 & 0.055 & 0.018 \\
Normal & $\mSigma_s$ & 0.054 & 0.050 & 0.060 & 0.020 \\
Normal & $\mSigma_d$ & 0.052 & 0.041 & 0.057 & 0.046 \\
\midrule
$t_5$ & $\I_p$ & 0.028 & 0.009 & 0.048 & 0.002 \\
$t_5$ & $\mSigma_s$ & 0.060 & 0.028 & 0.062 & 0.003 \\
$t_5$ & $\mSigma_d$ & 0.048 & 0.031 & 0.054 & 0.021 \\
\midrule
$t_3$ & $\I_p$ & 0.030 & 0.003 & 0.057 & 0.000 \\
$t_3$ & $\mSigma_s$ & 0.049 & 0.015 & 0.062 & 0.000 \\
$t_3$ & $\mSigma_d$ & 0.051 & 0.016 & 0.059 & 0.007 \\
\midrule
Mixture normal & $\I_p$ & 0.032 & 0.020 & 0.052 & 0.000 \\
Mixture normal & $\mSigma_s$ & 0.059 & 0.023 & 0.064 & 0.000 \\
Mixture normal & $\mSigma_d$ & 0.054 & 0.036 & 0.062 & 0.001 \\
\bottomrule
\end{tabular}
\end{table}

\begin{table}[!htbp]

\renewcommand{\arraystretch}{1.06}

\centering
\caption{Empirical sizes of four methods for $n=200$, $p=150$.}
\label{tab:size-n200-p150}
\begin{tabular}{llcccc}
\toprule
Distribution & Covariance & HRST & ARHT & WPL & SR \\
\midrule
Normal & $\I_p$ & 0.034 & 0.043 & 0.040 & 0.029 \\
Normal & $\mSigma_s$ & 0.062 & 0.056 & 0.078 & 0.043 \\
Normal & $\mSigma_d$ & 0.059 & 0.045 & 0.056 & 0.049 \\
\midrule
$t_5$ & $\I_p$ & 0.064 & 0.025 & 0.056 & 0.013 \\
$t_5$ & $\mSigma_s$ & 0.067 & 0.043 & 0.066 & 0.010 \\
$t_5$ & $\mSigma_d$ & 0.064 & 0.028 & 0.056 & 0.037 \\
\midrule
$t_3$ & $\I_p$ & 0.037 & 0.012 & 0.045 & 0.003 \\
$t_3$ & $\mSigma_s$ & 0.060 & 0.023 & 0.060 & 0.002 \\
$t_3$ & $\mSigma_d$ & 0.062 & 0.016 & 0.066 & 0.022 \\
\midrule
Mixture normal & $\I_p$ & 0.049 & 0.019 & 0.051 & 0.000 \\
Mixture normal & $\mSigma_s$ & 0.059 & 0.037 & 0.059 & 0.000 \\
Mixture normal & $\mSigma_d$ & 0.058 & 0.032 & 0.050 & 0.009 \\
\bottomrule
\end{tabular}
\end{table}

\begin{table}[!htbp]

\renewcommand{\arraystretch}{1.06}

\centering
\caption{Empirical sizes of four methods for $n=200$, $p=200$.}
\label{tab:size-n200-p200}
\begin{tabular}{llcccc}
\toprule
Distribution & Covariance & HRST & ARHT & WPL & SR \\
\midrule
Normal & $\I_p$ & 0.041 & 0.041 & 0.054 & 0.032 \\
Normal & $\mSigma_s$ & 0.069 & 0.058 & 0.067 & 0.031 \\
Normal & $\mSigma_d$ & 0.053 & 0.046 & 0.062 & 0.055 \\
\midrule
$t_5$ & $\I_p$ & 0.051 & 0.019 & 0.055 & 0.008 \\
$t_5$ & $\mSigma_s$ & 0.063 & 0.038 & 0.068 & 0.005 \\
$t_5$ & $\mSigma_d$ & 0.059 & 0.035 & 0.050 & 0.028 \\
\midrule
$t_3$ & $\I_p$ & 0.046 & 0.012 & 0.049 & 0.002 \\
$t_3$ & $\mSigma_s$ & 0.061 & 0.025 & 0.067 & 0.002 \\
$t_3$ & $\mSigma_d$ & 0.067 & 0.019 & 0.052 & 0.007 \\
\midrule
Mixture normal & $\I_p$ & 0.049 & 0.020 & 0.058 & 0.000 \\
Mixture normal & $\mSigma_s$ & 0.065 & 0.036 & 0.043 & 0.001 \\
Mixture normal & $\mSigma_d$ & 0.054 & 0.035 & 0.048 & 0.004 \\
\bottomrule
\end{tabular}
\end{table}

\begin{figure}[!htbp]
\centering
\includegraphics[width=0.98\textwidth,height=0.90\textheight,keepaspectratio]{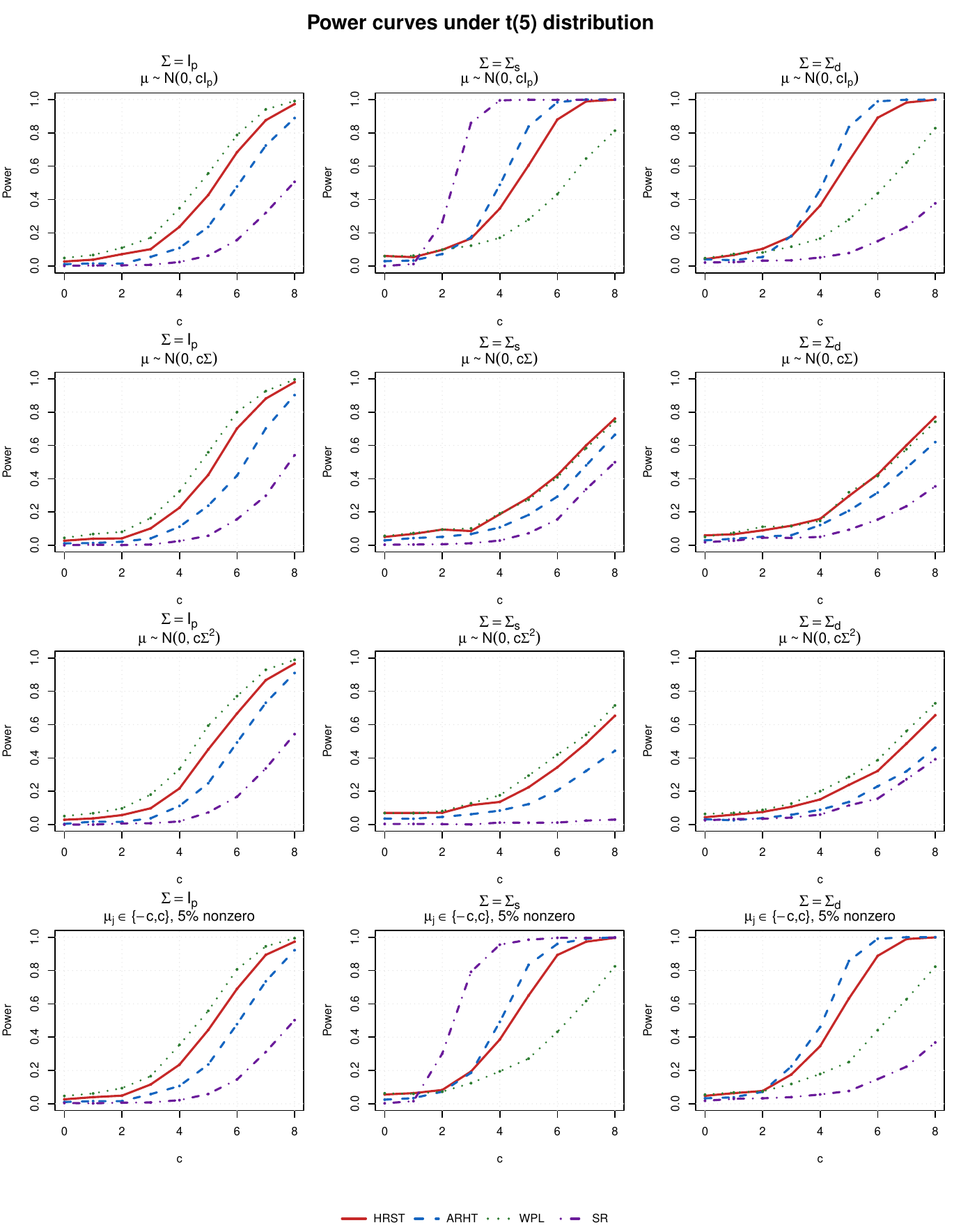}
\caption{Empirical power under the standardized $t_5$ distribution for $n=100$, $p=200$ and $B=1000$.}
\label{fig:power-t5}
\end{figure}

\begin{figure}[!htbp]
\centering
\includegraphics[width=0.98\textwidth,height=0.90\textheight,keepaspectratio]{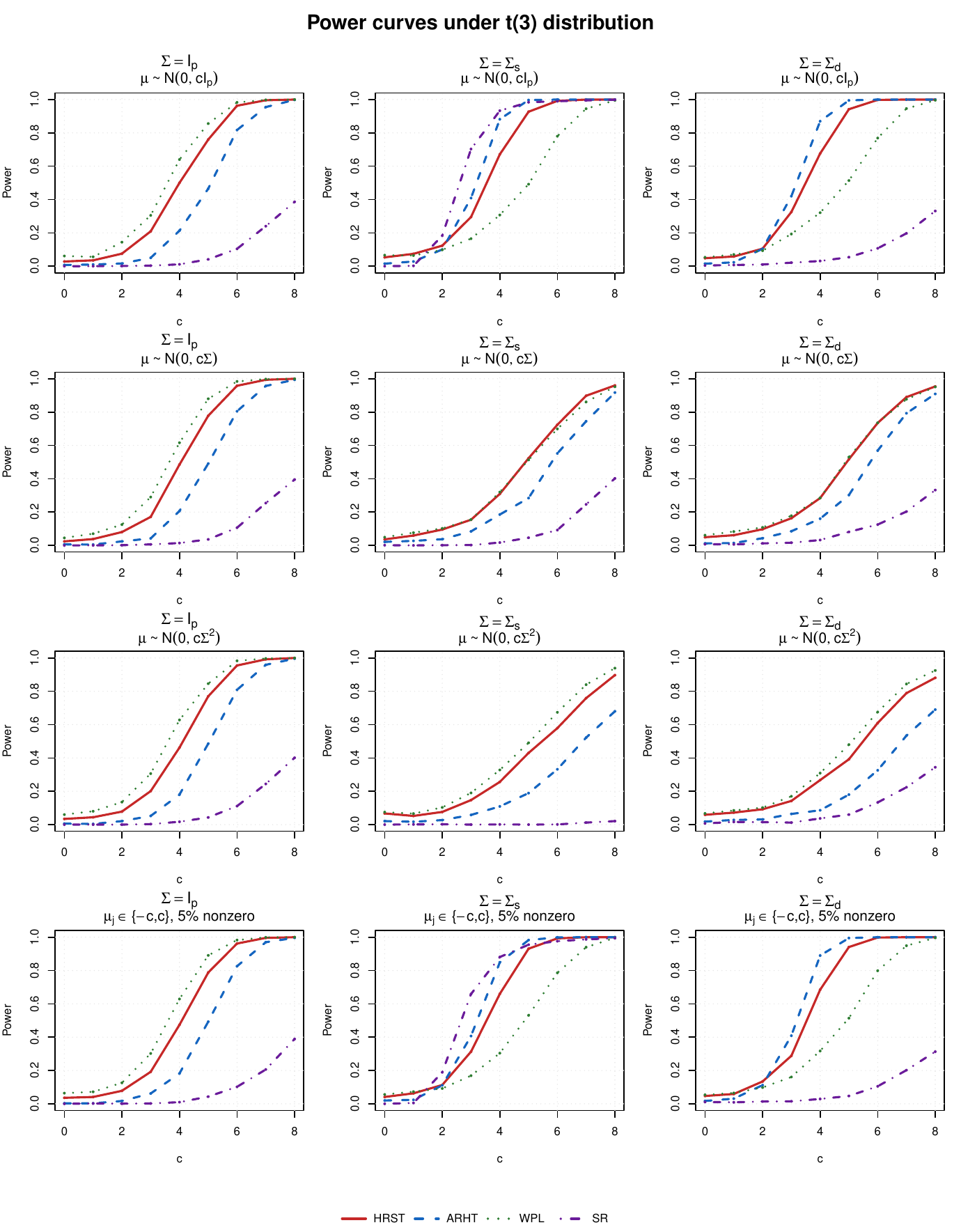}
\caption{Empirical power under the standardized $t_3$ distribution for $n=100$, $p=200$ and $B=1000$.}
\label{fig:power-t3}
\end{figure}

\begin{figure}[!htbp]
\centering
\includegraphics[width=0.98\textwidth,height=0.90\textheight,keepaspectratio]{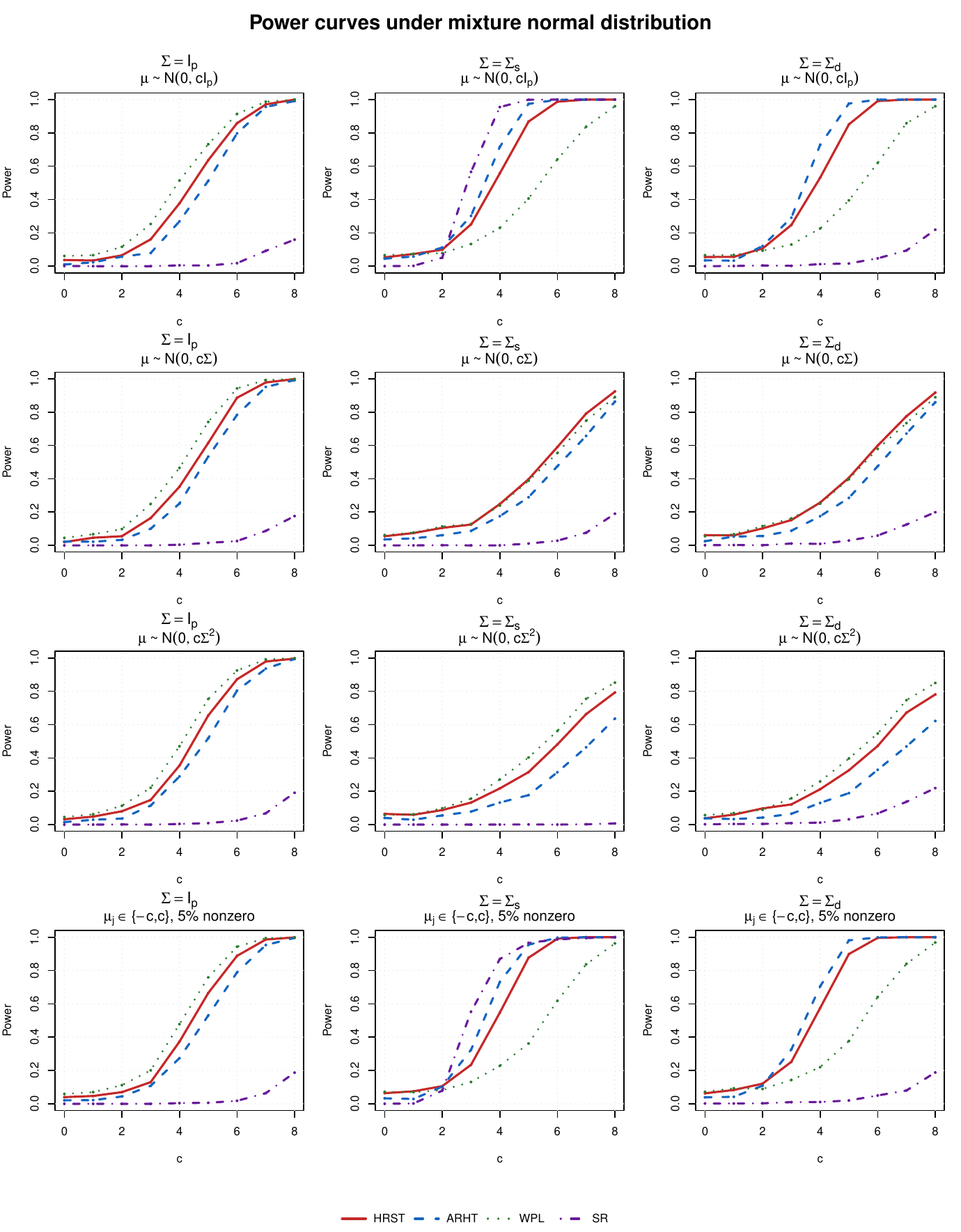}
\caption{Empirical power under the contaminated normal distribution for $n=100$, $p=200$ and $B=1000$.}
\label{fig:power-contam}
\end{figure}

\end{document}